\newtheorem{theorem} {Theorem}[section]
\newtheorem{definition}[theorem] {Definition}
\newtheorem{corollary}[theorem] {Corollary}
\newtheorem{conjecture}{Conjecture}
\newtheorem{proposition} [theorem]{Proposition}
\newcommand*\had{\vcenter{\hbox{\includegraphics[scale=0.4]{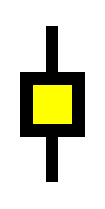}}}}
\newcommand*\white{\vcenter{\hbox{\includegraphics[scale=0.15]{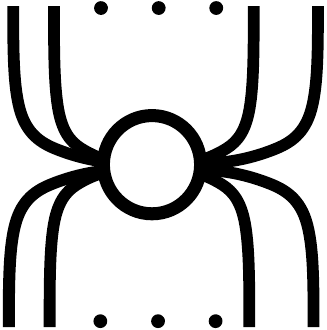}}}}
\newcommand*\blue{\vcenter{\hbox{\includegraphics[scale=0.15]{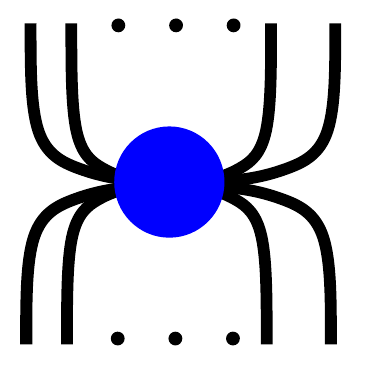}}}}
\newcommand*\blueb{\vcenter{\hbox{\includegraphics[scale=0.15]{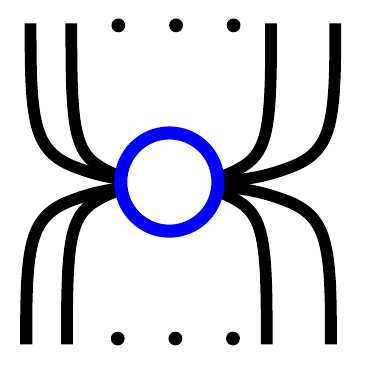}}}}
\newcommand*\black{\vcenter{\hbox{\includegraphics[scale=0.15]{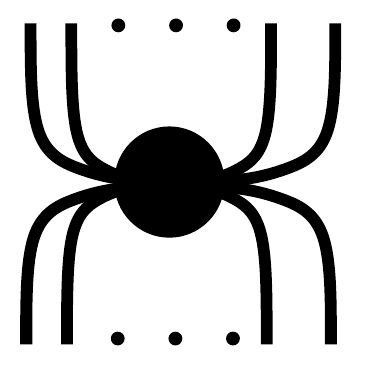}}}}
\newcommand*\blackb{\vcenter{\hbox{\includegraphics[scale=0.15]{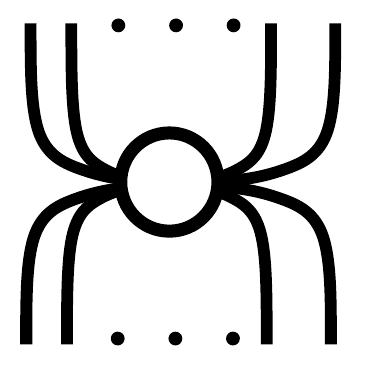}}}}
\newcommand*\cwgr{\vcenter{\hbox{\includegraphics[scale=0.15]{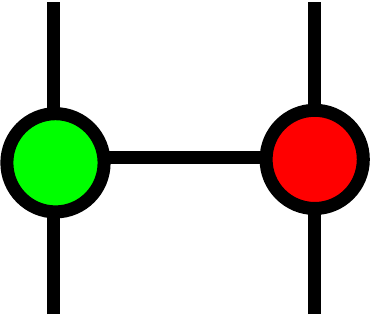}}}}
\newcommand*\cwgg{\vcenter{\hbox{\includegraphics[scale=0.15]{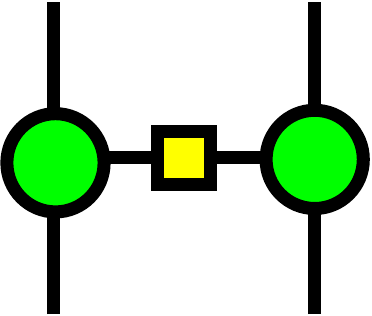}}}}
\newcommand*\cwrr{\vcenter{\hbox{\includegraphics[scale=0.15]{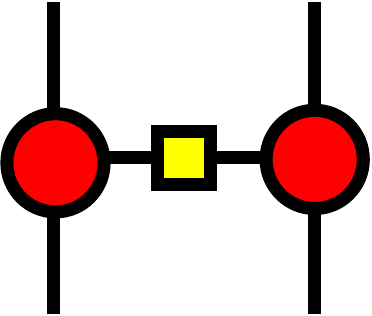}}}}
\newcommand*\zspider{\vcenter{\hbox{\includegraphics[scale=0.15]{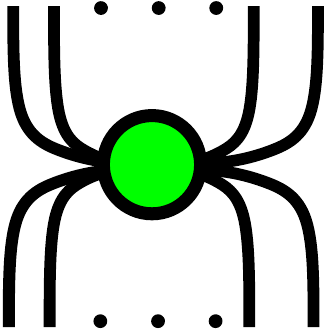}}}}
\newcommand*\xspider{\vcenter{\hbox{\includegraphics[scale=0.15]{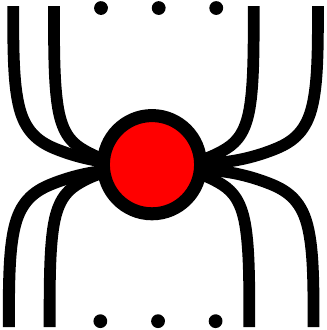}}}}
\newcommand*\wunit{\vcenter{\hbox{\includegraphics[scale=0.15]{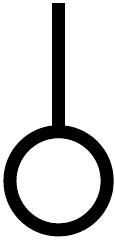}}}}
\newcommand*\bunit{\vcenter{\hbox{\includegraphics[scale=0.15]{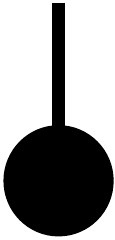}}}}
\newcommand*\sqrtwo{\vcenter{\hbox{\includegraphics[scale=0.15]{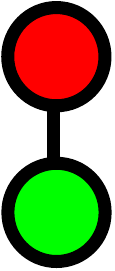}}}}
\newcommand*\starscalar{\vcenter{\hbox{\includegraphics[scale=0.15]{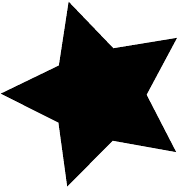}}}}
\newcommand*\purple{\vcenter{\hbox{\includegraphics[scale=0.15]{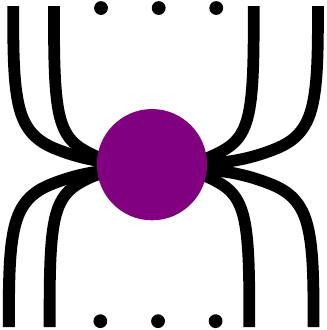}}}}
\newcommand*\orange{\vcenter{\hbox{\includegraphics[scale=0.15]{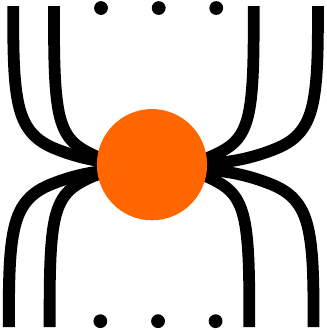}}}}
\newcommand*\pink{\vcenter{\hbox{\includegraphics[scale=0.15]{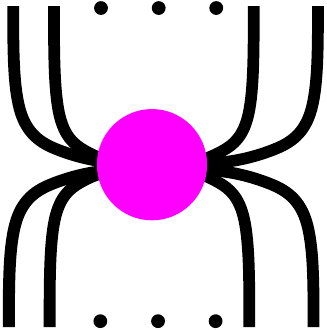}}}}
\newcommand*\brown{\vcenter{\hbox{\includegraphics[scale=0.15]{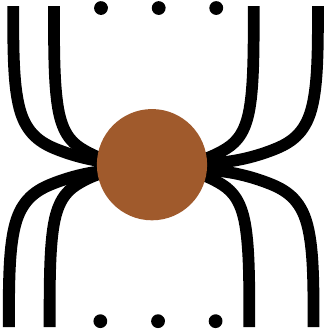}}}}
\newcommand*\cyan{\vcenter{\hbox{\includegraphics[scale=0.15]{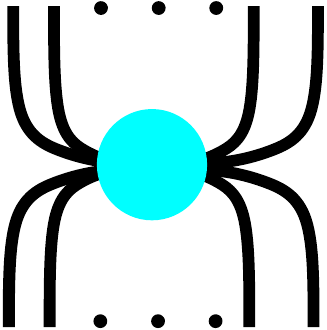}}}}
\newcommand*\pinkb{\vcenter{\hbox{\includegraphics[scale=0.15]{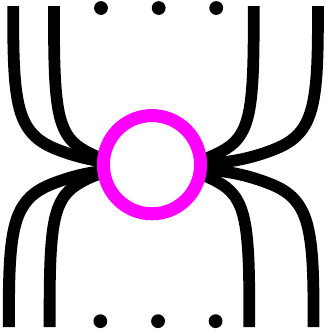}}}}
\newcommand*\cwbkp{\vcenter{\hbox{\includegraphics[scale=0.15]{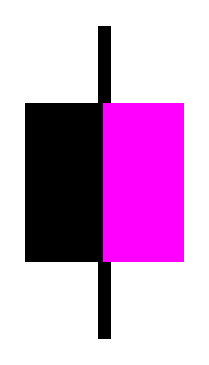}}}}
\newcommand*\egonebaseone{\vcenter{\hbox{\includegraphics[scale=0.2]{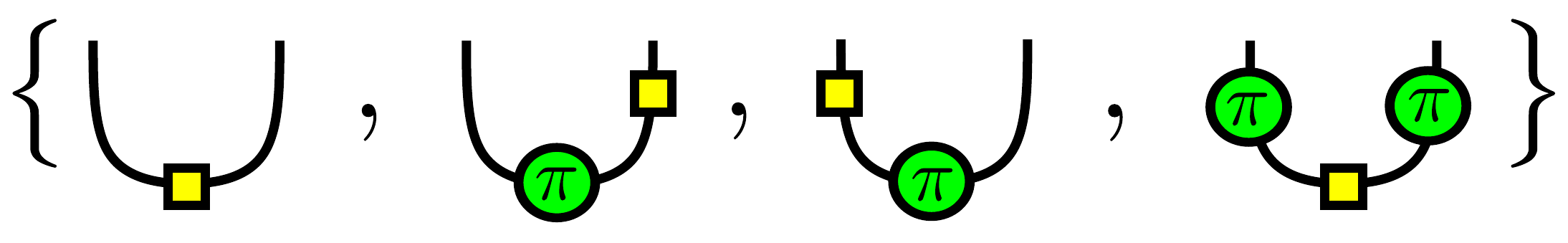}}}}
\newcommand*\egonebasetwo{\vcenter{\hbox{\includegraphics[scale=0.2]{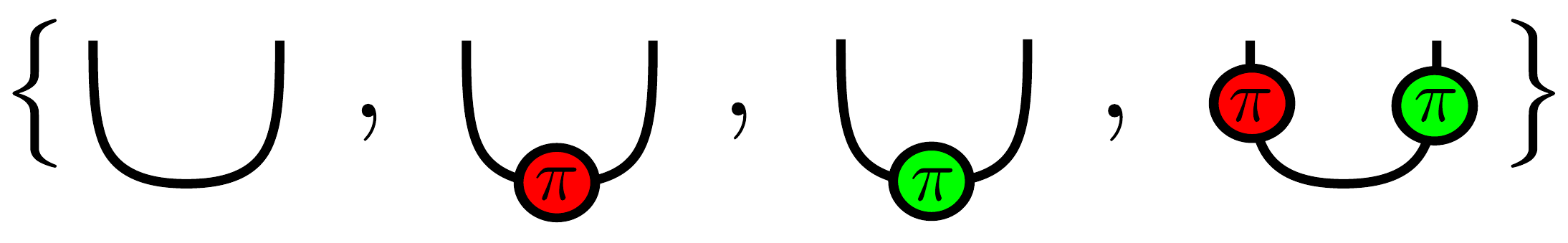}}}}
\begin{document}

\frontmatter





\begin{titlepage}
\begin{center}
    \vspace*{1cm}
    \includegraphics[]{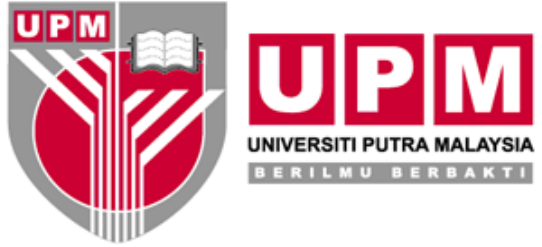}
    \vspace{1cm}
    \\
    \textbf{THE ROLE OF COMPOSITIONALITY IN CONSTRUCTING COMPLEMENTARY CLASSICAL STRUCTURES WITHIN QUBIT SYSTEMS}
    \vfill
    \textbf{By}
    \vspace{0.5cm}
    \\
    \textbf{SITI AQILAH BINTI MUHAMAD RASAT}
    \vfill
    \textbf{Thesis Submitted to the School of Graduate Studies, Universiti Putra Malaysia,}\\
    \textbf{in Fulfilment of the Requirements for the Degree of Master of Science}\\
\end{center}
\end{titlepage}

\newpage
\noindent All material contained within the thesis, including without limitation text, logos, icons, photographs and all other artwork, is copyright material of Universiti Putra Malaysia unless otherwise stated. Use may be made of any material contained within the thesis for non-commercial purposes from the copyright holder. Commercial use of material may only be made with the express, prior, written permission of Universiti Putra Malaysia.\\

\noindent Copyright $\copyright$ Universiti Putra Malaysia

\newpage

\newpage

\addcontentsline{toc}{chapter}{Abstract}

\begin{center}
Abstract of thesis presented to the Senate of Universiti Putra Malaysia in fulfilment of the requirement for the degree of Master of Science.\\

\vspace{1cm}

\textbf{THE ROLE OF COMPOSITIONALITY IN CONSTRUCTING COMPLEMENTARY CLASSICAL STRUCTURES WITHIN QUBIT SYSTEMS}\\
\vspace{1cm}
By\\
\vspace{1cm}
\textbf{SITI AQILAH BINTI MUHAMAD RASAT}\\

\end{center}

\vspace{1cm}

\begin{longtable}{lcl}
\textbf{Chair} & \textbf{:} & \textbf{Prof. Madya Dr. Hishamuddin Bin Zainuddin}\\
\textbf{Faculty} & \textbf{:} & \textbf{Institute For Mathematical Research}
\end{longtable}

\vspace{1cm}

\noindent Observables in a quantum system, represented by a Hilbert space, are given by the orthogonal bases of the aforementioned Hilbert space. Categorical Quantum Mechanics provides further abstraction of such observables, allowing for a diagrammatic representation of measurements that extends to quantum processes. Our research studies this abstraction of observables, which has been dubbed as classical structures, in a subtheory of quantum mechanics which focuses on qubit systems (or 2-dimensional quantum system and its composites). We have constructed a procedure that takes the complementary classical structures of a single qubit system and compose them separably via the Kronecker product or 'entangle' them via Bell states to obtain complementary classical structures in $n$-qubit systems. In this present work, we apply our procedure to two qubit and three qubit systems as examples. Then, using rewriting rules of ZX-calculus and tools in graph theory, we searched for maximal complete sets of mutually complementary classical structures (the categorical counterpart of mutually unbiased bases)  among our constructed composite classical structures. For two qubits, we found 13 maximal complete sets of mutually complementary classical structures, and for three qubits, we found 32,448 maximal complete sets of mutually complementary classical structures.

\newpage

\addcontentsline{toc}{chapter}{Abstrak}

\begin{center}
Abstrak tesis yang dikemukakan kepada Senat Universiti Putra Malaysia sebagai memenuhi keperluan untuk ijazah Sarjana Sains.\\

\vspace{1cm}
\textbf{PERANAN PENGGUBAHAN DALAM MEMBINA STRUKTUR KLASIK PELENGKAP DALAM SISTEM QUBIT}\\
\vspace{1cm}
Oleh\\
\vspace{1cm}
\textbf{SITI AQILAH BINTI MUHAMAD RASAT}\\

\end{center}

\vspace{1cm}

\begin{longtable}{lcl}
\textbf{Pengerusi} & \textbf{:} & \textbf{Prof. Madya Dr. Hishamuddin Bin Zainuddin}\\
\textbf{Fakulti} & \textbf{:} & \textbf{Institut Penyelidikan Matematik}
\end{longtable}

\vspace{1cm}

\noindent Pembolehcerap dalam sistem kuantum, diwakili oleh ruang Hilbert, diberi oleh asas ortogon ruang Hilbert. Mekanik Kuantum Berkategori menyediakan selanjutnya keabstrakan pembolehcerap tersebut, membenarkan perwakilan berdiagram pengukuran yang merangkumi proses kuantum. Penyelidikan kami mengkaji keabstrakan pembolehcerap yang dijuluki sebagai struktur klasik, dalam subteori mekanik kuantum yang memfokuskan sistem qubit (atau sistem kuantum dua dimensi dan kompositnya). Kami telah membangunkan satu prosedur bagi struktur klasik pelengkap satu sistem qubit tunggal dan menghurainya secara terpisah melalui hasil darab Kronecker atau mengusutkannya melalui keadaan Bell bagi memperolehi struktur klasik pelengkap dalam sistem n-qubit. Dalam kajian ini, kami menggunakan prosedur tersebut untuk sistem dua dan tiga qubit sebagai contoh. Kemudian, dengan petua penulisan semula kalkulus ZX dan alatan teori graf, kami menggelintar bagi set lengkap maksimum struktur klasik yang saling melengkapi (struktur berkategori setara bagi asas saling saksama) antara struktur klasik komposit yang terbina. Bagi dua qubit, kami menjumpai 13 set lengkap maksimum struktur klasik yang saling melengkapi, dan bagi 3 qubit, kami menjumpai 32,448 set lengkap maksimum struktur klasik yang saling melengkapi.

\tableofcontents
\setlength{\parskip}{2em}

\addcontentsline{toc}{chapter}{List of Figures}
\listoffigures

\addcontentsline{toc}{chapter}{List of Tables}
\listoftables


\addcontentsline{toc}{chapter}{List of Abbreviations}
\chapter*{List of Abbreviations}

\begin{longtable}{p{2cm}p{8.5cm}}
SLOCC & Stochastic Local Operations and Classical Communications\\
\\
TQFT & Topological Quantum Field Theory\\
\\
CQM & Categorical Quantum Mechanics\\
\\
CPM & Completely Positive Map(s)\\
\\
LHS & Left Hand Side\\
\\
RHS & Right Hand Side\\
\\
CS & Classical Structure(s)\\
\\
SC & Separably Composed\\
\\
NS & Non-Separable\\
\\
SCCS & Separably Composed Classical Structure(s)\\
\\
NSCS & Non-Separable Classical Structure(s)\\
\\
CD & Complementarity Diagram(s)
\end{longtable}

\mainmatter

\chapter{Introduction}

Categorical quantum mechanics (CQM) reconstructs quantum processes as diagrams, providing an intuitive way of performing computations that significantly simplifies complex equation-based calculations in quantum mechanics. For example, quantum teleportation can be described using the following picture in CQM:
\begin{equation*}
    \includegraphics[scale=0.3]{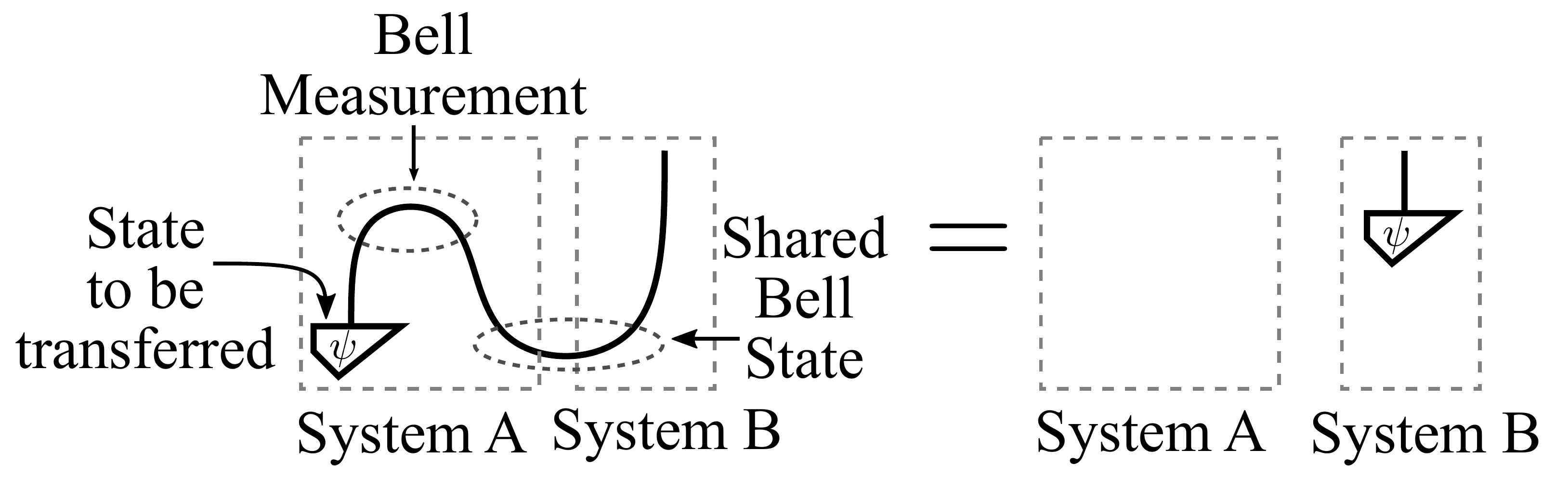}
\end{equation*}
where the boxes labelled `System $A$' and `System $B$' represent spatially separated systems. 

The same picture could be translated into a quantum circuit. However, the diagrams of CQM not only can be translated into mathematical terms, they form a category, specifically of the monoidal type, and if one wishes to express these diagrams as morphisms within a category in a more traditional manner, they can do so via a functor (details can be found in \cite{AbramskyCQM2009,Selinger2007a}). CQM has also evolved since its inception to provide a more general setting for processes which allows for a more detailed picture of quantum processes. This can be found in Section 4.2 of \cite{Coecke2016a}. 

In this chapter, we review some basic concepts of the so called process theories and structures within process theories that are integral to our work. The definitions are taken from various articles on CQM and process theories \cite{AbramskyCQM2009,Coecke2006, Coecke2011b, Coecke2011e, Coecke2015CQM}. Then we provide a brief outline of this thesis: our objectives, the procedure used to obtain our objective, and our results.

\section{Process Theories}\label{sec:process-intro} 

A process theory contains two main ingredients: a collection of systems (e.g. natural numbers, grocery items, information), classified by their types, and a collection of processes (e.g. arithmetic, food preparation, algorithm) which transform those systems. For each process, the type of system that it can transform and the type of system it produces must be specified. Furthermore, there is a means of composing processes of which the composition itself is a process in the theory. 
\begin{figure}[h]
\centering
    \includegraphics[scale=0.3]{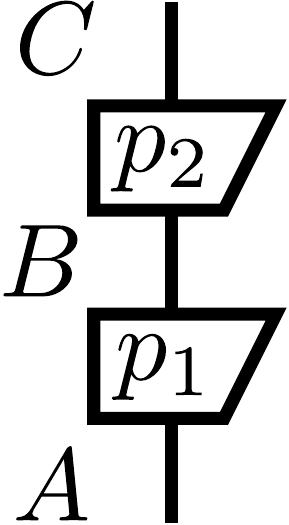}
    \caption{Sequential composition of processes.}
    \label{fig:compose}
\end{figure}

In this thesis, we adopt the convention of reading diagrams from bottom to top. So, in Fig. \ref{fig:compose}, we have two processes which are composed sequentially, that is, the process labelled $p_1$ is followed by the process labelled $p_2$. Notice that the two processes are joined together by the wire between them. Here, we need a compatibility condition. $p_1$ transforms some system of type $A$ into some system of type $B$. If $p_2$ does not transform a system of type $B$, it should not be able to transform the system produced by $p_1$ and the diagram above is meaningless. Therefore, when composing two processes, $p_1$ and $p_2$, sequentially, the type of system produced by $p_1$ must match the type of system which $p_2$ transforms.

In summary, 
\begin{definition}\cite{Coecke2011e, Coecke2015CQM}
A process theory consists of:
\begin{enumerate}
\item A collection $T$ of system-types, denoted by wires:
\begin{equation*}
    \includegraphics[scale=0.3]{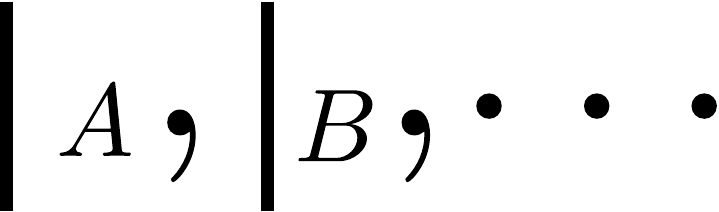}
\end{equation*}
\item A collection $P$ of processes, denoted by boxes, where the type of system transformed by a process and the type of system it produces belong to $T$:
\begin{equation*}
    \includegraphics[scale=0.3]{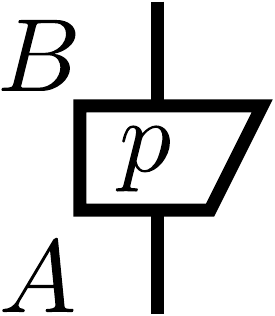}
\end{equation*}
\item Each system-type has a unique identity process satisfying the following equation for any process $p\in P$ and pair of system-types $A,B\in T$:
\begin{equation*}
    \includegraphics[scale=0.3]{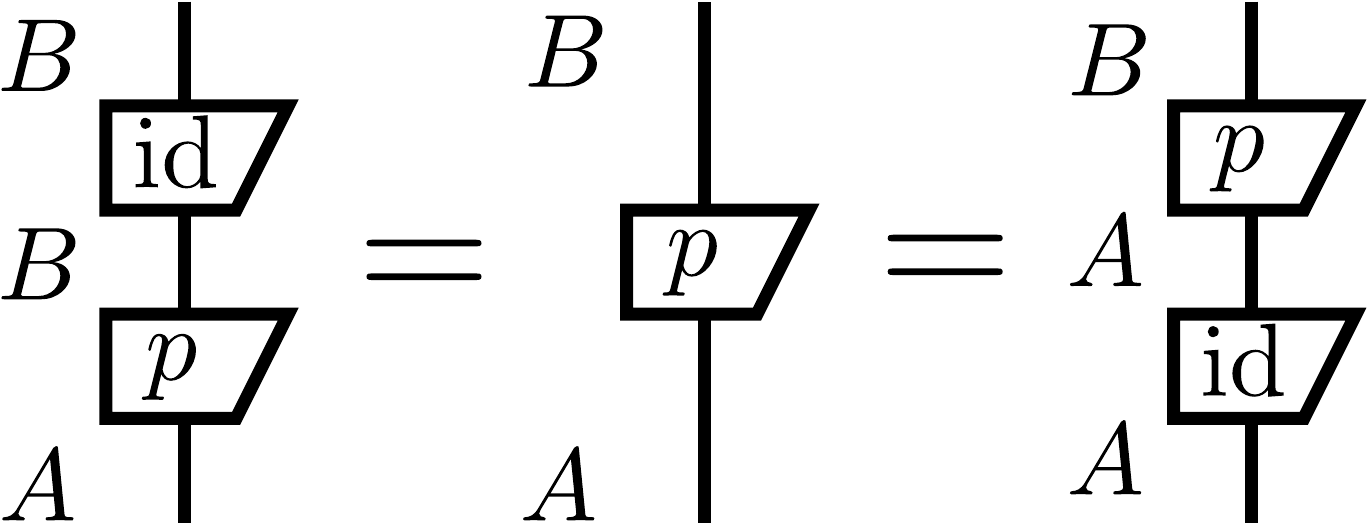}
\end{equation*}
\item A means of composing processes which forms a diagram that is also a process in $P$, that is, $P$ is closed under (sequential) composition (see Fig. \ref{fig:compose}). 
\end{enumerate}
\end{definition}

The identity process can be considered as the `do nothing' process, i.e. composing it to another process, say $p$ results in $p$. So we can define the identity process for a system-type $A$ as follows:
\begin{equation}
    \includegraphics[scale=0.3]{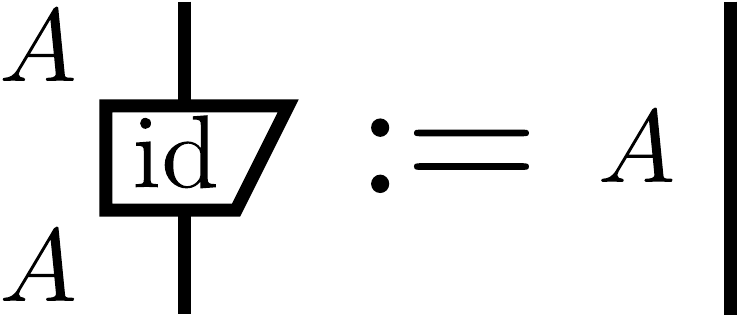}
\end{equation}

A process theory is a category (definition given below); where the system-types are the objects, the processes are the morphisms, the composition of morphisms is defined by the sequential composition between processes. 

\begin{definition} \cite{Awodey2010}
A category consists of the following data:
\begin{itemize}
    \item Objects, usually denoted by uppercase letters: $A$, $B$, $C$, ...
    \item Morphisms, usually denoted by lowercase letters: $f$, $g$, $h$, ...
    \item For each morphism $f$, there are given objects,
    \begin{center}dom$(f)$, cod$(f)$\end{center} 
    called the domain and codomain of $f$. We write:
    $$f:A\rightarrow B$$
    to indicate that $A=\text{dom}(f)$ and $B=\text{cod}(f)$.
    \item Given morphisms $f:A\rightarrow B$ and $g:B\rightarrow C$, that is, with
    $$\text{cod}(f)=\text{dom}(g)$$
    there is a given morphism
    $$g\circ f:A\rightarrow C$$
    \item For each object $A$, there is given an morphism
    $$1_A:A\rightarrow A$$
    called the identity morphism of $A$.
\end{itemize}
These data are require to satisfy the following laws:
\begin{itemize}
    \item Associativity:
    $$h\circ(g\circ f)=(h\circ g)\circ f$$
    for all $f:A\rightarrow B$, $g:B\rightarrow C$, $h:C\rightarrow D$.
    \item Unit:
    $$f\circ 1_A=f=1_B\circ f$$
    for all $f:A\rightarrow B$.
\end{itemize}
\end{definition}

A great interest in physics is the study of multipartite systems. We can describe a multipartite system by composing its subsystems but this type of composition must be different than the sequential composition we described above. Since we have set the directional convention to be along the vertical axis, we can compose two systems by placing them side by side so we may distinguish it from the sequential composition. We call this type of composition as parallel composition. Then it follows that we can transform the resulting composite system using two distinct processes, each of which acts on a different subsystem as in Fig. \ref{fig:parallel-comp}.
\begin{figure}[h]
\centering
    \includegraphics[scale=0.3]{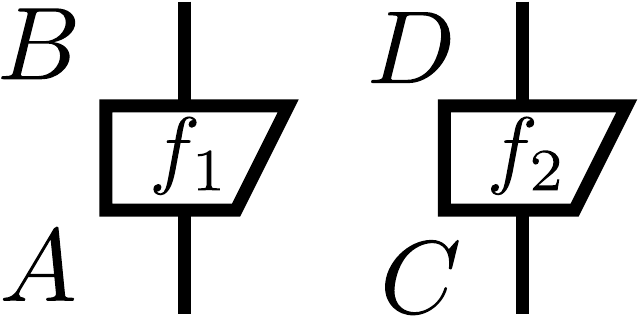}
    \caption{Parallel composition of two processes.}
     \label{fig:parallel-comp}
\end{figure}

Now we have two ways of composing processes: sequentially and in parallel. However, unlike sequential composition, there is no compatibility condition which needs to be satisfied in order to compose two process in a parallel manner. That is, with parallel composition, the processes remain independent. 

As a shorthand, we adopt the symbols $\otimes$ and $\circ$ for parallel and sequential compositions, respectively. That is:
\begin{equation*}
    \includegraphics[scale=0.3]{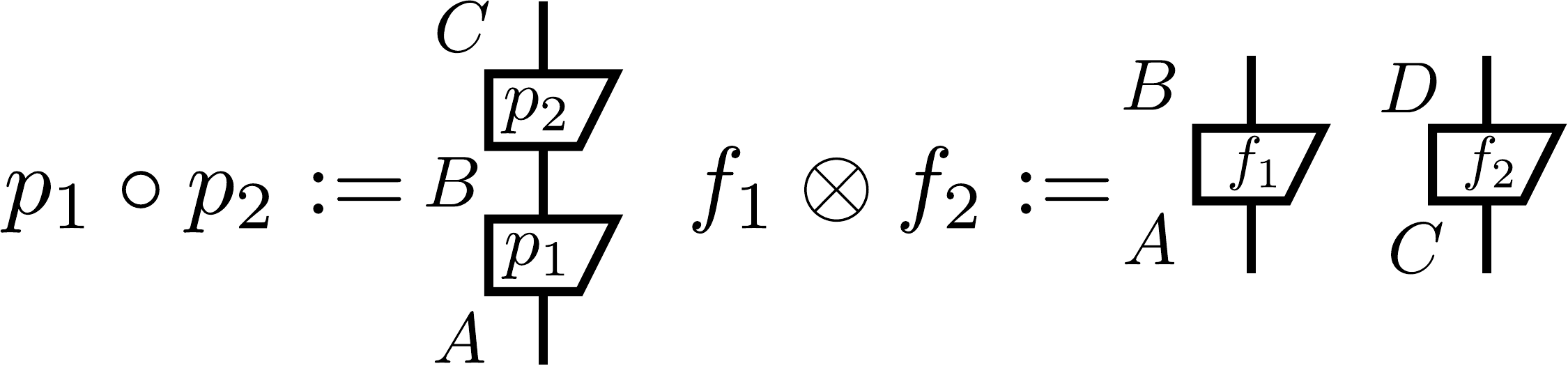}
\end{equation*}

Sequential composition has an order built into it, but to express different orderings for parallel composition, we need a process called `swap' which satisfies the equation below:
\begin{equation*}
    \includegraphics[scale=0.3]{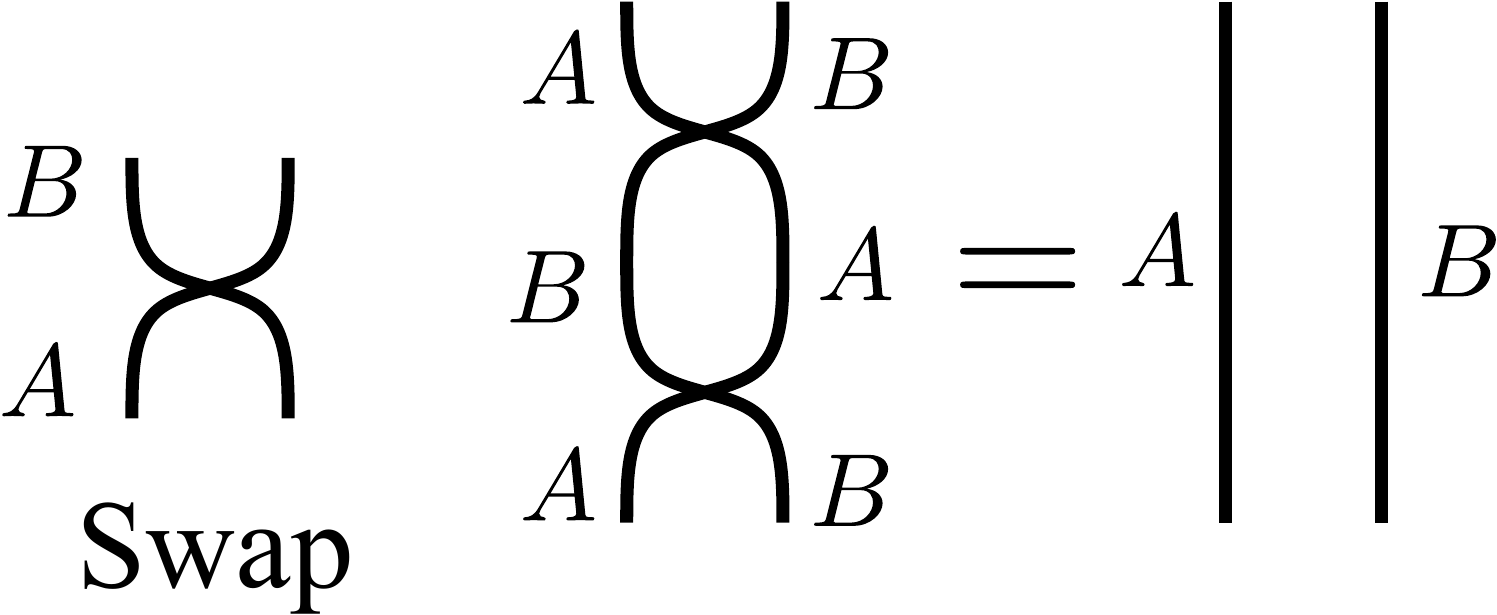}
\end{equation*}
Sequentially composing swaps allow us to permute the ordering of a composite system. For example, we can transform $A\otimes B\otimes C\otimes D$ to one of its permutations in the following way:
\begin{equation*}
    \includegraphics[scale=0.3]{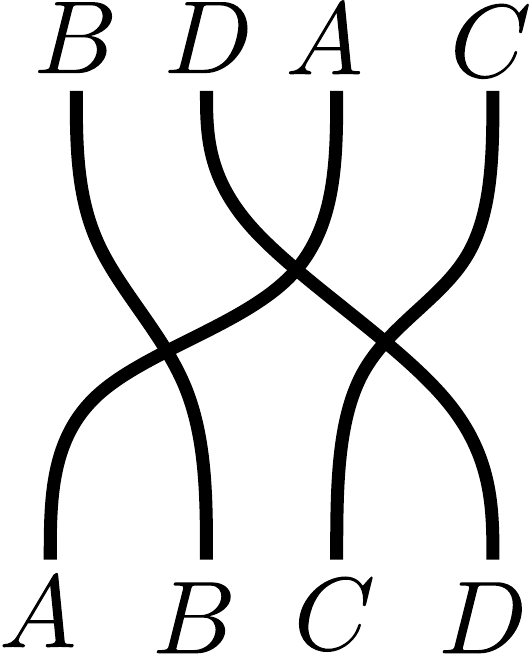}
\end{equation*}

Notice that the associativity of both compositions are built into their diagrams: 
\begin{equation*}
    \includegraphics[scale=0.3]{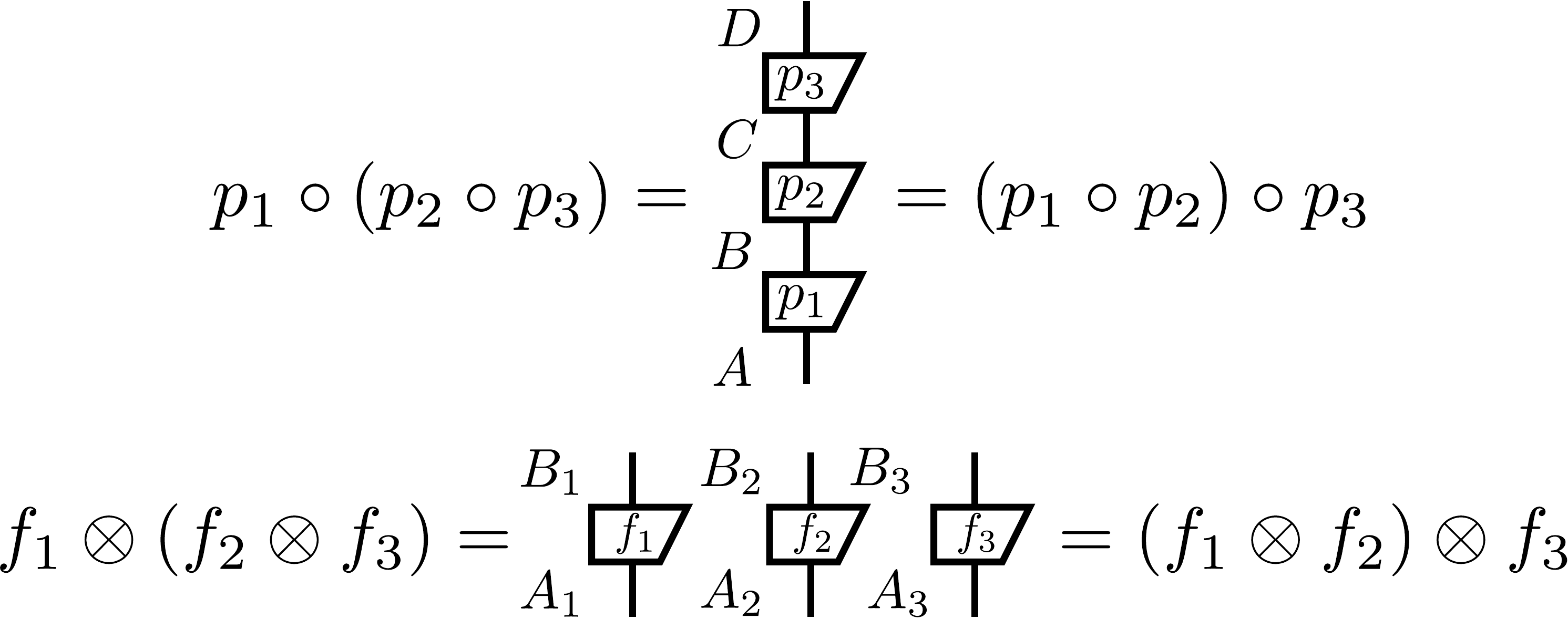}
\end{equation*}

We already have the identity process for each system-type with respect to sequential composition, i.e. the do nothing process. For parallel composition, there is no identity with respect to process. Instead, there is a special system, given the symbol $I$, which is the so called identity for system-types:
\begin{equation*}
    \includegraphics[scale=0.3]{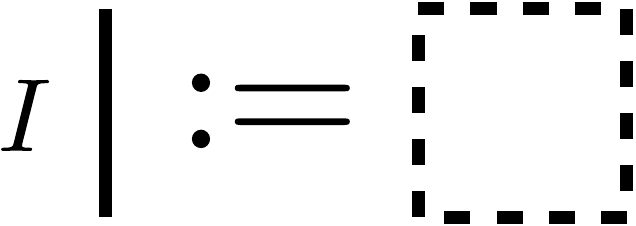}
\end{equation*}
We shall forgo the dashed squares in forthcoming diagrams. The reason we included them to the diagrams above is to highlight the emptiness of $I$. So then, $A$, $A\otimes I$ and $I\otimes A$ are all equivalent to each other:
\begin{equation*}
    \includegraphics[scale=0.3]{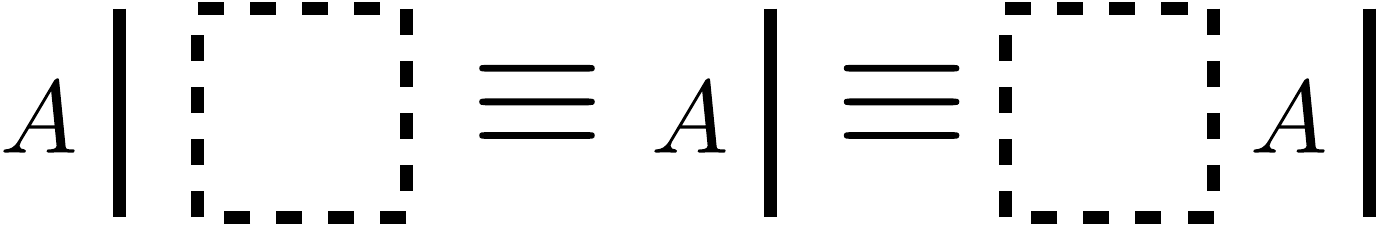}
\end{equation*}
$I$ is called the `empty system', which is represented by the empty diagram. Therefore, we are able to include those processes with no input and/or no output within our collection of processes. We call a process with no input as `state', a process with no output as `effect', and a process with neither input nor output as `scalar'. As the reader might have noticed, these terms are borrowed from physics. In fact, within a quantum process theory, these special processes match their names \cite{AbramskyCQM2009}.
\begin{equation*}
    \includegraphics[scale=0.3]{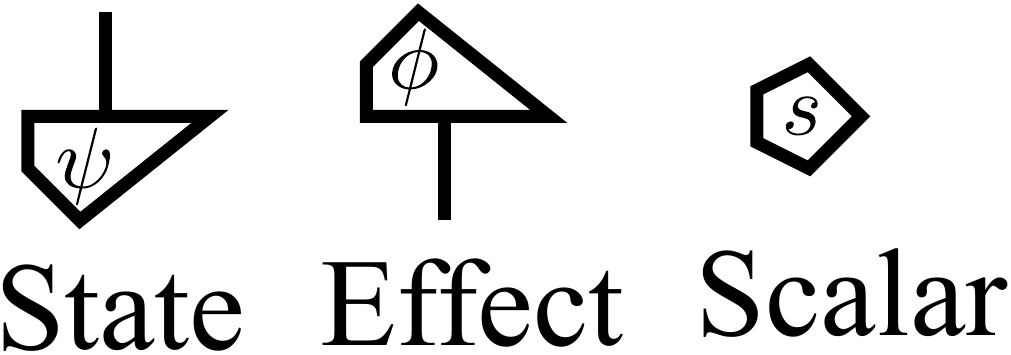}
\end{equation*}
When $\circ$-composing a state and effect, we obtain the Born rule, that is, the following diagram is \textit{the probability amplitude of obtaining the effect $\phi$  when measuring the state $\psi$}:
\begin{equation*}
    \includegraphics[scale=0.3]{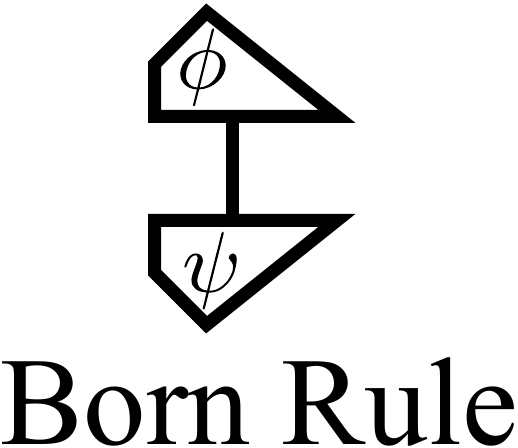}
\end{equation*}

\begin{definition} \cite{Coecke2015CQM}
A circuit diagram is a diagram constructed from processes, including identities and swaps, by $\circ$- and $\otimes$-compositions.  
\end{definition}

A process theory equipped with a $\otimes$-composition, where every diagram is a circuit diagram, is a symmetric monoidal category, with $\otimes$-composition as the equipped bifunctor and the empty system as the unit object. This correspondence between a process theory and a category is in fact an equivalence \cite{Joyal1991}.

\begin{definition}\cite{Selinger2007a}
A symmetric monoidal category is a category $\mathbf{C}$ equipped with a binary operation $\otimes:\mathbf{C}\times \mathbf{C}\rightarrow \mathbf{C}$ which is functorial, and there exist natural isomorphisms $\alpha,\sigma,\lambda,\rho$ for objects $A$, $B$, $C$ together with a distinguished object $I$:
$$\alpha_{A,B,C}:A\otimes(B\otimes C)\cong(A\otimes B)\otimes C,$$
$$\sigma_{A,B}:A\otimes B\cong B\otimes A,$$
$$\rho_C:I\otimes C\cong C \cong C\otimes I:\lambda_C$$
$\alpha,\sigma,\lambda,\rho$ satisfy certain coherence conditions. 
\end{definition}

A key difference between $\circ$ and $\otimes$ is the dependency between the processes which they compose:
\begin{equation*}
    \includegraphics[scale=0.3]{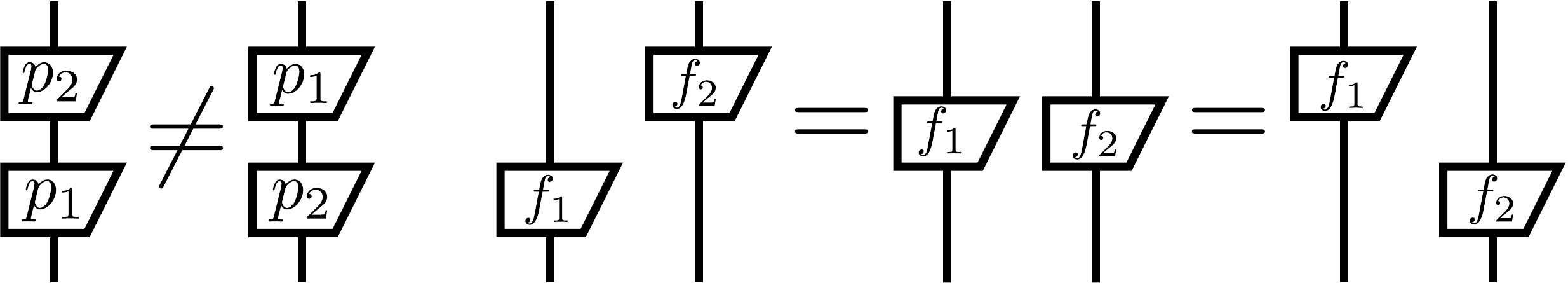}
\end{equation*}
If we ignore any directional convention, the $\circ$-composition of two processes can be viewed as a correlation, giving us an option to ignore the causal component that is inherent in input-output reading of processes. However, on the other hand, imposing a directional convention on a process theory gives context to its diagrams whereby we know the starting point when reading a diagram. We can obtain both these features if we impose what we shall call a `compact structure' on a process theory.

\begin{definition}\label{def:compact} \cite{AbramskyCQM2009}
A system-type $A$ is compact if $A\otimes A$ has a state -- called cup, $\cup$ -- and an effect -- called cap, $\cap$ -- which satisfy the following equation:
\begin{equation}\label{eq:pre-yank}
    \includegraphics[scale=0.3]{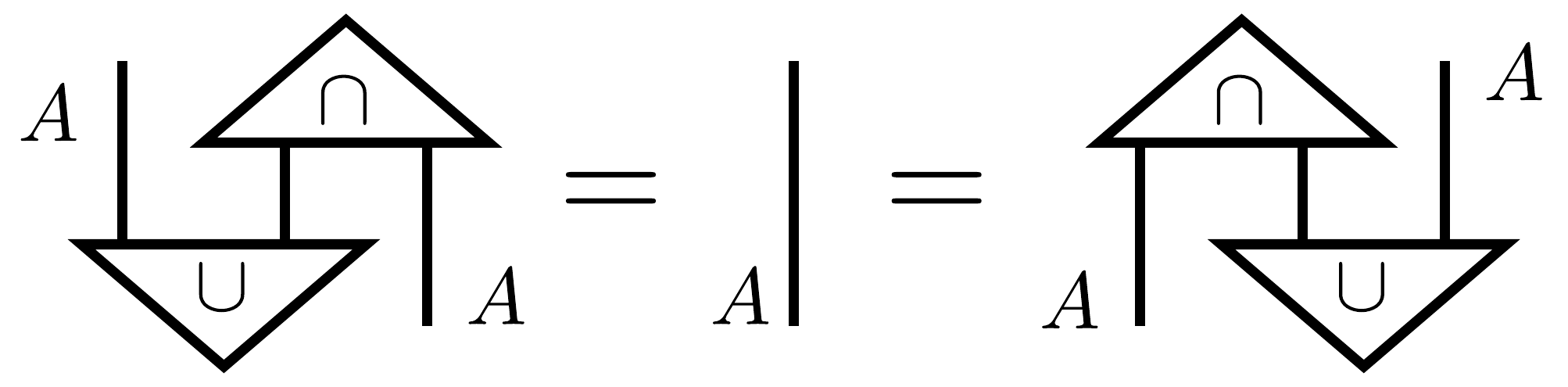}
\end{equation}
We call a process theory where every system-type is compact as a compact process theory. A diagram in a compact process theory is called a string diagram. 
\end{definition}

Since cups and caps are so special, we use the following diagrams for them:
\begin{equation*}
    \includegraphics[scale=0.3]{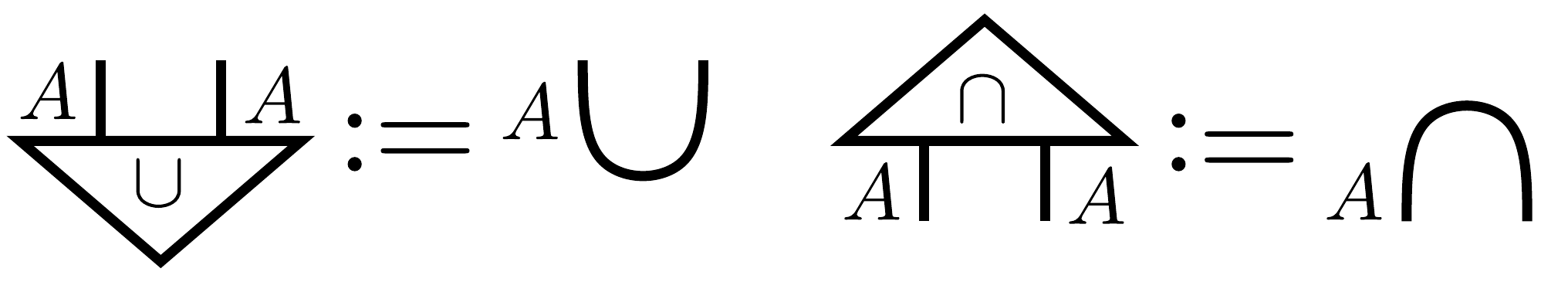}
\end{equation*}
Then Eq. \ref{eq:pre-yank} becomes:
\begin{equation}\label{eq:yank}
    \includegraphics[scale=0.3]{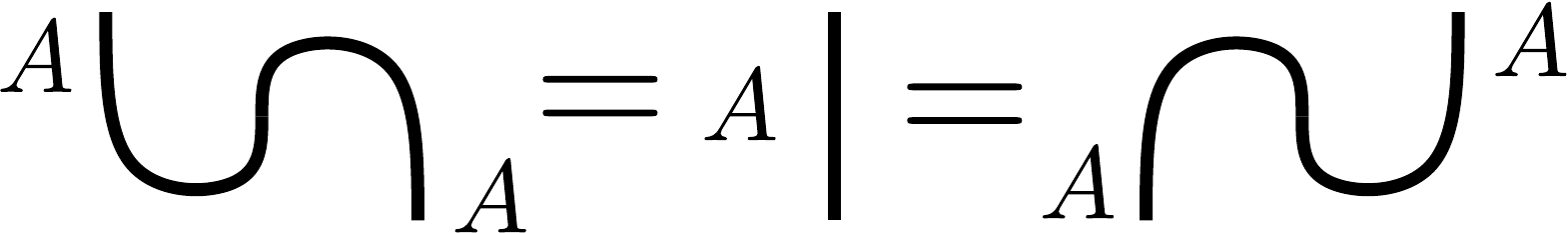}
\end{equation}

The categorical counterpart of a compact process theory is a compact closed category; where the cup and cap of each system-type are its unit and counit, respectively, and the dual object of the system-type is itself. Note that an object in a compact closed category is not necessarily self-dual in general. However, we shall only encounter objects of a compact closed category which are self-dual in this thesis. As such, whenever we mention a compact closed category, we mean a compact closed category where every object is self-dual.

In a compact process theory, there is a bijective correspondence between states and processes:
\begin{equation*}
    \includegraphics[scale=0.3]{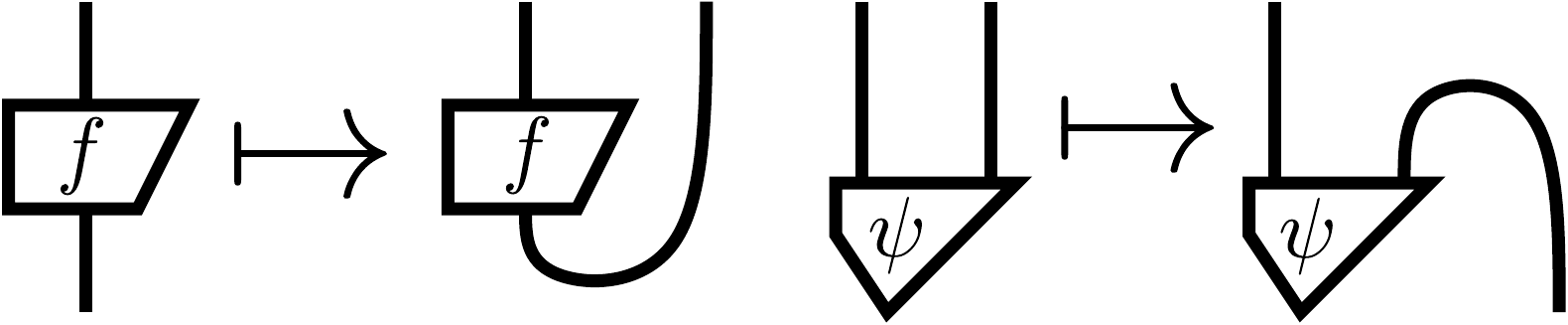}
\end{equation*}
The result above is called the `process-state duality'. In category theory, this equivalence is captured by an isomorphism between two functors which are the categorical counterparts of the mappings in the previous diagrams \cite{Selinger2007a}. 

\begin{definition}\cite{Selinger2007a}
A compact closed category is a symmetric monoidal category where each object $A$ is assigned a dual object $A^*$, together with a unit morphism $\eta_A:I\rightarrow A^*\otimes A$ and a counit morphism $\epsilon_A:A\otimes A^*\rightarrow I$, such that $\lambda_A^{-1}\circ(\epsilon_A\otimes 1_A)\circ\alpha_{A,A^*,A}^{-1}\circ(1_A\otimes\eta_A)\circ\rho_A=1_A$ and $\rho_{A^*}^{-1}\circ(1_{A^*}\otimes\epsilon_A)\circ\alpha_{A^*,A,A^*}\circ(\eta_A\otimes 1_{A^*})\circ\lambda_A=1_{A^*}$.
\end{definition}

We can also obtain the process-effect duality from the cup and cap but later, we shall introduce the dagger structure which renders these dualities as equivalent, that is, if we have  process-state duality, then we must have process-effect duality, and vice versa.

Within a compact process theory, any correlation that a pair of systems might have is generated by either the cup or the cap. That is, when two systems $A$ and $B$ are correlated via a state $\chi$, there must be a process $c$ such that:
\begin{equation}
    \includegraphics[scale=0.3]{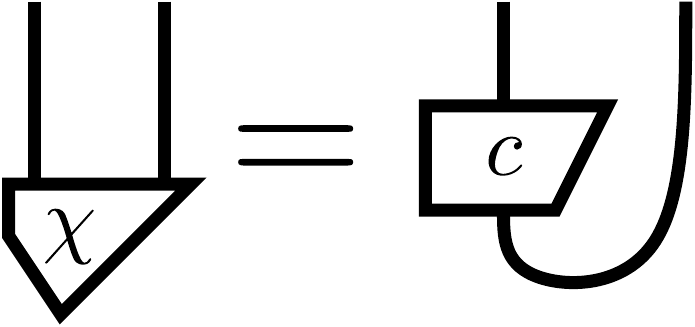}
\end{equation}
In quantum theory, the process-state duality makes sure that there is only a single class of maximally entangled states (with respect to SLOCC) for a bipartite system \cite{Coecke2010}. Another phenomena which can be expressed using the compact structure is quantum teleportation (see Fig. \ref{fig:teleport1}).

\begin{figure}[!ht]
\centering
\includegraphics[scale=0.3]{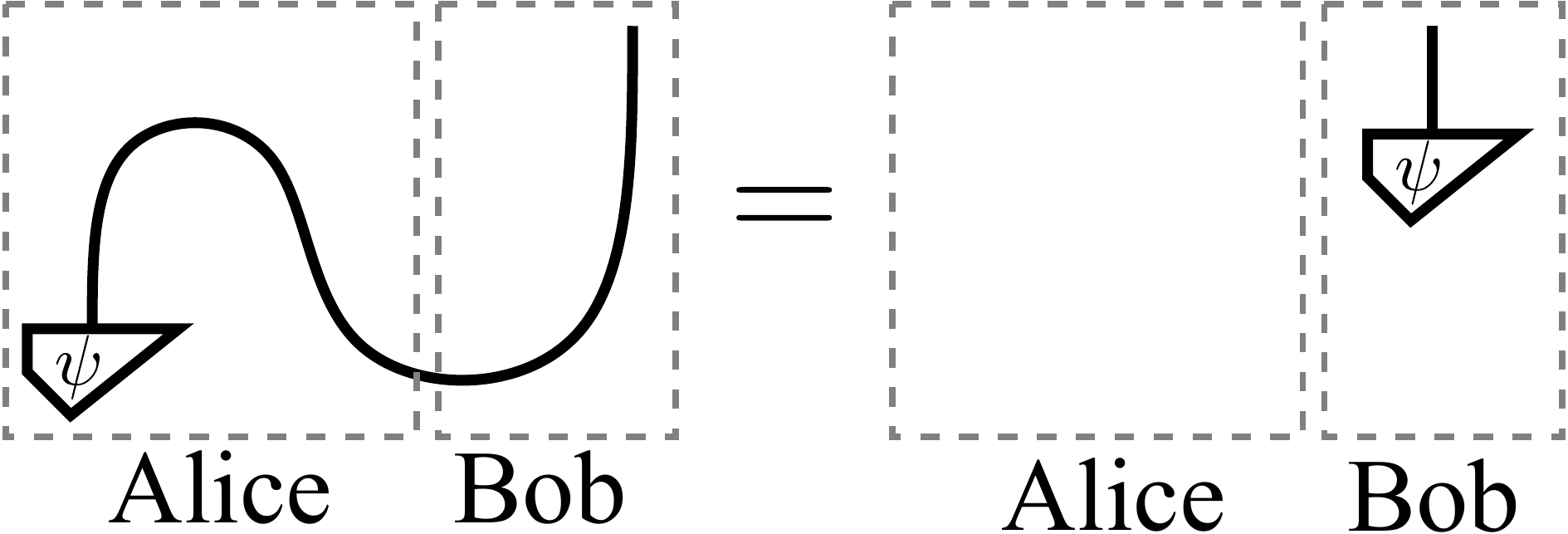}
\caption{Quantum Teleportation}    
\label{fig:teleport1}
\end{figure}

The diagram in Fig. \ref{fig:teleport1} is not yet a complete description of quantum teleportation since the measurement performed by Alice is probabilistic and hence, the state which reaches Bob is one of $U_i\circ \psi$ where $U_i$ is a collection of unitary operators (or processes) which represent the possible outcomes of Alice's measurement. 

To describe unitary processes, we need a structure which describes adjoints. This is captured by the dagger structure \cite{Selby2017Adjoint}. In writing, we use the symbol $\dagger$ to denote the dagger structure. When applying $\dagger$ to a process, we obtain its adjoint, which is also a process. 

Within a process theory with a dagger structure, when applying $\dagger$ to a process twice, we obtain the process again. Therefore, it makes sense to graphically represent the adjoint of a diagram (and consequently, processes) as its reflection on the horizontal axis: 
\begin{equation*}
    \includegraphics[scale=0.3]{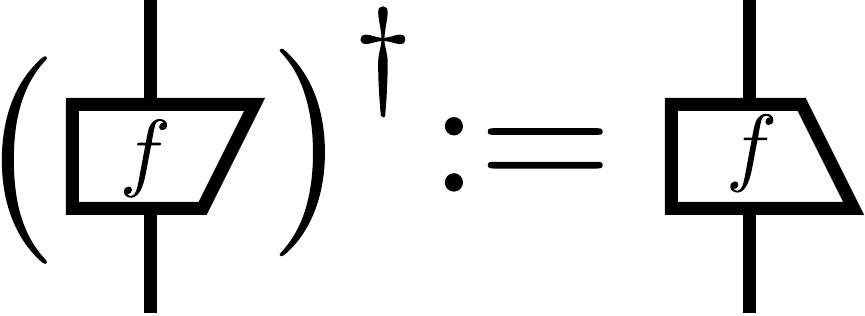}
\end{equation*}
The categorical counterpart of a process theory with a $\dagger$ structure is called a \textbf{dagger category} \cite{AbramskyCQM2009, Selinger2007a}. 

\begin{definition}\cite{Selinger2007a}
A dagger category is a category $\mathbf{C}$ together with an involutive, identity-on-objects, contravariant functor $\dagger:\mathbf{C}\rightarrow\mathbf{C}$.
\end{definition}

A compact structure and a dagger is compatible when:
\begin{equation*}
    \includegraphics[scale=0.3]{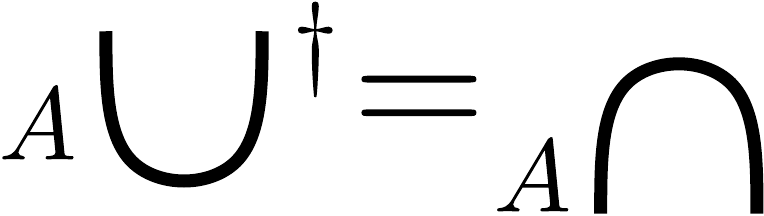}
\end{equation*}

\begin{definition}
A process is unitary if it satisfies the following equation:
\begin{equation}\label{eq:unitary}
    \includegraphics[scale=0.3]{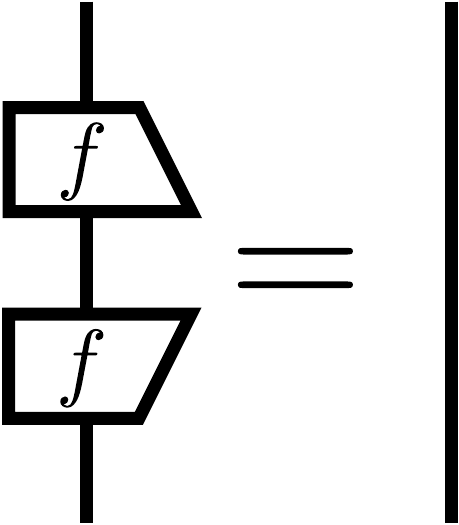}
\end{equation}
\end{definition}

Therefore, in the quantum teleportation diagram from Fig. \ref{fig:teleport1}, Bob may obtain the state that Alice sent by performing the adjoint of $U_i$ on the resulting state:
\begin{equation*}
    \includegraphics[scale=0.3]{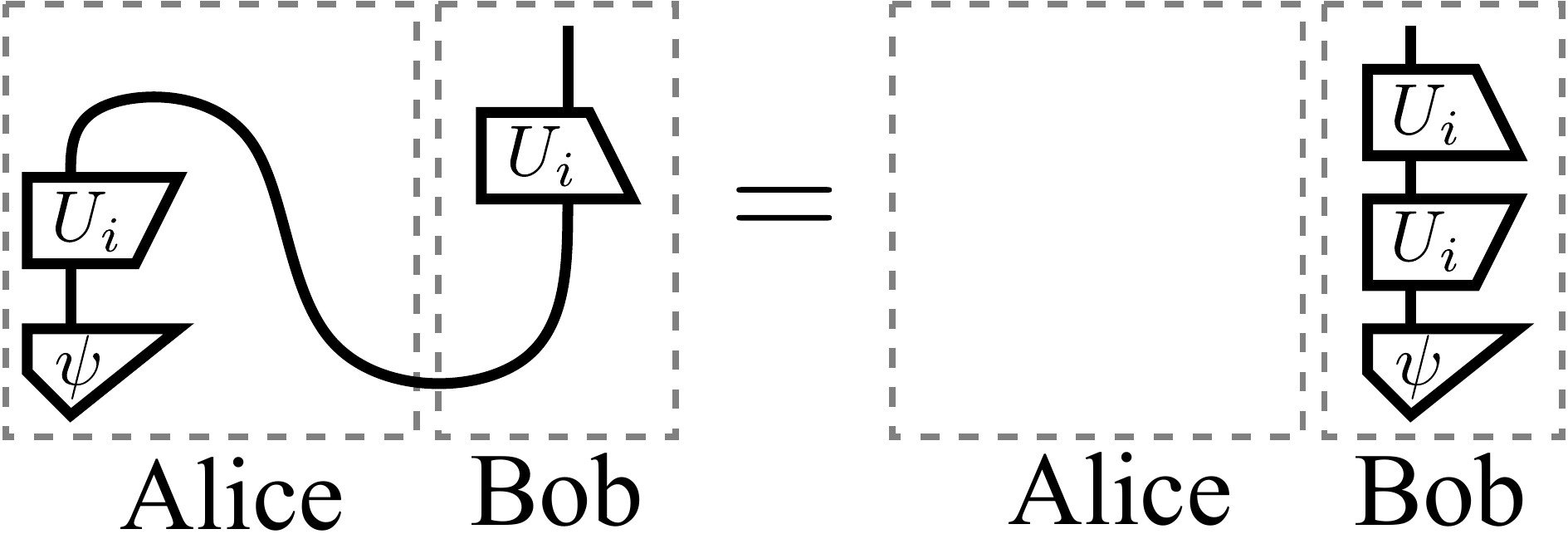}
\end{equation*}
But how would Bob know which $U_i^\dagger$ to perform? This can only be done by telling Alice the outcome of his measurement through a classical channel since another quantum communication would result in another probabilistic measurement. In the next section, we provide a brief review on how to present `classicality' in the diagrammatic language of process theories. 

We can also introduce diagrammatic counterparts of familiar concepts from the compact and dagger structures.

\begin{definition}\label{def:transpose}\cite{AbramskyCQM2009}
 The transpose of a process $f$ is defined as follows:
 \begin{equation*}
     \includegraphics[scale=0.3]{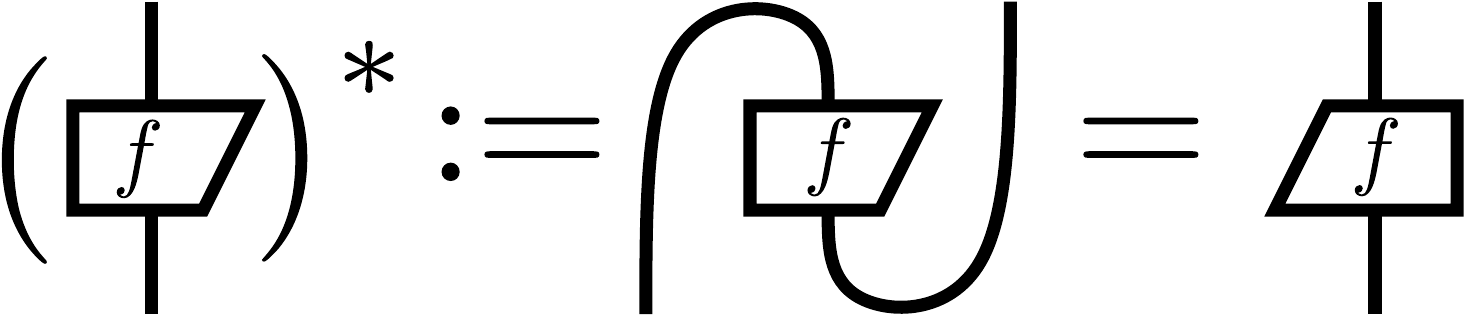}
 \end{equation*}
\end{definition}

\begin{definition}
The conjugate of a process $f$ is defined as follows:
\begin{equation*}
    \includegraphics[scale=0.3]{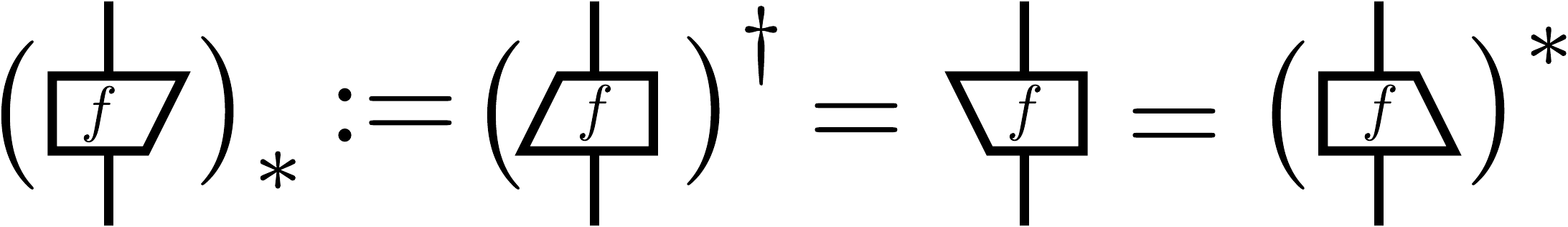}
\end{equation*}
\end{definition}

\section{Measurements as Diagrammatic Algebras}\label{sec:intro-measalg}

In quantum theory, there are several no-go theorems which distinguish between quantum data-types and classical data-types. Two of them are the no-cloning and the no-deleting theorems which state that quantum data-types cannot be copied or deleted as can be done to classical data-types. In a process theory, these two features of quantum data-types are the basis for describing classicality.

The mathematical framework for quantum theory relies on linear algebra: a state of a system is represented by the elements of a Hilbert space and the processes performed on states are represented by linear maps between Hilbert spaces. Due to linearity, quantum states cannot be copied uniformly. That is, given the possible states of a quantum system, only a portion of those states can be copied by the same cloning process. These states turn out to be orthogonal to each other. In fact, the vectors representing these states form an orthogonal basis of the Hilbert space which contains the system's possible states. The same applies to the deletion of quantum states.  

The category of finite-dimensional complex Hilbert spaces (objects) and the linear maps between them (morphisms) is a dagger compact category. Therefore, quantum processes can be depicted as diagrams of a compact process theory equipped with a dagger structure. Let $\mathcal{B}$ be an orthogonal basis of a finite-dimensional complex Hilbert space $H$. There is a process $\triangledown: H\rightarrow H\otimes H$, which copies the elements of $\mathcal{B}$, and a process $\top:H\rightarrow\mathbb{C}$, which deletes the elements of $\mathcal{B}$. Suppose $\mathcal{B}=\{\ket{j}\}_{j=1}^N$ where $N$ is dimension of $H$. Then:
\begin{eqnarray}
\triangledown(\ket{j})=\ket{j}\ket{j}\\
\top(\ket{j})=1
\end{eqnarray}

$H$ together with $\triangledown,\top,\triangledown^\dagger,\top^\dagger$ form a classical structure. In a process theory, we represent copying and deleting along with their adjoints as the following diagrams:
\begin{equation*}
    \includegraphics[scale=0.2]{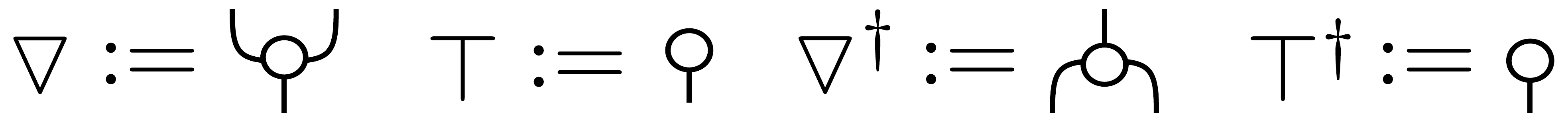}
\end{equation*}
\citet{Coecke2013New} showed that any orthogonal basis of a finite-dimensional complex Hilbert space correspond to a classical structure of the Hilbert space. Furthermore, this correspondence is bijective. 

We only deal with finite-dimensional complex Hilbert spaces in this work, so henceforth, when we mention Hilbert spaces, we are referring to finite-dimensional complex Hilbert spaces. 

The following is an alternative definition of a classical structure which takes the copying and deleting processes as particular spiders.

\begin{definition}\label{def:CS}
For a system-type $A$, a classical structure of $A$ consists of spiders with $m$ inputs and $n$ outputs, where $m,n\in\mathbb{N}\cup 0$:
\begin{equation*}
    \includegraphics[scale=0.2]{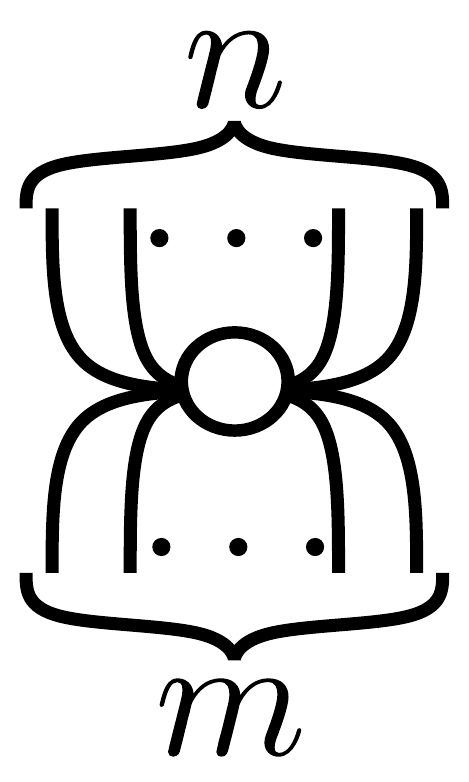}
\end{equation*}
such that $m_1,n_1$- and $m_2,n_2$-spiders fuse together when $k$ outputs of one are joined to the $k$ inputs of the other:
\begin{equation}\label{eq:spider-fuse}
    \includegraphics[scale=0.2]{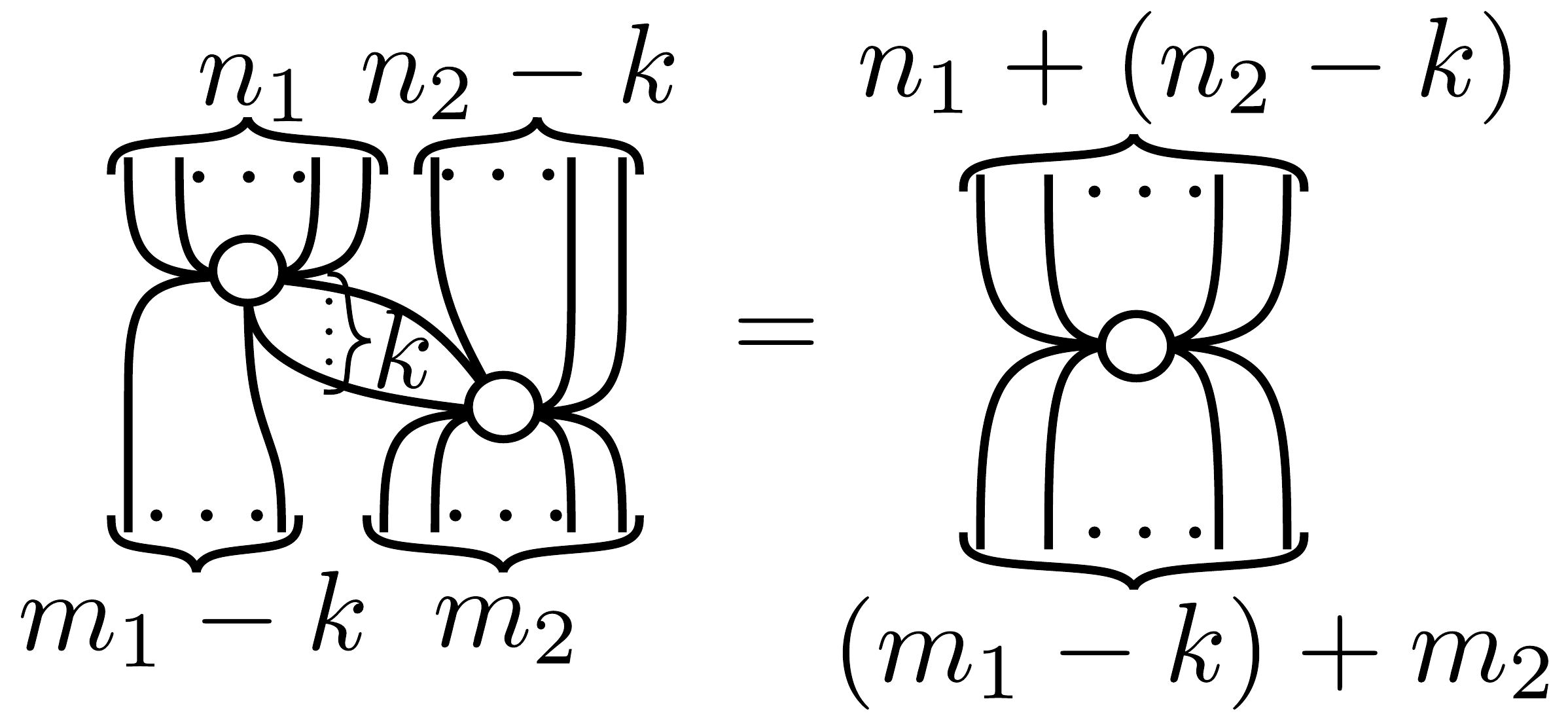}
\end{equation}
and a spider with permuted inputs/outputs is the same as the unpermuted one:
\begin{equation}
    \includegraphics[scale=0.2]{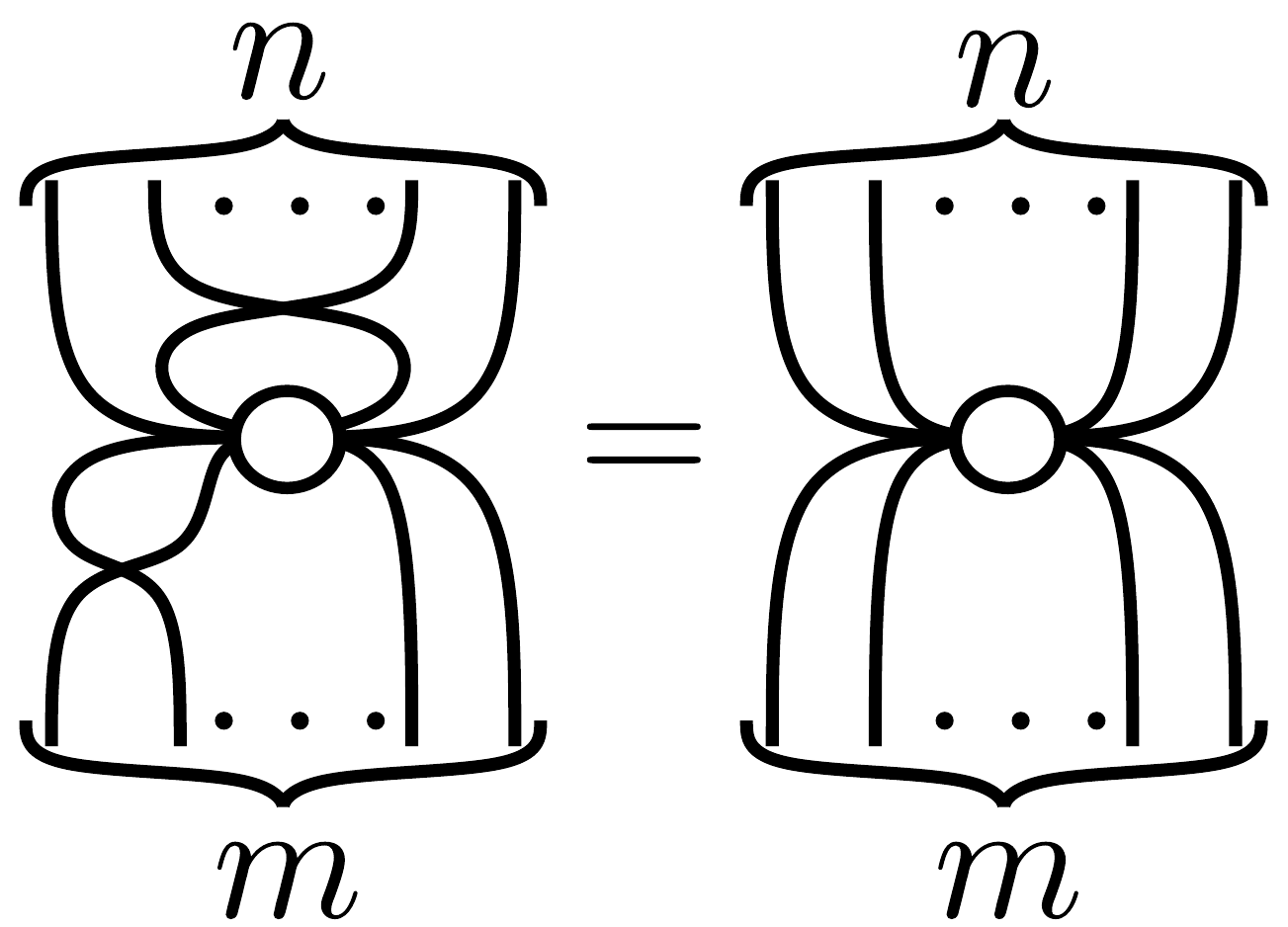}
\end{equation}
\end{definition}

A system with a classical structure is automatically compact. That is, the cup is the 2,0-spider and the cap is the 0,2-spider. 
\begin{equation}
    \includegraphics[scale=0.2]{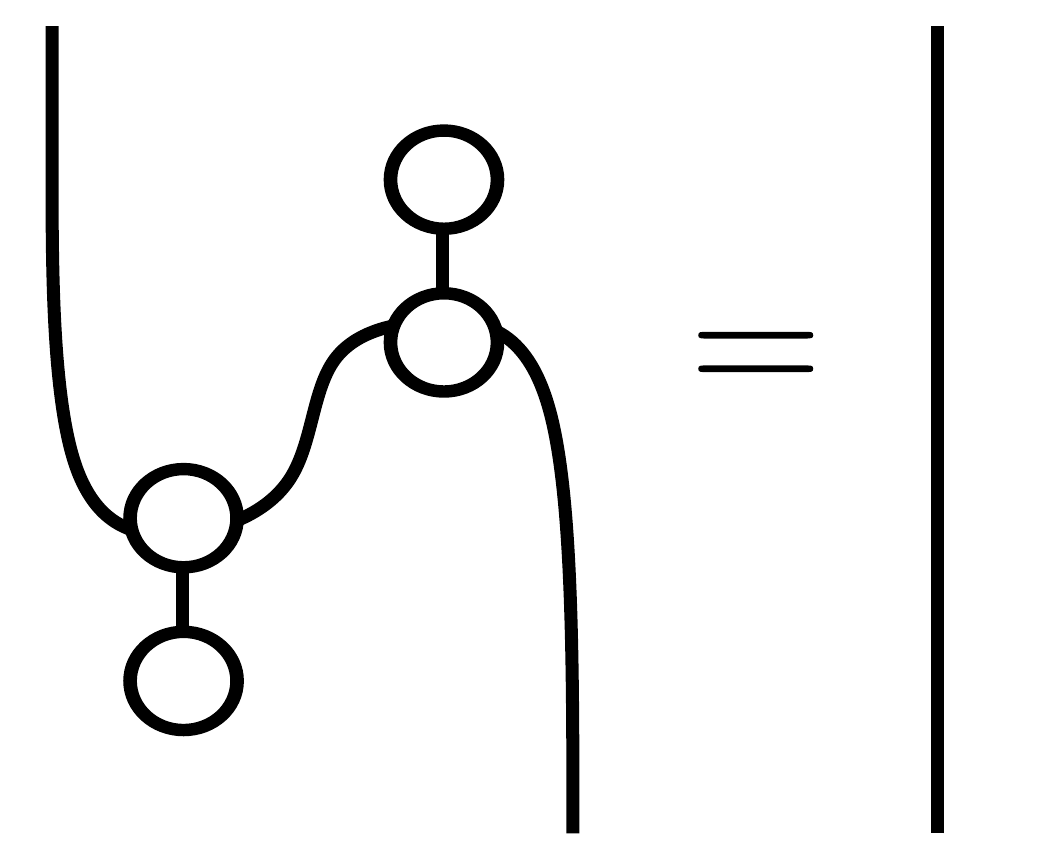}
\end{equation}

Notice that spiders in a classical structure include those without an input or output, i.e. when $m$ or $n$ is equal to 0. When a spider is fused with these types of spiders, the inputs/outputs joined are terminated. For example:
\begin{equation}
    \includegraphics[scale=0.2]{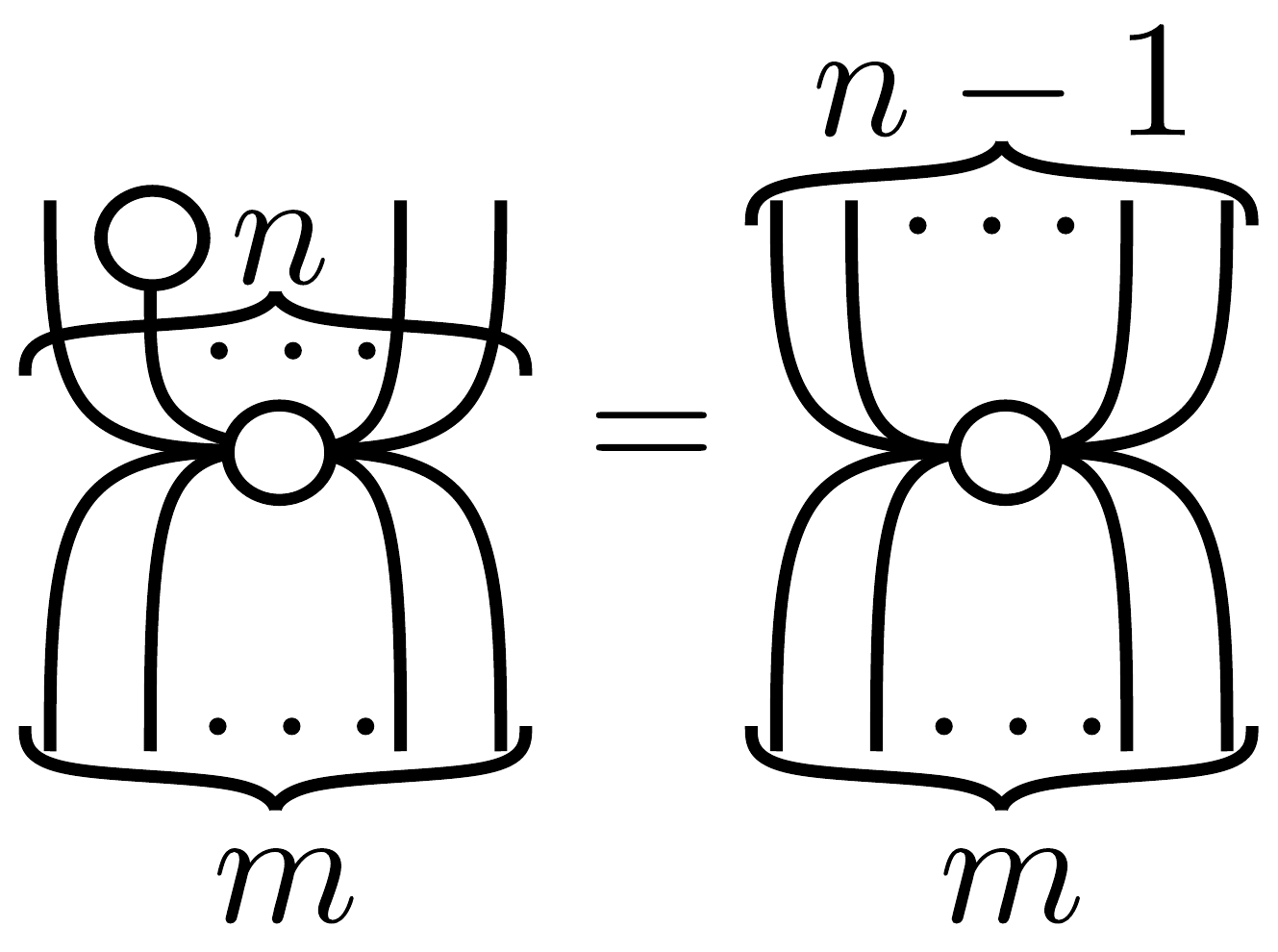}
\end{equation}

The categorical treatment of spiders can be found in \cite{Lack2004} and a graphical proof showing the equivalence between the definition of classical structures as an algebra and the one consisting of spiders can be found in \cite{Coecke2008POVMsSums}. 

Incidentally, when a measurement is performed on a pure quantum state, the outcomes are represented by an orthogonal basis. In particular, when a measurement with $N$ outcomes is performed on a state $\ket{\psi}=\sum_{j=1}^N\psi_j\ket{j}$, $\ket{\psi}$ will be transformed to one of the states in $\{\ket{j}\}_{j=1}^N$, which forms an orthogonal basis, and the probability of obtaining the $j$-th outcome is $|\psi_j|^2$. In fact, the 2,1-spider takes on the role of measurement in the `single-line means classical' and `double-line means quantum' interpretation of processes \cite{Coecke2016a}. The 1,2-spider, otherwise known as the adjoint of the 2,1-spider, then becomes the process which encodes classical data into a quantum system. 

So when classical data is encoded into a quantum system and then measured, we obtain the identity process:
\begin{equation}\label{eq:decoherence}
    \includegraphics[scale=0.2]{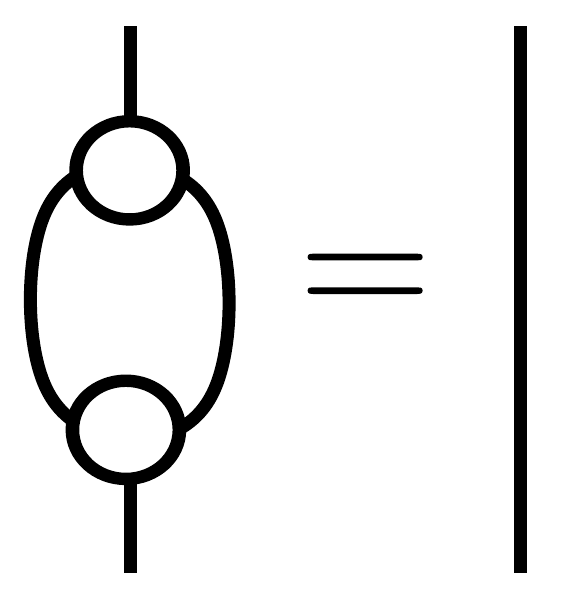}
\end{equation}

However, a system-type can have more than a single classical structure. So, suppose we have two classical structures. If we encode a quantum system with classical data using the 1,2-spider of one of the classical structures and then measure it with the 2,1-spider of the other classical structure, what will happen to the previous diagram?  

The two extremes of the measurement of a state in a basis $\mathcal{B}$ are
(1) accurate result when the measurement is described by $\mathcal{B}$, and (2) Completely random result when the measurement is described by a basis complementary to $\mathcal{B}$. (1) is described by the Eq. \ref{eq:decoherence}, but (2) is a different story. One would expect a disconnection between the input and output of a similar procedure to Eq. \ref{eq:decoherence} for (2) as the measurement gives us no information on the system. However, to obtain this disconnection, the diagram on LHS of Eq. \ref{eq:decoherence} needs a slight modification:

\begin{definition}\label{def:complement}
Two classical structures, distinguished by the colours of their nodes are complementary if:
\begin{equation}
    \includegraphics[scale=0.2]{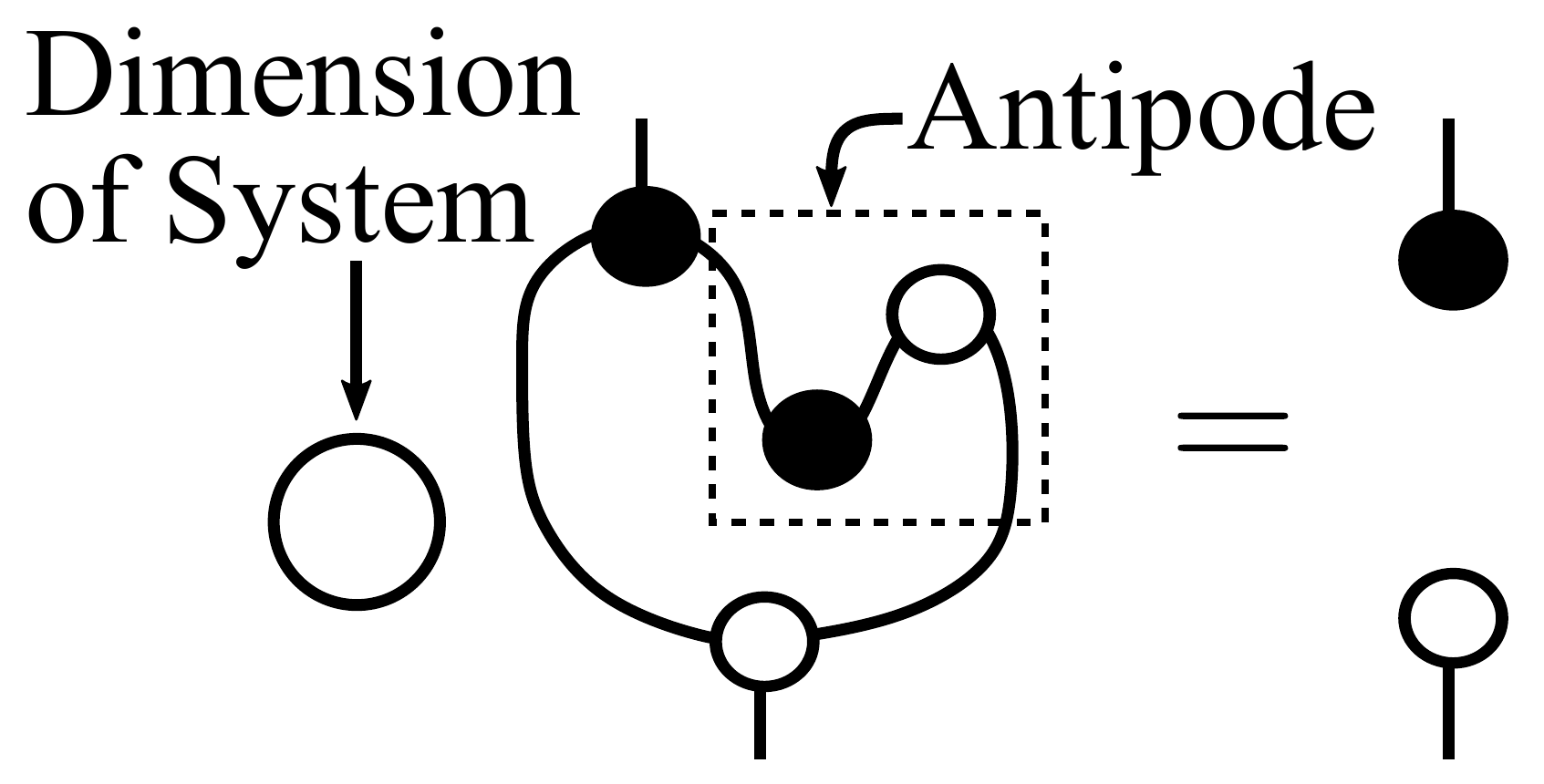}
\end{equation}
\end{definition}

\section{Thesis Outline}

This thesis is inspired by \citet{Romero2005}, where the authors took the complete\footnote{In this thesis, a set of mutually unbiased bases is complete if no other basis can be added into the set to form a larger set of mutually unbiased bases. The same goes for a set of mutually complementary classical structures where it is complete if no other classical structure can be added to the set to form a larger set of mutually complementary classical structure.} set of 9 mutually unbiased bases for $N$ qubits from \cite{Wootters1989}, say $\mathsf{MUB}_0$, and constructed new complete sets of 9 mutually unbiased bases through the following procedure:
\begin{enumerate}
\item Each of the basis in $\mathsf{MUB}_0$ is described by a set of $2^N-1$ mutually commuting operators which are the Kronecker products of Pauli operators;
\item These constituent Pauli operators are `rotated' by multiplying them with local unitary operators;
\item Entanglement between two qubits are generated or eliminated by applying the controlled-$Z$ operator to them. 
\end{enumerate}

This procedure is able to produce all the entanglement configurations available for complete sets of 9 mutually unbiased bases. For example, the entanglement configurations for three qubits are (2,3,4), (1,6,2), (0,9,0), and (3,0,6), where the first coordinate is the number of separable bases, the second coordinate is the number of biseparable bases, and the third coordinate is the number of non-separable bases in the set of MUBs. However, the separability of each basis is not immediately obvious. One would need to perform some straightforward but tedious computations to find out the separability of the basis represented by each set of mutually commuting operators. 

In this section, we showed how a classical structure is an abstraction of an orthogonal basis. Conceptually, complementarity between classical structures and unbiasedness between orthogonal bases coincide since they both represent an incompatible measurement of an observable which gives a completely random result. Section 8 of reference \cite{Coecke2011b} shows the relationship between complementarity and unbiasedness. So, replicating the results of \cite{Romero2005} should be possible within the framework of process theories. This is the subject of our current interest as a successful translation of those results in process theories shall provide a more explicit presentation of the entanglement structure in members of a complete set of MUBs on $N$ qubits.    

In Chapter \ref{sec:composeCS}, we shall devise two methods for composing two classical structures on a single qubit so that we can obtain a classical structure on two qubits. One method composes two classical structures on a single qubit and forms a classical structure on two qubits where the constituent classical structures remain separable. The other method composes two classical structures on a single qubit via one of three processes which are entangled on two qubits and forms a non-separable classical structure on two qubits. We shall call these bipartite processes `connecting wires'. We then extend the two methods so we could construct classical structures on $N$ qubits where $N>2$. 


The aforementioned constituent classical structures are classical structures on a single qubit that are abstractions of the eigenbases of the Pauli $X$, Pauli $Y$ and Pauli $Z$ operators. We provide a review of of the classical structures which correspond to the Pauli $X$ and $Z$ operators which form a diagrammatic calculus called ZX-calculus in Chapter \ref{sec:diagram-algebra}. Using ZX-calculus, we construct the classical structure which corresponds to the Pauli $Y$ operator. Composing these classical structures, either separably or via the connecting wires, we shall obtain a collection of classical structures on multiple qubits. 

In Chapter \ref{sec:results}, we identified the precise condition for a pair of classical structures, obtained via the procedure devised in Chapter \ref{sec:composeCS}, to be complementary. Then we found maximal complete sets of mutually complementary classical structures for the cases of two and three qubits. This is done using tools in graph theory where complete sets of mutually complementary classical structures are complete subgraphs of the graph consisting of classical structures as its vertices and there is an edge between two classical structures if they are complementary. To identify these edges, we found a generating set of pairs of classical structures. Each member of this set represents a class of pairs of classical structures with equivalent complementarity diagrams. This allows us to significantly reduce the number of pairs of classical structures to be checked for complementarity.

\section{Objectives}

The objectives of this study are twofold:
\begin{enumerate}
    \item First, we construct classical structures on $N$ qubits ($N\geq 2)$ by identifying its constituents as complementary classical structures on a single qubit, and two methods of composing these constituents, i.e. separably and non-separably, which make explicit the entanglement structure of the composite classical structure.
    \item Then, we study how the entanglement structure of the composite classical structures that we have constructed relates to their complementarity. In this thesis, we present our findings through examples in two qubits and three qubits. In particular, we searched for maximal complete sets of mutually complementary classical structures among composite classical structures on two and three qubits.  
\end{enumerate}

\chapter{Literature Review}

Mac Lane and Eilenberg introduced category theory while working on problems in algebraic topology, a field which applies the tools of algebra to topological spaces. The paper which introduced the concept of categories, {\it General Theory of Natural Equivalences} \cite{Eilenberg1945}, was not well received as it was said to lack any content, and category theory was later dubbed "general abstract nonsense." At the time, Mac Lane and Eilenberg did not seriously consider expanding their study on categories, and instead thought of it more as a convenient language for mathematicians in diverse fields to understand each other \cite{MacLane1988ConceptsPerspective}. However, it is out of that simple objective that a new mathematics, which in the following decades became an active area of research, was born.

\section{Applied Category Theory via Diagrams}\label{sec:LR-ACT}

In the decades since the publication of \citeauthor{Eilenberg1945}'s seminal work, category theory has become an important tool for studying the foundation of mathematics \cite{Lambek1988IntroductionLogic}, but more than that, it provides a way for various fields, in and outside of mathematics, to communicate with each other \cite{Baez2010PhysicsStone}. For example, in computer science, we can form a category by taking data types as objects and the programs which process the data as morphisms. We can find similar structures between the aforementioned category with the category whose objects are propositions, and morphisms are proofs that lead one proposition (the assumption) to another (the conclusion), providing a bridge between computation and logic. Both form monoidal categories, which are categories equipped with a bifunctor that allows for the parallel composition of objects and morphisms.  


Another application of category theory is topological quantum field theory (TQFT), a field of study which attempts to reconcile between Einstein's theory of relativity and quantum mechanics by exploring similar structures in the space occupied by the two theories \cite{Atiyah1988, Kock2004FrobeniusTheories}. It does this by associating compact oriented manifolds to Hilbert spaces. This association can be represented by a monoidal functor between \textbf{nCob}, a category with compact oriented $(n-1)$-dimensional manifolds as objects and oriented $n$-dimensional cobordisms as morphisms, and \textbf{FHilb}, a category with complex Hilbert spaces as objects and linear operators between the Hilbert spaces as morphisms \cite{Baez2009}.

TQFT is not the sole branch of physics that was discovered to contain applications of category theory. The famed Feynman diagrams were found to have categorical roots and were formalized in reference \cite{Joyal1991}. \citeauthor{Joyal1991} then showed the connection between the Yang-Baxter equation and braiding in Knot Theory \cite{Joyal1993BraidedCategories} using what they called braided tensor categories. Before \citet{Joyal1991}, physicists have used diagrams such as those invented by Feynman and Penrose to assist in complicated computations. By formalizing these diagrams to be objects and morphisms in a category, specifically a monoidal category, Joyal and Street provided a way for physicists to reason using diagrams in a mathematically rigorous way. More recently, \citet{Selinger2011} provided a systematic survey of the various graphical languages and their corresponding monoidal categories. 

\section{Categorical Quantum Mechanics} \label{sec: LR-CQM}

Categorical quantum mechanics (henceforth, abbreviated to CQM) is a field of study which not only aims to reconstruct the formalism of quantum mechanics with categories as its backbone, but it also seeks to formalize a graphical description that provides a high level computational method for quantum theory while seeking for new insights into its logical structure \cite{AbramskyCQM2009}. In this section, we shall provide a review of CQM from its inception in 2004 to recent progress that is relevant to the present thesis. 

In their paper {\it A Categorical Semantics of Quantum Protocols} \cite{Abramsky2004}, Abramsky and Coecke proposed what they called a strongly closed compact category with biproducts to be the mathematical framework for quantum mechanics, abstracting from \textbf{FHilb}. Throughout the years, the term strongly closed compact category has evolved into dagger compact category, emphasizing the functor equipped with the category that is normally referred to as a dagger. In the same paper \cite{Abramsky2004}, it was shown that well-known quantum protocols --- teleportation, logic-gate teleportation, and entanglement swapping --- are compatible with a quantum mechanical description in a dagger compact category. Abramsky and Coecke further showed that dagger compact categories are able to accommodate abstract notions of scalars, adjoints, and Born Rule. 

As CQM evolved into what it is today, Abramsky, Coecke, and their colleague, Aleks Kissinger made sure to provide updated reviews of the topic \cite{AbramskyCQM2009,Coecke2006,Coecke2015,Coecke2015CQM,Coecke2016a,Coecke2017PicturingProcesses}, each with a different perspective and provides new insights on CQM. To familiarize with the categories behind CQM, reference \cite{Selinger2007a} by Peter Selinger provides a comprehensive look into them. In the same work, the correspondence between the diagrams in CQM and dagger compact categories was also shown. For a more current and comprehensive paper on the underlying categories of CQM, the reader may refer to \cite{Coecke2014CategoriesChannels}.   

\subsection{Additive vs Multiplicative}\label{sec: LR-addvsmult}

CQM shifts the question about the logical structure of quantum mechanics away from the study of lattices, which lacks a satisfactory treatment of composite systems, and brought forth the tensor product as a primitive. Coecke emphasized this multiplicative structure in \cite{Coecke2007} by showing that the bulk of the required linear structures is purely multiplicative and arises from the tensor (or monoidal) product that is equipped with a dagger compact category. As CQM aims to reduce any unnecessary baggage that comes with the Hilbert space formalism in order to build a more efficient formalism, it is important to examine the roles of additive and multiplicative structures in quantum mechanics.

As a consequence of references \cite{Coecke2007} and \cite{Selinger2007a}, Coecke provided an axiomatic description of mixed states \cite{Coecke2008a}. A quantum mixed state is described by a convex combination of linear operators. The following is an equational description of a mixed quantum state:
\begin{equation*}
    \rho = \sum_{j=1}^n p_j \ket{\psi_j}\bra{\psi_j}
\end{equation*}
where each $p_j$ is the probability that $\rho$ is in the pure state described by the vector $\ket{\psi_j}$. Using Selinger's abstract completely positive maps, Coecke makes the sum in the description above implicit, allowing for a less cumbersome graphical representation of mixed states. 

This suggests that CQM may render sums as redundant. However, due to the probabilistic nature of quantum mechanics which relies heavily on the use of sums, \citeauthor{AbramskyCQM2009} included biproducts in their categorical description of quantum mechanics. This is unfortunate as their 2-dimensional string diagrams do not allow for another type of composition. 

We shall discuss this issue further in the next section, but eventually, a description that does not rely on sums was found \cite{Coecke2008c, Coecke2008POVMsSums,Selinger2007a,Coecke2008a}. This led to the construction of a category of observable algebras and superoperators \cite{Coecke2014CategoriesChannels}, which was shown to embed the category of completely positive maps with biproducts \cite{Heunen2014}.

\subsection{Abstract Bases}\label{sec: LR-bases}

 In reference \cite{Coecke2008c}, it was found that sums are an implicit implementation of the capability to copy and delete quantum and classical data types. The key to this procedure is the no-cloning and no-deleting theorems in quantum mechanics. These theorems tell us that quantum and classical data types can be distinguished through their respective capabilities to be copied and deleted; that is, classical data types can be copied and deleted and quantum data types cannot. 

For a system represented by an object $A$ in a symmetric monoidal category, we can represent the capability to copy information about the system by the existence of a morphism of the form $A\rightarrow A\otimes A$, called comultiplication, and the capability to delete information by the existence of a morphism of the form $A\rightarrow I$, called counit, where $I$ is the monoidal unit of the category and $\otimes$ is the tensor product equipped with the category. These morphisms must satisfy the axioms of a cocommutative comonoid (the dual of a commutative monoid), and a classical structure can be described as an object equipped with a comultiplication and a counit morphisms. 

In a dagger symmetric monoidal category, we automatically obtain a monoid from a comonoid as the equipped dagger functor dualizes the morphisms in the category. To obtain abstract bases, we need a comonoid and the induced monoid to satisfy the Frobenius law, resulting in a dagger Frobenius algebra. It was shown by \citet{Coecke2013New} that there is a bijective correspondence between orthogonal bases of finite complex Hilbert spaces with special dagger commutative Frobenius algebras in \textbf{FHilb}. \citet{Coecke2008POVMsSums} also provided an abstraction of POVMs, and showed Naimark's theorem using a purely graphical proof technique. \citet{Coecke2010a} then extended their work on abstract quantum measurements to provide a categorical description of classical operations which consequently, provided a resource sensitive account of quantum-classical interactions. 

\subsection{Observables}\label{sec: LR-observables}

In the Hilbert space formalism of quantum mechanics, an observable can be described by an eigenbasis of a Hermitian operator, and so we can use the abstract bases in reference \cite{Coecke2013New} to describe observables in the setting of a symmetric monoidal category. Unlike observables in classical physics which always admit sharp values at the same time, not all quantum observables are compatible in such a way. 

This kind of incompatibility is referred to as complementarity in quantum mechanics. \citet{Coecke2008InteractingObservables} showed that any pair of complementary observables forms a bialgebra. They then extended their idea on complementarity in a longer paper with a similar title \cite{Coecke2011b} which showed that two complementary observables not only form a bialgebra, but they also form a Hopf algebra. Other developments on complementarity in CQM are detailed in \cite{Coecke2012} which introduced a stronger notion of complementarity, and \cite{Heunen2012a} which relates notion of complementarity in von Neumann algebras, Hilbert spaces, and orthomodular lattices. 

Abstract observables also allow for a comparison between local and non-local theories. In reference \cite{Coecke2011c}, a category for stabilizer qubit theory --- which is non-local --- and a category for Spekkens' toy theory -- which is local --- were constructed. The two categories, denoted as \textbf{Stab} and \textbf{Spek}, were shown to be similar except for one key aspect: their phase groups. That is, the phase group for \textbf{Stab} is the cyclic group of order 2, and the phase group for \textbf{Spek} is the Klein group. Reference \citet{Coecke2011c} further showed the relationship between phase groups and GHZ state correlations, introducing a key property for classifying between local and non-local behaviours.    

Mutually unbiased qubit theories led to the invention of ZX-calculus, a graphical calculus based on the Pauli $Z$ and $X$ spin observables which is intended to be used for quantum computation. In fact, ZX-calculus has been implemented in Quantomatic, an open source software which uses string diagrams to construct equational proofs \cite{Dixon2009,Kissinger2015,Quantomatic}. ZX-calculus was shown to be complete for stabilizer quantum mechanics\cite{Backens2013}, and similar graphical calculus was also developed for Spekkens' toy bit theory \cite{Backens2016}. In an article by \citet{Duncan2012}, ZX-calculus was applied to measurement-based quantum computation. More recently, a revised version of ZX-calculus was considered in the setting qutrits \cite{Wang2018qutritZX}.

Another important consequence of abstract observables is the ability to depict classical-quantum interactions using the graphical language of CQM. As far as we can tell, \citet{Coecke2008c} provided the first graphical description of classical-quantum interactions. \citet{Coecke2012EnvironmentMechanics} then refined the notion of classicality by proposing that the distinguishing property between classical and quantum data types as their abilities to be broadcast. In the article, \citeauthor{Coecke2012EnvironmentMechanics} unified a notion of environment with classical structures to define abstract classical channels, quantum measurements, and classical control. Furthermore, they adjoined the notion of complementarity we discussed earlier to derive some classically controlled quantum protocols. \citet{Coecke2016a} provided an updated discussion on classical-quantum interactions in CQM which could aide comprehension of the articles mentioned as it focuses more on the diagrammatic language of CQM than its categorical counterpart.     

The construction of abstract observables highlights the internal algebraic structure of observables. That is, the multiplication morphism of an abstract observable can be treated as a binary operation where the unit morphism is its identity element. It turns out that we can also reveal the algebraic structure of maximally entangled states through this construction. \citet{Coecke2010} showed that the GHZ-state and W-state (maximally entangled states for tripartite states) induce commutative Frobenius algebras. However, the commutative Frobenius algebra induced by the GHZ-state is special, and the one induced by the W-state is anti-special. This classification of maximally entangled tripartite states yields a compositional graphical model for expressing general multipartite states.  

\subsection{Foundational Issues} \label{sec: LR-foundation}

In reference \cite{Coecke2013}, \citeauthor{Coecke2013} proposed an alternative framework to quantum logic. They recast the order theoretic structure that quantum logic is built upon so that it comes with a primitive composition operation, making the construction completely compositional. When interpreted in \textbf{FHilb}, this construction yields the projection lattices of arbitrary finite-dimensional $C^*$-algebras. This hints at the viability of a root concept in CQM, i.e compositionality, as an alternative to quantum logic's propositional structure. \citet{Harding2009} has also showed the connection between the notion of a preparation (of a system), as defined by \citet{Abramsky2004}, to the notion of orthoalgebras in quantum logic.

Outside of quantum logic, there are also works investigating the connection between CQM and other areas of study which aim to answer the foundational issues of quantum mechanics. \citet{Barnum2013} investigated the connection between CQM and the so called convex operational models, which generalize the probabilistic structure of quantum mechanics using techniques from measure theory and functional analysis. In the article, the category for convex operational models, which is symmetric monoidal, was constructed, along with compact and dagger structures in such a category. \citet{Chiribella2014DistinguishabilityTheories} also studied CQM from the context of generalized probabilistic theories. This is part of Chiribella's research to construct a framework for operational-probabilistic theories \cite{Chiribella2011InformationalTheory, Chiribella2010}, which uses diagrams similar to those in CQM. Chiribella's work on causality has also been applied to causal structures in CQM by \citet{coecke2013causalProcesses}.  

Furthermore, the graphical language of CQM is general enough that one of its founders, Bob Coecke, has dubbed it generalized process theories \cite{Coecke2015CQM,Coecke2016a}, and it became the framework for the construction of theories that can be applied beyond quantum mechanics. Examples include resource theories \cite{Coecke2016} and natural language processing \cite{Piedeleu2015OpenProcessing}. 

\section{Contribution to the Literature}

Since its introduction in 2004, CQM has become an active field of research, so much so that its graphical approach has been applied to areas outside of quantum mechanics. Along with natural language processing, the techniques of CQM has been adapted for network theory \cite{Baez2012} --- providing pathways to applications in chemistry and biology ---  and important tools in mathematical modelling and signal flow graphs, such as \cite{Bonchi2017InteractingAlgebras, Fong2016} and control theory \cite{Baez2015CategoriesControl}. 

A benefit of CQM's graphical approach, otherwise known as process theories, is its ability to make explicit difficult mathematical concepts, rewriting them as intuitive diagrams. One example is quantum teleportation, which we touched upon in Section \ref{sec:process-intro}. Another example is linear algebra where a matrix --- usually presented as number arrays --- are replaced by flow diagrams, which makes explicit the linear map equivalent to it \cite{Sobocinski201510.Algebra}. That is, a vector can be inputted into the matrix, like a mathematical machine, and inside the machine, the entries of the vector are multiplied by numbers that are the entries of the matrix, resulting in an output that is a vector.  

In this work, we would like to depict the entanglement structure of mutually unbiased bases in qubit systems that appeals to our intuition in ways similar to \cite{Coecke2017PicturingProcesses} and \cite{Sobocinski201510.Algebra}. We do this by utilizing the compositionality of process theories to construct classical structures on multiple qubits from classical structures on a single qubit and compose them via parallel composition (see Section \ref{sec:process-intro}) or bipartite processes that we call `connecting wires'. Then we determine the complementarity between each pair of the resulting classical structures by checking whether or not they satisfy Definition \ref{def:complement}.  

There have been studies related to mutually unbiased bases within the setting of CQM such as the work by \citet{Musto2016} and that by \cite{Evans2009MUB}. However, as far as we are aware, our method of composing classical structures has not appeared in any existing literature and we have not found any literature which utilizes the compositionality of process theories to study entanglement structure on mutually unbiased bases in the same way that we do in the present work.

\chapter{Complementary Classical Structures}\label{sec:diagram-algebra}

Applications of MUBs can be found in quantum key distribution, quantum state determination, detection of quantum entanglement, and various other areas of study in quantum mechanics. Therefore, for any model of quantum mechanics, it should be of utmost interest to have a description of MUBs. Indeed, we gave this description in Definition \ref{def:complement} within a process-theoretic framework.

While techniques for finding a pair of MUBS in Hilbert spaces are available \cite{DurtMUB2010}, searching for \textit{maximal complete} sets of MUBs is much trickier. In fact, searching for a maximal complete set of MUBs for a Hilbert space with non prime power dimension is such a notoriously difficult problem that even for the smallest dimension of six, experts only have a strong suspicion that there are at most three mutually unbiased bases.
 
As far as we are aware, CQM has not provided anything new towards solving this problem, nor do we claim that our procedure will lead to a solution, but we do believe that the compositionality of quantum processes --- in particular, how it depicts correlation and separability --- will provide a better understanding towards the advantages 
of composing complementary classical structures on systems to obtain complementary structures for a larger system. 

\section{Strong Complementarity}

Before we proceed to our main goal, we introduce strong complementarity. As the name suggests, it is a stronger version of complementarity. In particular, it implies complementarity, but the converse is not true in general.

\begin{definition}\label{def:strong-complementarity}
\cite{Coecke2012}
Two classical structures with spiders $\black$ and $\blackb$, respectively, are strongly complementary if they satisfy the following equations:

\begin{center}
\begin{minipage}{0.48\textwidth}
\begin{equation}\label{eq:strong-complement-copy1}
    \includegraphics[scale=0.2]{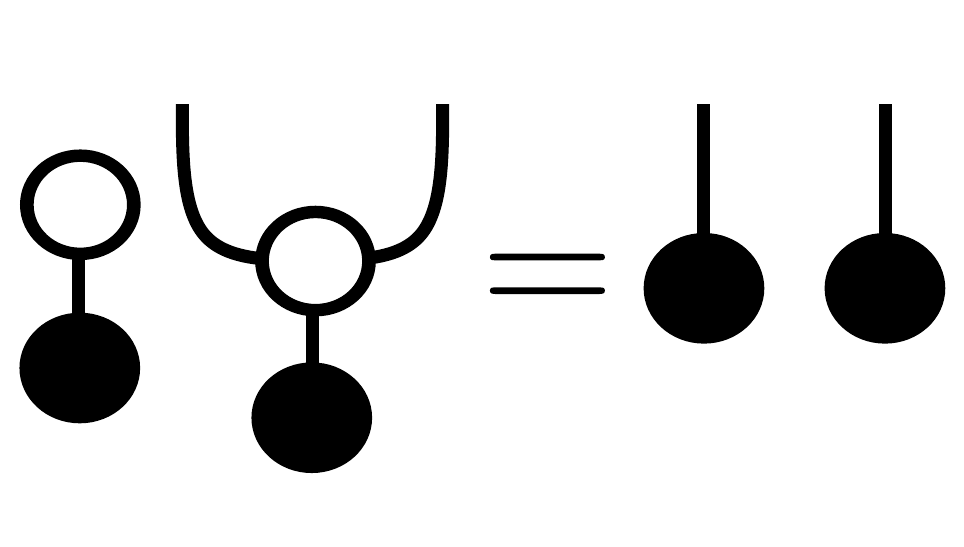}
\end{equation}
\end{minipage} 
\begin{minipage}{0.48\textwidth}
\begin{equation}\label{eq:strong-complement-copy-2}
    \includegraphics[scale=0.2]{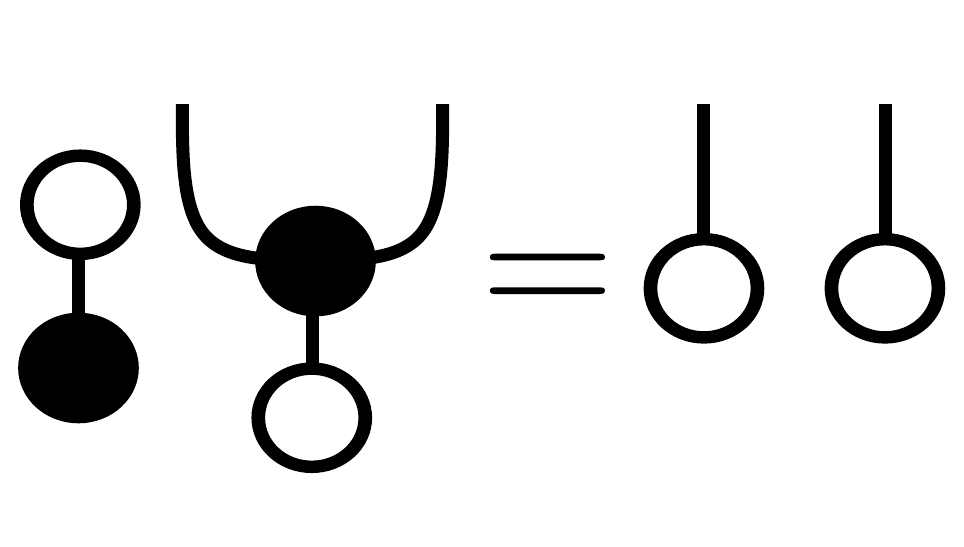}
\end{equation}
\end{minipage} 
\\
\begin{minipage}{0.48\textwidth}
\begin{equation}\label{eq:strong-complement-scalar}
    \includegraphics[scale=0.2]{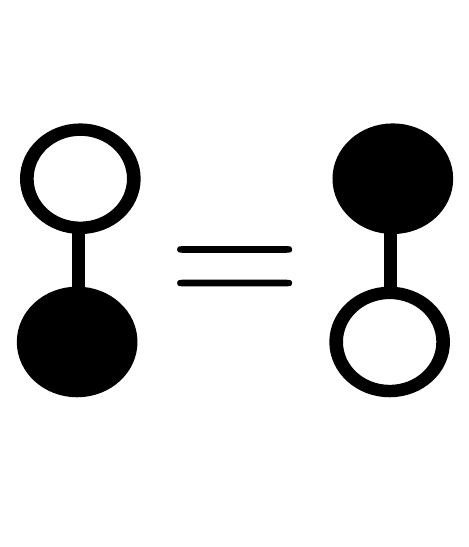}
\end{equation}
\end{minipage}
\begin{minipage}{0.48\textwidth} 
\begin{equation}\label{eq:strong-complement-bialg}
    \includegraphics[scale=0.2]{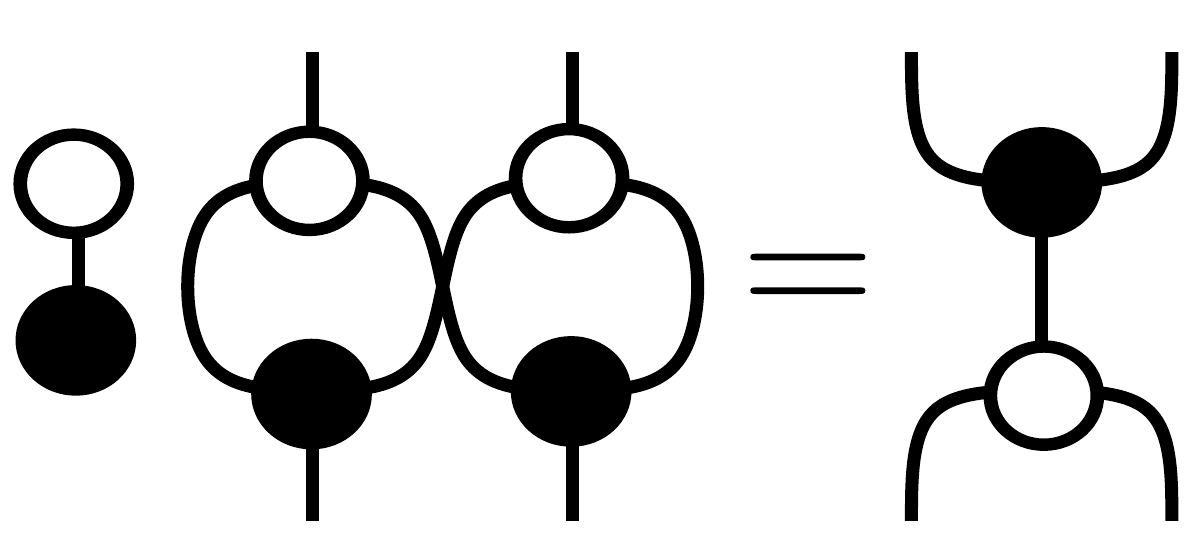}
\end{equation}
\end{minipage}
\end{center}
\end{definition}

\noindent Note that the equations in Definition \ref{def:strong-complementarity} are symmetric, i.e. the colours of the nodes can be swapped. 

For a Hilbert space with dimension $\geq 2$, it was shown that the largest set of pairwise strongly complementary classical structures cannot contain more than two members \cite{Coecke2012}. 

Two strongly complementary structures --- albeit, with some additions --- forms a graphical calculus for qubits called ZX-calculus. ZX-calculus is shown to be sound and universal for pure qubit quantum mechanics and was shown to be complete for stabilizer quantum mechanics \cite{Backens2016}. 

In the next section, we briefly recount the generators and rewrite rules of ZX-calculus and show how ZX-calculus relates to the Pauli operators on a single qubit.

\section{ZX-calculus}

Before we proceed, we shall summarize our notations for the eigenbases of the Pauli operators. The eigenbasis for the Pauli $Z$ operator is denoted by $\{\ket{0},\ket{1}\}$. This is our chosen standard basis. We denote the eigenbases of the Pauli $X$ and $Y$ operators respectively as $\{\ket{0_X},\ket{1_X}\}$ and $\{\ket{0_Y},\ket{1_Y}\}$. Members of both bases can be rewritten with respect to the standard basis:
\begin{eqnarray}
\ket{j_X}=\frac{1}{\sqrt{2}}(\ket{0}+(-1)^j\ket{1})\\
\ket{k_Y}=\frac{1}{\sqrt{2}}(\ket{0}+(-1)^k i\ket{1})
\end{eqnarray}

ZX-calculus consists of two classical structures, commonly represented by $\zspider$ and $\xspider$, which are strongly complementary. We refer to the classical structure consisting of $\zspider$ as $\mathcal{Z}$ and the classical structure consisting of $\xspider$ as $\mathcal{X}$. Spiders of $\mathcal{Z}$ and $\mathcal{X}$ can have phases $-\pi <\theta\leq\pi$, and a plain spider is a spider with phase 0. In addition to the spiders of $\mathcal{Z}$ and $\mathcal{X}$, ZX-calculus also consists of a scalar called \textit{star} and an operation called \textit{Hadamard}, denoted by the following diagrams:
\begin{equation*}
    \includegraphics[scale=0.2]{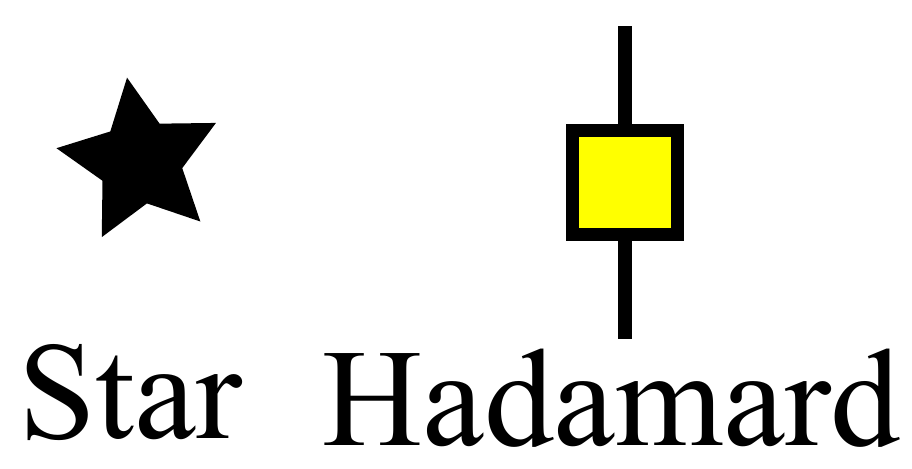}
\end{equation*}
Each generator of ZX-calculus has an interpretation in pure qubit quantum mechanics:
\begin{eqnarray*}
\includegraphics[scale=0.2]{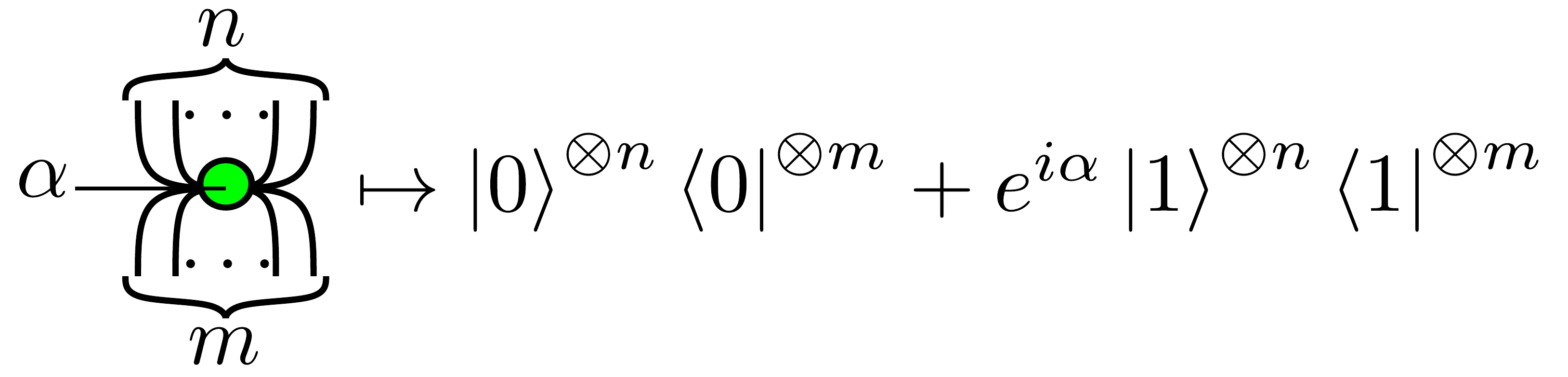}\\
\includegraphics[scale=0.2]{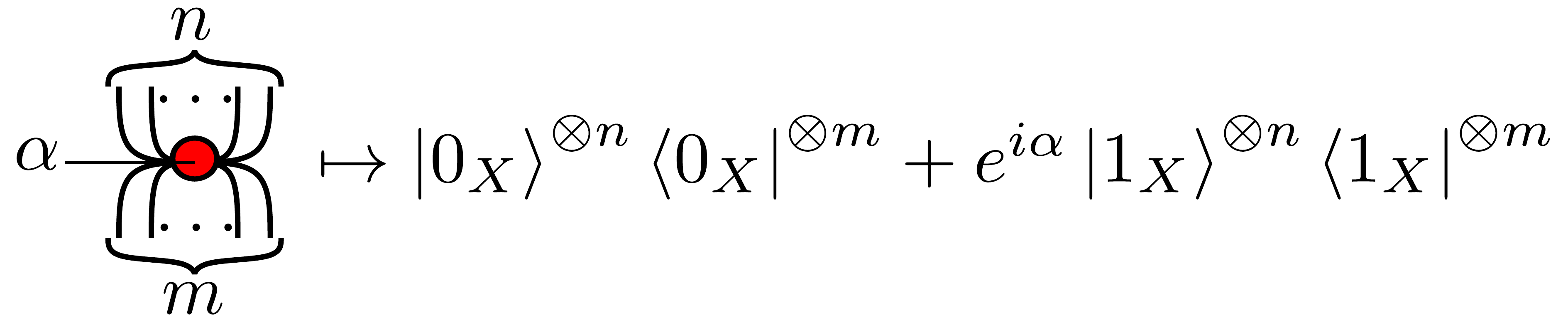}\\
\includegraphics[scale=0.2]{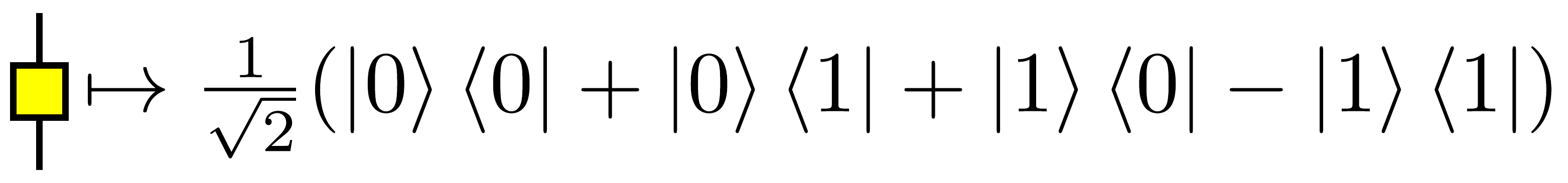}\\
\includegraphics[scale=0.2]{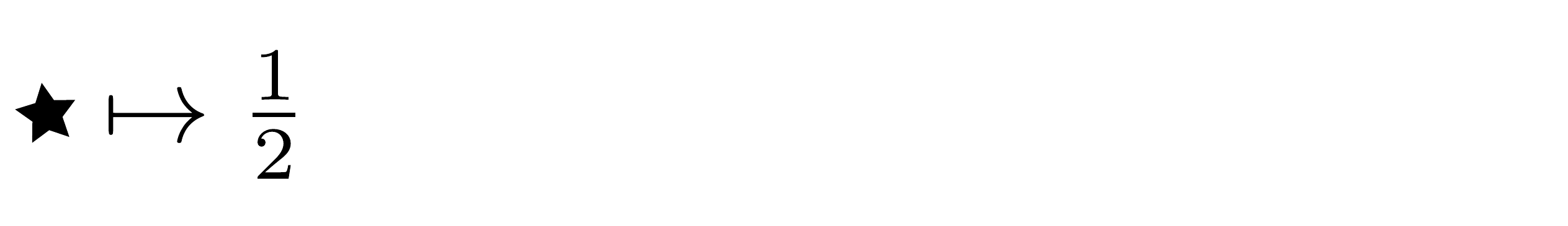}
\end{eqnarray*}
where $\alpha\in[-\pi,\pi]$.

The interpretation above can be represented formally as a monoidal functor. Details can be found in \cite{Backens2016}. 

Recall from Section \ref{sec:process-intro} where sequential composition of processes is $\circ$ and parallel composition of processes is $\otimes$. A diagram in ZX-calculus is a composition, either in sequence or in parallel, of the generators above. So to interpret a diagram in ZX-calculus as operations on qubit systems, one shall need to apply either $\circ$ or $\otimes$ on the interpretations of the generators above. Below is an example of an interpretation of a diagram in ZX-calculus to an operation on qubit systems:
\begin{equation*}
    \includegraphics[scale=0.2]{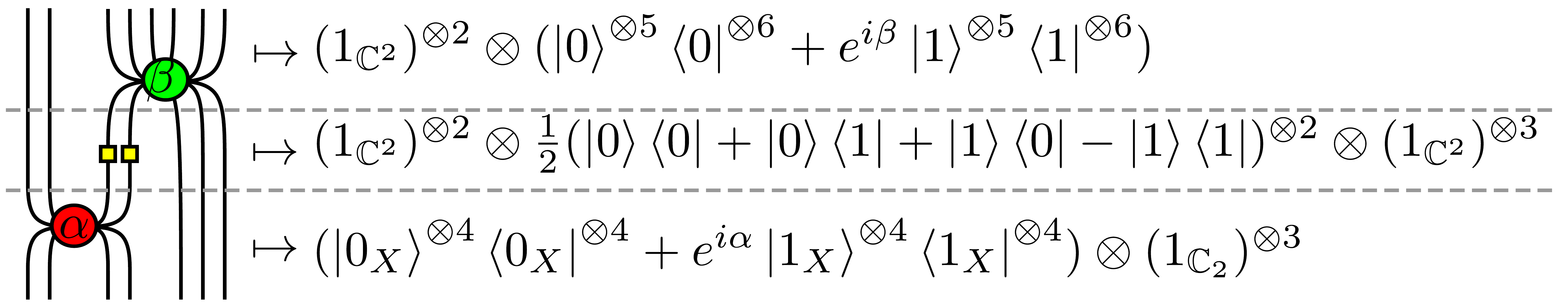}
\end{equation*}
where the diagram is partitioned into three subdiagrams with their interpretations on the right. The complete interpretation of the diagram is then $[(\ket{0_X}^{\otimes 4}\bra{0_X}^{\otimes 4}+e^{i\alpha}\ket{1_X}^{\otimes 4}\bra{1_X}^{\otimes 4})\otimes (1_{\mathbb{C}_2})^{\otimes 3}]\circ[(1_{\mathbb{C}^2})^{\otimes 2}\otimes\frac{1}{2}(\ket{0}\bra{0}+\ket{0}\bra{1}+\ket{1}\bra{0}-\ket{1}\bra{1})^{\otimes 2}\otimes (1_{\mathbb{C}^2})^{\otimes 3}]\circ[(1_{\mathbb{C}^2})^{\otimes 2}\otimes(\ket{0}^{\otimes 5}\bra{0}^{\otimes 6}+e^{i\beta}\ket{1}^{\otimes 5}\bra{1}^{\otimes 6})]$. While it is certainly straightforward to switch between diagrams and their equational counterparts, we opt not to provide interpretations of all diagrams in ZX-calculus since the resulting interpretation of even a simple diagram is space-consuming. 

The generators of ZX-calculus satisfy certain axioms called rewrite rules. These rewrite rules are given as Eqs. \ref{eq:zx1-fuse-green}-\ref{eq:zx16-zero}. The following rewrite rules are necessary to obtain a complete description of stabilizer quantum mechanics \cite{Backens2016}:

\begin{center}
\begin{minipage}{0.48\textwidth}
 The green fuse rule:
\begin{equation}\label{eq:zx1-fuse-green}
    \includegraphics[scale=0.2]{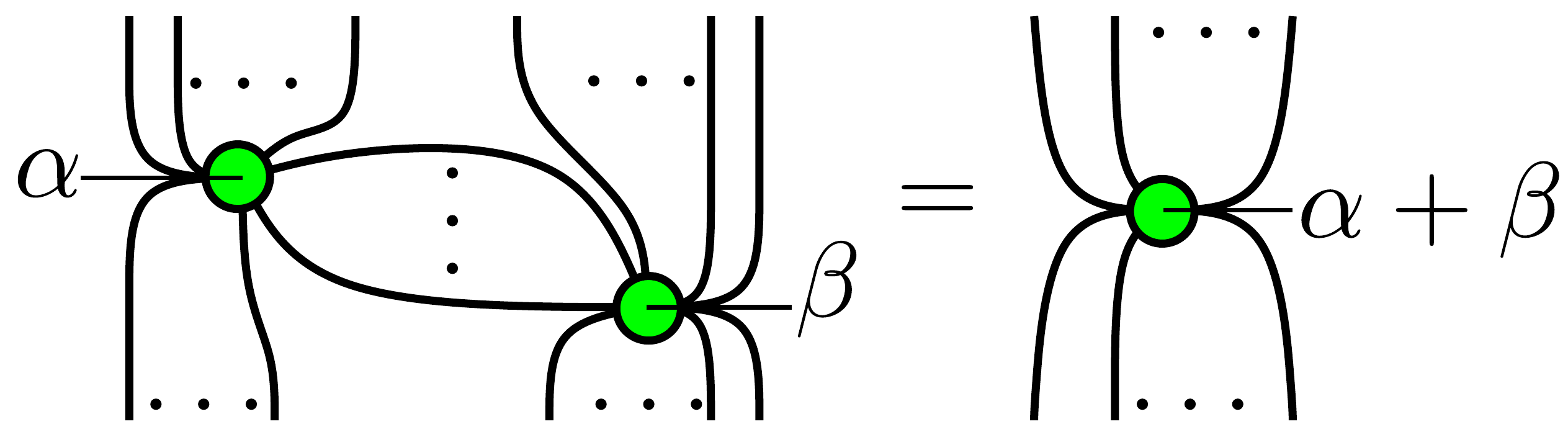}
\end{equation} 
 \end{minipage}
\begin{minipage}{0.48\textwidth}
The green loop rule:
\begin{equation}\label{eq:zx3-loop-green}
    \includegraphics[scale=0.2]{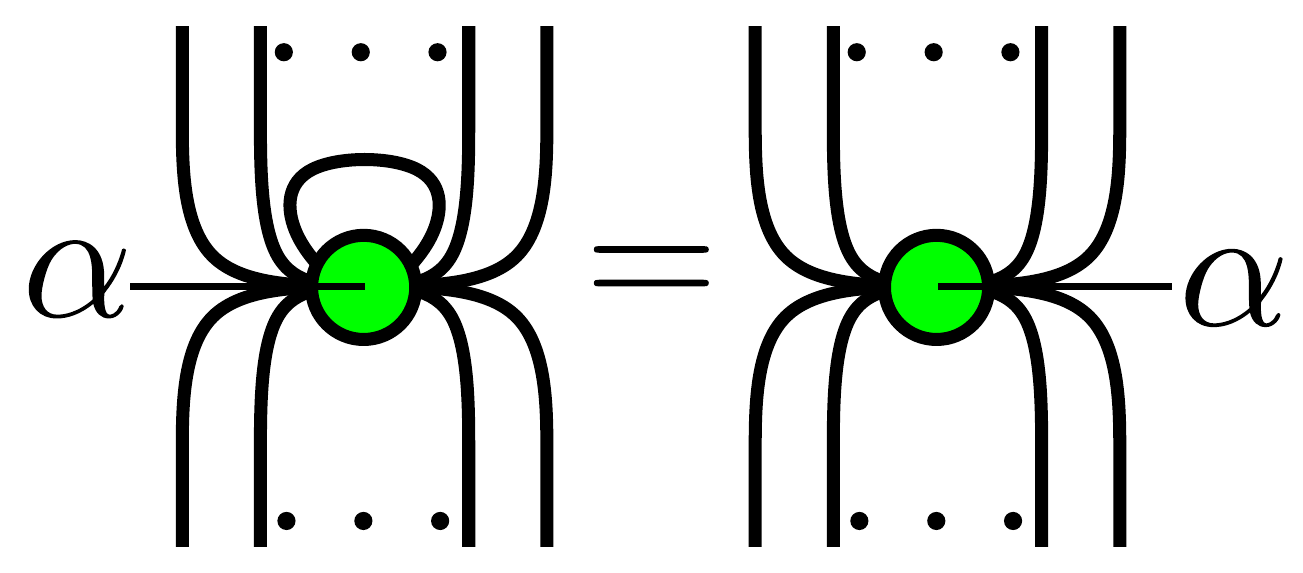}
\end{equation}
\end{minipage}
\end{center}

\begin{center}
\begin{minipage}{0.48\textwidth}
The red fuse rule:
\begin{equation}\label{eq:zx2-fuse-red}
    \includegraphics[scale=0.2]{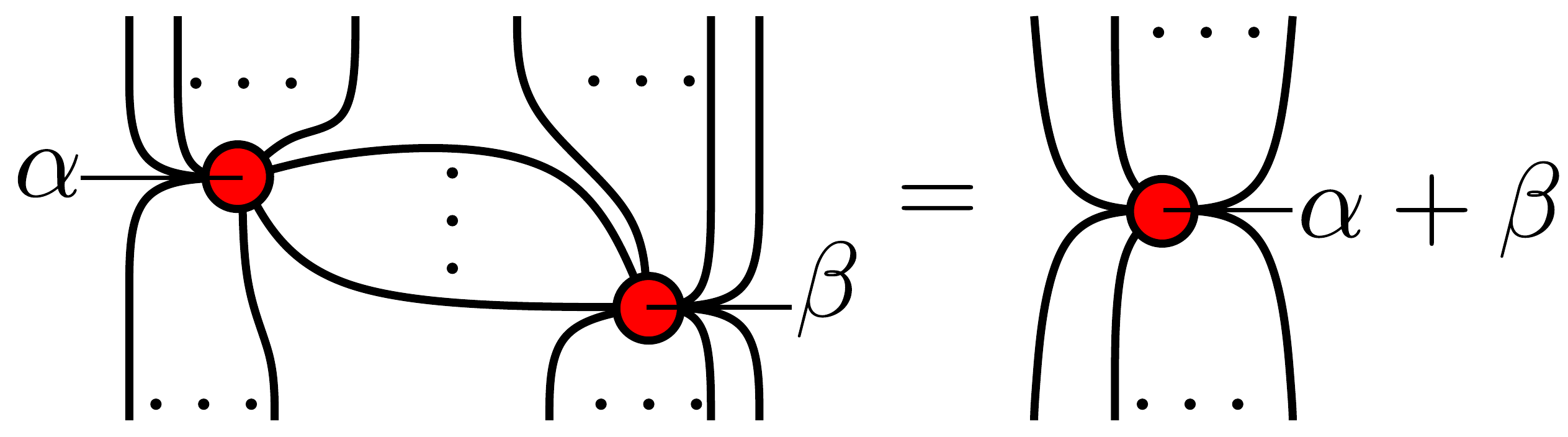}
\end{equation}
\end{minipage}
\begin{minipage}{0.48\textwidth}
The red loop rule:
\begin{equation}\label{eq:zx4-loop-red}
    \includegraphics[scale=0.2]{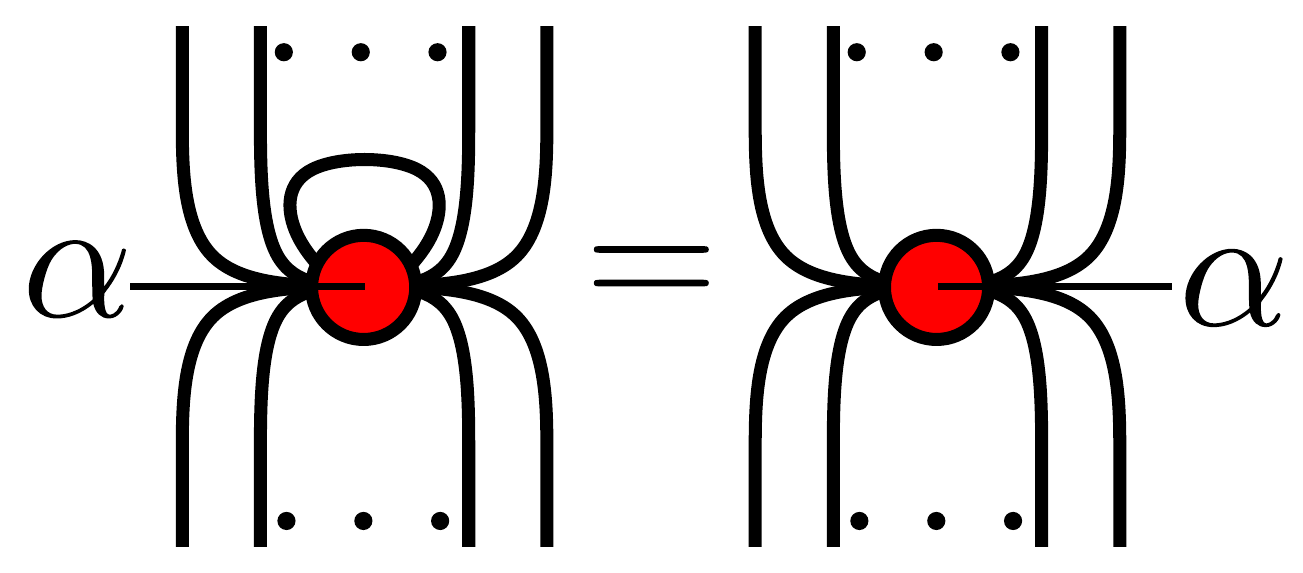}
\end{equation}
\end{minipage}
\end{center}

\begin{center}
\begin{minipage}{0.48\textwidth}
The green cup rule:
\begin{equation}\label{eq:zx5-cup-green}
    \includegraphics[scale=0.2]{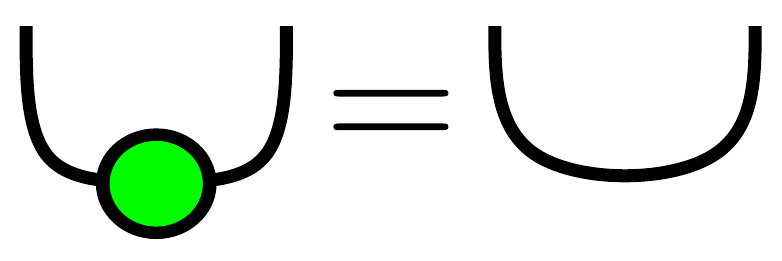}
\end{equation} 
\end{minipage}
\begin{minipage}{0.48\textwidth}
 The red cup rule:
\begin{equation}\label{eq:zx6-cup-red}
    \includegraphics[scale=0.2]{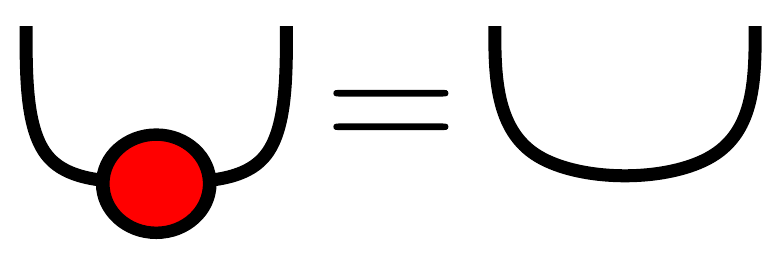}
\end{equation}
\end{minipage}
\end{center}

\begin{center}
\begin{minipage}{0.48\textwidth}
The green $\pi$-copy rule: 
\begin{equation}\label{eq:zx7-copy-green-pi}
    \includegraphics[scale=0.2]{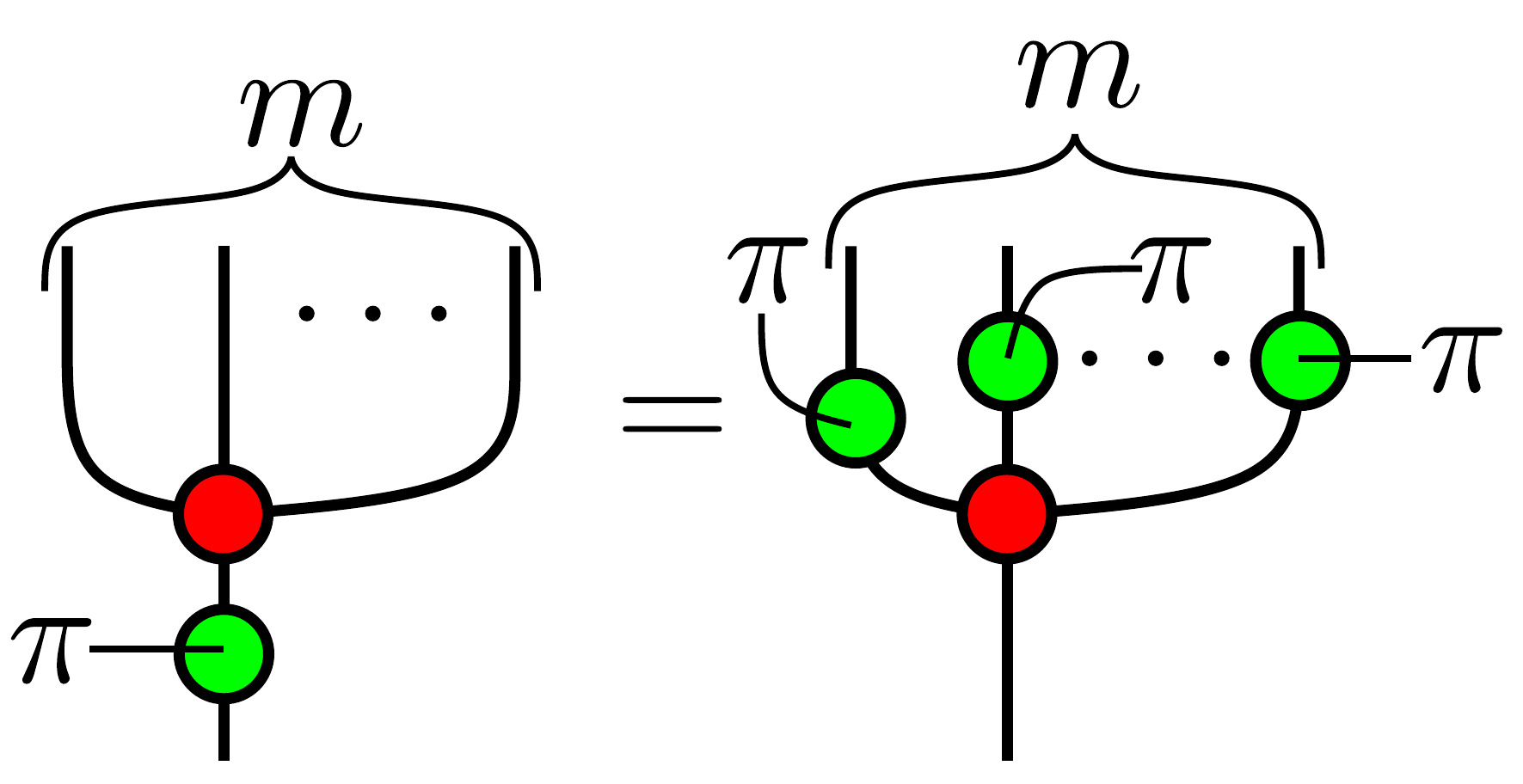}
\end{equation}
\end{minipage}
\begin{minipage}{0.48\textwidth}
The red $\pi$-copy rule:
\begin{equation}\label{eq:zx8-copy-red-pi}
    \includegraphics[scale=0.2]{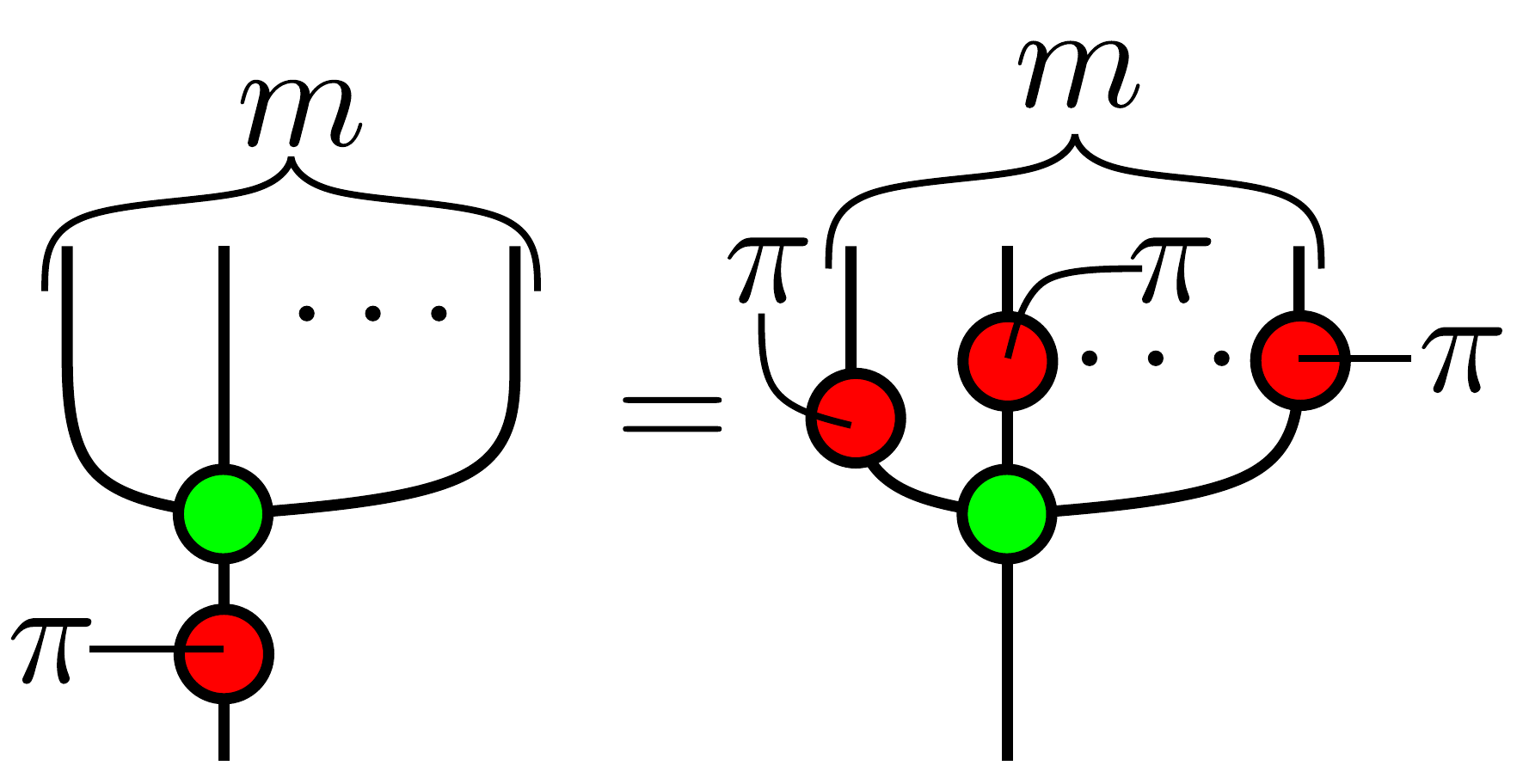}
\end{equation}
\end{minipage}
\end{center}

\begin{center}
\begin{minipage}{0.48\textwidth}
The green copy rule: 
\begin{equation}\label{eq:zx9-copy-green}
\includegraphics[scale=0.2]{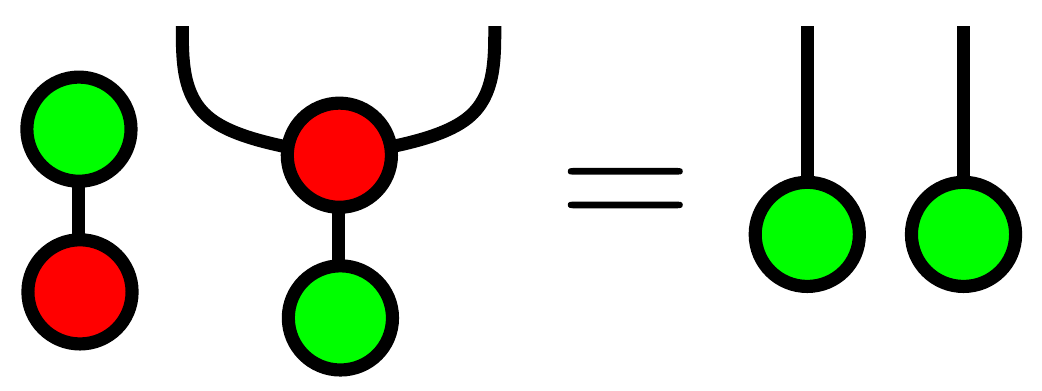}    
\end{equation}
\end{minipage}
\begin{minipage}{0.48\textwidth}
 The red copy rule:
\begin{equation}\label{eq:zx10-copy-red}
    \includegraphics[scale=0.2]{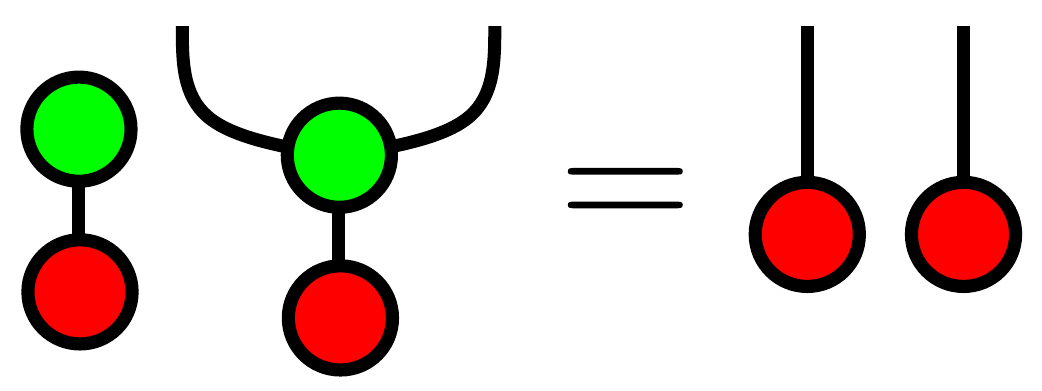}
\end{equation}
\end{minipage}
\end{center}

\begin{center}
\begin{minipage}{0.48\textwidth}
The bialgebra rule:
\begin{equation}\label{eq:zx11-bialgebra}
    \includegraphics[scale=0.2]{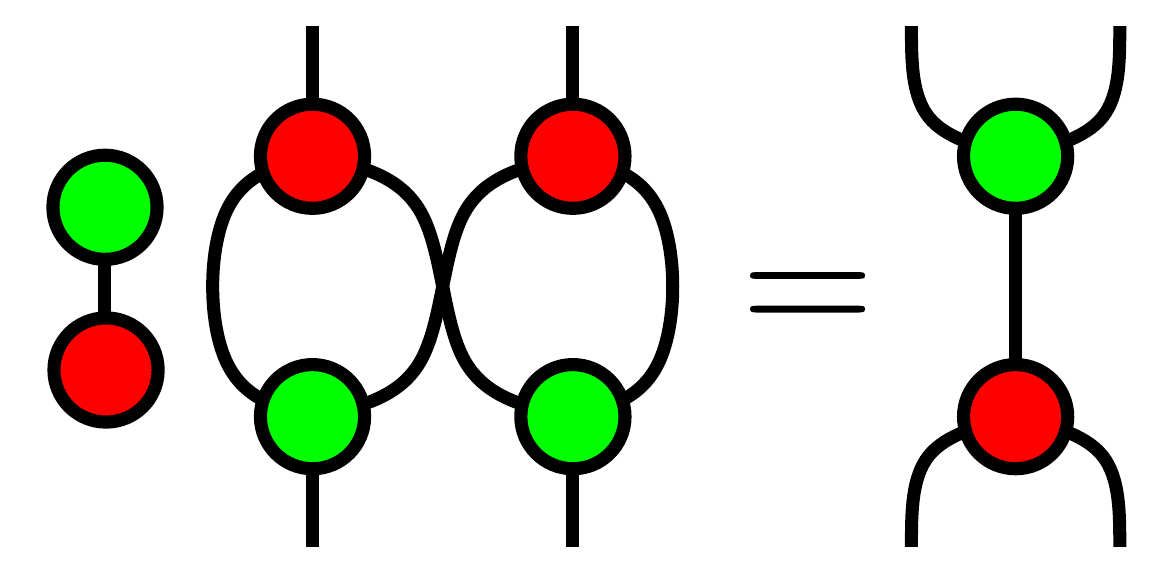}
\end{equation}
\end{minipage}
\begin{minipage}{0.48\textwidth}
The colour change rule:
\begin{equation}\label{eq:zx12-had-transform}
    \includegraphics[scale=0.2]{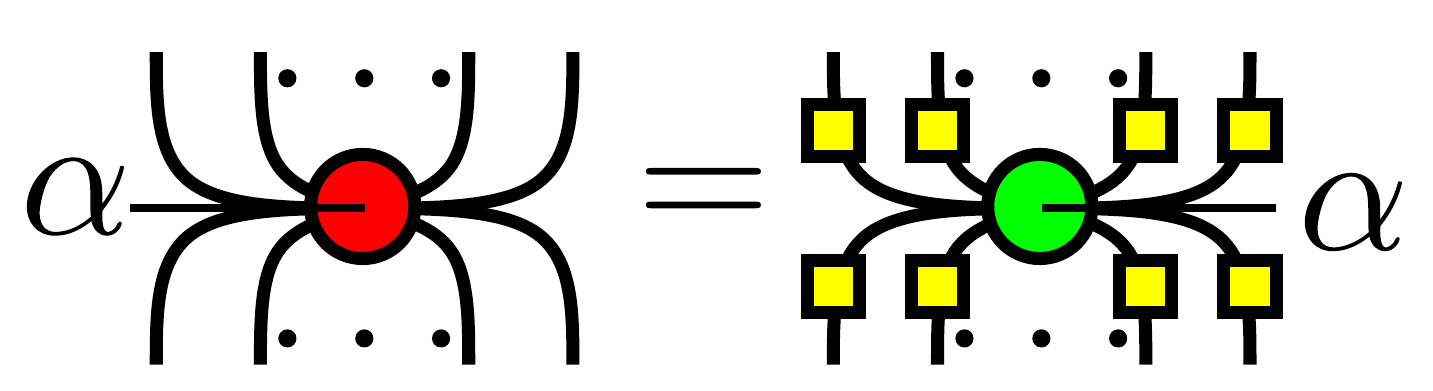}
\end{equation}
\end{minipage}
\end{center}

\begin{center}
\begin{minipage}{0.48\textwidth}
The Euler decomposition rule:
\begin{equation}\label{eq:zx13-had-euler}
    \includegraphics[scale=0.2]{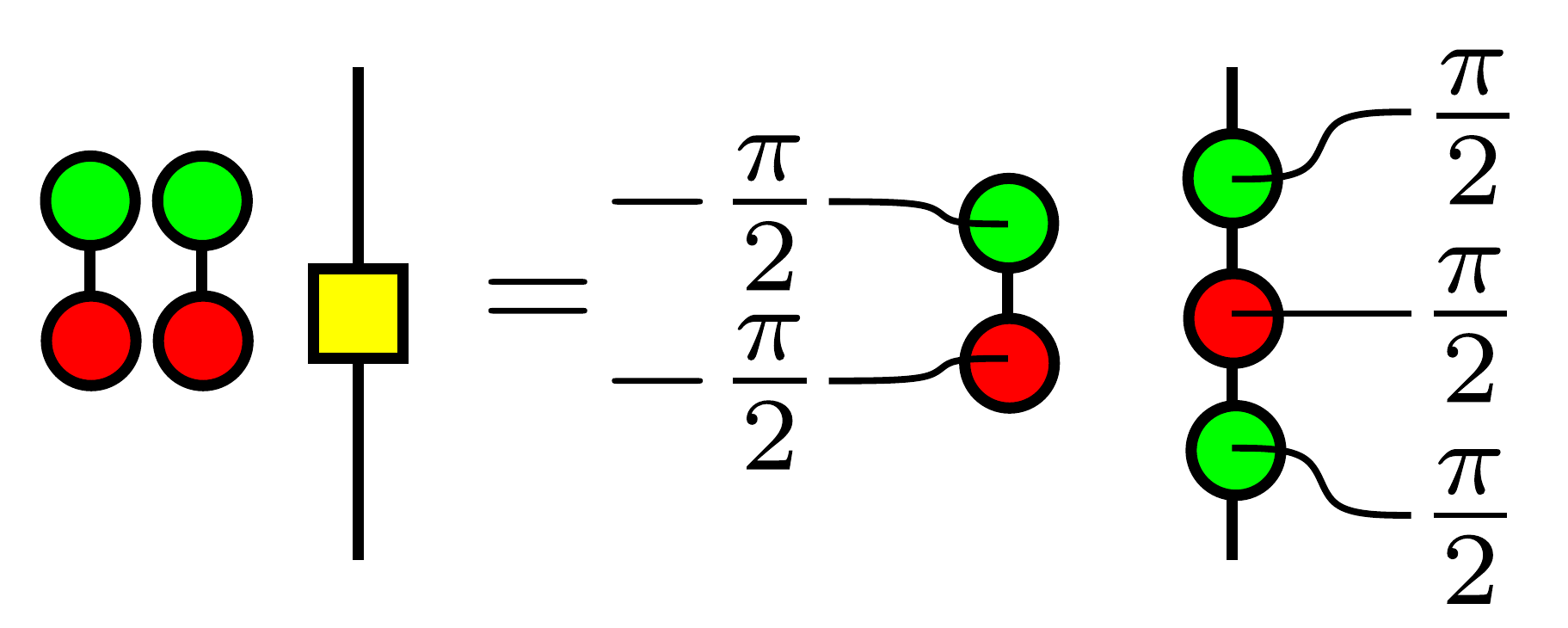}
\end{equation}
\end{minipage}
\begin{minipage}{0.48\textwidth}
The zero rule:
\begin{equation}\label{eq:zx14-pi-disconnect}
    \includegraphics[scale=0.2]{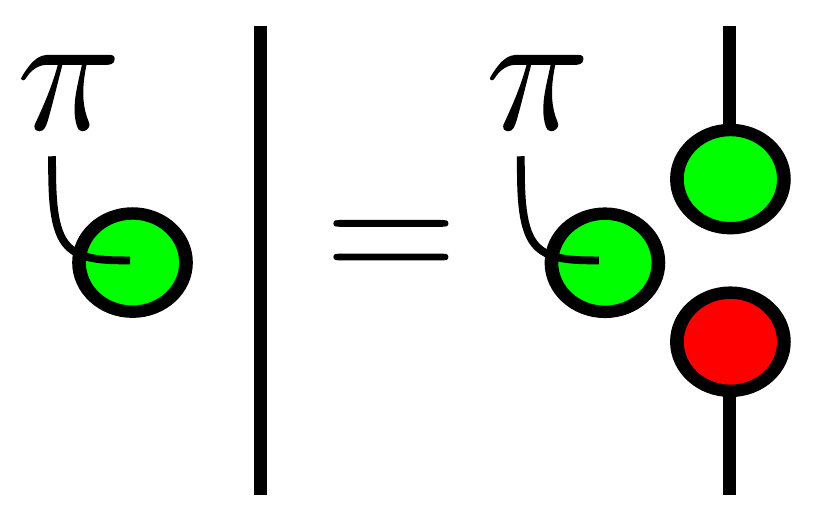}
\end{equation}
\end{minipage}
\end{center}

\begin{center}
\begin{minipage}{0.48\textwidth}
The zero rule:
\begin{equation}\label{eq:zx15-star}
    \includegraphics[scale=0.2]{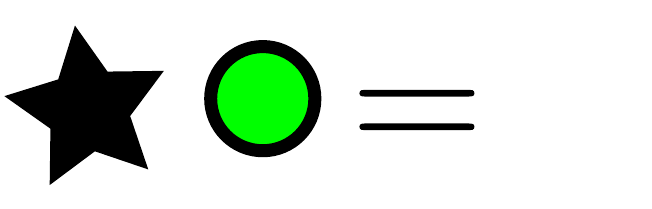}
\end{equation}
\end{minipage} 
\begin{minipage}{0.48\textwidth}
The zero scalar rule:
\begin{equation}\label{eq:zx16-zero}
    \includegraphics[scale=0.2]{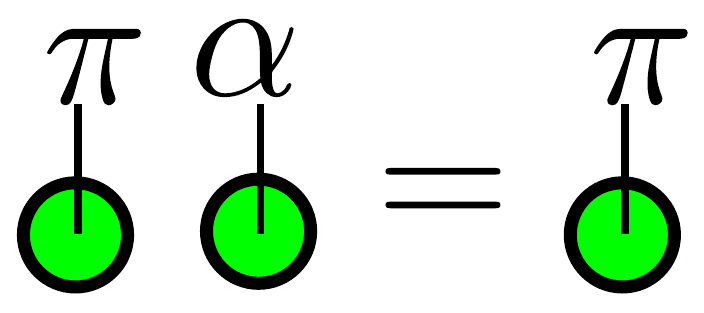}
\end{equation}
\end{minipage}
\end{center}

From Eqs. \ref{eq:zx1-fuse-green}-\ref{eq:zx16-zero}, we can derive the following equations:

\begin{center}
\begin{minipage}{0.48\textwidth}
The Hopf rule:
\begin{equation}\label{eq:dzx1-hopf}
    \includegraphics[scale=0.2]{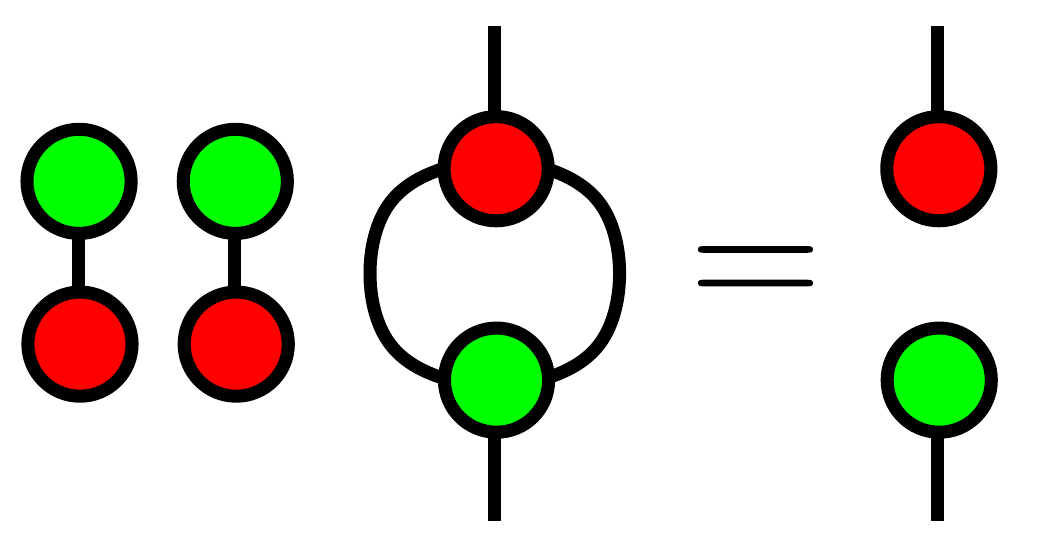}
\end{equation}
\end{minipage}
\begin{minipage}{0.48\textwidth}
The yanking rule:  
\begin{equation}\label{eq:dzx2-yank}
\includegraphics[scale=0.2]{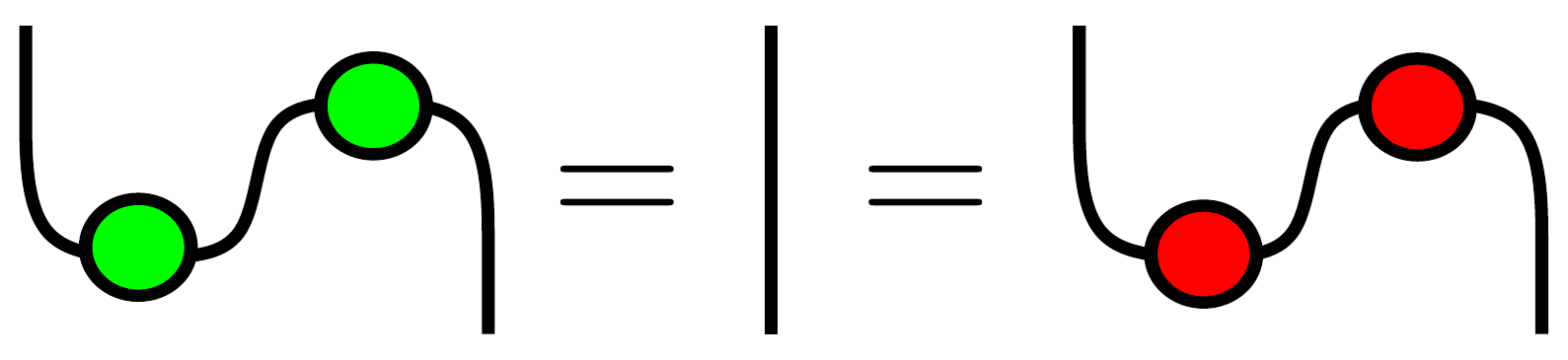} 
\end{equation} 
\end{minipage}
\end{center}

\begin{center}
\begin{minipage}{0.48\textwidth}
The identity rule:
\begin{equation}\label{eq:dzx3-identity}
    \includegraphics[scale=0.2]{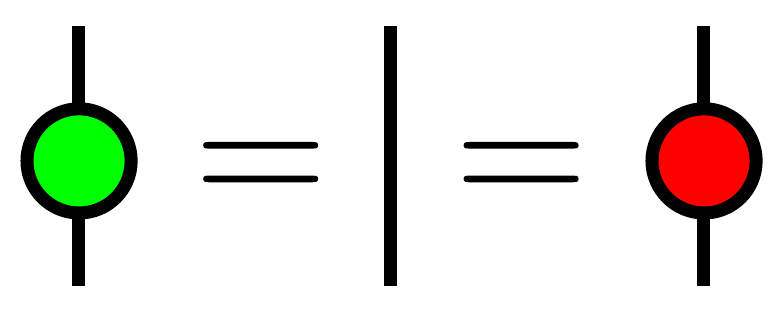}
\end{equation}
\end{minipage}
\begin{minipage}{0.48\textwidth}
The star-inverse rule:
\begin{equation}\label{eq:dzx4-star-inverse}
    \includegraphics[scale=0.2]{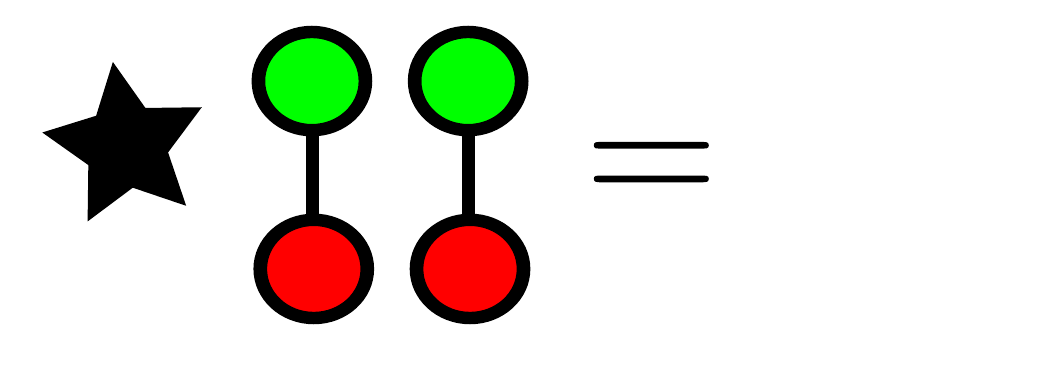}
\end{equation} 
\end{minipage}
\end{center}

\begin{center}
\begin{minipage}{0.48\textwidth}
The Hadamard-unitary rule:
\begin{equation}\label{eq:dzx5-had-unitary}
    \includegraphics[scale=0.2]{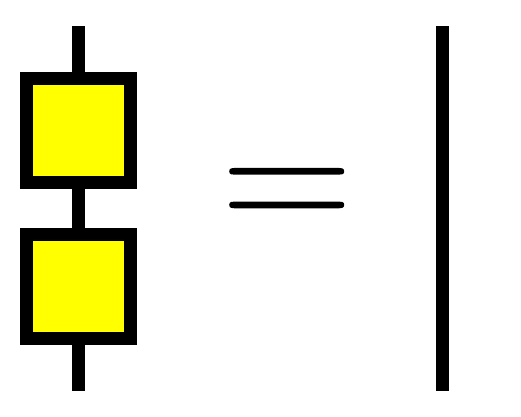}
\end{equation}
\end{minipage}
\begin{minipage}{0.48\textwidth}
\end{minipage}
\end{center}

The proofs for Eqs. \ref{eq:dzx1-hopf}-\ref{eq:dzx5-had-unitary} can be found in reference \cite{Backens2016}. We highlight these derived rewrite rules since we shall use them frequently in proving our results. 

\begin{proposition}
If a spider has a loop with a Hadamard, it is equal to the same spider sans the loop and its phase is added by $\pi$, up to scalars: 
\begin{equation}\label{eq:dzx6-loop-green-had}
    \includegraphics[scale=0.2]{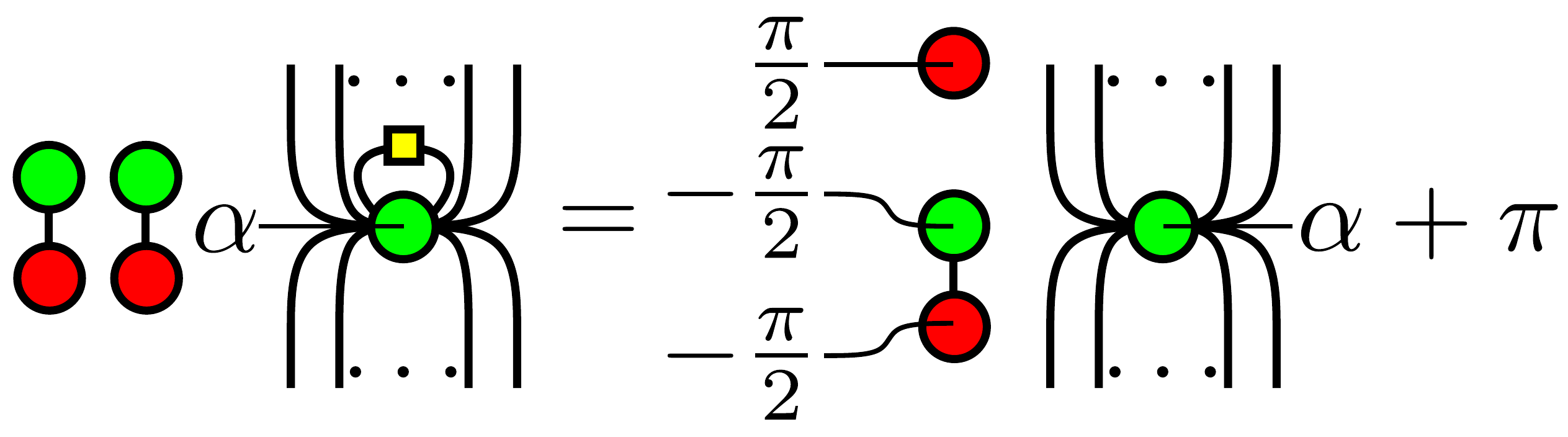}
\end{equation}
\begin{equation}\label{eq:dzx7-loop-red-had}
    \includegraphics[scale=0.2]{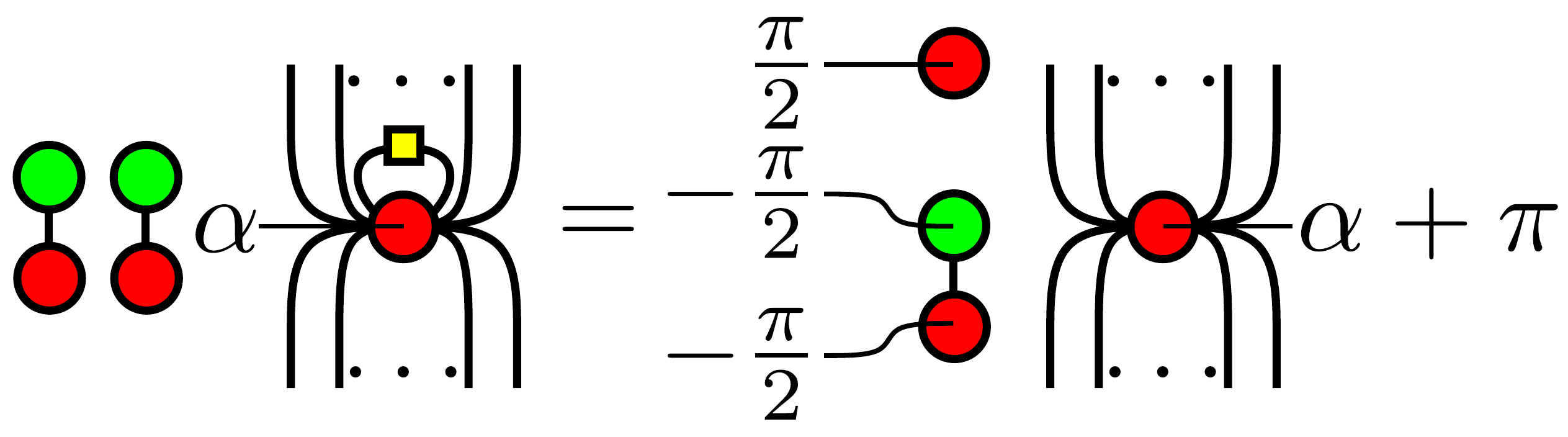}
\end{equation}

\begin{proof}
\begin{eqnarray*}
\includegraphics[scale=0.18]{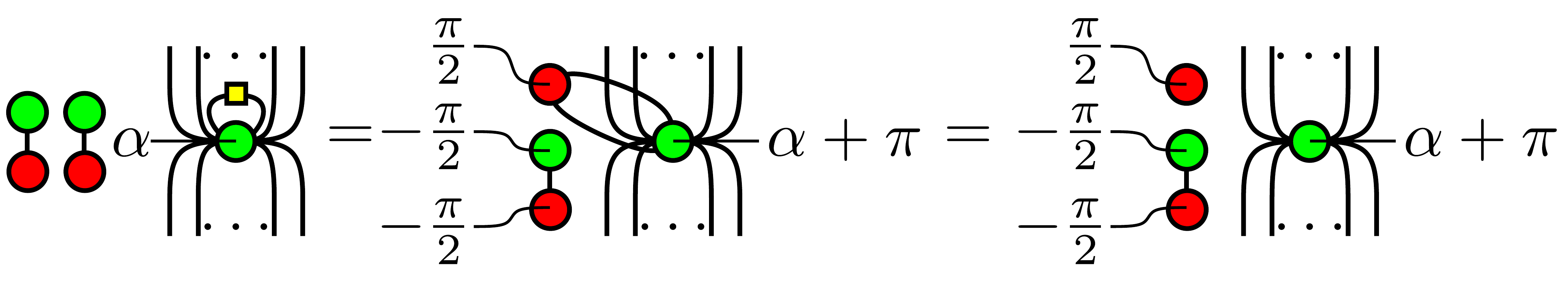}\\
\includegraphics[scale=0.18]{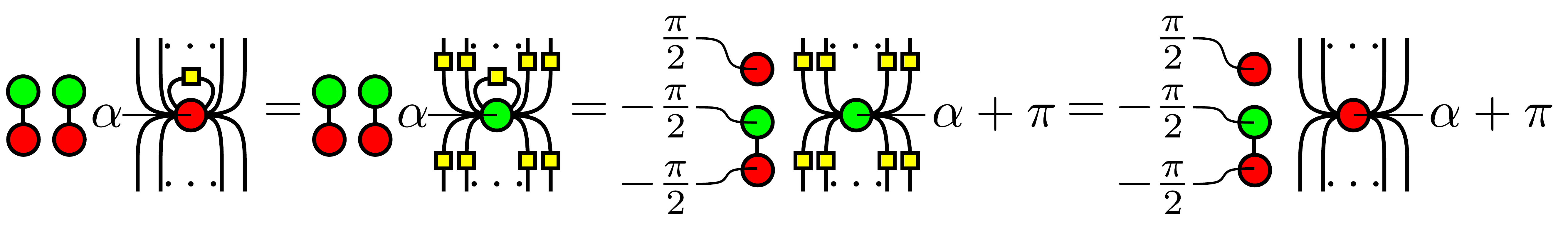}
\end{eqnarray*}
\end{proof}
\end{proposition}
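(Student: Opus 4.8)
My plan is to reduce both equations to the single semantic fact that closing a leg of a spider through a Hadamard multiplies the two standard-basis branches of that spider by $\tfrac{1}{\sqrt2}$ and $-\tfrac{1}{\sqrt2}$ respectively, which is exactly a $\pi$ phase shift up to the global scalar $\tfrac{1}{\sqrt2}$. Checking this interpretation first tells me what the right-hand sides must be and confirms that the equality can only hold ``up to scalars''; the work is then to realise this phase shift purely by the rewrite rules. I would prove the green case (Eq.~\ref{eq:dzx6-loop-green-had}) in detail and obtain the red case (Eq.~\ref{eq:dzx7-loop-red-had}) from it by the colour symmetry of the calculus.

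For the green case, the key move is to open up the single Hadamard on the loop. I would apply the Euler decomposition rule (Eq.~\ref{eq:zx13-had-euler}) to rewrite that Hadamard as a composite of three phase spiders whose two outer spiders are green; since these two green spiders sit directly on the loop of the green $\alpha$-spider, the green fuse rule (Eq.~\ref{eq:zx1-fuse-green}) absorbs both of them into it, and their phases add to give $\alpha+\tfrac{\pi}{2}+\tfrac{\pi}{2}=\alpha+\pi$. What remains is a green $(\alpha+\pi)$-spider whose loop still carries the single middle (red) phase spider; I then want to strip this spider off so that an ordinary loop is left, which the green loop rule (Eq.~\ref{eq:zx3-loop-green}) deletes, leaving the bare green $(\alpha+\pi)$-spider as required.

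The hard part will be eliminating that leftover complementary-colour phase spider trapped on the loop. Semantically it is transparent because its two standard-basis diagonal entries coincide, so both branches of the surrounding green spider are scaled equally and it collapses to a scalar; diagrammatically I expect to realise this with the copy and complementarity rules so that the red spider detaches from the green loop up to a scalar factor. Alongside this I would keep careful track of the accumulated scalars, discharging them with the star and zero-scalar rules (Eqs.~\ref{eq:zx15-star} and~\ref{eq:zx16-zero}) and with $H\circ H=\mathrm{id}$ (Eq.~\ref{eq:dzx5-had-unitary}), which is precisely what justifies the ``up to scalars'' qualifier. Finally, the red case follows by running the same argument with the colours exchanged, using the colour-change rule (Eq.~\ref{eq:zx12-had-transform}) and the colour-dual Euler decomposition whose two outer spiders are red; by the colour symmetry of all the rules invoked, no new obstruction arises.
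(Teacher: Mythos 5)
Your plan is essentially the paper's proof: Euler-decompose the Hadamard on the loop via Eq.~\ref{eq:zx13-had-euler}, fuse the two outer $\pi/2$ phase spiders into the main spider to produce the $+\pi$, dispose of the leftover middle $\pi/2$ spider of the complementary colour up to a scalar, and get the red case by colour symmetry; your opening semantic check (the loop only sees the diagonal entries $\bra{0}H\ket{0}=\tfrac{1}{\sqrt2}$, $\bra{1}H\ket{1}=-\tfrac{1}{\sqrt2}$) is also correct and correctly predicts the right-hand sides. The one step you leave open is exactly where the work is, and the mechanism you sketch for it is not quite right: no rule detaches the red $\pi/2$ spider while leaving a plain loop behind for Eq.~\ref{eq:zx3-loop-green} to delete (the red $\pi/2$ spider is not proportional to the identity, and the copy rules only apply to phase-$0$ or phase-$\pi$ states). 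What actually finishes the argument is the Hopf rule, Eq.~\ref{eq:dzx1-hopf}: after unfusing the $\pi/2$ phase onto a third leg of the red spider, the green spider and the red spider are joined by exactly two wires, and Hopf severs both at once, leaving the bare green $(\alpha+\pi)$-spider times a closed scalar diagram --- there is never an intermediate plain loop. With that substitution your derivation goes through and coincides with the one in the text.
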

\begin{proposition}\label{prop:hopf-had}
Eq. \ref{eq:dzx1-hopf} can be re-expressed as the following equations:
\begin{center}
\begin{minipage}{0.48\textwidth}
\begin{equation}
    \includegraphics[scale=0.2]{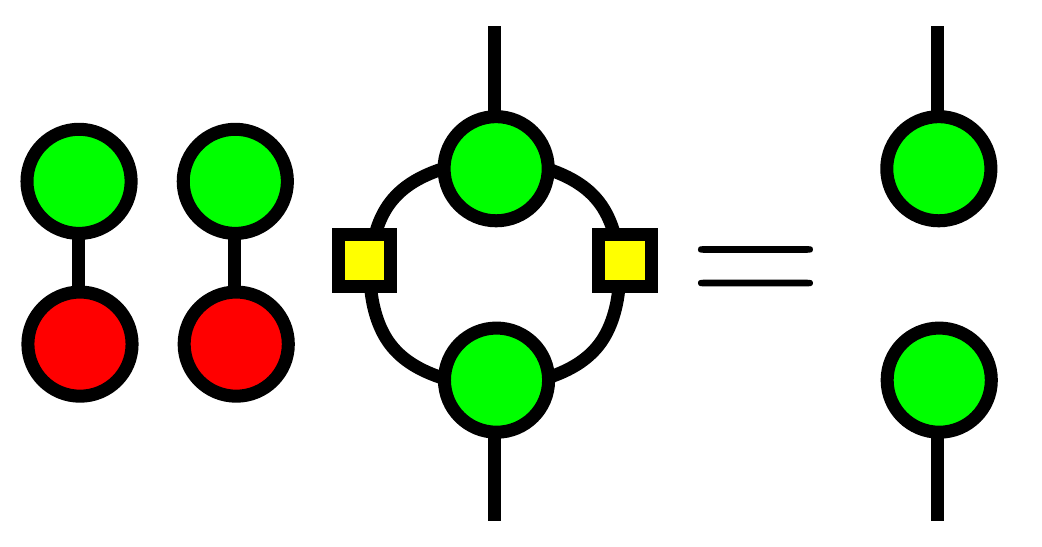}
\end{equation}
\end{minipage}
\begin{minipage}{0.48\textwidth}
\begin{equation}
    \includegraphics[scale=0.2]{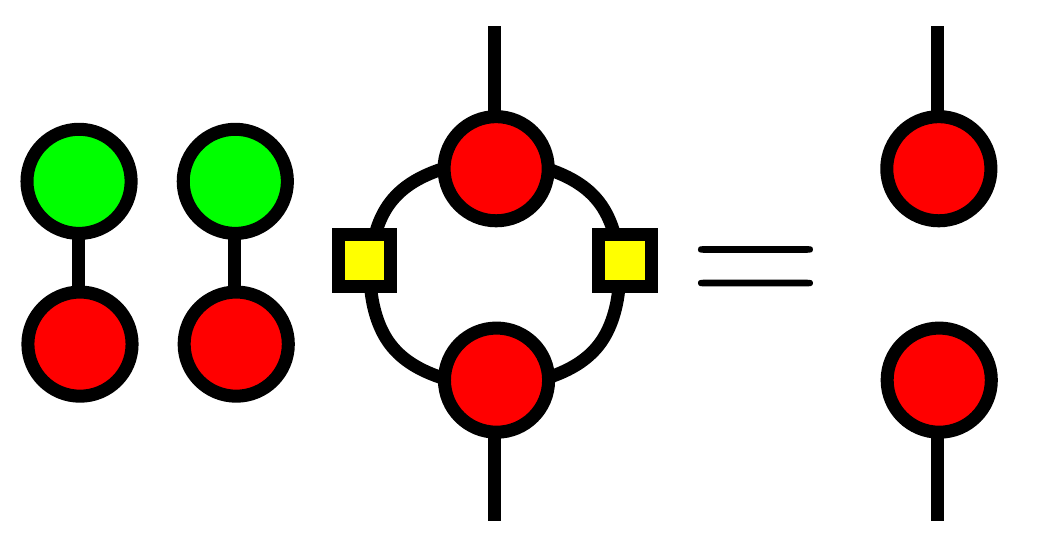}
\end{equation}
\end{minipage}
\end{center}

\begin{proof}
\begin{eqnarray}
    \includegraphics[scale=0.2]{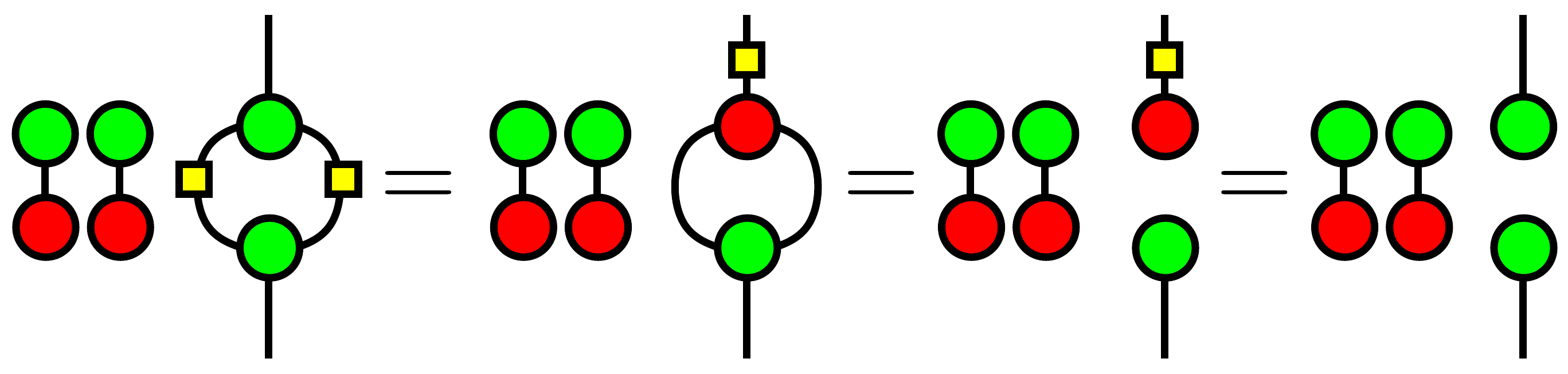}\\
    \includegraphics[scale=0.2]{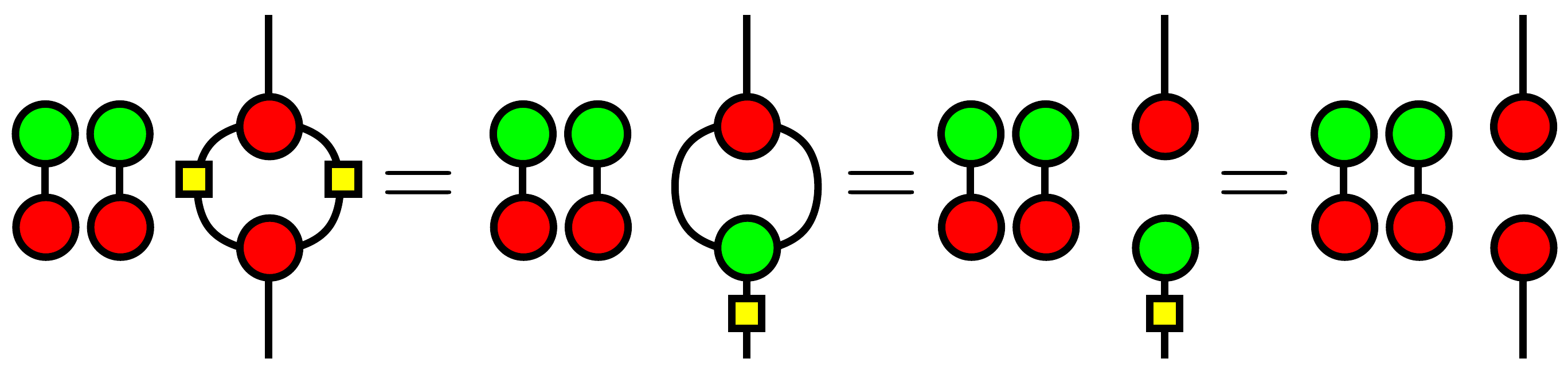}
\end{eqnarray}
\end{proof}
\end{proposition}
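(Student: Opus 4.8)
The plan is to obtain the two displayed equations as the monochromatic forms of the mixed-colour Hopf law in Eq.~\ref{eq:dzx1-hopf}, using the colour change rule (Eq.~\ref{eq:zx12-had-transform}) to recolour one of the two spiders while pushing Hadamards onto the wires that join them. Concretely, the green equation should say that two green spiders joined by a pair of Hadamard-decorated wires collapse to the disconnected form, and the red equation is its colour-swapped twin; I would prove the green one in full and then invoke the symmetry observed after Definition~\ref{def:strong-complementarity} (the colours of the nodes may be swapped) to read off the red one, or else simply repeat the argument verbatim with the colours exchanged.

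For the green version I would start from Eq.~\ref{eq:dzx1-hopf} and apply the colour change rule (Eq.~\ref{eq:zx12-had-transform}) to the red spider, turning it into a green spider and placing a Hadamard on each of its legs. The two internal wires realising the Hopf connection now run between two green spiders through a pair of Hadamards, which is exactly the left-hand side of the target equation; the external leg of the recoloured spider, however, also acquires a stray Hadamard. I would eliminate this stray Hadamard wherever two of them meet by the Hadamard-unitary rule (Eq.~\ref{eq:dzx5-had-unitary}), and on the disconnected right-hand side I would absorb the Hadamards sitting on the separated pieces using the copy rules (Eqs.~\ref{eq:zx9-copy-green}--\ref{eq:zx10-copy-red}) together with the identity rule (Eq.~\ref{eq:dzx3-identity}), so that the right-hand side returns to the clean disconnected form demanded by the statement.

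The red version is then handled by the mirror argument: recolour the green spider in Eq.~\ref{eq:dzx1-hopf} instead, again via Eq.~\ref{eq:zx12-had-transform}, and discharge the resulting external Hadamards by the same self-inverse and copy manipulations.

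The main obstacle is bookkeeping rather than anything conceptual: one must track exactly which legs pick up Hadamards under the colour change rule, confirm that the unwanted ones cancel in pairs under Eq.~\ref{eq:dzx5-had-unitary} rather than contaminating the two Hadamard wires that are meant to survive, and verify that the star scalars introduced along the way balance on both sides so that the equation holds on the nose and not merely up to an undetermined scalar. Once the Hadamard placement is pinned down, each remaining simplification is a single application of an already-established rewrite rule.
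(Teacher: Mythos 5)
Your proposal is correct and follows essentially the same route as the paper: recolour one spider via the colour change rule so that the Hadamard-decorated wires and the plain-wire Hopf law (Eq. \ref{eq:dzx1-hopf}) become interconvertible, cancel paired Hadamards with Eq. \ref{eq:dzx5-had-unitary}, and absorb the stray Hadamards on the disconnected endpoints. The red case is indeed just the colour-swapped twin, exactly as in the paper's second rewrite chain.
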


For the sake of simplicity, when rewriting diagrams using Eqs. \ref{eq:zx1-fuse-green}-\ref{eq:dzx7-loop-red-had}, we ignore scalars. That is, when two diagrams are equivalent up to scalars, we treat them as equal to each other. 

It is no accident that the classical structures in the ZX-calculus are labelled $\mathcal{Z}$ and $\mathcal{X}$. The corresponding bases for the classical structures are, respectively, the eigenbases of the Pauli $Z$ and Pauli $X$ operators. Furthermore, we can obtain the Pauli $Z$ and $X$ operators from the 1,1-spiders with phase $\pi$ of $\mathcal{Z}$ and $\mathcal{X}$, respectively. 

If we omit the star, the Hadamard, and the rewrite rules which involve phase $\pi$ from the ZX-calculus, we could be describing any pair of strongly complementary classical structures. Thus, a (modified) ZX-calculus can be applied to qudit systems. The only caveat is the fact that the ZX-calculus as described above is not complete for quantum mechanics \cite{Backens2016,Jeandel2019CompletenessZX-Calculus}, but this should not prevent us from using it to study fragments of quantum mechanics as long as we are cautious about the Euler decompositions that we use in our computations. In our case, we do not utilize any Euler decompositions of unitaries other than $\had$ so the incomplete version of ZX-calculus that we use does not affect our computations. 

\section{Constructing \texorpdfstring{$\mathcal{Y}$}{TEXT}}

As in the approach by \citet{Romero2005}, we need the three representatives of the complete set of 3 MUBs of a single qubit system. We already have $\mathcal{Z}$ and $\mathcal{X}$ --- which correspond to the Pauli $Z$ and $X$ operators, respectively  --- but how do we represent the classical structure corresponding to the Pauli $Y$ operator?  

ZX-calculus is universal for pure qubit quantum mechanics \cite{Backens2016}. That is, for any state or operator on qubits, there exists a diagram in the ZX-calculus which represents it. So, it is possible to construct the classical structure corresponding to the Pauli $Y$ operator, using only the generators of the ZX-calculus. We denote this classical structure by $\mathcal{Y}$. By doing this, we do not have to define additional rewrite rules since the ones in the ZX-calculus already suffice.

A $m,n$-spider of the $\mathcal{Y}$ structure is the following linear map, where $j^k\in\{0,1\}$:
\begin{align*}
\ket{(j^1)_Y}\otimes\ket{(j^2)_Y}\otimes\cdots\otimes\ket{(j^m)_Y} &\mapsto \ket{(j^1)_Y}^{\otimes n}\\
\text{if }\ket{(j^k)_Y}=\ket{(j^{k'})_Y}& \text{ for all } k,k'\in\{1,...,m\}\\ \ket{(j^1)_Y}\otimes\ket{(j^2)_Y}\otimes\cdots\otimes\ket{(j^m)_Y} & \mapsto 0\\
\text{if }\ket{(j^k)_Y}\not=\ket{(j^{k'})_Y} & \text{ for some } k,k'\in\{1,...,m\}
\end{align*}

Rewriting the linear map above in terms of $\ket{0}$ and $\ket{1}$ will not simplify it. So, instead of doing that, we exploit the symmetry provided by $\dagger$ (see Section \ref{sec:process-intro}). 

First, we consider the $m,1$-spider of $\mathcal{Y}$. Let $c_k$ be the number of $k\in\{0,1\}$ in $\ket{j^1}\otimes\ket{j^2}\otimes\cdots\otimes\ket{j^m}$ where $j^l\in\{0,1\}$ for each $l\in\{1,2,...m\}$. The $m,1$-spider of $\mathcal{Y}$ can be written as:
\begin{align*}
\ket{j^1}\otimes\ket{j^2}\otimes\cdots\otimes\ket{j^m}& \mapsto (-i)^{c_1}\ket{0} \\
 \text{ if } & c_1 \text{ is even or }0\\
\ket{j^1}\otimes\ket{j^2}\otimes\cdots\otimes\ket{j^m}& \mapsto (-i)^{c_1-1}\ket{1}\\
\text{ if } & c_1 \text{ is odd}
\end{align*}
We obtained the above linear map from the result that the 2,1-spider of a classical structure --- as the binary operation --- forms a group with the unbiased states of the classical structure (see Section 7.4 of \cite{Coecke2011b}). 

Compare this to the $m,1$-spider of the $\mathcal{X}$:
\begin{align*}
\ket{j^1}\otimes\ket{j^2}\otimes\cdots\otimes\ket{j^m}& \mapsto \ket{0} \\
 \text{ if } & c_1 \text{ is even or }0\\
\ket{j^1}\otimes\ket{j^2}\otimes\cdots\otimes\ket{j^m}& \mapsto \ket{1}\\
\text{ if } & c_1 \text{ is odd}
\end{align*}

The difference between the $m,1$-spiders of $\mathcal{Y}$ and $\mathcal{X}$ is the scalar factor on RHS of the mappings. That is, the $m,1$-spider of $\mathcal{Y}$ will take a tensor product of states in the standard basis and produce a state in the standard basis with some scalar factor 1 or $-1$. In contrast, the $m,1$-spider of $\mathcal{X}$ will take a tensor product of states in the standard basis and produce a state in the standard basis with scalar factor 1.  

This provides us with a hint about the spiders of the $\mathcal{Y}$, i.e they can be constructed from spiders of $\mathcal{X}$. Now, the value of $(-i)^{c_1}$ depends on the number of $\ket{1}$ on LHS of the mapping above. We obtain 1 if $c_1$ or $c_1-1$ is a multiple of four, and -1 if $c_1$ or $c_1-1$ is not divisible by 4. So, to obtain the $m,1$-spider of the $\mathcal{Y}$, we need to compose the $m,1$-spider of $\mathcal{X}$ with some process which `checks' the number of $\ket{1}$ inputted to the spider. Furthermore, such a process should only change the scalar part of the state. We can do this using the controlled-$Z$ operator, which, in ZX-calculus, is denoted by:
\begin{equation*}
    \includegraphics[scale=0.2]{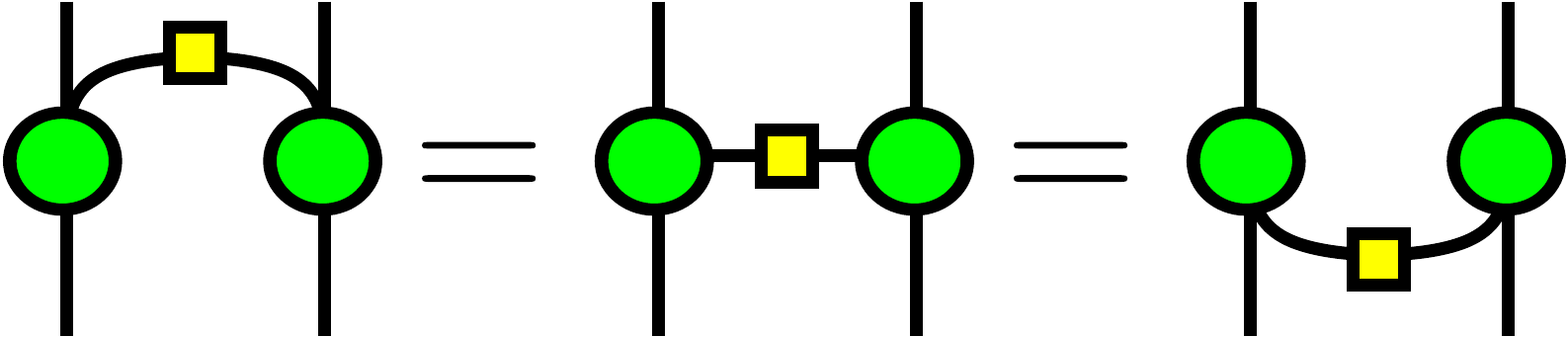}
\end{equation*}
The derivation of the diagram for the controlled-$Z$ operator can be found in reference \cite{Coecke2011b}. 

The controlled-$Z$ operator is an operator on two qubits, and it can be viewed as a process which checks whether one of the two qubits inputted into it is $\ket{0}$ or not. 

We argue that the following process composed of controlled-$Z$ operators checks the number $\ket{1}$ in a product state on $m$ qubits consisting of $\ket{0}$ and $\ket{1}$, which we shall denote as $\mathsf{CZ}_m$:
\begin{equation*}
    \includegraphics[scale=0.2]{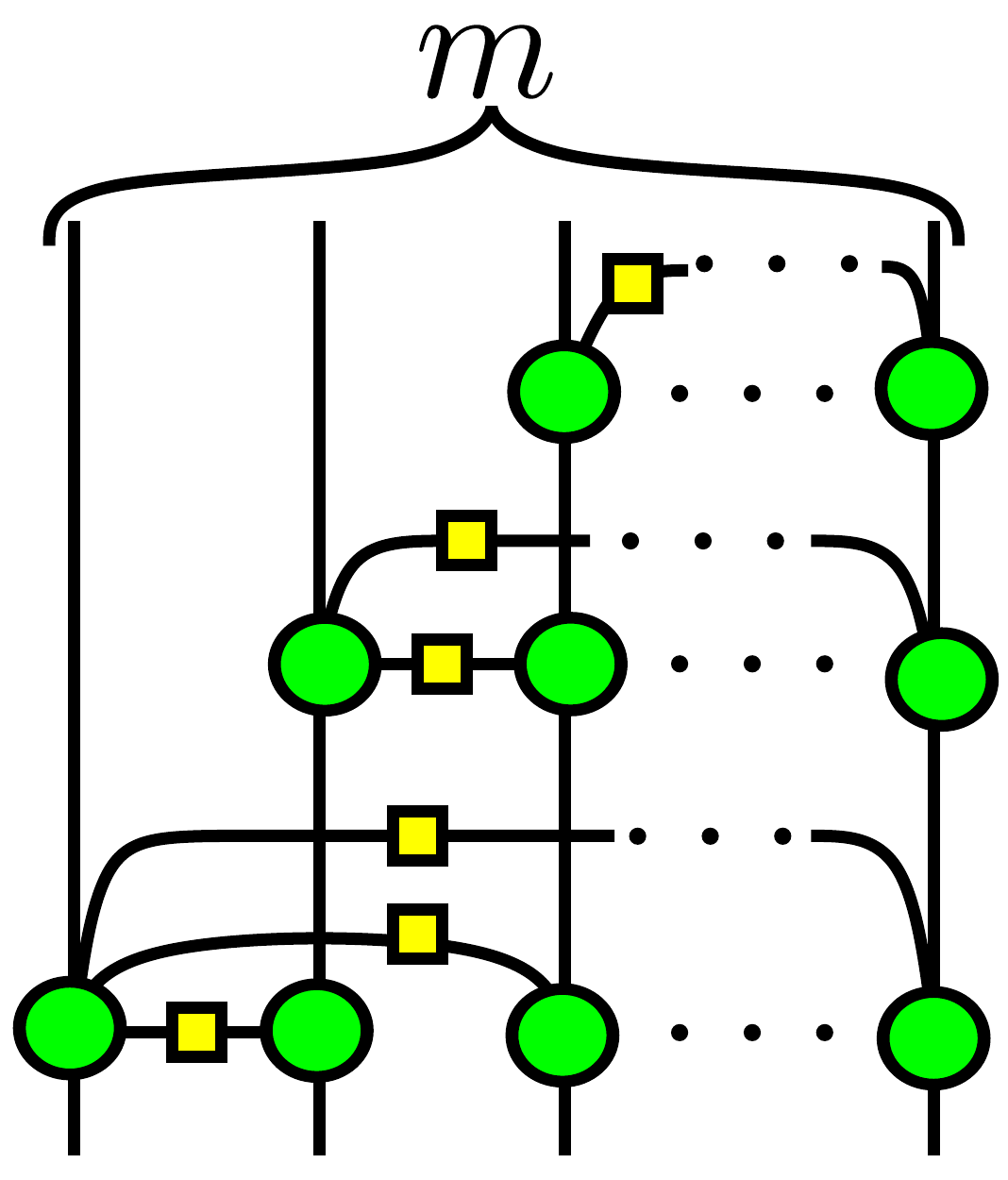}
\end{equation*}
An example of the previous diagram for 4 qubits:
\begin{equation*}
    \includegraphics[scale=0.2]{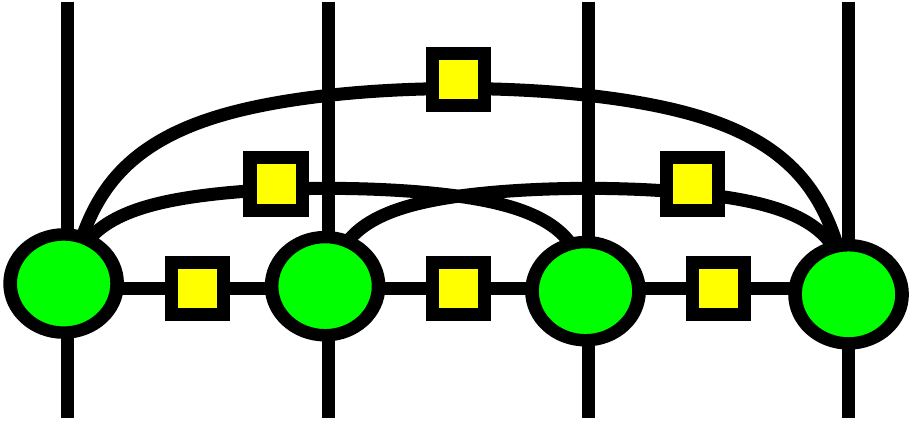}
\end{equation*}

We can break $\mathsf{CZ}_m$ down into levels. At the $k$-th level (see Fig. \ref{fig:level-k-CZ}), there are $m-k$ controlled-Z operators acting on the $k$-th qubit and the qubits to its right. If the $k$-th qubit is $\ket{1}$ and the number of $\ket{1}$ to the right of the $k$-th qubit is odd, -1 should be multiplied to the scalar factor of the output state. Otherwise, the sign of the scalar factor remains the same. 

\begin{figure}[!ht]
\begin{center}
 \includegraphics[scale=0.2]{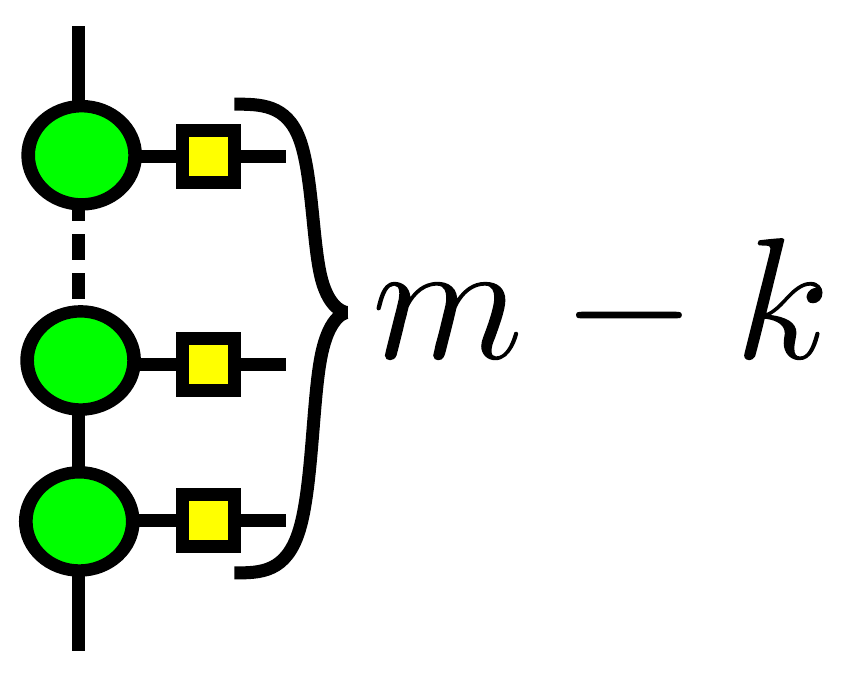}
    \caption{Level $k$ of $\mathsf{CZ}_m$}
    \label{fig:level-k-CZ}
    \end{center}
\end{figure}

The sign of the scalar factor changes $\frac{c_1-1}{2}$ times if $c_1$ is odd, and the sign of the scalar factor changes $\frac{c_1}{2}$ times if $c_1$ is even. Starting from +1, the resulting sign is then -1 if it changes an odd number of times, and it is +1 if it changes an even number of times. Thus, for $\ket{j^1j^2...j^m}$, where $j^k\in\{0,1\}$, $\mathsf{CZ}_m$ transforms $\ket{j^1j^2...j^m}$ into $-\ket{j^1j^2...j^m}$ if $c_1$ or $c_1-1$ is divisible by 4. Otherwise, $\mathsf{CZ}_m$ does nothing to it. Therefore, if we compose $\mathsf{CZ}_m$ to the $m,1$-spider of the $\mathcal{X}$, we obtain the $m,1$-spider of $\mathcal{Y}$.

The $1,n$-spider of a classical structure is the adjoint of its $n,1$-spider. So, the diagram in Fig. \ref{fig:y-spider} should be the $m,n$-spider of $\mathcal{Y}$.

\begin{figure}[!ht]
    \centering
    \includegraphics[scale=0.2]{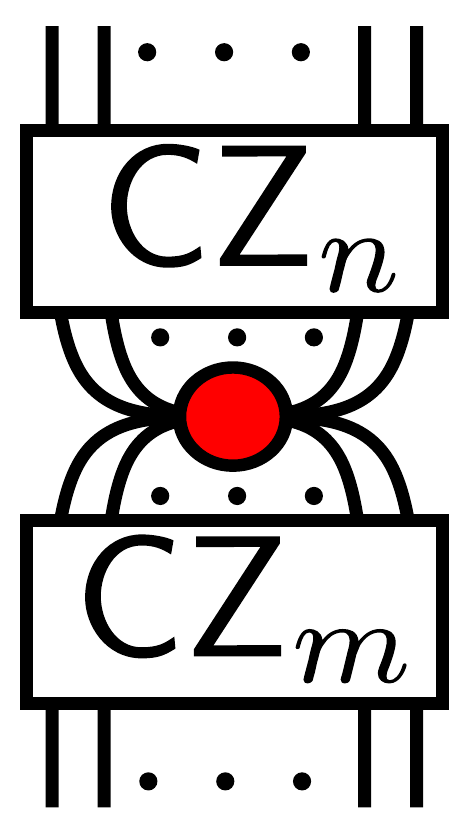}
    \caption{Spider of $\mathcal{Y}$}
    \label{fig:y-spider}
\end{figure}

There is an alternative way to express the spiders of $\mathcal{Y}$. In this thesis, we have opted to use the spider in Fig. \ref{fig:y-spider} since it does not involve spiders with phases other than 0 or $\pi$. The following theorem provides this alternative spider:

\begin{theorem}\label{thm:y-equivalence}
The diagram in Fig. \ref{fig:y-spider} has the following equivalent form:
\begin{equation}\label{eq:y-equivalence}
    \includegraphics[scale=0.2]{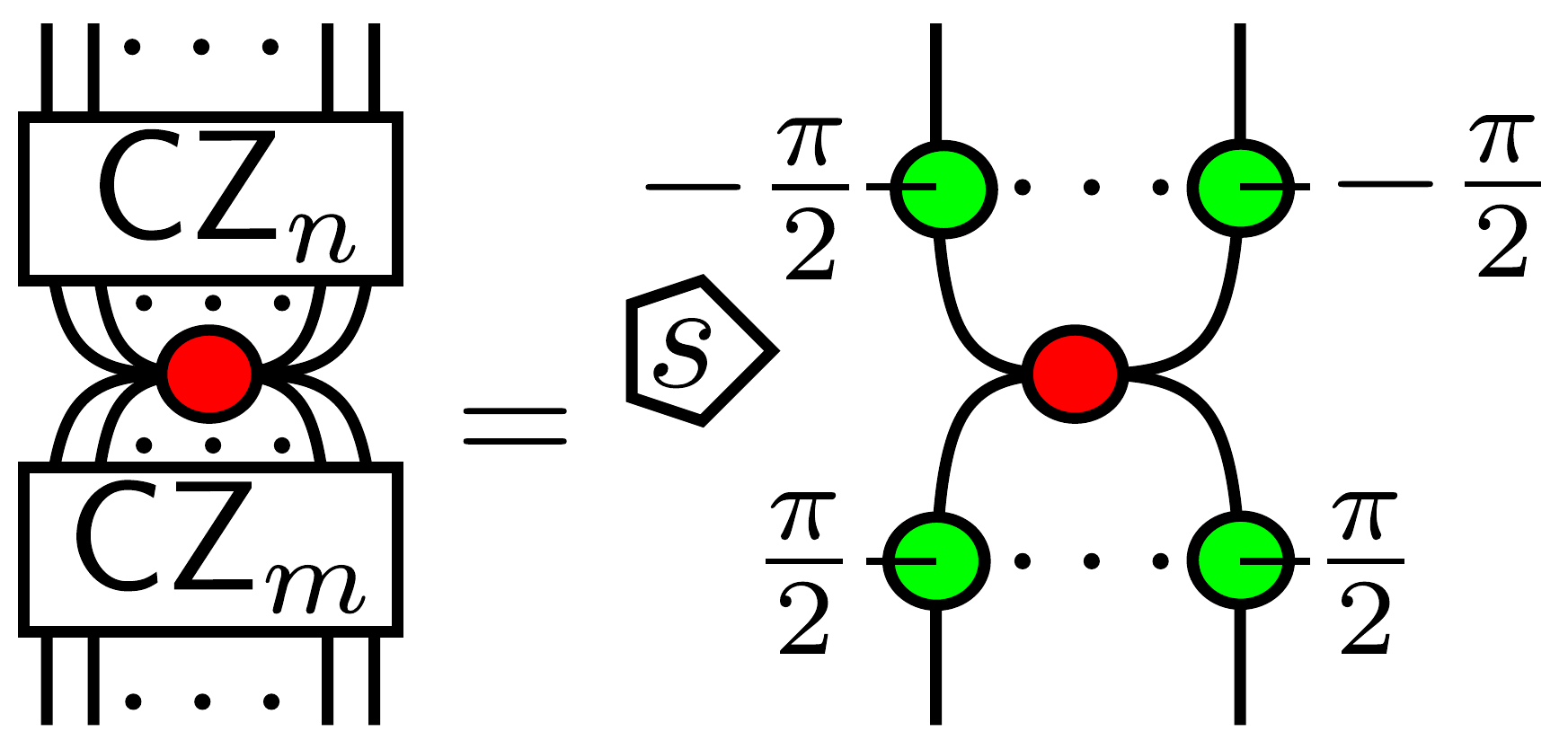}
\end{equation}
for some scalar $s$.
\end{theorem}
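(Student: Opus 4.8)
The plan is to identify both the diagram in Fig.~\ref{fig:y-spider} and the right-hand side of Eq.~\ref{eq:y-equivalence} as the classical structure of the Pauli-$Y$ eigenbasis, and then to invoke the bijection between orthonormal bases and classical structures \cite{Coecke2013New} to conclude that they must coincide up to the scalar normalisation $s$. The left-hand diagram is the $\mathcal{Y}$-spider by its very construction, so the real content is showing that the right-hand diagram is again a classical structure that copies $\{\ket{0_Y},\ket{1_Y}\}$. That diagram is a red ($\mathcal{X}$) spider carrying green $\pm\pi/2$-phase spiders on its legs; since the green $\pi/2$-phase spider interprets as the phase gate $S=\mathrm{diag}(1,i)$, and since the stated eigenvector formulas give $S\ket{j_X}=\ket{j_Y}$, these dressing gates implement exactly the unitary change of basis carrying the $X$-basis to the $Y$-basis.

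First I would record the elementary fact that dressing a classical structure leg-by-leg with a unitary yields another classical structure. Concretely, if $\sigma_{m,n}$ denotes the $m,n$-spider of $\mathcal{X}$, so that $\sigma_{m,n}\ket{j_X}^{\otimes m}=\ket{j_X}^{\otimes n}$, and $U$ is any unitary, then $U^{\otimes n}\circ\sigma_{m,n}\circ(U^\dagger)^{\otimes m}$ is the $m,n$-spider of the classical structure whose basis is $\{U\ket{j_X}\}$: it satisfies spider fusion (Eq.~\ref{eq:spider-fuse}) and permutation invariance because $\sigma_{m,n}$ does and the dressing cancels on every internal wire since $U^\dagger U=1$. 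Taking $U=S$ and using $S\ket{j_X}=\ket{j_Y}$ then shows that the right-hand side of Eq.~\ref{eq:y-equivalence} is the classical structure of the $Y$-basis, i.e.\ $\mathcal{Y}$ itself.

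To make the identification fully explicit, and to reconcile it with the description of the $\mathcal{Y}$-spider already derived, I would verify the $m,1$ component directly on standard-basis product states. Acting on $\ket{j^1\cdots j^m}$ with $c_1$ occurrences of $\ket1$, the $m$ input gates (green $-\pi/2$, i.e.\ $S^\dagger$) contribute $(-i)^{c_1}$, the red $m,1$-spider returns the parity $\ket{c_1 \bmod 2}$, and the output gate (green $\pi/2$, i.e.\ $S$) multiplies by $i$ exactly when that parity is $1$. The result is $(-i)^{c_1}\ket0$ when $c_1$ is even and $(-i)^{c_1-1}\ket1$ when $c_1$ is odd, which is precisely the $m,1$-spider of $\mathcal{Y}$ recorded before Fig.~\ref{fig:y-spider}. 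The general $m,n$ case then follows by taking the dagger (reflection) to produce the $1,n$ component and fusing: the green $\pi/2$ on the output of the $m,1$ piece meets the green $-\pi/2$ on the input of the $1,n$ piece and cancels, while the two red spiders fuse by the red fuse rule into the red $m,n$-spider, leaving green $\pi/2$ on all $n$ outputs and green $-\pi/2$ on all $m$ inputs.

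The one genuine subtlety --- and the place to be careful --- is the scalar $s$. Because the ZX generators (the controlled-$Z$ gates making up $\mathsf{CZ}_m$, the Hadamard, and the spider normalisations) are handled only up to scalars throughout this chapter, the two diagrams need not agree on the nose but only up to an overall factor, which is exactly what the statement asserts. I would therefore either track the factor accumulated at each spider fusion and each controlled-$Z$, or simply appeal to \cite{Coecke2013New}, which guarantees that two classical structures copying the same orthonormal basis coincide up to such normalisation. An alternative, purely diagrammatic route would rewrite Fig.~\ref{fig:y-spider} into Eq.~\ref{eq:y-equivalence} directly; there the main obstacle is the lemma that the cascade $\mathsf{CZ}_m$ feeding a single red spider collapses to green $\pm\pi/2$ phase gates on its legs, which I expect to establish by induction on $m$ using the bialgebra and colour-change rules, with the scalar bookkeeping again being the fiddliest part.
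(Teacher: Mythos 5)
Your argument is sound in outline but takes a genuinely different route from the thesis. The thesis proves the identity entirely inside ZX-calculus: Proposition \ref{prop:y-equivalence-1} establishes the $2,1$ case by explicit rewriting (Euler decomposition of $\had$, bialgebra and copy rules), Proposition \ref{prop:y-equivalence-2} shows that nesting the $2,1$ diagram onto a leg of the $m,1$ diagram reproduces the $(m+1),1$ diagram --- this is where the cascade $\mathsf{CZ}_m$ gets collapsed inductively --- and the general $m,n$ case follows by taking adjoints and fusing (Eqs.~\ref{eq:y-equiv-1}--\ref{eq:y-equiv-final}). You instead step outside the calculus: you show the right-hand side is the classical structure obtained by dressing $\mathcal{X}$ leg-by-leg with the phase gate $S=\mathrm{diag}(1,i)$, note that $S\ket{j_X}=\ket{j_Y}$, and appeal to the bijection of \cite{Coecke2013New}. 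That is a legitimate and arguably shorter argument; what the diagrammatic proof buys is that it never invokes the concrete Hilbert-space model (consistent with the thesis's methodology) and it yields Proposition \ref{prop:y-equivalence-2} as a reusable lemma. The ``alternative purely diagrammatic route'' you sketch at the end is in fact the proof the thesis gives.

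One step needs repair. You take ``the left-hand diagram is the $\mathcal{Y}$-spider by its very construction'' as given, but in the thesis that identification is only a design target at this point: the statement that the Fig.~\ref{fig:y-spider} family forms a classical structure copying the $Y$-eigenbasis is the \emph{subsequent} theorem, and its proof \emph{uses} Eq.~\ref{eq:y-equivalence}. Relying on it here is circular. To close the gap you must verify the left-hand side independently, in exactly the way you verify the right-hand side: check that $\mathsf{CZ}_m\ket{j^1\cdots j^m}=(-1)^{\binom{c_1}{2}}\ket{j^1\cdots j^m}$ and that $(-1)^{\binom{c_1}{2}}$ equals $(-i)^{c_1}$ for $c_1$ even and $(-i)^{c_1-1}$ for $c_1$ odd, so that composing with the parity map of the $\mathcal{X}$-spider yields the same $m,1$ component you computed for the right-hand side. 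This is a short calculation, so the defect is easily fixed, but as written the proposal assumes a fact the thesis only establishes downstream of this theorem. Your handling of the scalar $s$ is fine: two classical structures copying the same orthonormal basis have equal comultiplications by linearity, so $s$ only absorbs the normalisation conventions of the ZX generators.
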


For the following results, we define $\theta=\frac{\pi}{2}$.

We need to prove the following propositions before we can prove Theorem \ref{thm:y-equivalence}. 
\begin{proposition}\label{prop:y-equivalence-1}
The following equation gives Eq. \ref{eq:y-equivalence} for the case of $m=2$ and $n=1$:
\begin{equation}
  \includegraphics[scale=0.2]{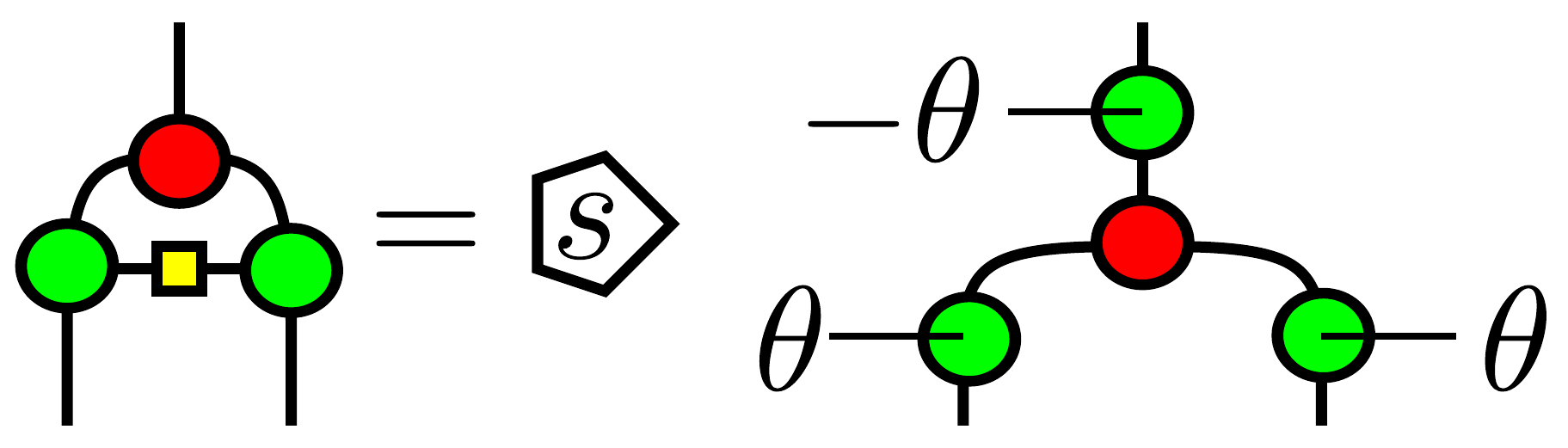}
\end{equation}
for some scalar $s$.
\begin{proof}
\begin{equation*}
    \includegraphics[scale=0.2]{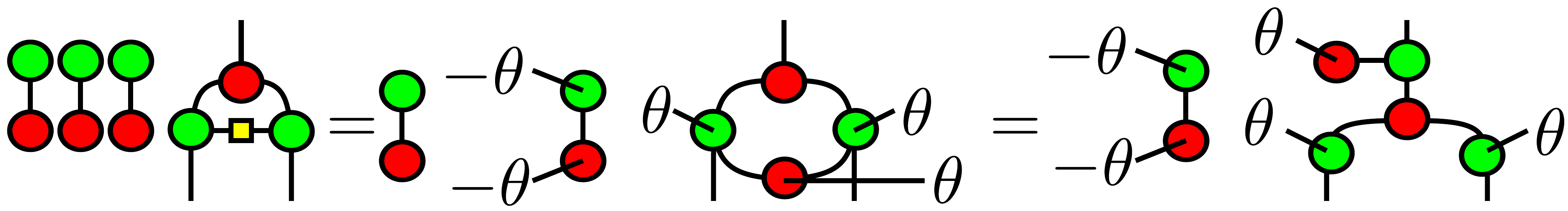}
\end{equation*}
Above, the first equality is obtained from Eq. \ref{eq:zx13-had-euler}, and the final equality is obtained from Eq. \ref{eq:zx11-bialgebra}.
\begin{equation*}
    \includegraphics[scale=0.2]{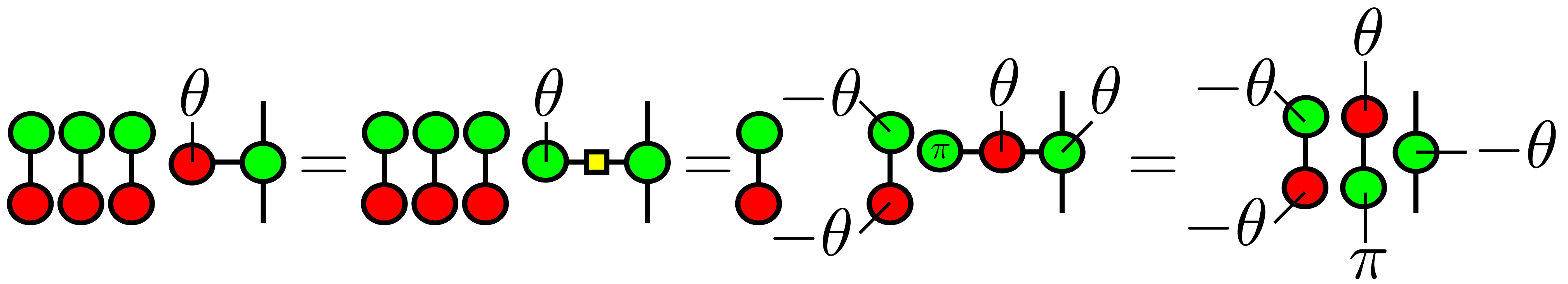}
\end{equation*}
Above, the second equality is obtained from Eq. \ref{eq:zx13-had-euler}, and the final equality is obtained from Eqs. \ref{eq:zx9-copy-green} and \ref{eq:zx7-copy-green-pi}.
Thus, we have:
\begin{equation*}
    \includegraphics[scale=0.2]{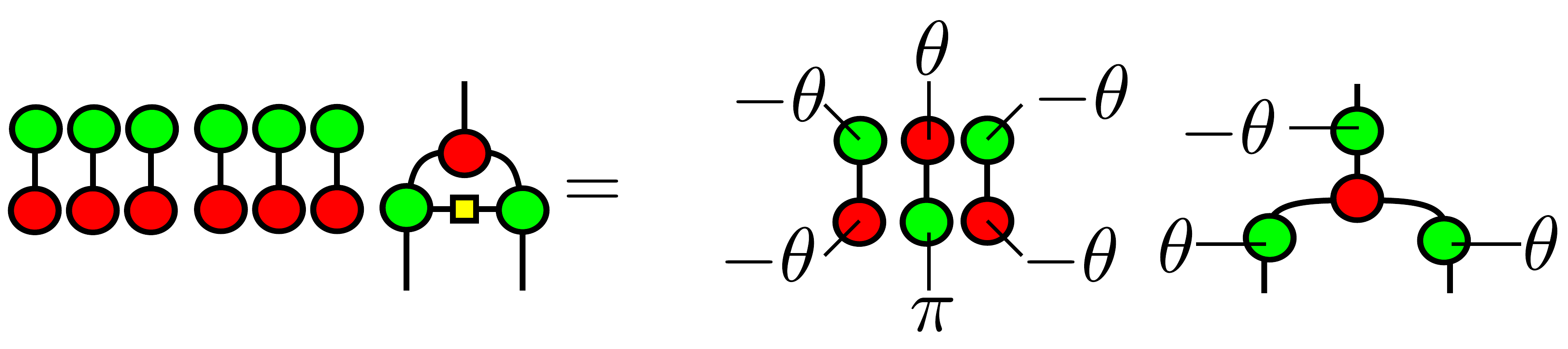}
\end{equation*}
By Eq. \ref{eq:dzx4-star-inverse},
\begin{equation}
    \includegraphics[scale=0.2]{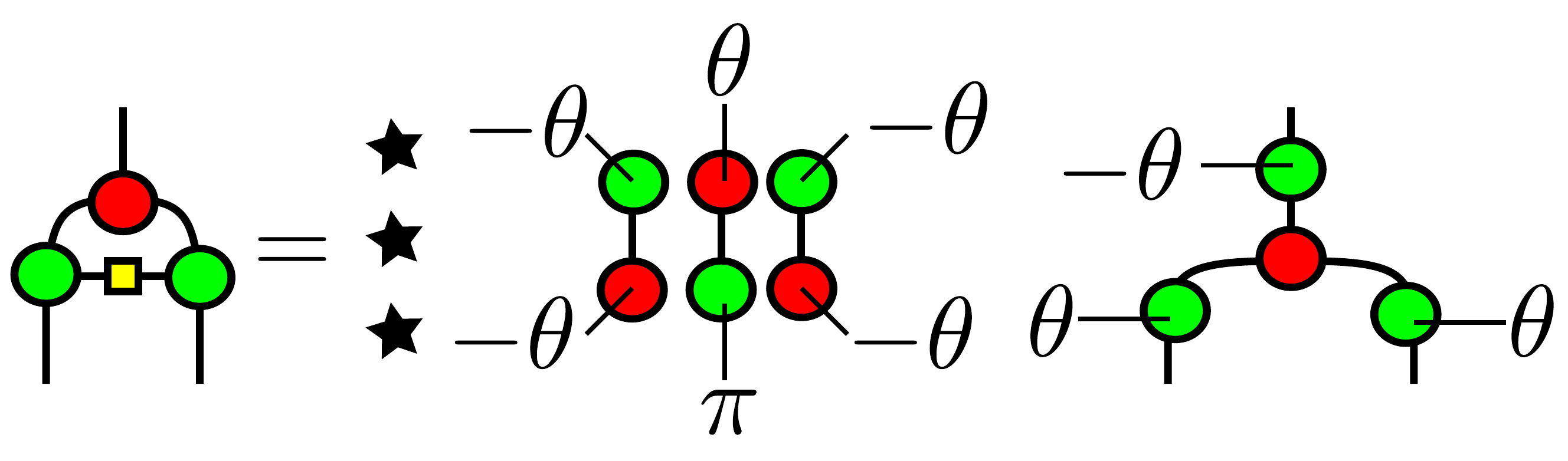}
\end{equation}
\end{proof}
\end{proposition}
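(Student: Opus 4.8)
The plan is to prove the claimed identity for $m=2$, $n=1$ by rewriting both its left- and right-hand sides into a single common normal form and then reading off their equality up to a scalar; this two-sided reduction is cleaner than a one-directional rewrite because it lets the Euler decomposition act symmetrically on the Hadamard hidden inside $\mathsf{CZ}_2$ on one side and on the green $\theta$-phases on the other. The first move is to unfold the left-hand side: the diagram of Fig.~\ref{fig:y-spider} specialized to $m=2$, $n=1$ is the red $2,1$-spider of $\mathcal{X}$ sitting above $\mathsf{CZ}_2$, so I would replace $\mathsf{CZ}_2$ by its ZX form, namely a green spider on each of the two input wires whose auxiliary legs are joined by a single Hadamard edge.

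For the left-hand side, I would then apply the Euler decomposition rule (Eq.~\ref{eq:zx13-had-euler}) to that Hadamard edge, turning it into a green-$\theta$/red-$\theta$/green-$\theta$ chain, and fuse the two outer green-$\theta$ phases into the adjacent $\mathsf{CZ}$ green spiders using the green fuse rule (Eq.~\ref{eq:zx1-fuse-green}). What remains is a four-cycle of alternating red and green nodes — the two $\mathsf{CZ}$ green spiders, the central red-$\theta$ spider, and the top red $2,1$-spider — which is exactly the connectivity addressed by the bialgebra rule (Eq.~\ref{eq:zx11-bialgebra}). Applying bialgebra collapses this cycle and leaves the intended normal form; this realizes the order "first Euler, then bialgebra" that the computation naturally follows.

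For the right-hand side, I would start from the red $2,1$-spider carrying green-$\theta$ decorations on its legs (the $m=2$, $n=1$ specialization of Eq.~\ref{eq:y-equivalence}), expose a Hadamard via the colour-change and Euler rules (Eq.~\ref{eq:zx12-had-transform} and Eq.~\ref{eq:zx13-had-euler}), and then use the green copy rule (Eq.~\ref{eq:zx9-copy-green}) together with the green $\pi$-copy rule (Eq.~\ref{eq:zx7-copy-green-pi}) to push the resulting $\theta$- and $\pi$-phases through the red spider until the same normal form reappears. Matching the two normal forms then yields the asserted identity up to a scalar.

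The last step, and the part I expect to be the real obstacle, is the phase and scalar bookkeeping. The bialgebra rule is phase-free, so the $\theta$-phases (and the induced $\pi$-phases) must first be migrated off the bialgebra core, then reinstated afterwards, and getting them back onto the correct legs — the two inputs versus the single output — is where a sign slip is most likely. Moreover, since this proposition tracks the scalar $s$ explicitly rather than discarding scalars as done elsewhere, each Euler decomposition contributes star/$\sqrt2$ factors that I would have to collect and resolve into one scalar using the star-inverse rule (Eq.~\ref{eq:dzx4-star-inverse}), which completes the proof.
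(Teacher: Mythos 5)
Your proposal follows essentially the same route as the paper's proof: a two-sided reduction in which one side is rewritten by Euler decomposition (Eq.~\ref{eq:zx13-had-euler}) followed by the bialgebra rule (Eq.~\ref{eq:zx11-bialgebra}), the other side by Euler decomposition followed by the green copy and green $\pi$-copy rules (Eqs.~\ref{eq:zx9-copy-green} and \ref{eq:zx7-copy-green-pi}), with the two resulting forms matched and the scalar resolved via the star-inverse rule (Eq.~\ref{eq:dzx4-star-inverse}). The only cosmetic deviations are your extra invocation of the colour-change rule and explicit green-fuse steps, which the paper subsumes without citation, so your argument is correct and equivalent to the paper's.
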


\begin{proposition}\label{prop:y-equivalence-2}
Let us temporarily denote the diagram in Fig. \ref{fig:y-spider} as $S_{m,n}$. We can obtain $S_{m+1,1}$ by composing $S_{2,1}$ to any of the input legs of $S_{m,1}$: 
\begin{equation}
    \includegraphics[scale=0.2]{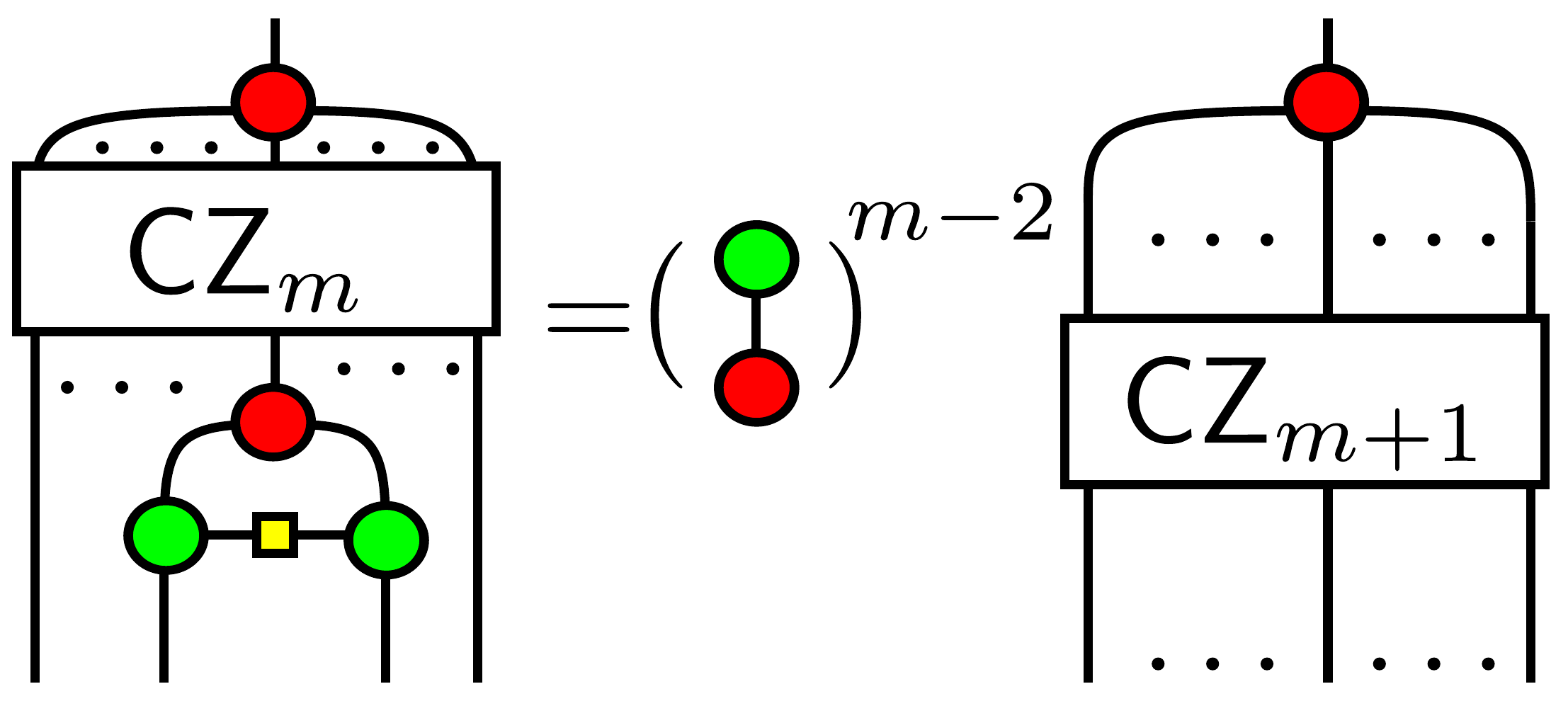}
\end{equation}
\begin{proof}
First, we prove for the case where $m=2$.
\begin{equation*}
    \includegraphics[scale=0.2]{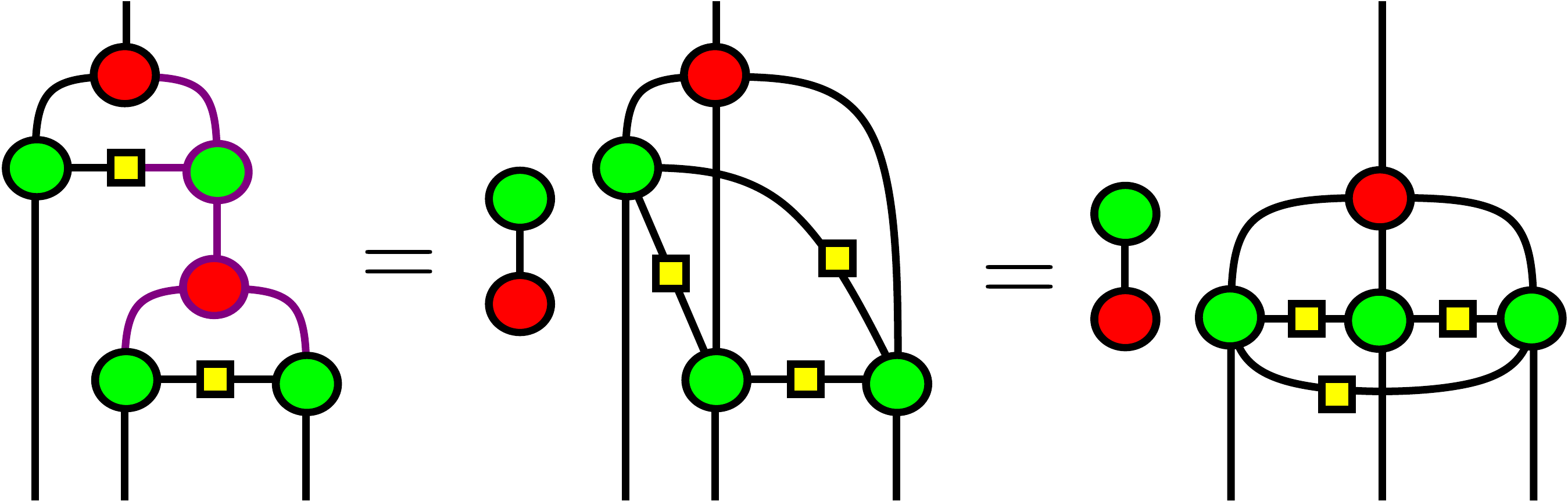}
\end{equation*}
Then, we show for the case of arbitrary $m\in\mathbb{N}$ where $m>2$:
\begin{equation*}
    \includegraphics[scale=0.2]{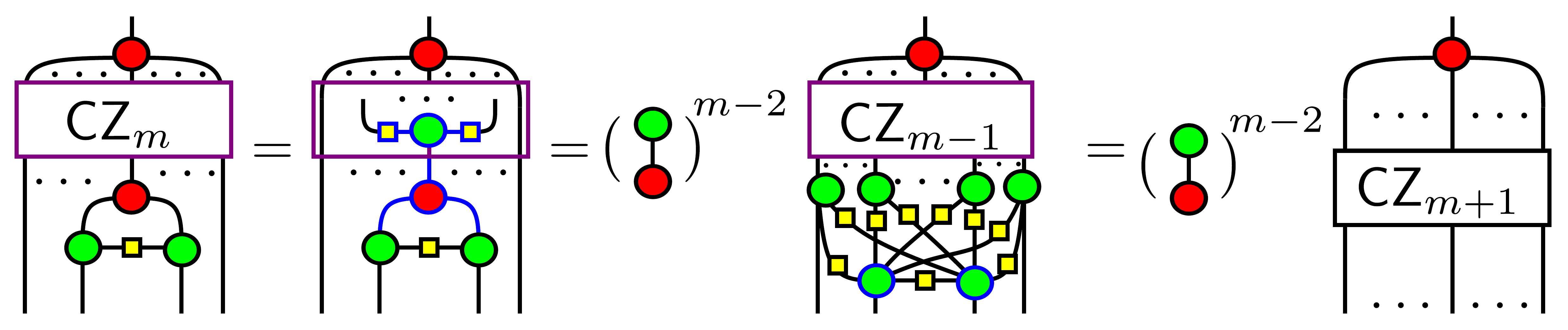}
\end{equation*}
In the second diagram, the purple box is $\mathsf{CZ}_m$ and the diagram inside it is a sub-diagram of $\mathsf{CZ}_m$. To obtain the second equality, Eq. \ref{eq:zx11-bialgebra} is repeatedly applied to the part of the second diagram that is outlined blue. The green node in $\mathsf{CZ}_m$ of the second diagram is then rewritten into $m-1$ red node and these nodes fuse to the other green nodes in $\mathsf{CZ}_m$ of the second diagram, resulting in $\mathsf{CZ}_{m-1}$ in the third diagram. The green nodes which are outlined blue in the third diagram are joined to each green node in $\mathsf{CZ}_{m-1}$ once via $\had$. The two green nodes are also joined to each other by $\had$. Thus, we have the final diagram. 
\end{proof}
\end{proposition}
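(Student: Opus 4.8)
The plan is to prove the identity by unfolding both sides into their explicit ZX form and reducing the left-hand side to the right-hand side using the bialgebra rule together with the fusion rules. Recall that, by construction (Fig.~\ref{fig:y-spider}), $S_{m,1}$ is the $m,1$-spider of $\mathcal{X}$ precomposed with the controlled-$Z$ network $\mathsf{CZ}_m$; concretely it consists of one green ($\mathcal{Z}$) spider on each of the $m$ input wires, a Hadamard edge (controlled-$Z$) between every pair of those wires, and a single red ($\mathcal{X}$) $m,1$-spider collecting all the green outputs into one output. Composing $S_{2,1}$ onto the $k$-th input leg of $S_{m,1}$ therefore produces a diagram in which the single red output spider of $S_{2,1}$ -- the $\mathcal{X}$-multiplication of its two inputs $a,b$, which already carry one controlled-$Z$ between them -- is plugged directly into the green spider $G_k$ of $\mathsf{CZ}_m$. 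The conceptual reason the answer must be $S_{m+1,1}$ is that, once $\mathcal{Y}$ is recognised as a classical structure, its $2,1$-spider is an associative multiplication and the claim is exactly the corresponding instance of spider fusion (Eq.~\ref{eq:spider-fuse}); the task of the proof is to verify this directly for our concrete construction.

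First I would dispose of the base case $m=2$. Here the diagram is small: the red output spider of the first copy of $S_{2,1}$ meets the single controlled-$Z$ of the second copy, and applying the bialgebra rule (Eq.~\ref{eq:zx11-bialgebra}) to this red--green bottleneck spreads the connection into a bipartite mesh. Fusing the resulting red nodes into the top red spider (Eq.~\ref{eq:zx2-fuse-red}) and the resulting green nodes into $g_a,g_b$ (Eq.~\ref{eq:zx1-fuse-green}), then tidying the Hadamard edges with the colour-change rule (Eq.~\ref{eq:zx12-had-transform}), leaves three input green spiders, a controlled-$Z$ between every pair of them, and one red $3,1$-spider -- that is, $S_{3,1}$.

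For general $m>2$ the same mechanism applies, and the crucial point is where the bialgebra rule acts. The controlled-$Z$ edges joining $G_k$ to the other green spiders $G_j$ sit on the wire fed by $S_{2,1}$'s red spider; pushing that red spider through $G_k$ (iterating Eq.~\ref{eq:zx11-bialgebra}, once for each of the $m-1$ edges forming the $k$-th level of $\mathsf{CZ}_m$, cf. Fig.~\ref{fig:level-k-CZ}) makes each such controlled-$Z$ edge \emph{distribute} over both component inputs $a$ and $b$, since a $\mathcal{Z}$-type (Hadamard) edge sitting on an $\mathcal{X}$-combined wire splits into one such edge on each summand. This is precisely what converts the edges incident to the consumed leg $k$ into controlled-$Z$ edges from $a$ to every remaining input and from $b$ to every remaining input -- exactly the edges missing from $\mathsf{CZ}_{m+1}$, given that the controlled-$Z$ between $a$ and $b$ and the controlled-$Z$s among the remaining $m-1$ inputs (a copy of $\mathsf{CZ}_{m-1}$) are already in place. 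Once the distribution is complete, the auxiliary red nodes fuse into the single top red spider, reconstituting the red $(m+1),1$-spider, so the whole diagram is $S_{m+1,1}$.

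The main obstacle is the Hadamard bookkeeping in the bialgebra step: one must verify that, after distributing, the Hadamards land so as to yield genuine controlled-$Z$ edges (rather than plain green connections) between $a,b$ and each remaining input, and that every auxiliary node introduced by Eq.~\ref{eq:zx11-bialgebra} either fuses away (Eqs.~\ref{eq:zx1-fuse-green} and \ref{eq:zx2-fuse-red}) or is absorbed via the colour-change and Hadamard-loop rules (Eqs.~\ref{eq:zx12-had-transform}, \ref{eq:dzx6-loop-green-had}). A secondary point is the phrase ``any of the input legs'': because the red $\mathcal{X}$-spider is symmetric in its inputs and spiders are invariant under permutation of legs, it suffices to treat a single leg and invoke this symmetry for the rest. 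Finally, since we work only up to scalars throughout (the convention adopted after Eq.~\ref{eq:dzx7-loop-red-had}), any scalar factor generated along the way is immaterial.
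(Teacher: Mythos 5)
Your proposal is correct and follows essentially the same route as the paper's proof: a base case at $m=2$ handled by the bialgebra rule plus fusion, and the general case obtained by repeatedly applying Eq.~\ref{eq:zx11-bialgebra} at the junction where the output of $S_{2,1}$ meets the consumed leg of $\mathsf{CZ}_m$, so that the controlled-$Z$ edges incident to that leg distribute onto the two new inputs and the auxiliary nodes fuse away, reconstituting $\mathsf{CZ}_{m+1}$ under a single red $(m+1),1$-spider. Your explicit appeal to leg-permutation symmetry for the phrase ``any of the input legs'' is a point the paper leaves implicit, but it does not change the argument.
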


Before proceeding, we define:

\begin{equation}\label{eq:green-phase}
    \includegraphics[scale=0.2]{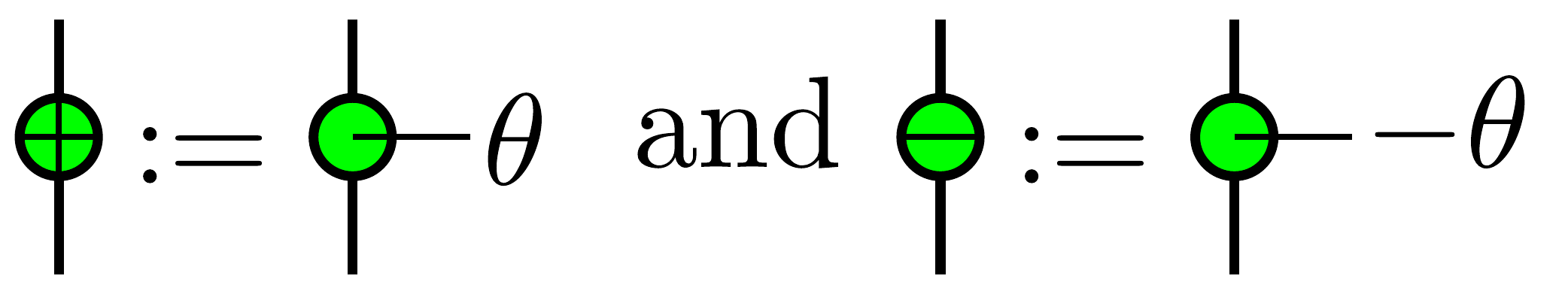}
\end{equation}

\begin{proof}(Theorem \ref{thm:y-equivalence})

\noindent According to Proposition \ref{prop:y-equivalence-1}:
\begin{equation*}
    \includegraphics[scale=0.2]{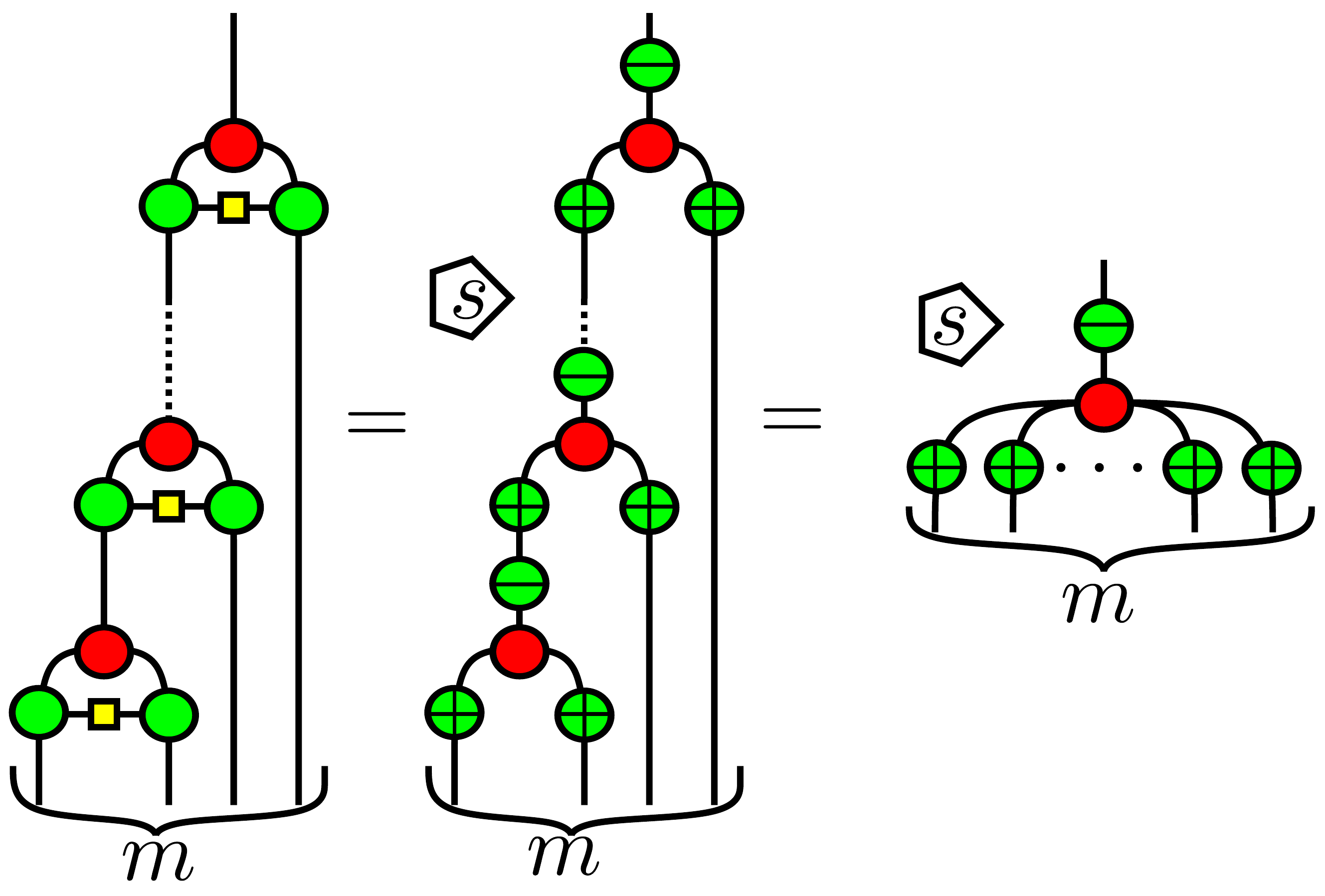}
\end{equation*}
for some scalar $s$.

However, applying Proposition \ref{prop:y-equivalence-2} to LHS of equation above, we have:
\begin{equation}\label{eq:y-equiv-1}
    \includegraphics[scale=0.2]{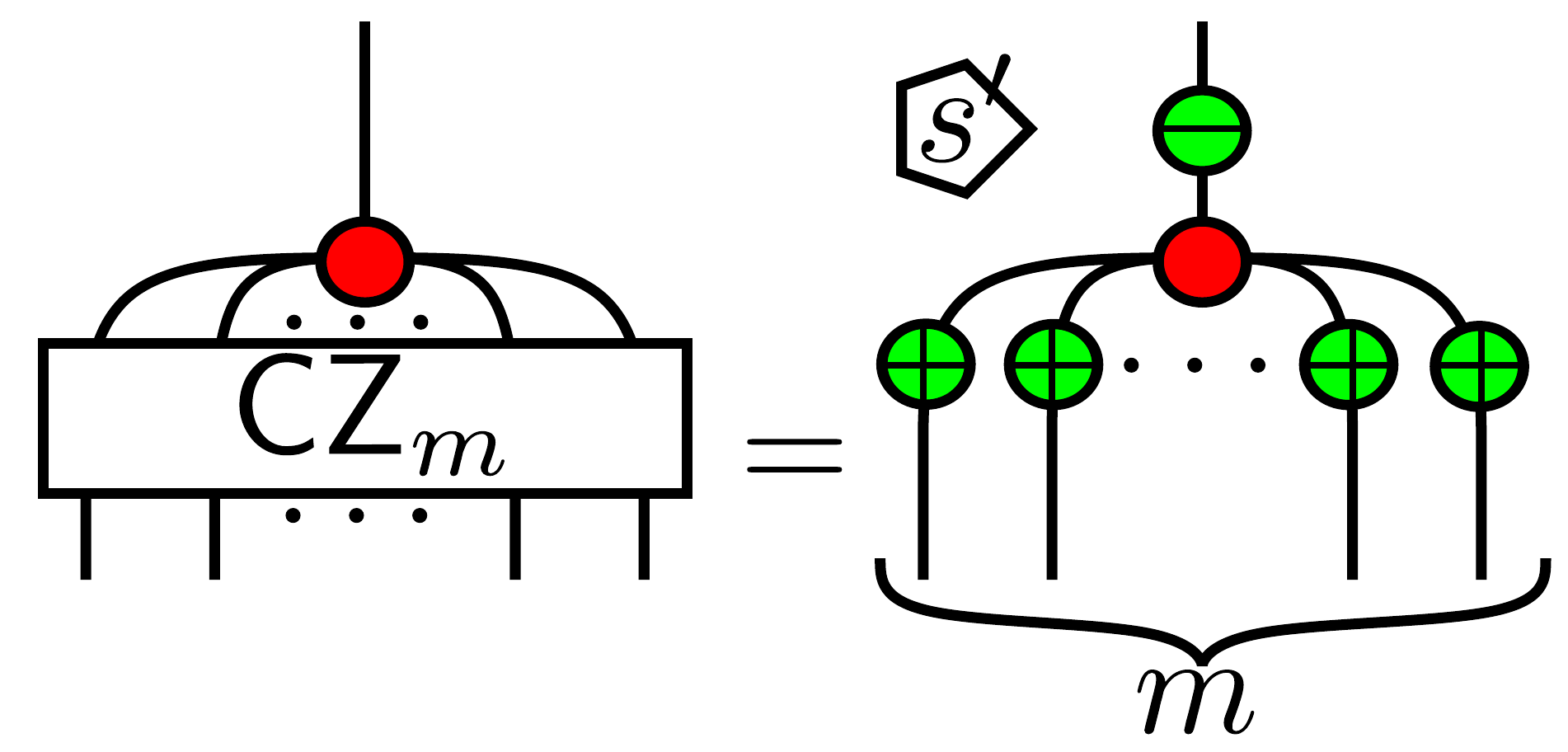}
\end{equation}
for some scalar $s'$,
which means the following equation also holds:
\begin{equation}\label{eq:y-equiv-2}
    \includegraphics[scale=0.2]{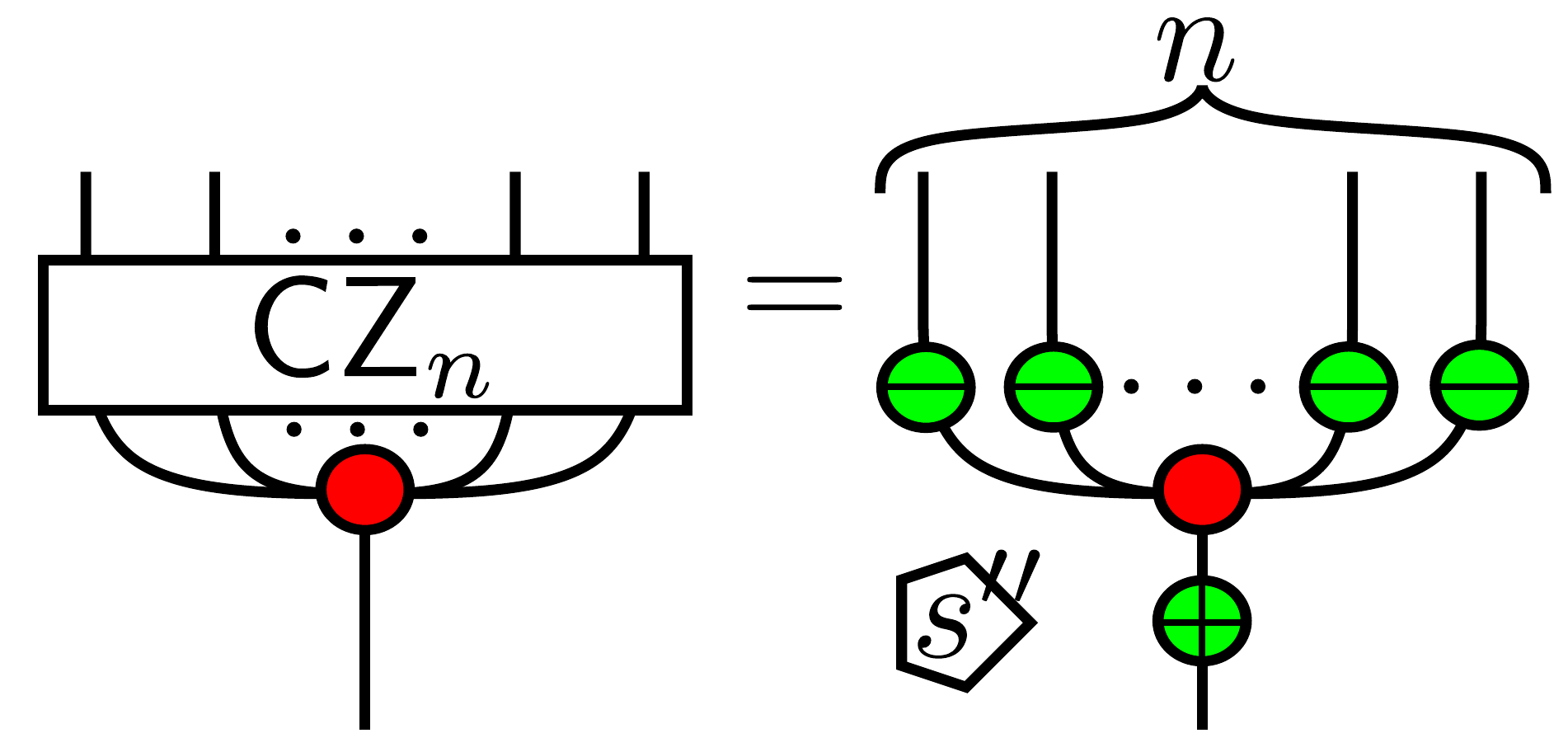}
\end{equation}
for some scalar $s''$.

\noindent Combining Eqs. \ref{eq:y-equiv-1} and \ref{eq:y-equiv-2}, we have:
\begin{equation}\label{eq:y-equiv-final}
    \includegraphics[scale=0.2]{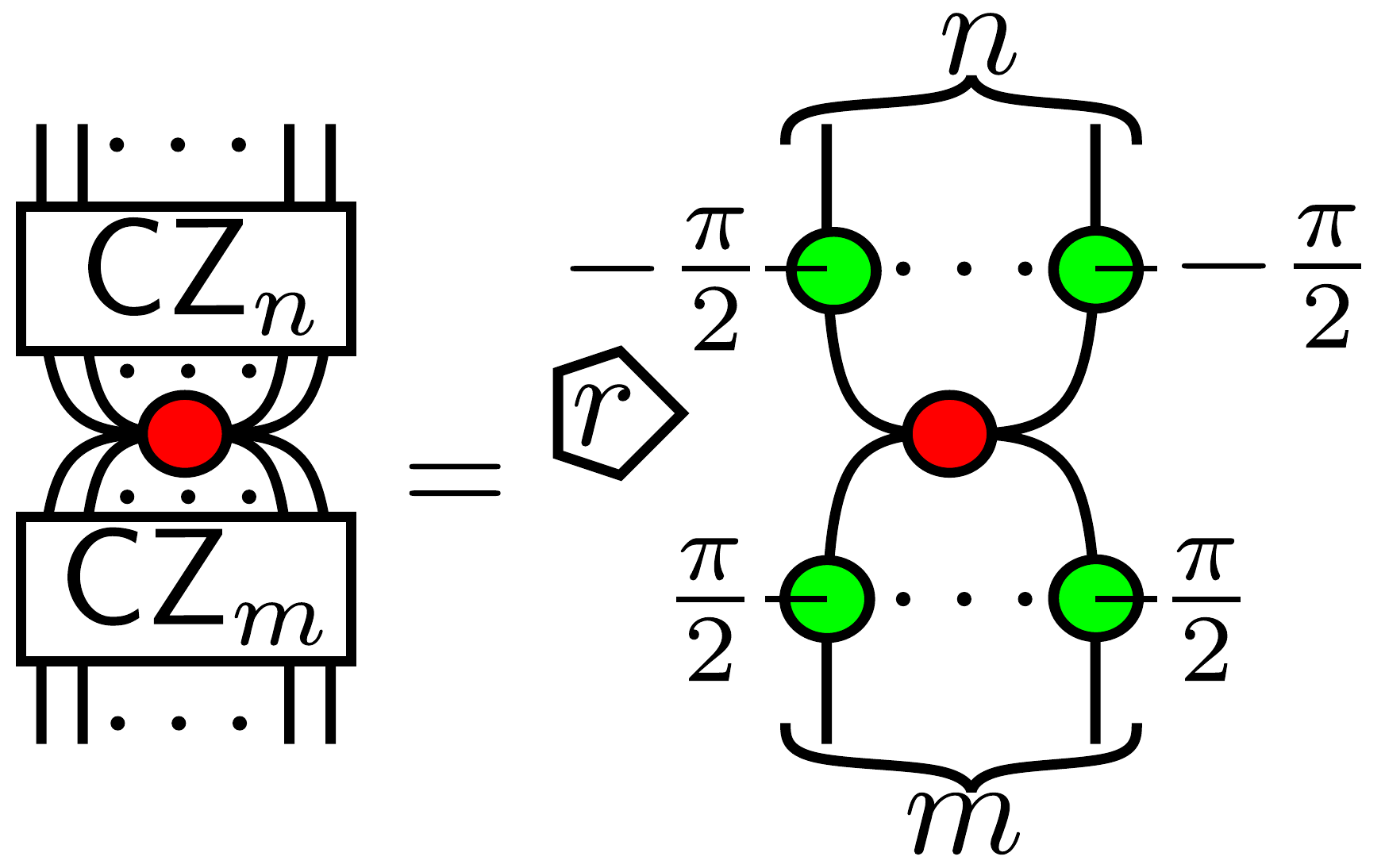}
\end{equation}
for some scalar $r$, which means the two are equivalent up to scalars. So if the collection of the spiders on RHS of the above equation forms a classical structure, then so does the collection of the spiders on LHS, and vice versa.
\end{proof}

\begin{theorem}
The collection of the following spiders for $m,n\in\mathbb{N}\cup\{0\}$ form a classical structure on single qubit:
\begin{equation*}
    \includegraphics[scale=0.2]{images/26-50/42-y-CS.pdf}
\end{equation*}
Furthermore, the classical structure corresponds to the eigenbasis of the Pauli $Y$ operator. Thus, we denote it by $\mathcal{Y}$. 
\begin{proof}
First, we show that the collection of the spider on RHS of Eq. \ref{eq:y-equiv-final} forms a classical structure:
\begin{equation*}
\includegraphics[scale=0.2]{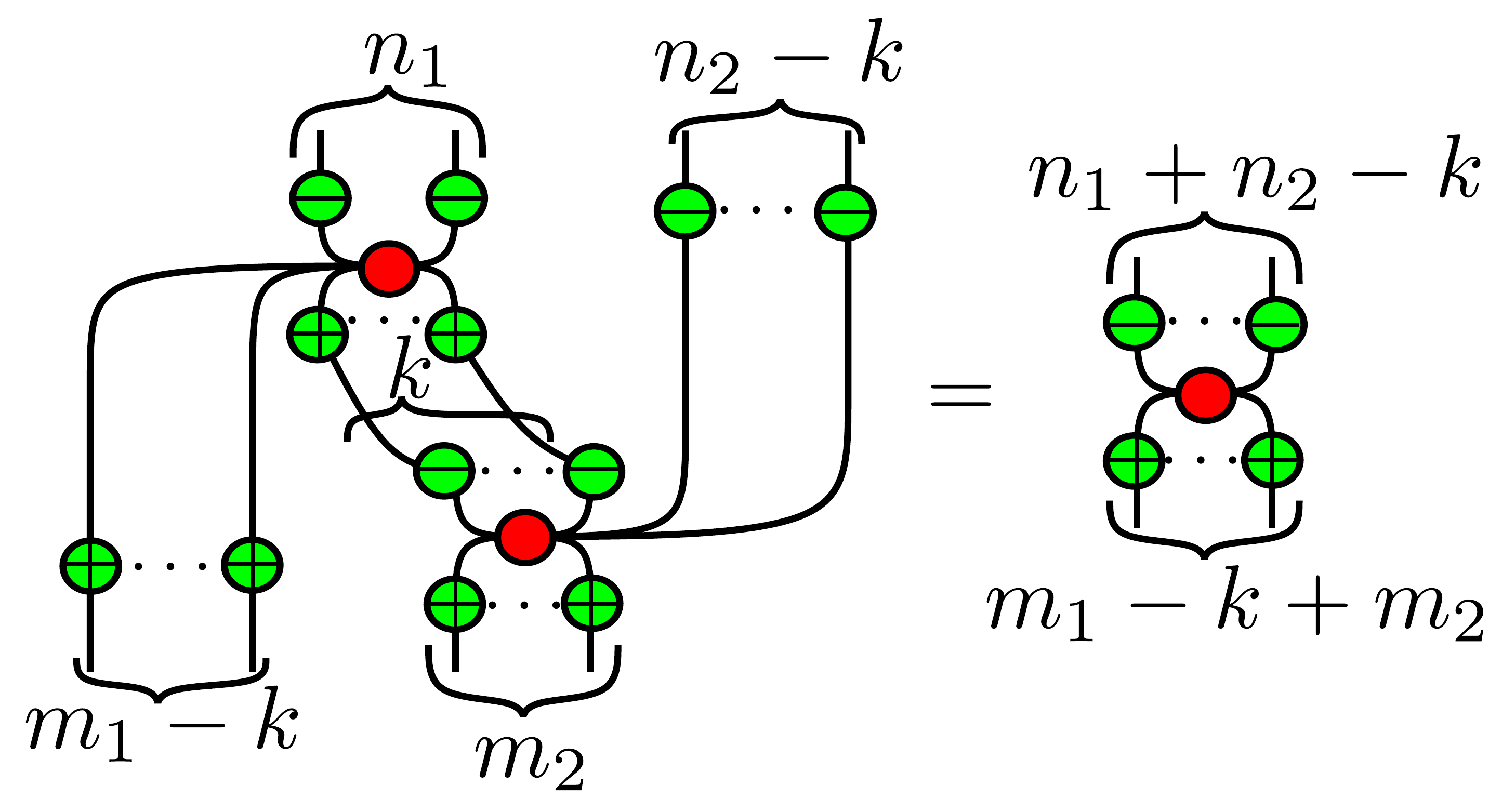}
\end{equation*}
\begin{equation*}
\includegraphics[scale=0.2]{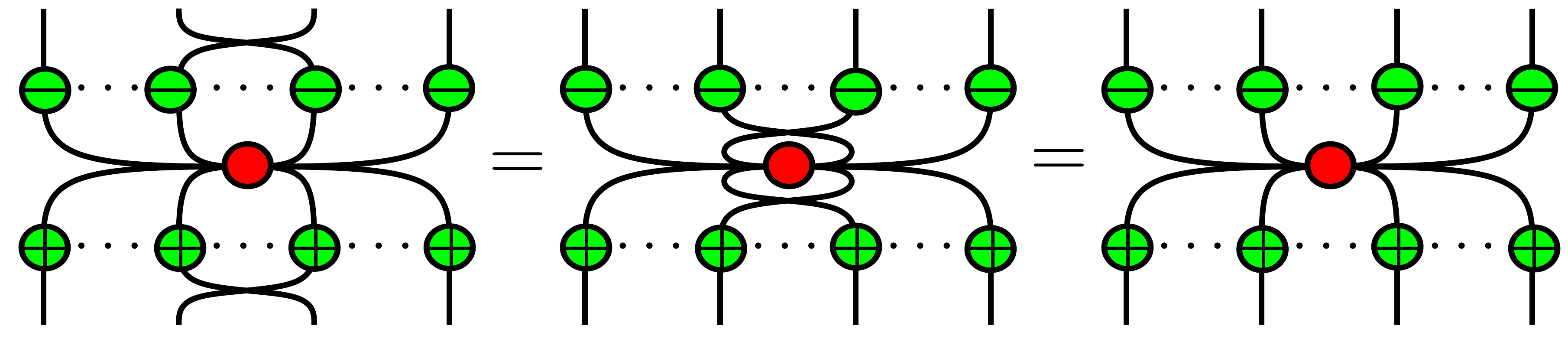}
\end{equation*}
Due to Eq. \ref{eq:y-equiv-final}, the collection of the spiders of the following form is a classical structure:
\begin{equation*}
    \includegraphics[scale=0.2]{images/26-50/42-y-CS.pdf}
\end{equation*}
The aforementioned classical structure corresponds to the Pauli $Y$ operator (in the same context as in \cite{Coecke2013New}). So we shall denote it by $\mathcal{Y}$.
\end{proof}
\end{theorem}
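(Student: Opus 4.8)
The plan is to build on the equivalence already secured in Theorem~\ref{thm:y-equivalence}. Eq.~\ref{eq:y-equiv-final} tells us that the spider $S_{m,n}$ of Fig.~\ref{fig:y-spider} coincides, up to a scalar, with the diagram on its right-hand side, which is a $\mathcal{Z}$-spider whose legs carry Hadamards together with the $\theta=\pi/2$ phase decoration fixed in Eq.~\ref{eq:green-phase}. Because we work up to scalars throughout, it is enough to check that this decorated family obeys the two axioms of Definition~\ref{def:CS} — spider fusion (Eq.~\ref{eq:spider-fuse}) and invariance under permuting legs — and then transport the conclusion back to $S_{m,n}$ along Eq.~\ref{eq:y-equiv-final}. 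This is precisely why Theorem~\ref{thm:y-equivalence} was proved first: the decorated green form is far more convenient to manipulate than the $\mathsf{CZ}_m$-based form.

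First I would verify fusion. When $k$ outputs of one decorated spider are joined to $k$ inputs of another, each of the $k$ internal wires carries the decoration of an output leg abutting that of an input leg. I would show these internal decorations collapse: the two Hadamards cancel by the Hadamard self-inverse rule (Eq.~\ref{eq:dzx5-had-unitary}), after which the two bare $\mathcal{Z}$-spider bodies merge by the green fuse rule (Eq.~\ref{eq:zx1-fuse-green}). Since green fusion adds phases, the $\pi/2$ decorations recorded by Eq.~\ref{eq:green-phase} combine into exactly the phase demanded by a single spider of the merged arity, while the external legs retain their decoration untouched; the outcome is again a decorated spider with $m_1+m_2-k$ inputs and $n_1+n_2-k$ outputs, which is Eq.~\ref{eq:spider-fuse}. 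Permutation invariance then follows immediately: the underlying $\mathcal{Z}$-spider is symmetric in its legs, and every leg of the decorated spider carries the identical decoration, so any permutation of inputs or outputs leaves the diagram unchanged.

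I expect the fusion step to be the main obstacle. The delicate point is orientation: the decoration on an output leg must be the adjoint of the decoration on an input leg, so that when an output meets an input the two genuinely invert one another on the internal wire rather than compounding. Keeping the $\pm\pi/2$ phase bookkeeping consistent — so that the merged spider acquires the correct total phase rather than an extra $\pi$ — is exactly the place where the particular form chosen in Eq.~\ref{eq:y-equivalence} earns its keep; a careless choice of decoration would simply fail to fuse.

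It remains to identify the resulting classical structure with the Pauli $Y$ eigenbasis. For this I would appeal to the bijective correspondence of \citet{Coecke2013New} between classical structures and orthogonal bases, under which the basis attached to a classical structure is exactly its set of copyable states. The spiders of Fig.~\ref{fig:y-spider} were, by construction, defined to copy $\{\ket{0_Y},\ket{1_Y}\}$ — this was the defining linear map of the $m,n$-spider of $\mathcal{Y}$ stated at the start of the section — so the basis attached to the classical structure we have just built is the eigenbasis of the Pauli $Y$ operator, which justifies the notation $\mathcal{Y}$.
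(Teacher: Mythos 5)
Your proposal is correct and follows essentially the same route as the paper: verify the classical-structure axioms (fusion and leg-permutation invariance) on the decorated $\mathcal{Z}$-spider form given by the right-hand side of Eq.~\ref{eq:y-equiv-final}, transport the conclusion back to the spider of Fig.~\ref{fig:y-spider} via the equivalence of Theorem~\ref{thm:y-equivalence}, and identify the underlying basis with the Pauli $Y$ eigenbasis through the correspondence of \citet{Coecke2013New}. You also correctly flag the one delicate point — that input-leg and output-leg decorations must be mutually adjoint so they cancel on internal wires — which is exactly what the paper's fusion diagrams establish.
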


\section{Chapter Summary}

In this chapter:
\begin{itemize}
    \item We introduced strong complementarity which is the basis for a graphical calculus called ZX-calculus;
    \item We provided a brief review of ZX-calculus;
    \item ZX-calculus includes classical structures corresponding to the eigenbases of Pauli $Z$ and Pauli $X$ operators, which we denote by $\mathcal{Z}$ and $\mathcal{X}$, respectively;
    \item Using the generators of ZX-calculus, we constructed the classical structure corresponding to the eigenbasis of the Pauli $Y$ operator, which we denote by $\mathcal{Y}$.
\end{itemize}

\chapter{Composing Classical Structures}\label{sec:composeCS}


Due to its equivalence to orthogonal bases, we may think of a classical structure as a representative of an observable and particular spiders could have special roles. For example, the $2,1$-spider could be interpreted as measurement on a process theory consisting of completely positive maps (CPM) as processes and the state of the system may be mixed --- which is more precise in its description of quantum systems --- and the $1,2$-spider could be interpreted as encoding classical data into a system \cite{Selinger2007a,Coecke2008a,Coecke2007,Coecke2014CategoriesChannels,Coecke2016a}. While we shall not delve deeper into CPM-based process theory, we find its model of quantum-classical interactions to be remarkable. As such, we find it important to study one of its core concepts --- classical structures. In this chapter, we seek to exploit the compositionality of CQM/process theories in order to provide a more precise picture of classical structures on multiple qubits with classical structures on a single qubit as their building blocks. 

\section{Composing Classical Structures}\label{sec:composeCSmultQ}

To construct complementary classical structures on multiple qubits, we first outline the methods for composing complementary classical structures on a single qubit. In addition, we shall justify our chosen unitaries for joining spiders, i.e. the Hadamard and the identity, in the context of obtaining complementary classical structures on multiple qubits. 

\subsection{Separably Composing Spiders}\label{sec:SCCS}

Suppose $\mathcal{A}$ and $\mathcal{B}$ are classical structures on some systems $T_A$ and $T_B$, respectively, where spiders of $\mathcal{A}$ is $\black$ and spiders of $\mathcal{B}$ is $\blue$, respectively. Then the collection of processes of the following type, formed by composing $\black$ and $\blue$, with the same number of input and output legs, in parallel is a classical structure on $T_A\otimes T_B$:

\begin{figure}[!ht]
    \centering
    \includegraphics[scale=0.2]{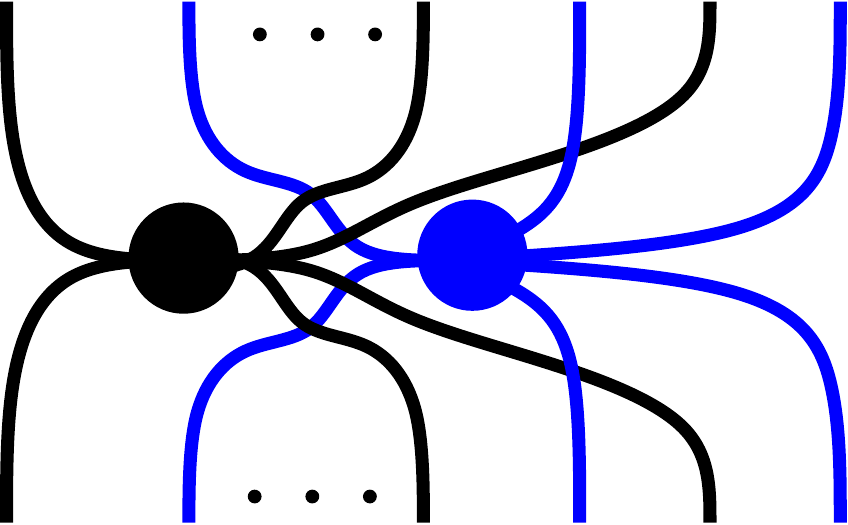}
    \caption{Separably composed spider on a bipartite system. The legs of the spider is $T_A\otimes T_B$.}
    \label{fig:SCspider}
\end{figure}

In Fig. \ref{fig:SCspider}, we used different coloured wires for the legs of the black and blue spiders. This is to distinguish the legs of the two spiders. Each leg still represents a single qubit, and not any other system. In the rest of this thesis, we would sometimes use the same method to distinguish the legs of constituent spiders within a composite spider.  

Since the spiders in Fig. \ref{fig:SCspider} are composed in parallel, the resulting process inherits its constituents' ability to fuse and invariance under permutation of the spiders' legs. 

\begin{definition}\label{def:SC}
A classical structure on a bipartite system $T_A\otimes T_B$ is called \textbf{separably composed} (SC) if its $m,n$-spider is formed by composing the $m,n$-spider of a classical structure on $T_A$ and the $m,n$-spider of a classical structure on $T_B$ in parallel, and the legs are permuted as in Fig. \ref{fig:SCspider}. Suppose the aforementioned classical structures on $T_A$ and $T_B$ are $\mathcal{A}$ and $\mathcal{B}$, respectively. We denote the SC classical structure on $T_A\otimes T_B$ by $\mathcal{AB}$.
\end{definition}

\subsection{Joining Spiders}\label{sec-join-spiders}

The method of composing classical structures in the previous section only gives rise to classical structures with an underlying basis consisting of separable states. To find a more exhaustive collection of classical structures on two qubits, we must devise another way of constructing classical structures via composition, i.e. one that gives rise to a classical structure on two qubits with an underlying basis consisting of entangled states.

To entangle two qubits (or a bipartite system), we need a bipartite process that is non-separable, i.e. a bipartite process such that:
\begin{equation*}
     \includegraphics[scale=0.2]{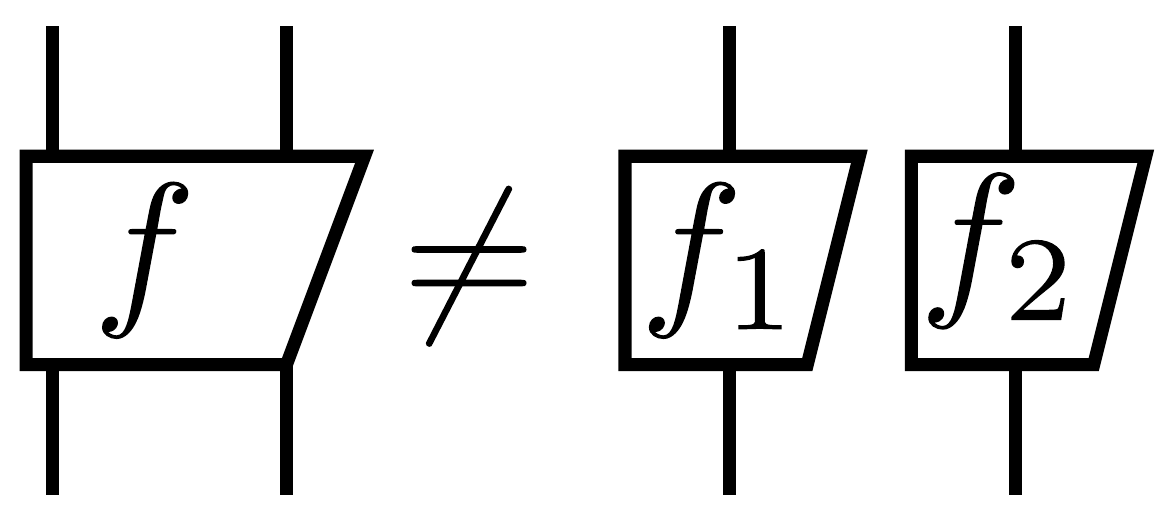}
\end{equation*}
for any processes $f_1$ and $f_2$.

In a separably composed spider on two qubits, a single leg is two qubits or $\mathbb{C}^2\otimes\mathbb{C}^2$. So, to entangle the constituent spiders of a SC spider, we propose composing a bipartite process on each of its legs as in Fig. \ref{fig:join-spiders}.

\begin{figure}[!ht]
    \centering
    \includegraphics[scale=0.2]{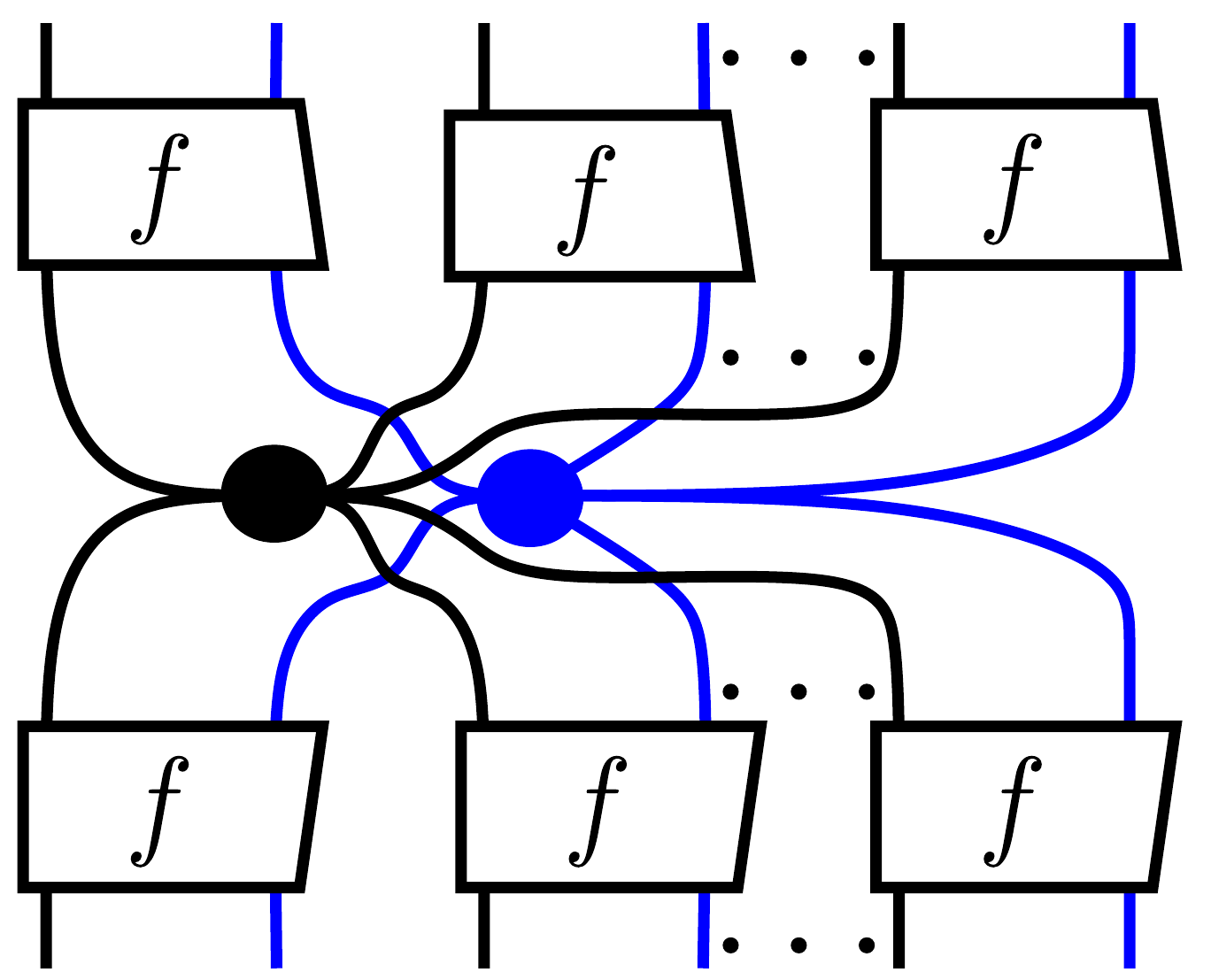}
    \caption{Joining spiders via a non-separable bipartite process.}
    \label{fig:join-spiders}
\end{figure}

The input or output legs of the diagram in Fig. \ref{fig:join-spiders} are invariant under permutation:
\begin{equation*}
    \includegraphics[scale=0.2]{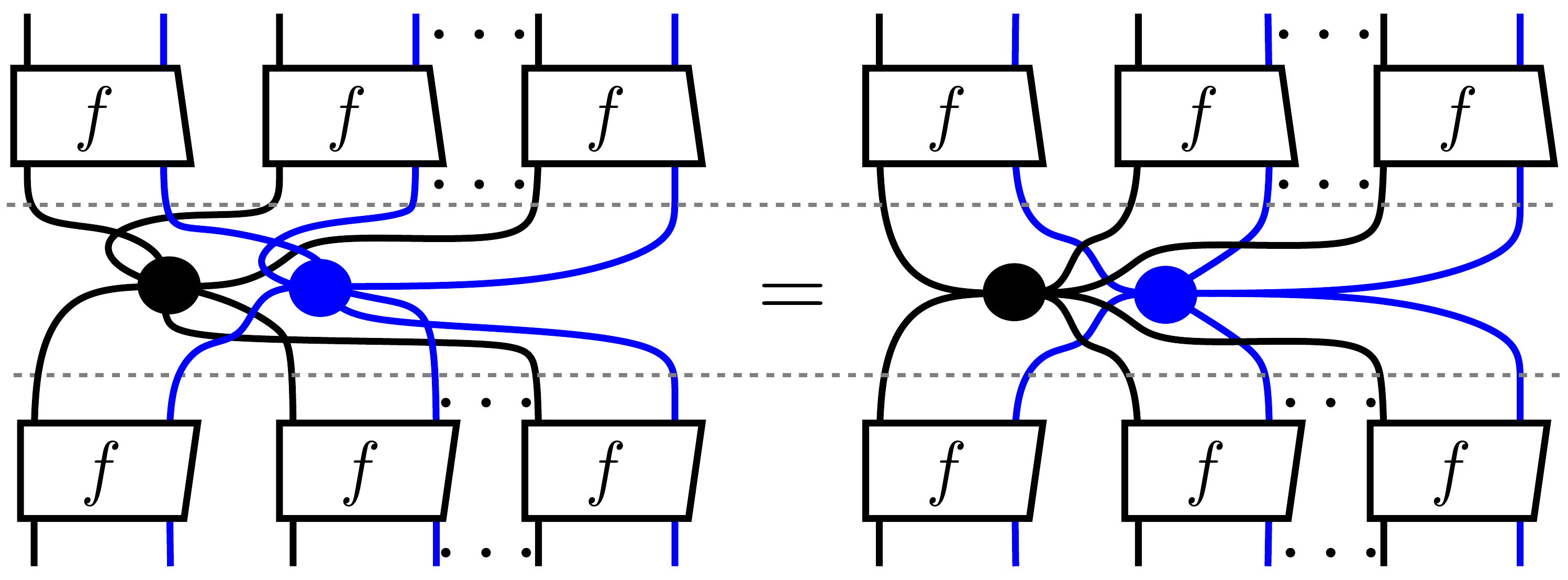}
\end{equation*}
since the legs of spiders within the dashed lines can be permuted.

However, the process $f$ cannot be just be any non-separable bipartite process. 
\begin{equation*}
\includegraphics[scale=0.2]{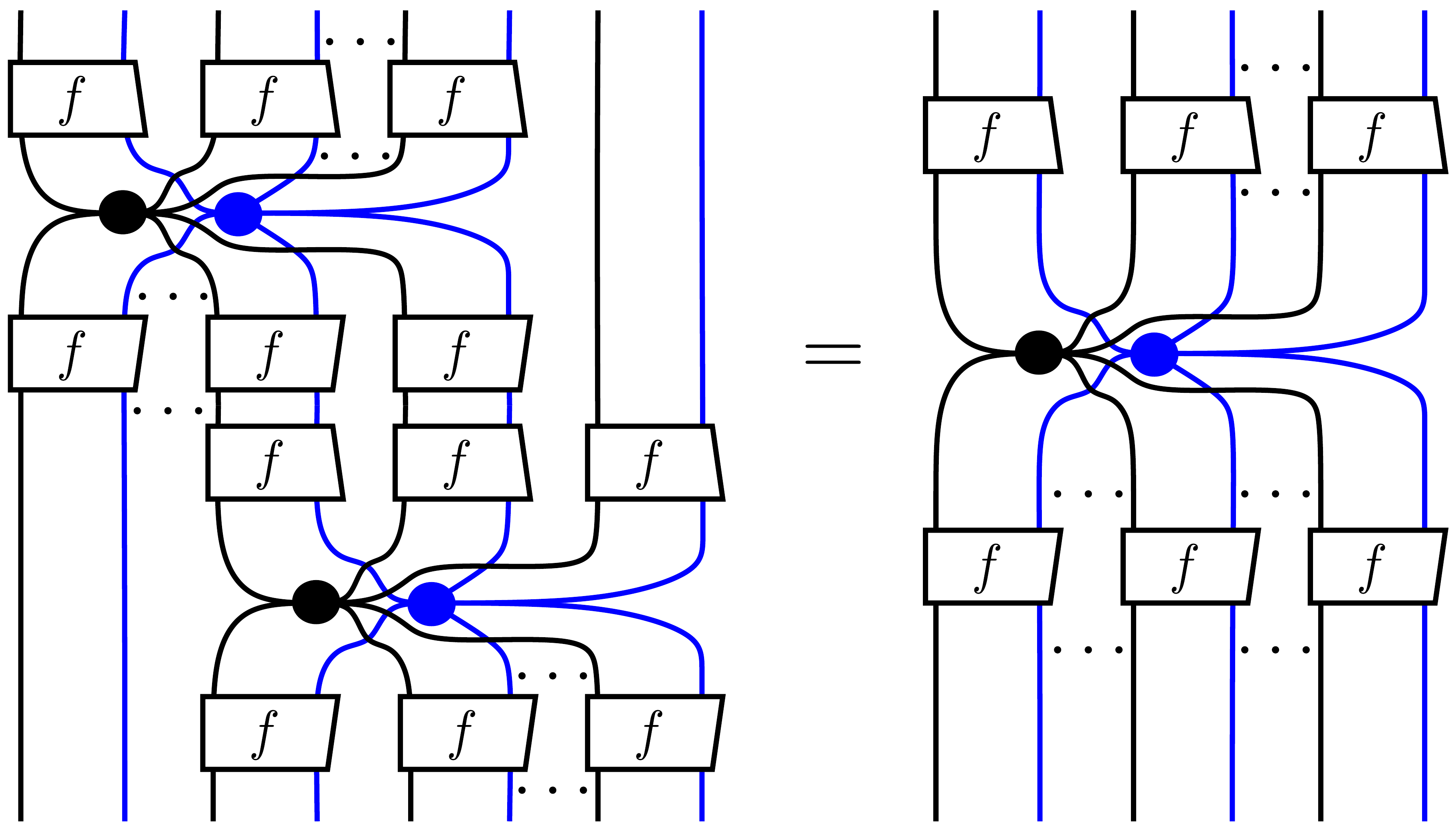}
\end{equation*}
In order for the diagram in Fig. \ref{fig:join-spiders} to satisfy the fusion rule for spiders as in the previous diagram, the process $f$ must be unitary.

\citet{Vidal2004UniversalGates} showed that any special unitary process
on two qubits can be decomposed into up to three $\mathsf{CNOT}$ gates and single qubit unitaries. That is:
\begin{equation*}
    \includegraphics[scale=0.2]{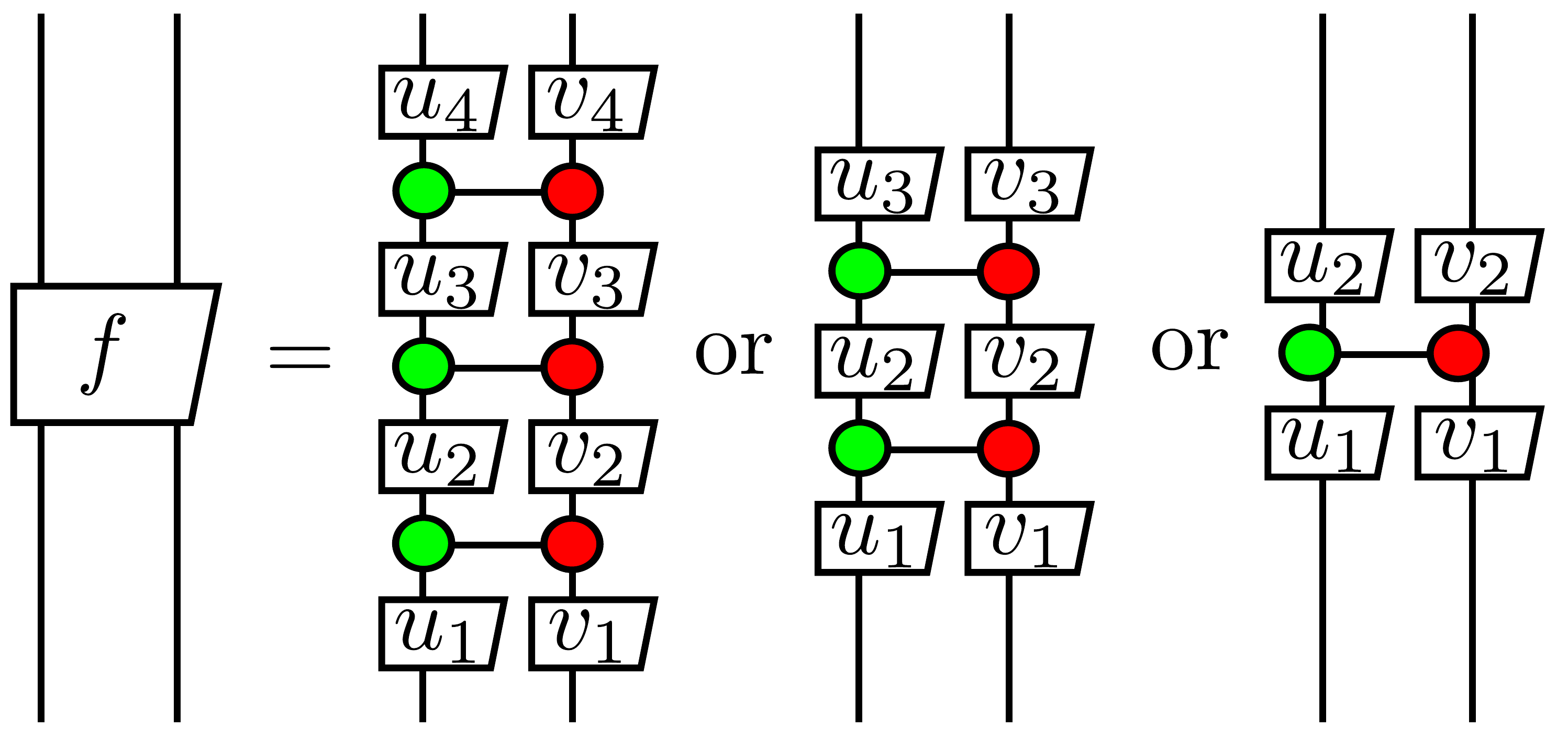}
\end{equation*}

We shall not explore all possible values of $u_j$ and $v_k$, but the decomposition above does give us an idea about the degree of complexity we can impose on the entanglement to be generated. Suppose we choose $\cwgr$ as our entanglement generator since it is the simplest non-separable bipartite process that is unitary based on the previous equation. Then we obtain the following diagram for Fig. \ref{fig:join-spiders}:
\begin{figure}[!ht]
    \centering
    \includegraphics[scale=0.2]{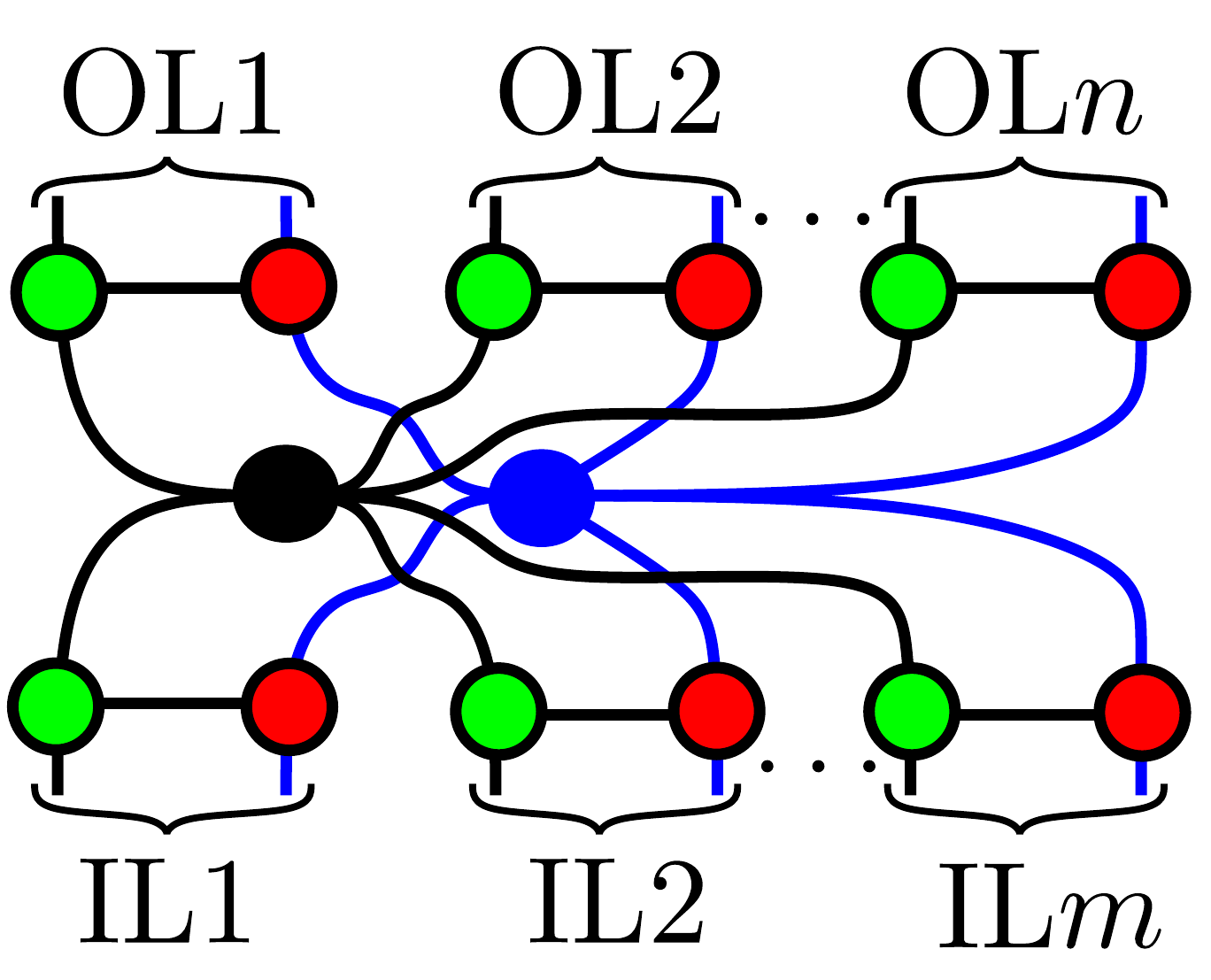}
    \caption{Fig \ref{fig:join-spiders} where $f$ is the $\mathsf{CNOT}$ gate.}
    \label{fig:join-spider-CNOT}
\end{figure}

From the rewrite rules of $ZX$-calculus, we can derive the following\footnote{Recall that scalars are ignored.}:
\begin{equation}\label{eq:gen-bialg}
    \includegraphics[scale=0.2]{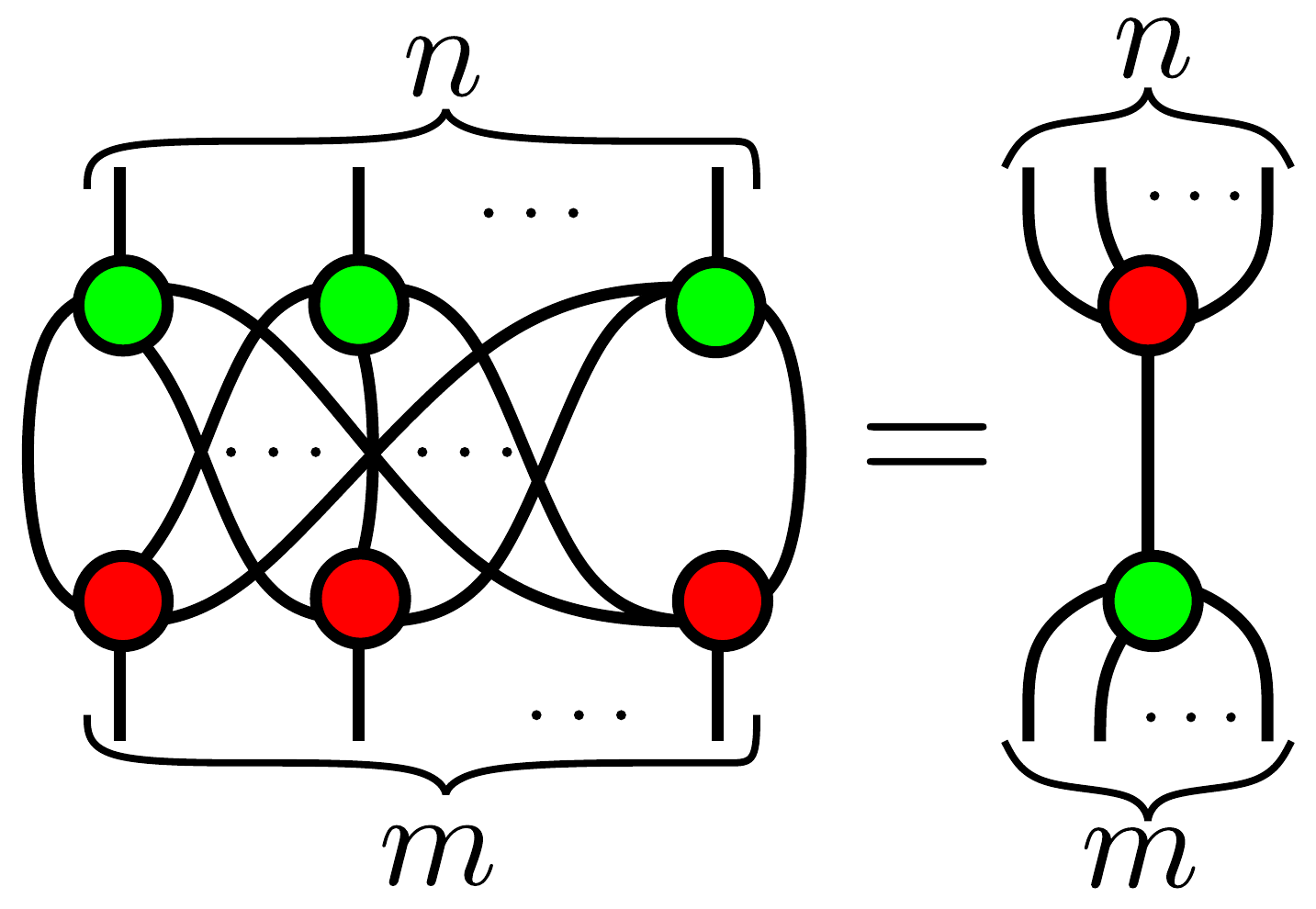}
\end{equation}
That is, for every green-red node pair in the left diagram, there is a single wire between them. We shall use Eq. \ref{eq:gen-bialg} to discuss the various possibilities for the simplest entanglement generator mentioned above.

\begin{enumerate}[label=\textbf{Case \arabic*}:]
\item If $\black=\zspider=\blue$, then we have:
\begin{equation*}
    \includegraphics[scale=0.2]{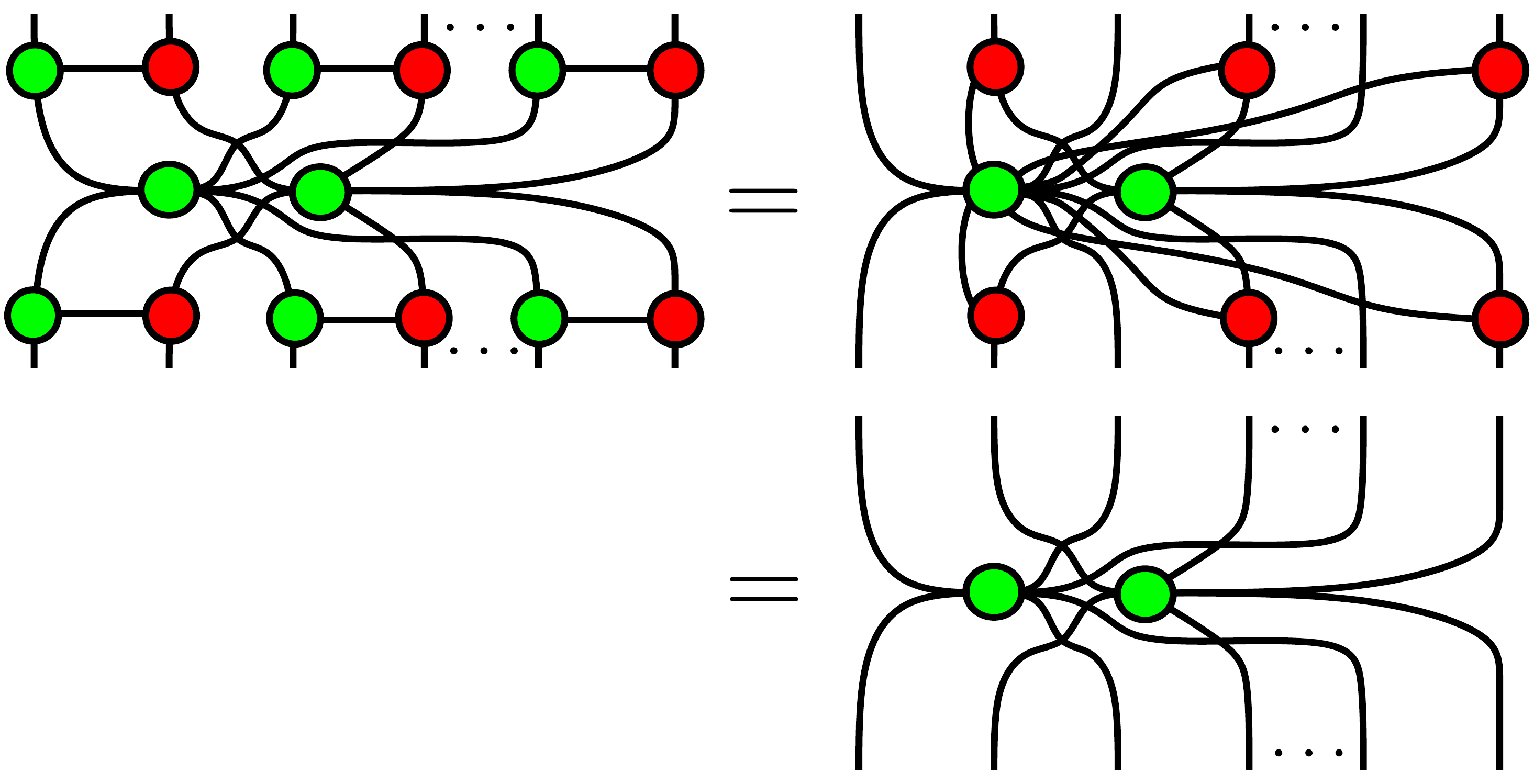}
\end{equation*}
In the middle diagram, there is exactly one wire between every green-red node-pair. So we can use Eq. \ref{eq:gen-bialg} to obtain the second equality.
\item If $\black=\xspider=\blue$, then we have:
\begin{equation*}
    \includegraphics[scale=0.2]{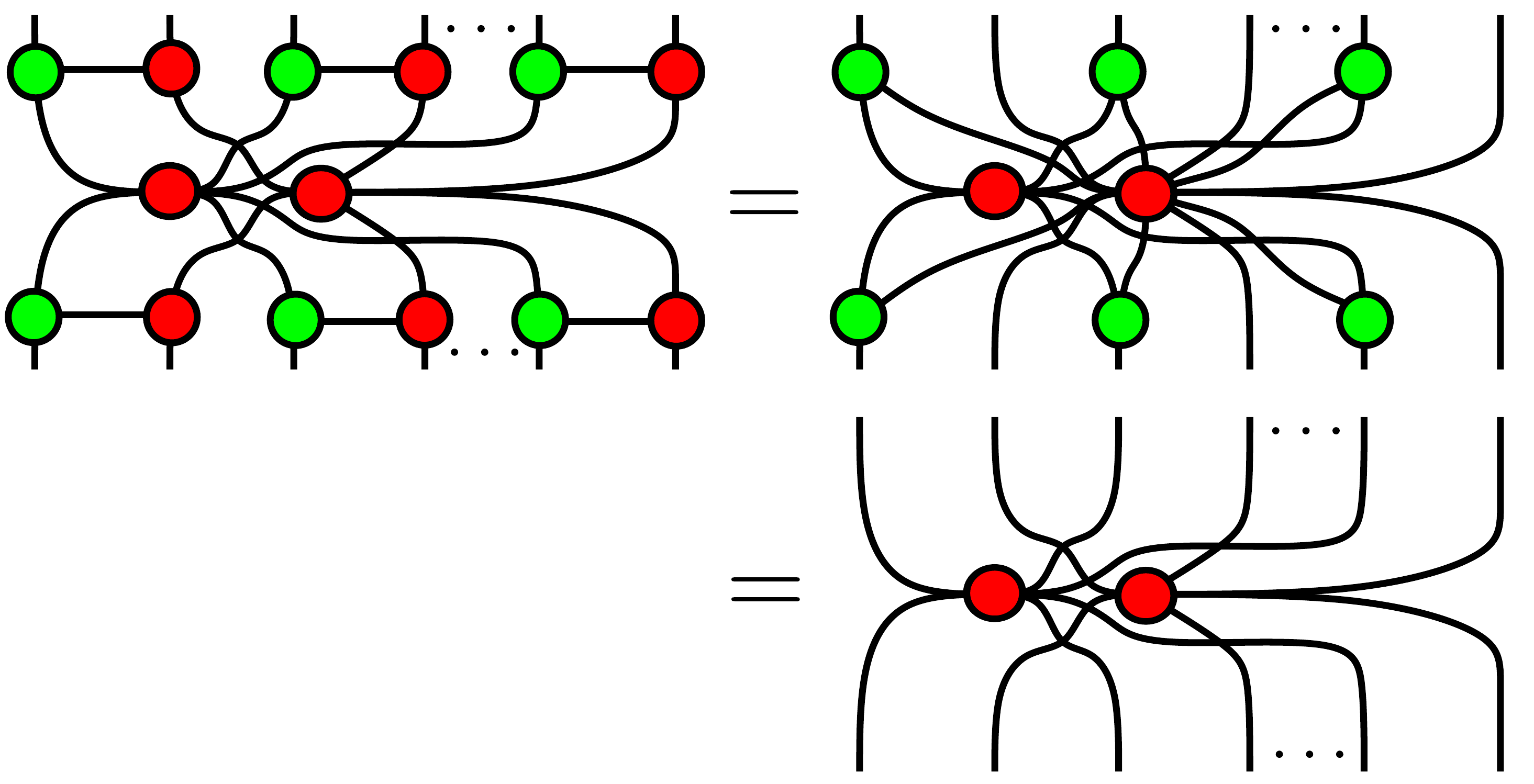}
\end{equation*}
where the third equality is obtained via the same reasoning as Case 1.
\item If $\black=\zspider$ and $\blue=\xspider$:
\begin{equation*}
    \includegraphics[scale=0.2]{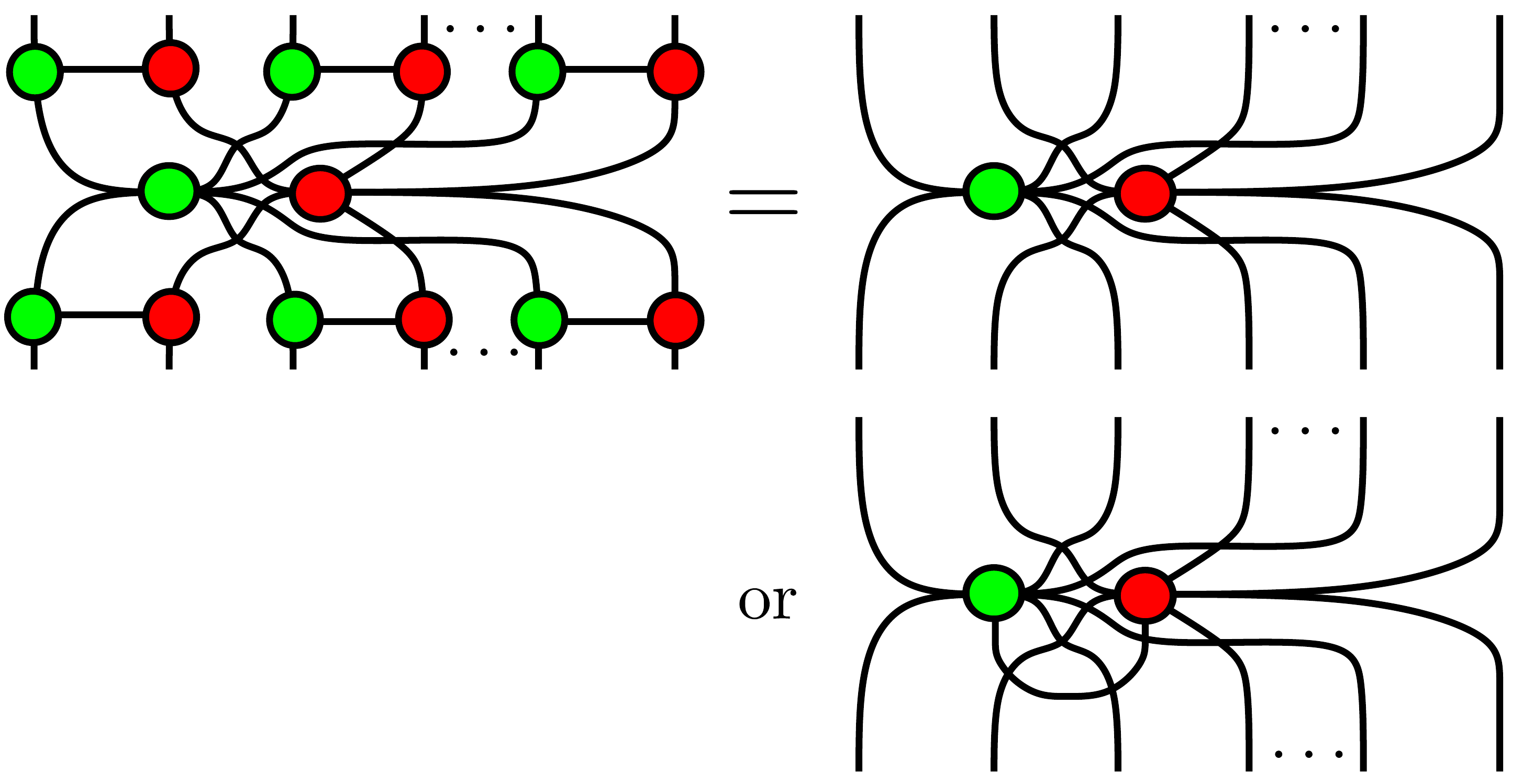}
\end{equation*}
where we obtain a separably composed spider (first diagram on RHS) if $m+n$ is even as in Fig. \ref{fig:join-spider-CNOT}, and we obtain the second diagram on RHS  if $m+n$ is odd. 
\item If $\black=\xspider$ and $\blue=\zspider$, Fig. \ref{fig:join-spider-CNOT} cannot be simplified as in Cases 1 to 3 so we maintain its original appearance:
\begin{equation*}
    \includegraphics[scale=0.2]{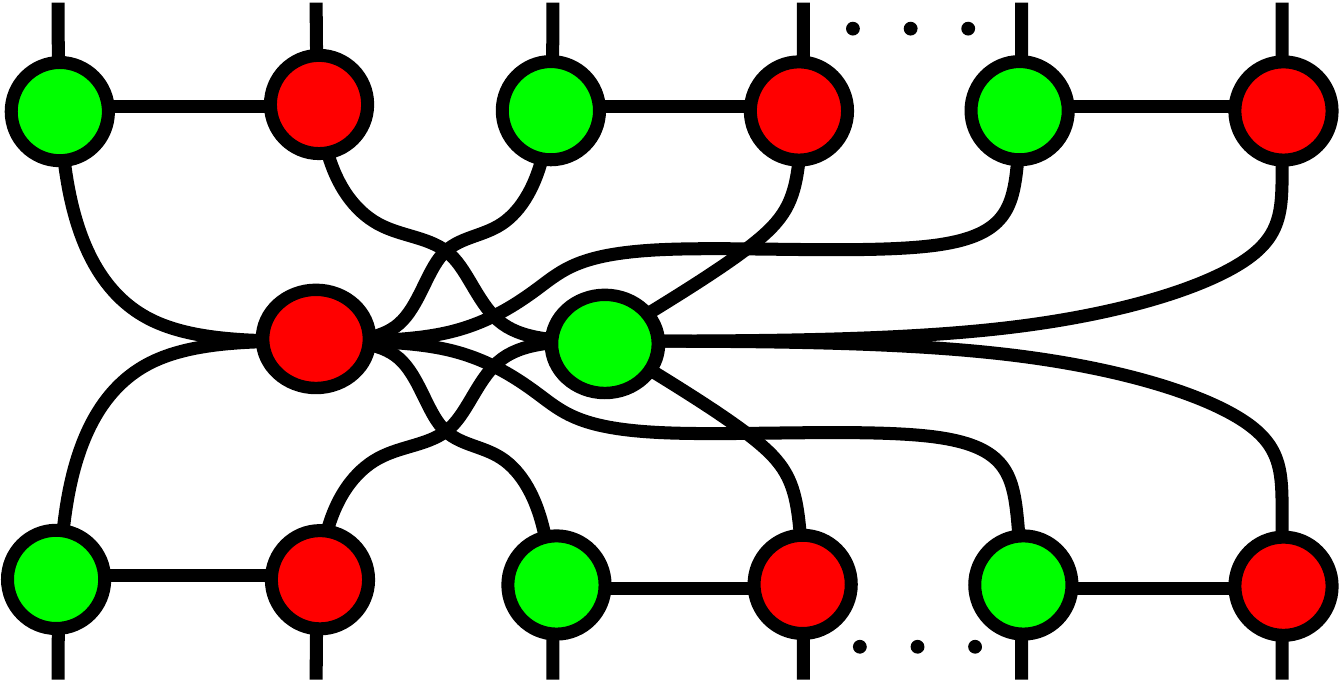}
\end{equation*}
\end{enumerate}

Cases 1 and 2 can immediately be excluded since the resulting spiders from each case are equal to separably composed spiders. For Case 3, half of the spiders produced are separably composed while the other half can be described as separably composed spiders with a wire connecting the two central nodes. If we look at the underlying basis for such a classical structure, it does indeed consist of separable states.

\begin{proposition}\label{prop:case3-SC}
The underlying basis of the classical structure in Case 3 above is equal to, up to scalars, the underlying basis of the classical structure formed by separably composing $\mathcal{X}$ and $\mathcal{Z}$ which results in the classical structure $\mathcal{XZ}$ on two qubits.
\begin{proof}
The underlying basis of $\mathcal{XZ}$ is:
\begin{equation*}
    \includegraphics[scale=0.2]{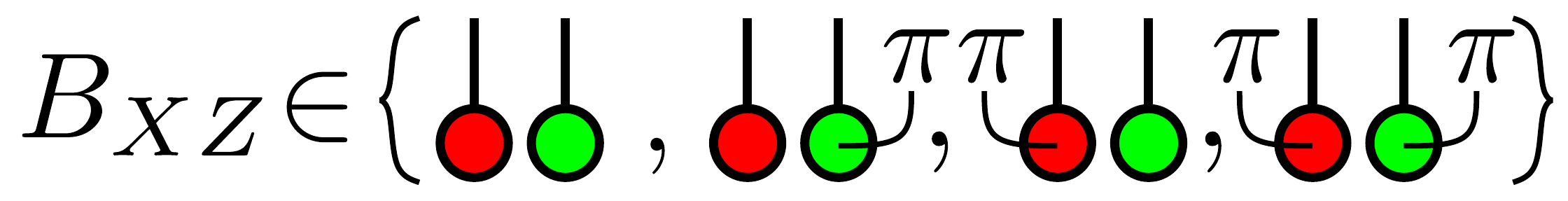}
\end{equation*}
which is equal to $\{\ket{00_X},\ket{01_X},\ket{10_X},\ket{11_X}\}$ up to scalars.

Suppose $\wunit\bunit\in B_{XZ}$. Then:
\begin{eqnarray*}
\includegraphics[scale=0.2]{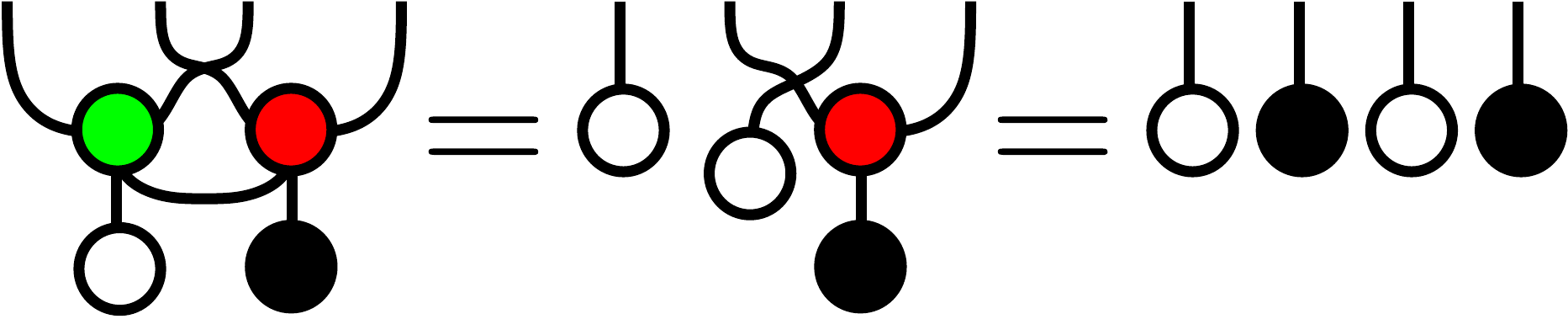}\\
\includegraphics[scale=0.2]{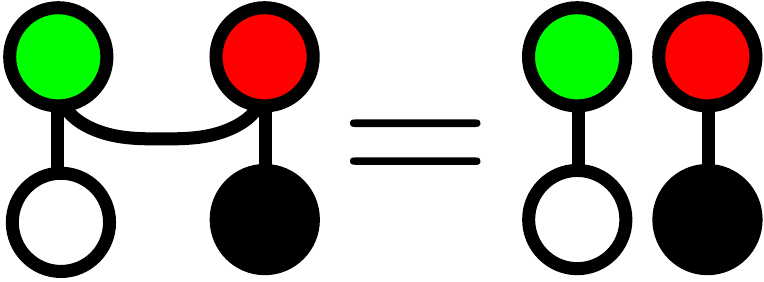}
\end{eqnarray*}
This means that the 2,1-spider and 0,1-spider of the classical structure in Case 3, respectively, copies and deletes $B_{XZ}$ or an orthogonal basis equivalent to $B_{XZ}$ up to scalars. Thus, the underlying basis of the classical structure in Case 3 must be equivalent to $B_{XZ}$ up to scalars.
\end{proof}
\end{proposition}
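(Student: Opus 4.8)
The plan is to use the bijective correspondence between orthogonal bases and classical structures from \cite{Coecke2013New}: a classical structure is completely determined by its underlying basis, namely the set of states it copies and deletes, and on the four-dimensional space $\mathbb{C}^2\otimes\mathbb{C}^2$ such a structure has exactly four copyable states, which form an orthogonal basis. So to pin down the underlying basis of the Case 3 structure it is enough to exhibit four linearly independent states that it copies and deletes; if these agree, up to scalars, with the members of $B_{XZ}$, then by uniqueness $B_{XZ}$ must be \emph{the} underlying basis of the Case 3 structure, with no room for a fifth copyable state.

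First I would make $B_{XZ}$ explicit. Separably composing $\mathcal{X}$ and $\mathcal{Z}$ as in Definition \ref{def:SC} yields the product basis whose elements are a single-qubit $\mathcal{X}$-eigenstate on the first wire tensored with a single-qubit $\mathcal{Z}$-eigenstate on the second, i.e. the four vectors $\{\ket{00_X},\ket{01_X},\ket{10_X},\ket{11_X}\}$ up to scalars. Recalling from Definition \ref{def:CS} that a state $b$ belongs to the underlying basis exactly when the comultiplication ($1,2$-spider) copies it and the counit deletes it, the heart of the argument is a diagrammatic check, for a \emph{generic} $b\in B_{XZ}$, of two facts: that feeding $b$ into the comultiplication of the Case 3 structure returns $b\otimes b$, and that feeding $b$ into its counit returns a scalar.

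To perform these checks I would take the explicit form of the Case 3 spider derived above --- a green ($\mathcal{Z}$) node on the first qubit's legs and a red ($\mathcal{X}$) node on the second qubit's legs, joined on each leg by the connecting wire $\cwgr$ --- and push the product state $b$ through it. The $\mathcal{X}$-eigenstate and the $\mathcal{Z}$-eigenstate are each copied through the opposite-coloured node by the copy rules (Eqs. \ref{eq:zx9-copy-green} and \ref{eq:zx10-copy-red}), after which the connecting wires are resolved with the bialgebra relation in its generalised form Eq. \ref{eq:gen-bialg}. Deletion is handled analogously, absorbing $b$ into legs that terminate at a node with no free output. The bookkeeping should collapse the diagram either to $b\otimes b$ (for copying) or to a bare scalar (for deletion), up to the scalars we systematically ignore.

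The hard part will be doing this uniformly for all four elements of $B_{XZ}$, and in particular handling the $m+n$ odd situation in Case 3, where an extra wire remains connecting the two central nodes: I must confirm that, once the basis state is inserted, this wire does not obstruct copying and contributes only a scalar. Once the two equations are verified for each of the four basis vectors --- so that all of them are copyable and deletable up to scalars --- the counting argument of the first paragraph closes the proof: the Case 3 structure copies and deletes an orthogonal basis equivalent to $B_{XZ}$, hence by \cite{Coecke2013New} its underlying basis coincides with $B_{XZ}$ up to scalars, which is precisely the underlying basis of $\mathcal{XZ}$.
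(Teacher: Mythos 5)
Your proposal follows essentially the same route as the paper's own proof: exhibit the product basis $B_{XZ}$ explicitly, push a generic element of it through the copying and deleting spiders of the Case 3 structure, verify via the ZX rewrite rules that it is copied and deleted up to scalars, and conclude that $B_{XZ}$ is the underlying basis. Your added counting argument (four copyable states on $\mathbb{C}^2\otimes\mathbb{C}^2$, hence no room for any other basis) merely makes explicit the final inference that the paper leaves implicit, so this is a correct proof by the same method.
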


While a pair of classical structures with underlying bases that contain the same states but with differing global phases are not necessarily equivalent (see Section \ref{sec:discussion} for further discussion), the separability of these bases is preserved, i.e. the classical structure in Case 3 has an underlying basis that is separable. Thus, we are left with the classical structure in Case 4 as a classical structure that is not equivalent to $\mathcal{XZ}$.

\begin{proposition}\label{prop:case4-NS}
The classical structure produced by Case 4 has an underlying basis consisting of non-separable states.
\begin{proof}
For this proposition, we shall use the scalar inclusive version of ZX-calculus as outlined in the previous chapter. First, suppose $B$ is the underlying basis of the classical structure in Case 4 and suppose $B$ consists of separable states. Let $\ket{\psi}\ket{\phi}$ be a normalized state that is proportional to one of the states in $B$. Then we have:
\begin{equation}\label{eq:case4-1}
    \includegraphics[scale=0.2]{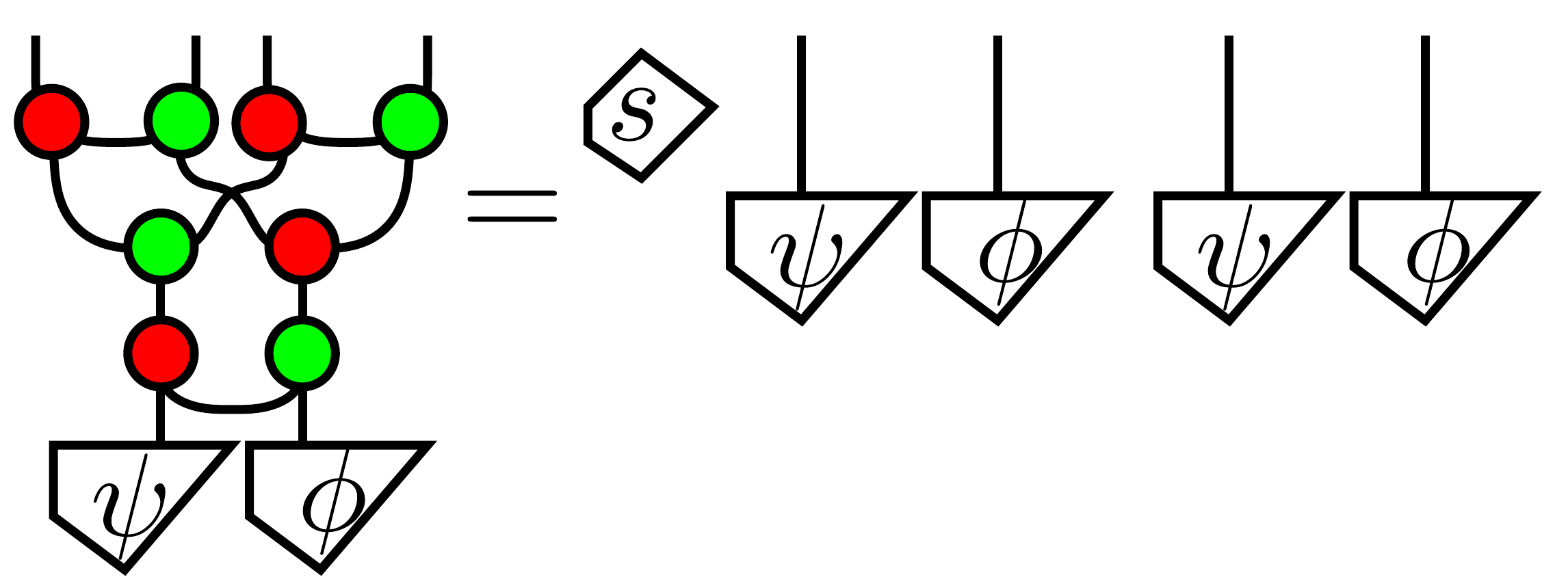}
\end{equation}
for some scalar $s\in\mathbb{C}$.

From the rewrite rules of ZX-calculus, we obtain:
\begin{equation}\label{eq:case4-2}
    \includegraphics[scale=0.2]{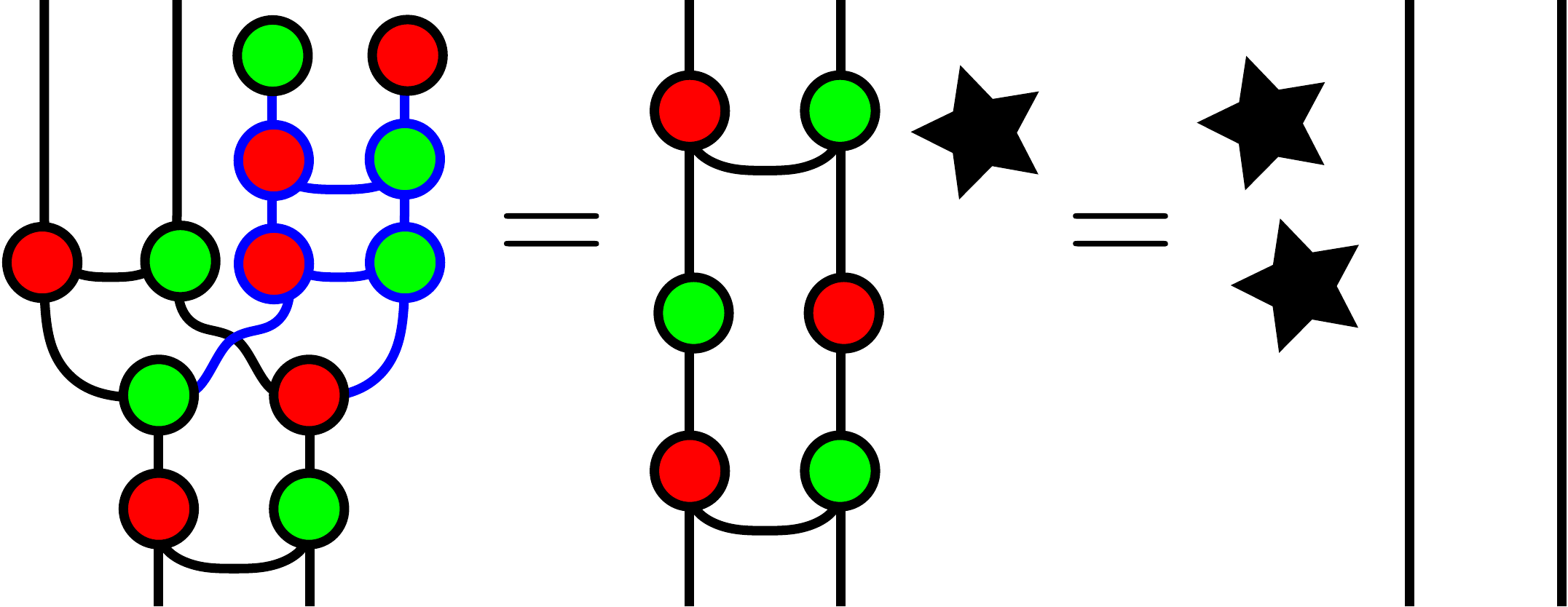}
\end{equation}
where we applied Eqs. \ref{eq:zx1-fuse-green}, \ref{eq:zx2-fuse-red} and \ref{eq:dzx1-hopf} on the part highlighted blue in the first diagram to obtain the first equality. The same equations are then applied to the second diagram to obtain the final diagram.

By Eq. \ref{eq:case4-1} and \ref{eq:case4-2}, we have:
\begin{equation}\label{eq:case4-3}
    \includegraphics[scale=0.2]{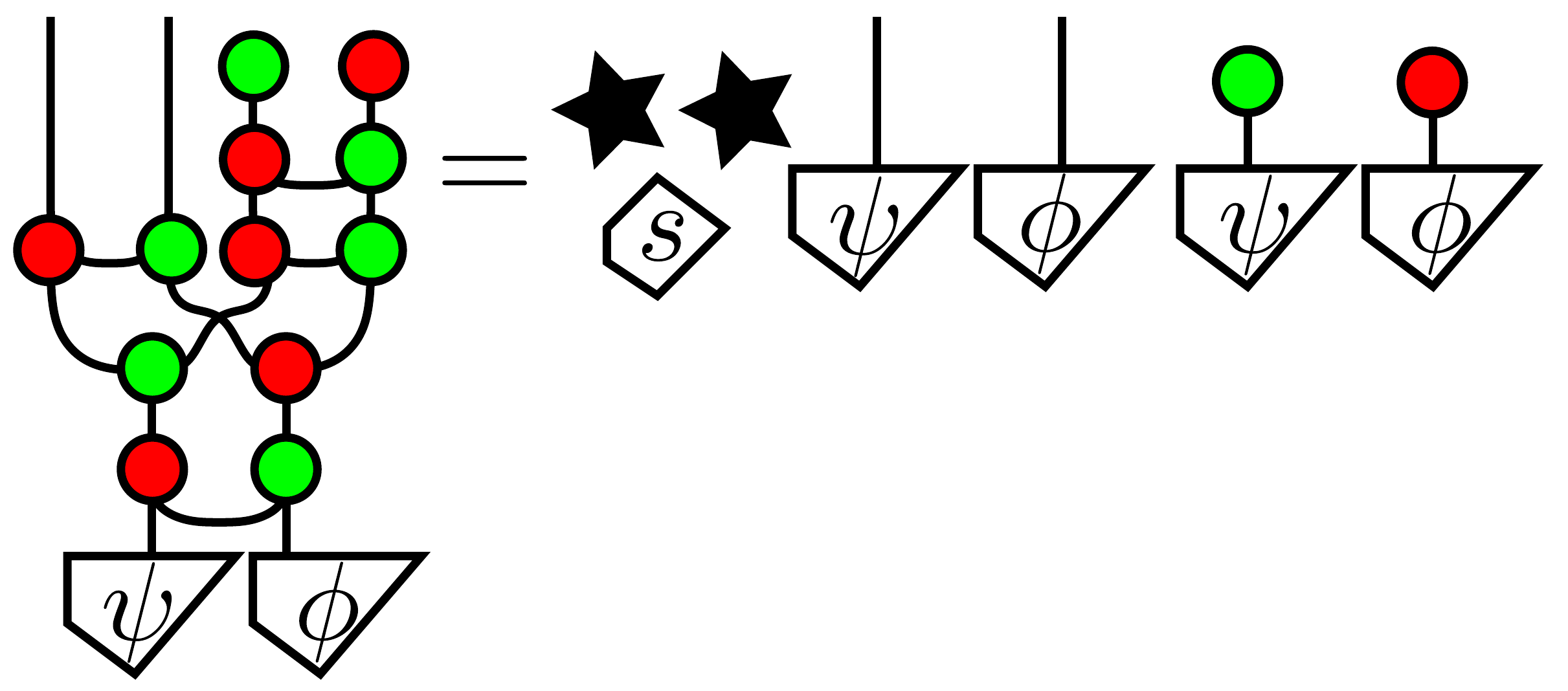}
\end{equation}
Now, the 1,0-spider of the classical structure in Case 4 is:
\begin{equation}\label{eq:case4-4}
    \includegraphics[scale=0.2]{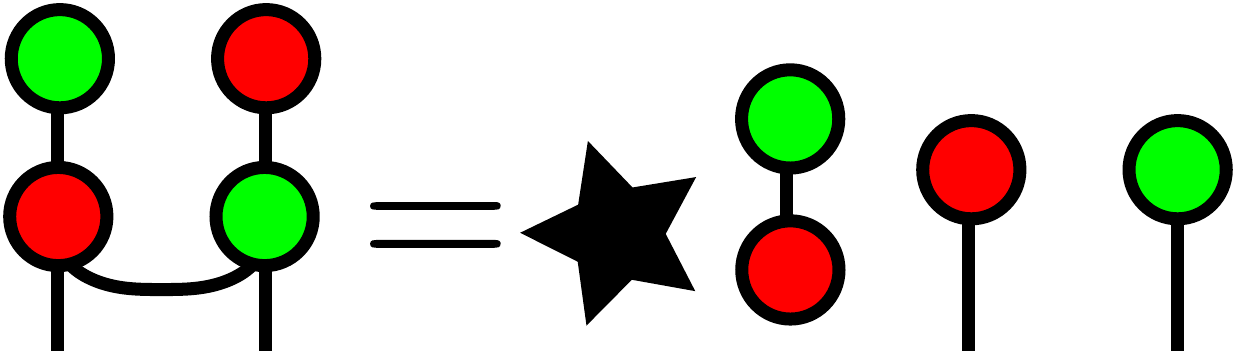}
\end{equation}
due to Eqs. \ref{eq:zx9-copy-green} and \ref{eq:zx10-copy-red}. 
So then:
\begin{equation}\label{eq:case4-5}
    \includegraphics[scale=0.2]{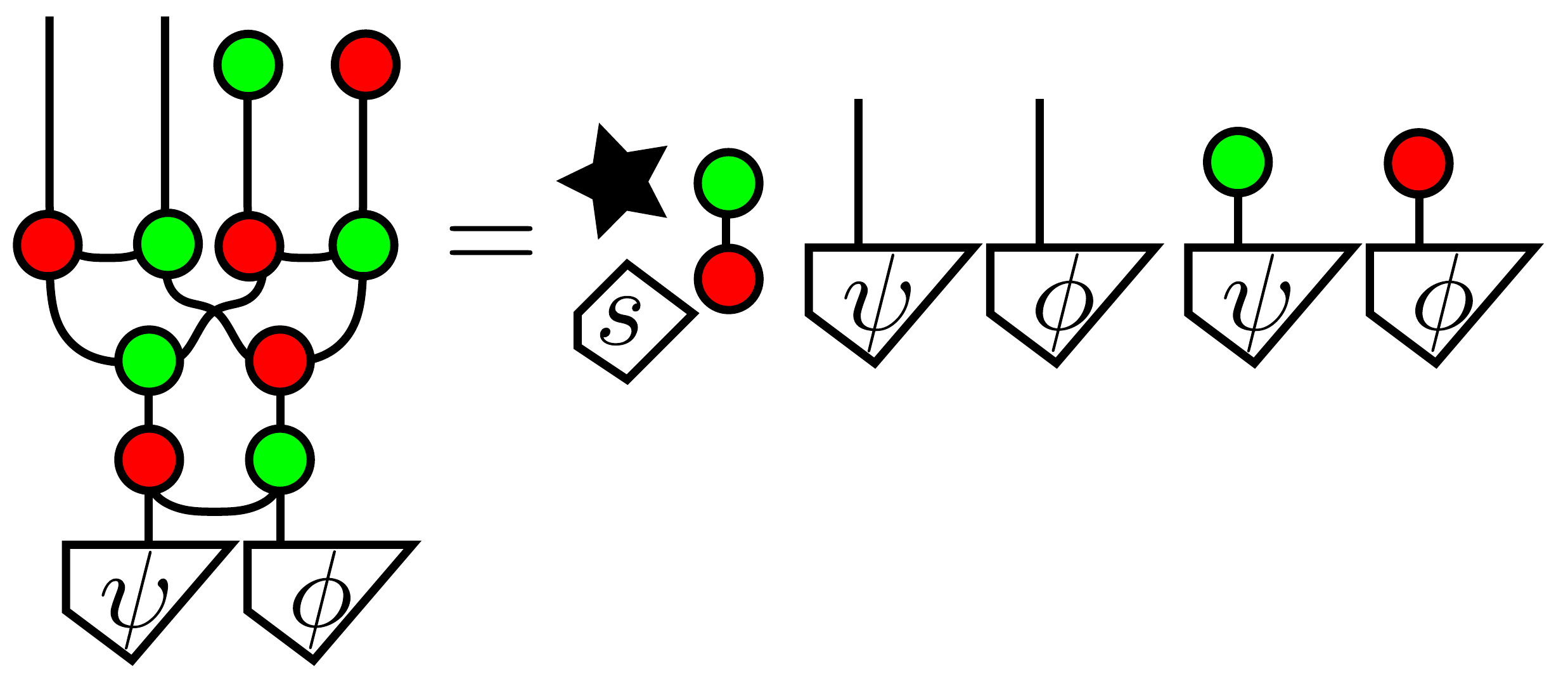}
\end{equation}
Eqs. \ref{eq:case4-3} and \ref{eq:case4-5} give us:
\begin{equation}\label{eq:case4-6}
    \includegraphics[scale=0.2]{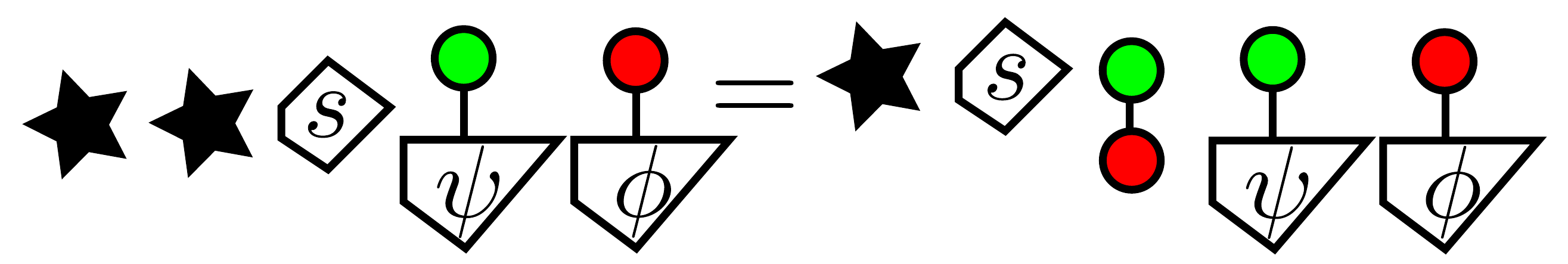}
\end{equation}
So, we have:
\begin{equation}
    \includegraphics[scale=0.2]{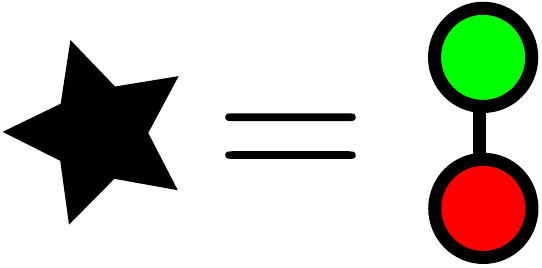}
\end{equation}
which is not true since $\starscalar$ and $\sqrtwo\sqrtwo$ are inverses of each other and neither are equal to 1. 
\end{proof}
\end{proposition}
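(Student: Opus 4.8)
The plan is to argue by contradiction. I would suppose that the underlying basis $B$ of the Case 4 classical structure consists entirely of separable states, pick a normalized separable state $\ket{\psi}\ket{\phi}$ proportional to one of the elements of $B$, and then exploit the two defining features of a copyable state: it is copied by the comultiplication (the $1,2$-spider) and deleted by the counit (the $1,0$-spider). The conceptual reduction here is that the underlying basis of a classical structure is precisely its set of copyable states, so "$B$ is separable" becomes "there is a separable copyable state." Because $\ket{\psi}\ket{\phi}$ is copyable, the comultiplication of the Case 4 structure must send it to a scalar multiple $s\,\ket{\psi}\ket{\phi}\otimes\ket{\psi}\ket{\phi}$; I would record this first as a copy equation with an undetermined scalar $s\in\mathbb{C}$.

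The heart of the argument is to recompute that same diagram using the internal ZX structure of the Case 4 spider rather than its assumed action on the separable state. Here the connecting wires are the non-separable ingredient, so I would push the copied state through the two $\mathsf{CNOT}$ legs and simplify using the green and red fusion rules (Eqs.~\ref{eq:zx1-fuse-green}, \ref{eq:zx2-fuse-red}) together with the Hopf rule (Eq.~\ref{eq:dzx1-hopf}). The Hopf law is exactly what forces a disconnection between the two qubit-wires, producing a second expression for the copied state that agrees with the copy equation only up to a different scalar prefactor. Comparing the two expressions then yields a forced relation between the scalars.

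To close the loop I would also compute the counit (the $1,0$-spider) of the Case 4 structure explicitly via the green-copy and red-copy rules (Eqs.~\ref{eq:zx9-copy-green}, \ref{eq:zx10-copy-red}), so that deleting $\ket{\psi}\ket{\phi}$ supplies a second, independent scalar constraint. Feeding this back into the relation obtained from the copy equation should eliminate the unknown $s$ and collapse everything to a single scalar-only identity, of the form $\starscalar = \sqrtwo\,\sqrtwo$. Since $\starscalar$ and $\sqrtwo\,\sqrtwo$ are mutually inverse and neither equals $1$, this identity is impossible, and the contradiction shows that $B$ cannot be separable.

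The step I expect to be the main obstacle is the scalar bookkeeping. Because we are working in the scalar-inclusive version of ZX-calculus, each fusion, Hopf application, and copy step carries an explicit factor ($\sqrtwo$, $\starscalar$, and so on), and the entire contradiction rests on those factors failing to cancel. I would therefore carry the scalars explicitly through every rewrite rather than reasoning "up to scalars," and verify in particular that the copy equation's unknown scalar $s$ is genuinely removed once the copy and delete constraints are combined, so that the surviving statement really is the scalar-only identity that cannot hold.
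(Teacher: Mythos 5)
Your proposal follows essentially the same route as the paper's proof: assume a separable copyable state, derive one expression for its copy via the defining copyability equation and another via the internal ZX rewrites (green/red fusion plus the Hopf rule forcing the disconnection), then use the counit computed from the copy rules to eliminate the unknown scalar $s$ and arrive at the impossible scalar identity $\starscalar=\sqrtwo\,\sqrtwo$. Your emphasis on carrying scalars explicitly is exactly the right concern, and it is precisely how the paper closes the argument.
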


\subsection{Connecting Wires for Joining Spiders}\label{sec:NSCS}

Proposition \ref{prop:case4-NS} tells us that the classical structure produced by Case 4 above must have an underlying basis consisting of non-separable states. We call it a \textbf{non-separable classical structure on two qubits with constituents $\mathcal{X}$ and $\mathcal{Z}$}, denoted by $\mathcal{X}\diamond\mathcal{Z}$. From $\mathcal{X}\diamond\mathcal{Z}$, we can construct other non-separable classical structures on two qubits with different constituents:
\begin{eqnarray}
\includegraphics[scale=0.2]{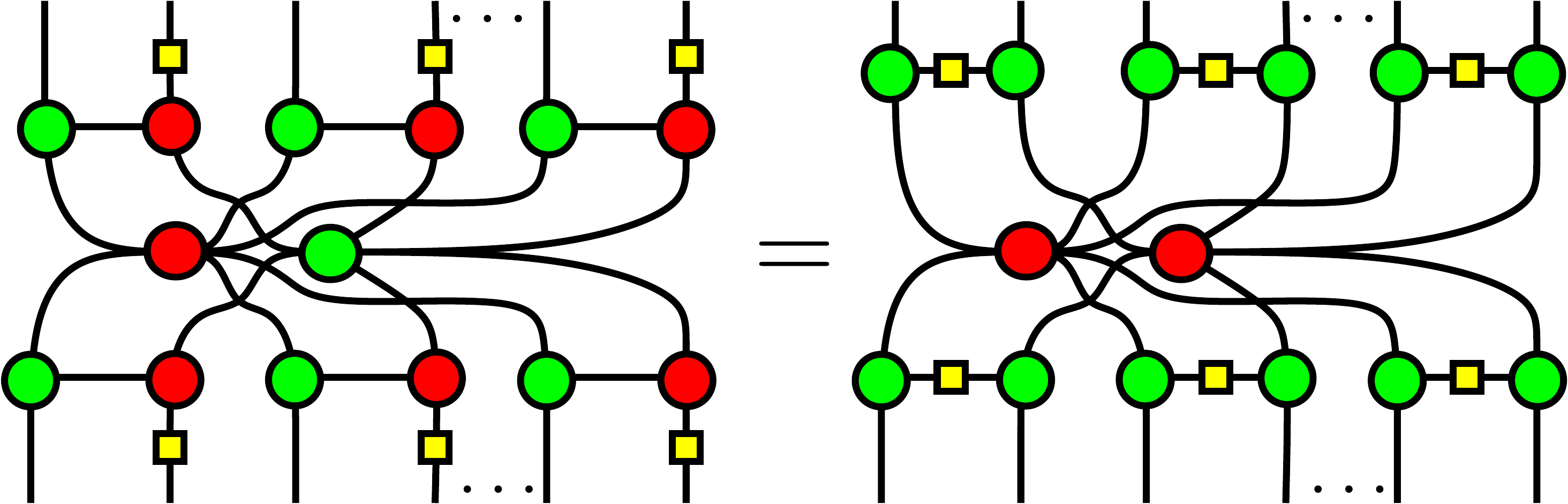} \label{eq:XX}\\
\includegraphics[scale=0.2]{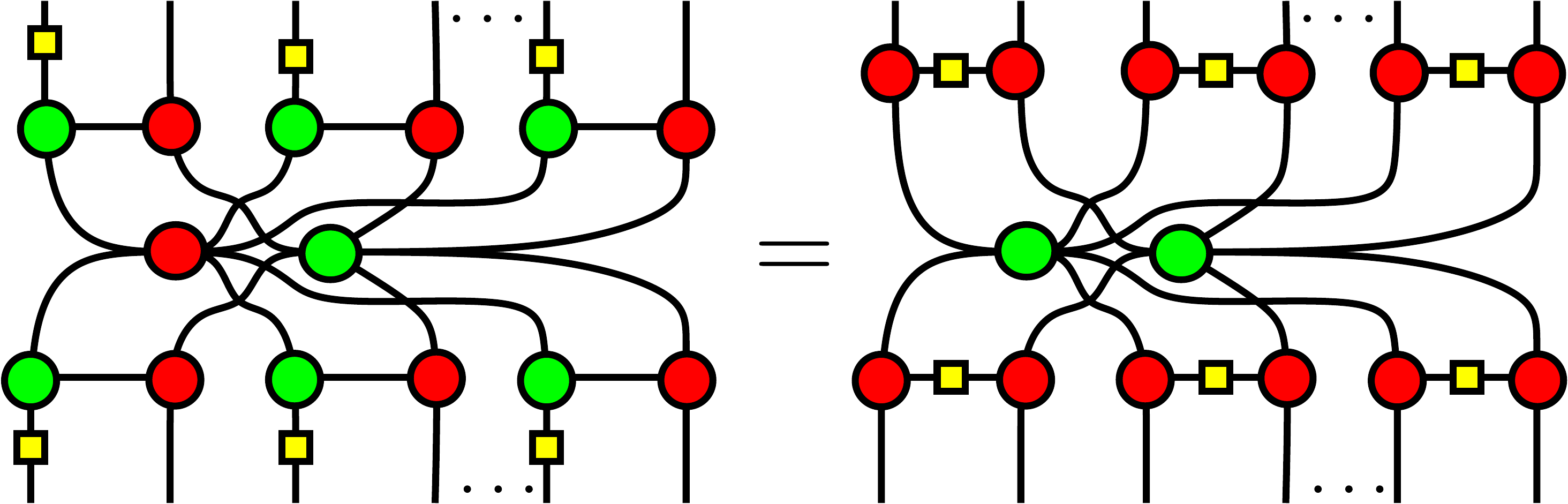} \label{eq:ZZ}\\
\includegraphics[scale=0.2]{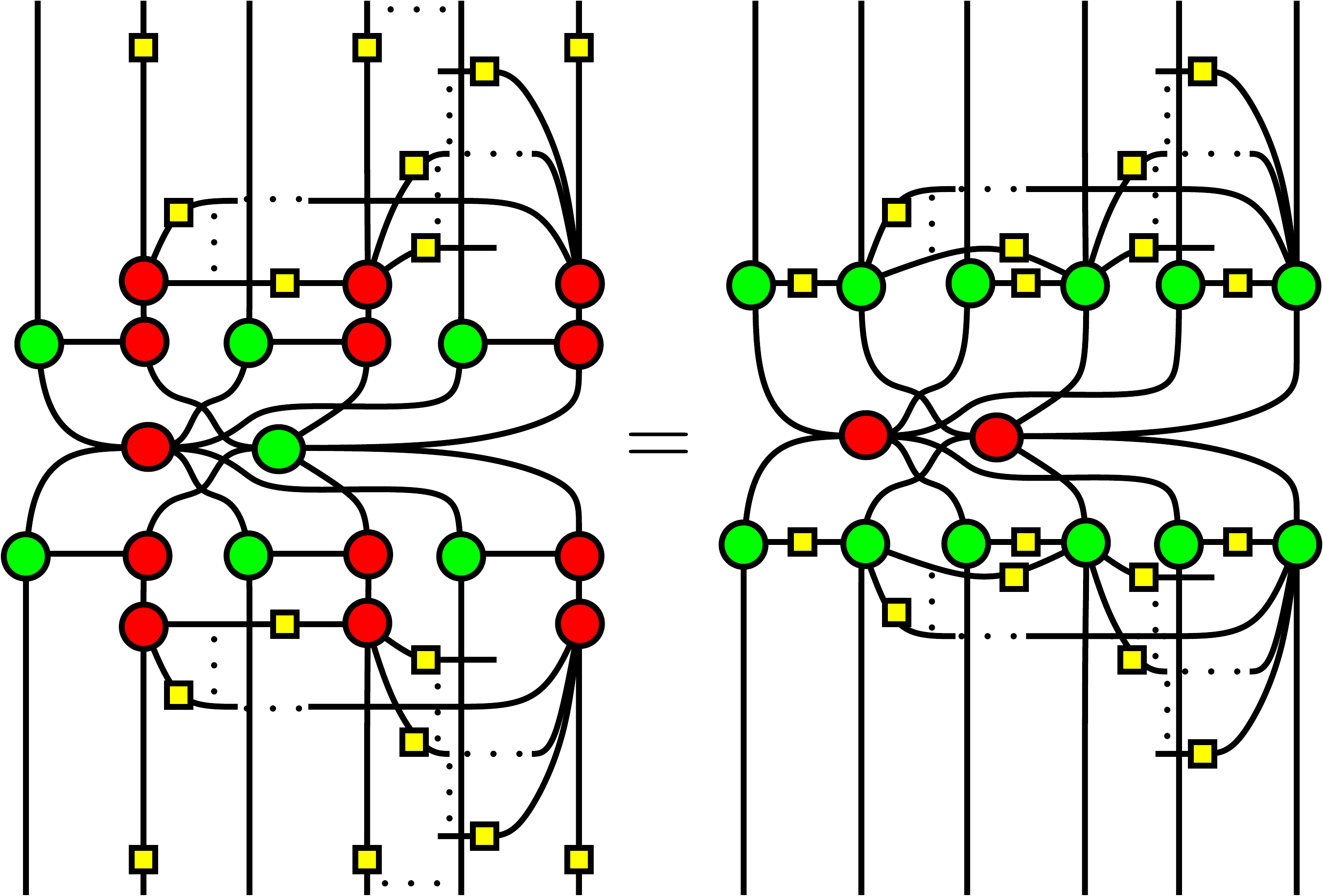} \label{eq:XY}\\
\includegraphics[scale=0.2]{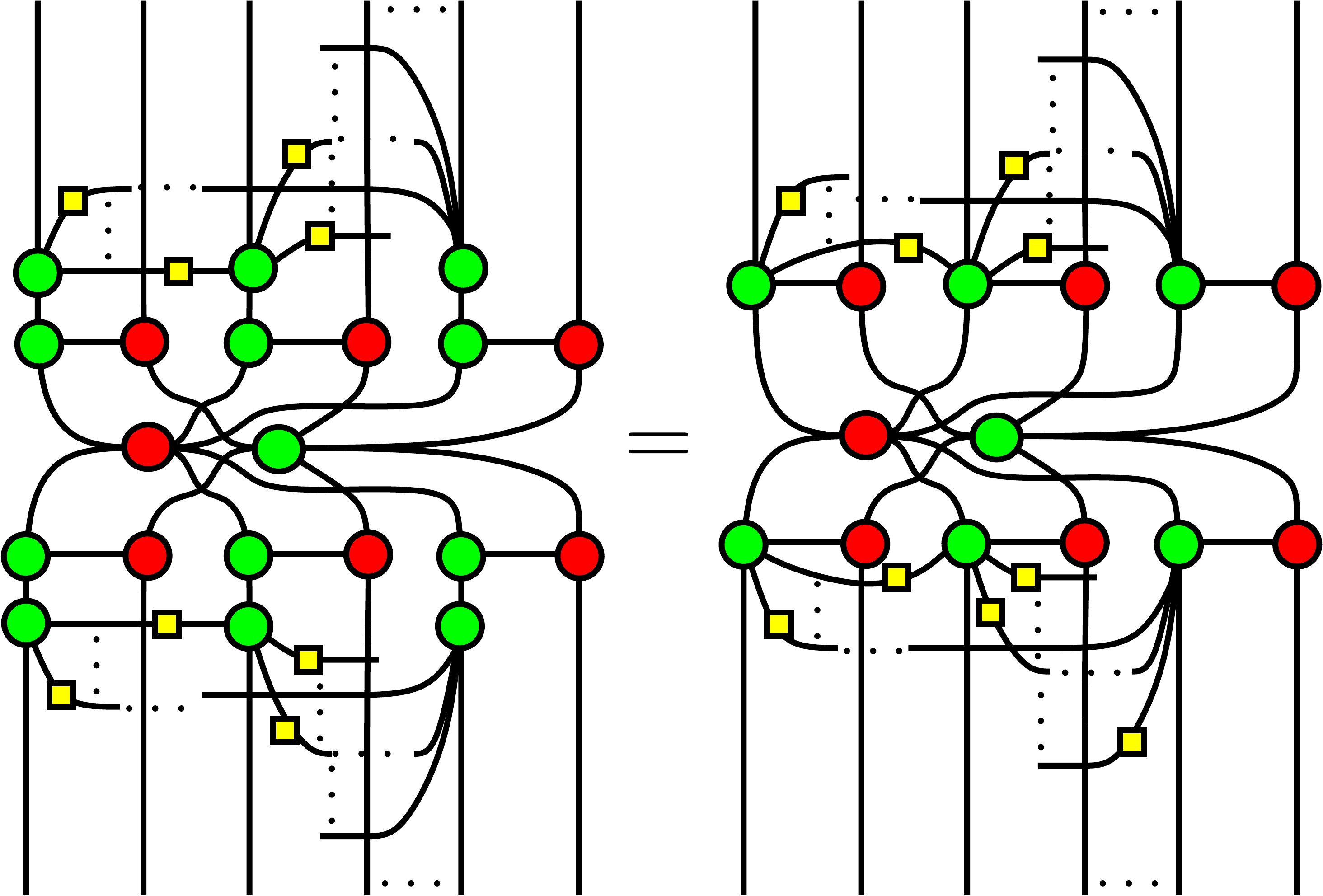} \label{eq:YZ}
\end{eqnarray}

If we compose the spider from Case 4 with $\cwgg$ as in Eq. \ref{eq:XY}, and also compose on its legs $\cwrr$ and $\had$ as in Eq. \ref{eq:YZ}, we shall obtain a non-separable classical structures on two qubits where both of its constituents are $\mathcal{Y}$.

We can permute the constituents of the known spiders. For example:
\begin{equation}\label{eq:ZX}
    \includegraphics[scale=0.2]{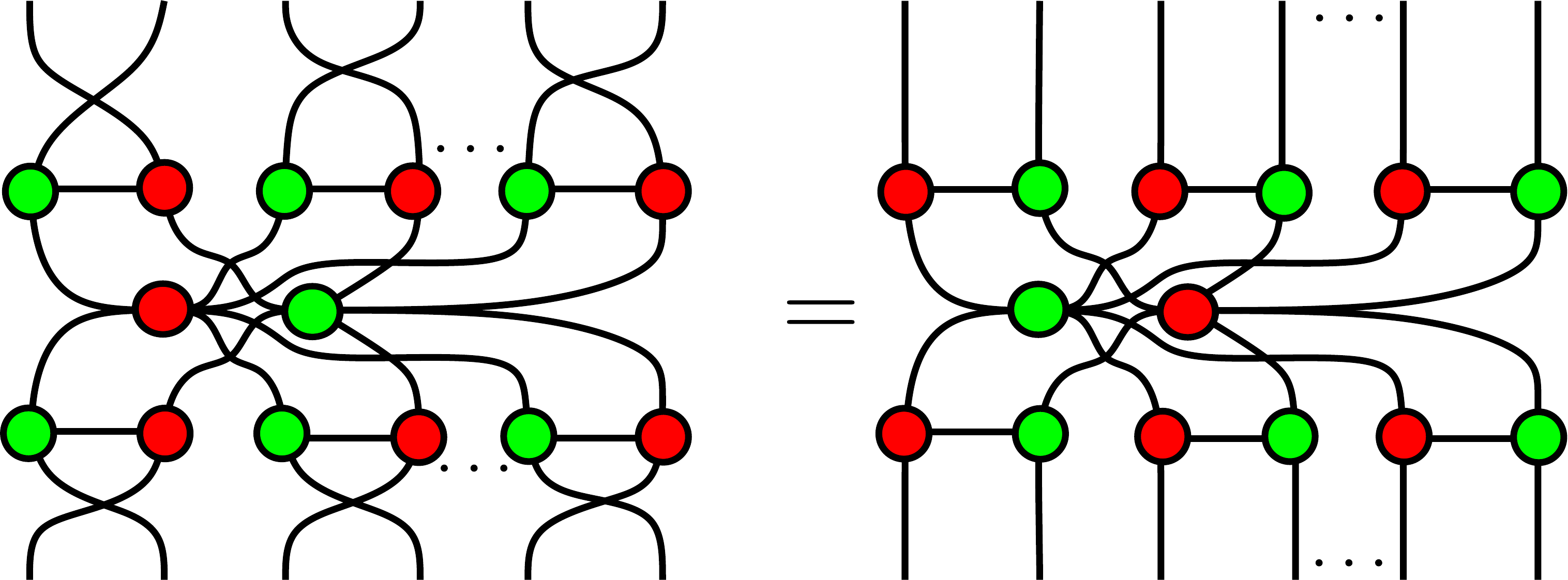}
\end{equation}

Thus, we have obtained non-separable classical structures on two qubits for any combination of constituents from $\{\mathcal{X},\mathcal{Y},\mathcal{Z}\}$.

We call the bipartite process which entangles a pair of classical structures as \textbf{connecting wires}. Table \ref{tab:connecting-wire} shows the connecting wire for each pair of classical structures from $\{\mathcal{X},\mathcal{Y},\mathcal{Z}\}$.

\begin{table}
    \centering
     \caption{Connecting wires for pairs of classical structures, where CS1 is the classical structure on the first qubit and CS2 is the classical structure on the second qubit. The rightmost column gives the notation of the resulting composed classical structure.}
    \label{tab:connecting-wire}
    \begin{longtable}{|c|c|c|c|}\hline
        CS1 & CS2 & Connecting Wire & Notation \\ \hline
        $\mathcal{X}$ & $\mathcal{X}$ & \includegraphics[scale=0.2]{images/symbols/conwire2.pdf} & $\mathcal{X}\diamond\mathcal{X}$\\ \hline
        $\mathcal{X}$ & $\mathcal{Y}$ & \includegraphics[scale=0.2]{images/symbols/conwire2.pdf} & $\mathcal{X}\diamond\mathcal{Y}$\\ \hline
        $\mathcal{X}$ & $\mathcal{Z}$ & \includegraphics[scale=0.2]{images/symbols/conwire1.pdf} & $\mathcal{X}\diamond\mathcal{Z}$\\ \hline
        $\mathcal{Y}$ & $\mathcal{X}$ & \includegraphics[scale=0.2]{images/symbols/conwire2.pdf} & $\mathcal{Y}\diamond\mathcal{X}$\\ \hline
        $\mathcal{Y}$ & $\mathcal{Y}$ & \includegraphics[scale=0.2]{images/symbols/conwire2.pdf} & $\mathcal{Y}\diamond\mathcal{Y}$\\ \hline
        $\mathcal{Y}$ & $\mathcal{Z}$ & \includegraphics[scale=0.2]{images/symbols/conwire1.pdf} & $\mathcal{Y}\diamond\mathcal{Z}$\\ \hline
        $\mathcal{Z}$ & $\mathcal{X}$ & \includegraphics[scale=0.2]{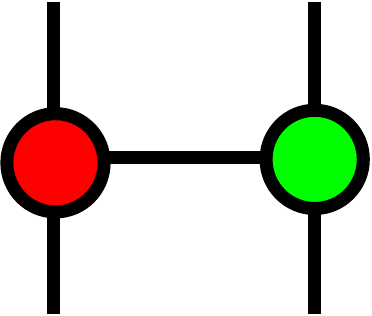} & $\mathcal{Z}\diamond\mathcal{X}$\\ \hline
        $\mathcal{Z}$ & $\mathcal{Y}$ & \includegraphics[scale=0.2]{images/symbols/conwire4.pdf}  & $\mathcal{Z}\diamond\mathcal{Y}$\\ \hline
        $\mathcal{Z}$ & $\mathcal{Z}$ & \includegraphics[scale=0.2]{images/symbols/conwire3.pdf}  & $\mathcal{Z}\diamond\mathcal{Z}$\\
        \hline
    \end{longtable}
\end{table}

\section{Composite Classical Structures on \texorpdfstring{$N$}{TEXT} Qubits}\label{sec:composemultCS}

Based on the constructions in the previous sections, we introduce some notations:

\begin{definition}\label{def:composite-CS}
A \textbf{composite} classical structure on two qubits is formed by composing two classical structures on a single qubit. We call these classical structures on a single qubit as constituents of the composite classical structure. With respect to the present work, a constituent is either $\mathcal{X},\mathcal{Y}$ or $\mathcal{Z}$. 

A \textbf{separably composed classical structure} (SCCS) on two qubits is a composite classical structure which consists of spiders composed separably as in Section \ref{sec:SCCS}. For $\mathcal{O}_1,\mathcal{O}_2\in\{\mathcal{X,Y,Z}\}$, a SCCS with constituents $\mathcal{O}_1$ and $\mathcal{O}_2$ is denoted by $\mathcal{O}_1\mathcal{O}_2$.

A \textbf{non-separable classical structure} (NSCS) on two qubits is a composite classical structure which consists of entangled spiders as in Section \ref{sec:NSCS}. For $\mathcal{O}_1,\mathcal{O}_2\in\{\mathcal{X,Y,Z}\}$, a NSSC with constituents $\mathcal{O}_1$ and $\mathcal{O}_2$ is denoted by $\mathcal{O}_1\diamond\mathcal{O}_2$.
\end{definition}

We can extend Definition \ref{def:composite-CS} to multiple qubits. However, for $N$ qubits where $N>2$, there are more than one pair of qubits. So, the degree of entanglement of a classical structure is more varied. That is, it is no longer necessary for a classical structure to be separable on all qubits or non-separable on all qubits. A classical structure can be separable on one pair of qubits and non-separable on another pair of qubits. For a example, a classical structure on three qubits could be non-separable on the first two qubits, but separable on the second and third qubits and on the third and first qubits. So, we should be able to compose a SCCS and a NSCS to form a classical structure on $N$ qubits where $N>2$.

To show that this is indeed possible, we introduce a diagram which subsumes both SCCS and NSCS:
\begin{equation*}
    \includegraphics[scale=0.2]{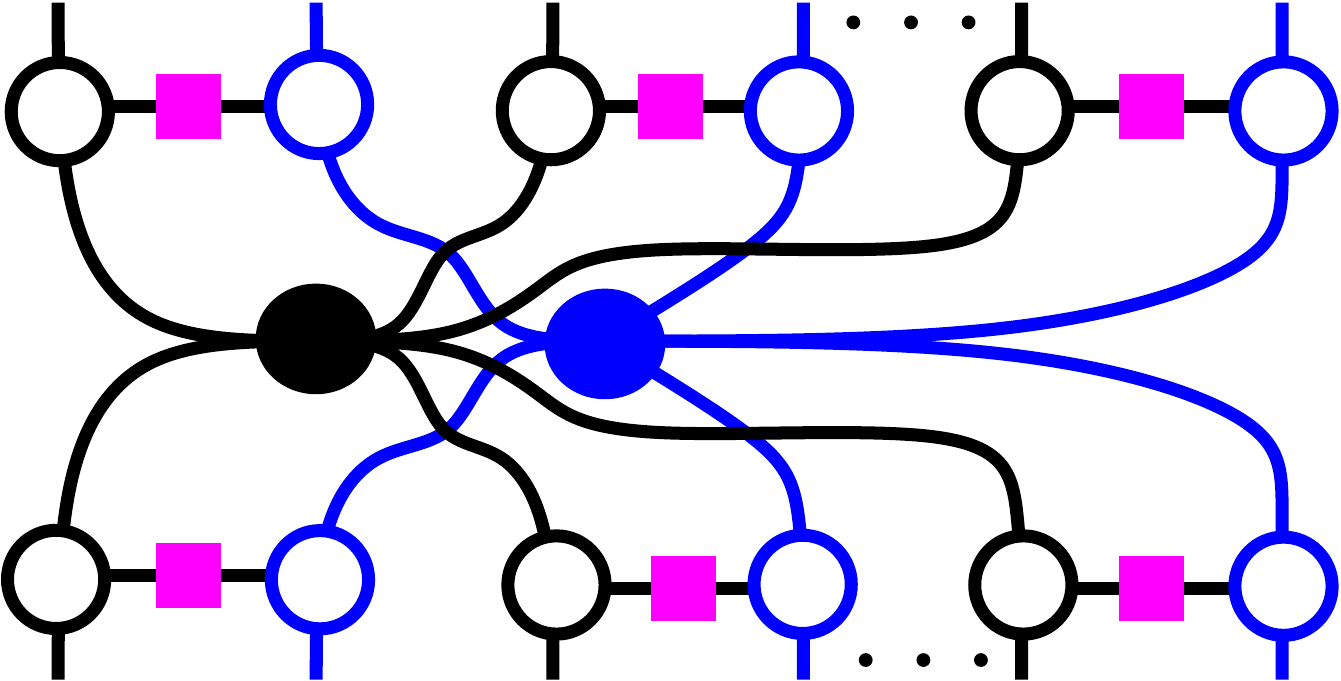}
\end{equation*}
\noindent where:
\begin{equation*}
    \includegraphics[scale=0.2]{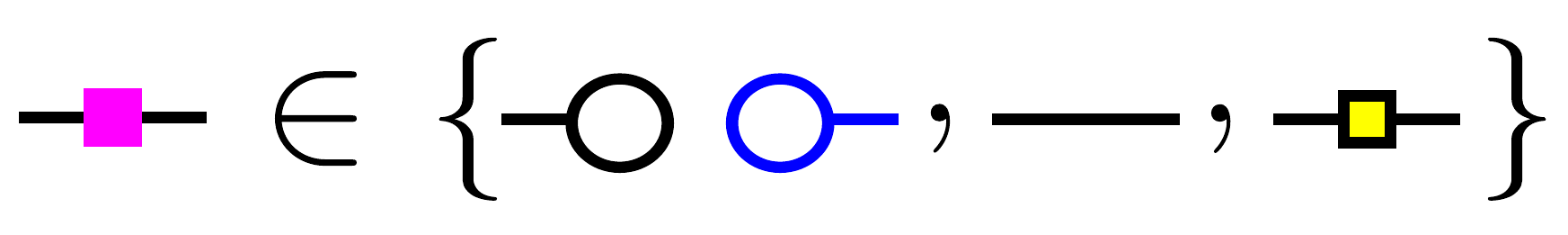}
\end{equation*}
\noindent depending on separability of the classical structure and the spiders $\blackb$ and $\blueb$. 

\begin{figure}[h]
\centering
    \includegraphics[scale=0.2]{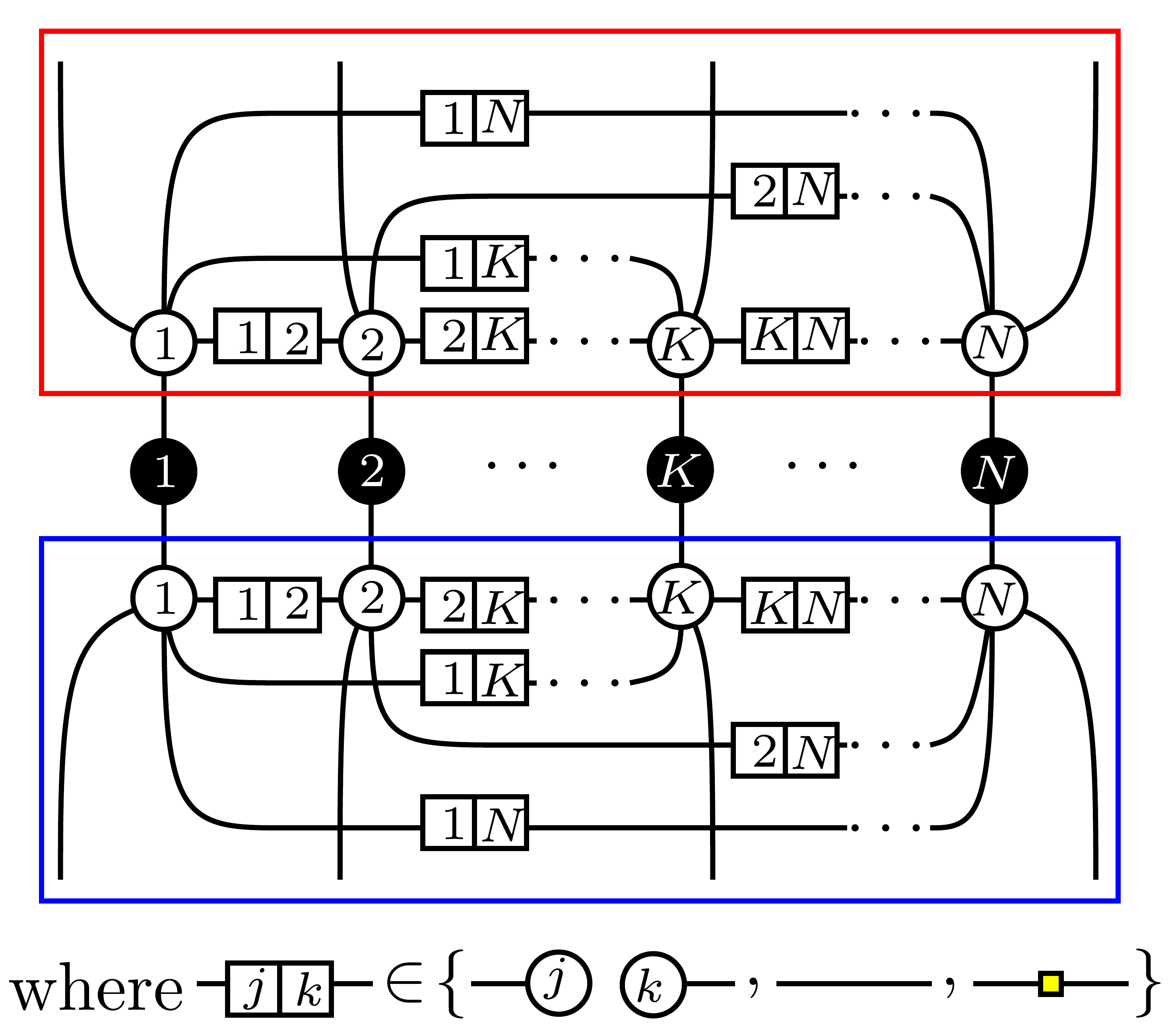}
\caption{$m,n$-spider of a composite classical structure on $N$ qubits. The diagrams in the blue boxes are repeated $m$ times, and the diagrams in the red boxes are repeated $n$ times.}
\label{fig:compositeCS-N}
\end{figure}

Before we proceed further, we shall recall !-boxes, which we view as a convenient tool to draw diagrams for an arbitrary number of qubits. Note that the following definition is a simplified one as not to distract from our main study. For a more detailed description of !-boxes, one may refer to references \cite{Kissinger2014,Merry2014}. 

\begin{definition}\label{def:!-boxes}
Let $D$ be a diagram. A !-box in $D$ is its subdiagram which can be repeated an arbitrary number of times, starting from 0. In a diagram, a !-box is marked by a coloured box around it. !-boxes of the same colour represent the same number of repetitions of those subdiagrams. 
\end{definition}

A simple example of a diagram with !-boxes is a generic spider:
\begin{equation*}
    \includegraphics[scale=0.2]{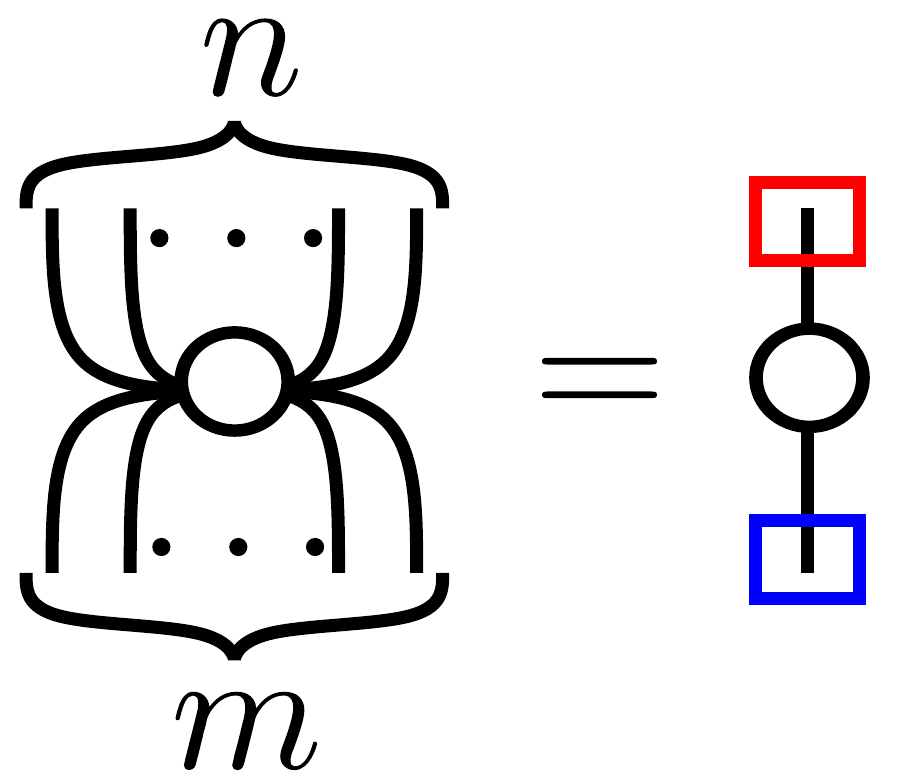}
\end{equation*}
or, a spider belonging to $\mathcal{XZ}$:
\begin{equation*}
    \includegraphics[scale=0.2]{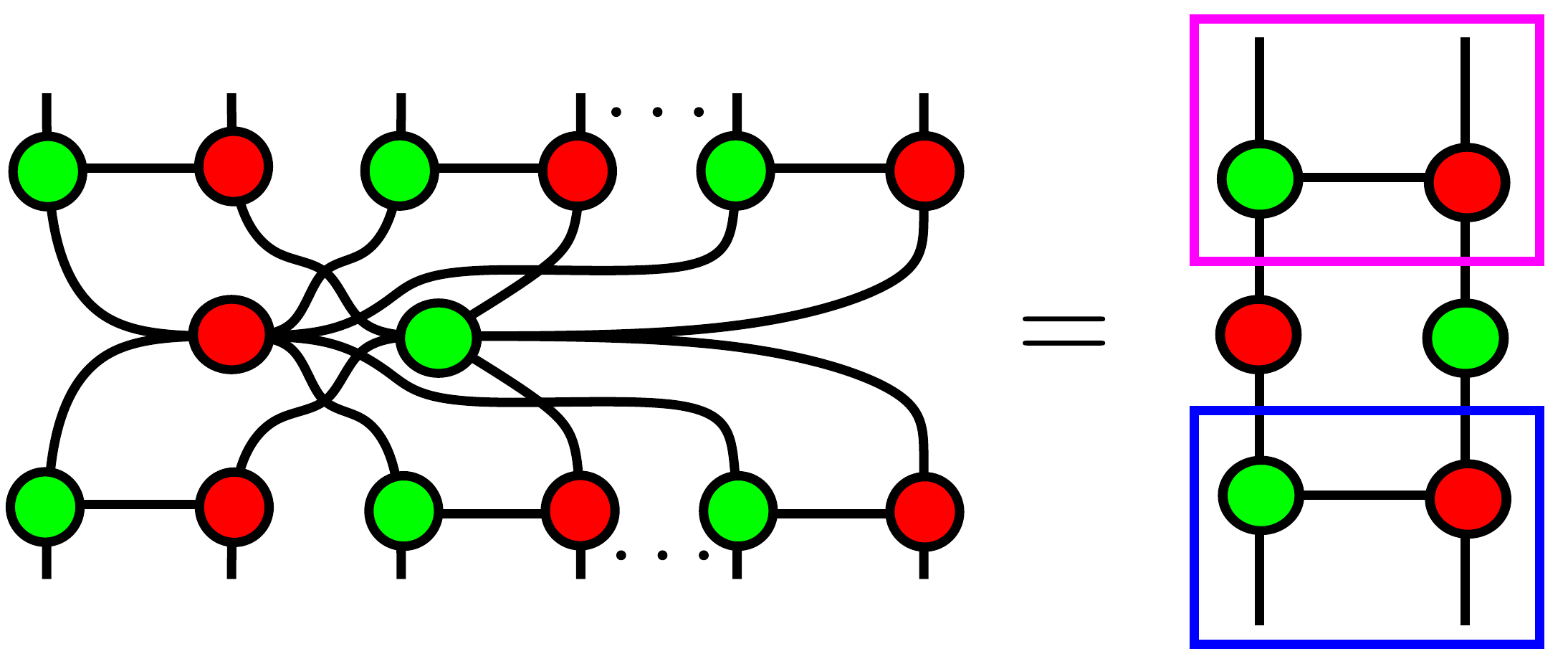}
\end{equation*}

\begin{theorem}\label{thm:compositeCS-N}
The process in Fig. \ref{fig:compositeCS-N}
is a spider of a classical structure on $N$ qubits.
\end{theorem}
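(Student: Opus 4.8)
The plan is to verify directly the two defining properties of a classical structure from Definition~\ref{def:CS}: that the family of processes in Fig.~\ref{fig:compositeCS-N} is closed under spider fusion (Eq.~\ref{eq:spider-fuse}) and is invariant under permutation of its input and output legs. The key structural observation is that each composite spider is assembled from single-qubit constituent spiders of $\mathcal{X}$, $\mathcal{Y}$ and $\mathcal{Z}$ --- each already shown to be a classical structure --- linked across qubits by connecting wires, which are unitary by the requirement established in Section~\ref{sec-join-spiders}. I would therefore reduce both properties of the composite to the corresponding properties of the constituents together with the unitarity of the connecting wires.

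Permutation invariance is the easier half and I would dispatch it first. Since every composite leg carries the same arrangement of connecting wires, permuting the $m$ input legs (or the $n$ output legs) of the composite spider merely permutes, qubit by qubit, the legs feeding into the constituent spiders. Each constituent is permutation invariant on its own legs, so the permutation is absorbed and the diagram returns to the original composite spider. This is exactly the argument already sketched around Fig.~\ref{fig:join-spiders} for the two-qubit case, now applied in parallel on all $N$ qubits.

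The substance of the proof is spider fusion. I would set up two composite spiders and join $k$ output legs of the first to $k$ input legs of the second. At each such junction the connecting wire on the output leg meets, as its adjoint, the connecting wire on the matching input leg, so their composite of the form $f^\dagger\circ f$ collapses to an identity wire by the unitarity condition (Eq.~\ref{eq:unitary}). Once all $k$ junctions are cleared, the bare constituent spiders on each qubit are joined directly along the corresponding wires and fuse by the fusion rule (Eq.~\ref{eq:spider-fuse}) of their own classical structure. Re-reading the connecting wires onto the remaining legs then exhibits the result as the composite $(m_1+m_2-k,\,n_1+n_2-k)$-spider of the same structure, which is precisely Eq.~\ref{eq:spider-fuse} for the composite family. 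On each connected pair of qubits this specialises to the fusion computation that justified the two-qubit non-separable structures of Section~\ref{sec:NSCS}, and on each separable qubit to the parallel-composition fusion of Section~\ref{sec:SCCS}.

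The main obstacle I anticipate is the bookkeeping for mixed-separability configurations: for $N>2$ a single composite spider may be non-separable on some pairs of qubits and separable on others, so the cancellation of connecting wires and the subsequent constituent fusion must be checked to proceed independently on each qubit without interfering across pairs. Concretely, I would verify that the condition distinguishing the $\blackb$ and $\blueb$ spiders in Fig.~\ref{fig:compositeCS-N} assigns to every qubit a well-defined constituent and a well-defined connecting wire (or none, in the separable case), so that the fusion argument factorises qubit-wise. An alternative, should the direct factorisation prove awkward to present, is induction on $N$ with the two-qubit cases of Sections~\ref{sec:SCCS} and~\ref{sec:NSCS} as the base, attaching one qubit at a time and invoking unitarity of its connecting wire; but I expect the direct, qubit-wise argument above to be both shorter and more transparent.
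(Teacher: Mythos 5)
Your proposal matches the paper's own argument: the paper's proof likewise reduces fusion of two composite spiders to the cancellation of each connecting wire against its adjoint at the joined legs (the displayed equation holding for any pair $j,k\in\{1,\dots,N\}$), after which the constituent spiders $S_1,\dots,S_N$ fuse qubit-wise. The only difference is one of completeness in your favour — the paper's proof treats only the fusion law and leaves permutation invariance implicit from the earlier discussion around Fig.~\ref{fig:join-spiders}, whereas you spell it out — so your write-up is correct and essentially the same route.
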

\begin{proof}

If a diagram of the form in Fig. \ref{fig:compositeCS-N} with $m_1$ input legs and $n_1$ output legs is composed with another diagram of the form in Fig. \ref{fig:compositeCS-N} with $m_2$ input legs and $n_2$ output legs at $p$ legs, then the two diagrams will fuse at the black spiders $S_1,S_2,...,S_N$ and form a diagram of the form Fig. \ref{fig:compositeCS-N} with a total of $m_1+m_2+n_1+n_2-p$ legs since:
\begin{equation*}
    \includegraphics[scale=0.2]{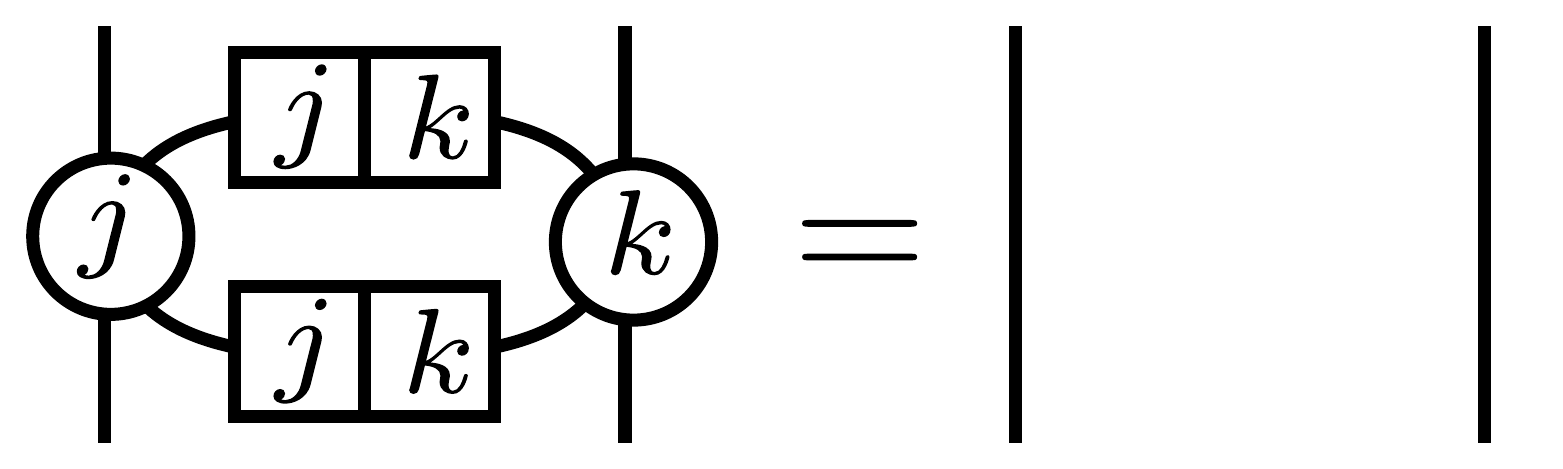}
\end{equation*}
\noindent for any $j,k\in\{1,2,...,N\}$.
\end{proof}

\begin{corollary}
The 0,1-spider of a composite classical structure on $N qubits$ is equivalent (up to scalars) to:
\begin{equation*}
    \includegraphics[scale=0.2]{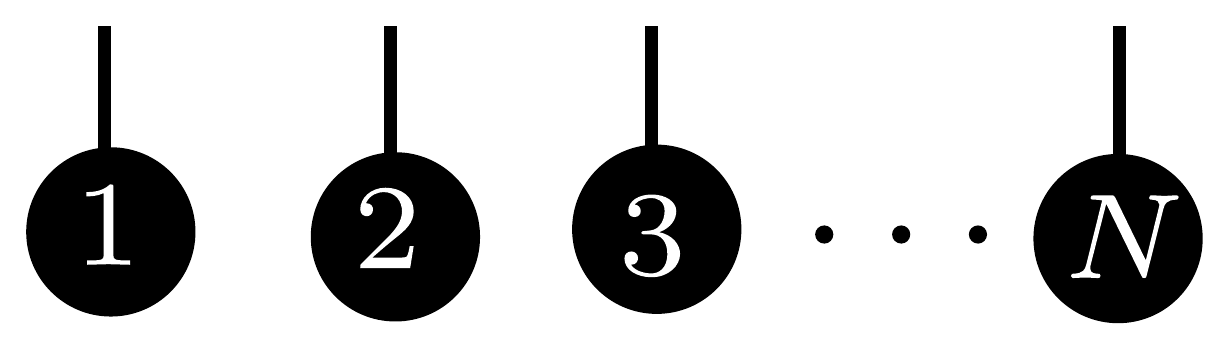}
\end{equation*}
\begin{proof}
From Fig. \ref{fig:compositeCS-N}, the 0,1-spider of a composite classical structure on $N$ qubits is:
\begin{equation*}
    \includegraphics[scale=0.2]{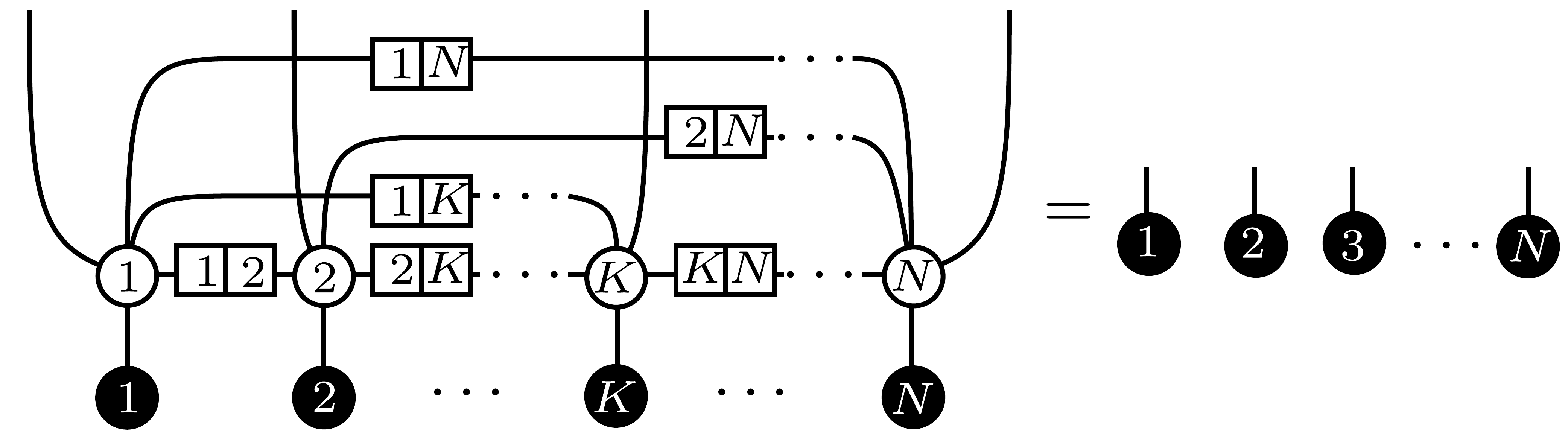}
\end{equation*}
since the white spider labelled $S_j$ copies the black spider labelled $S_j$ by definition. 
\end{proof}
\end{corollary}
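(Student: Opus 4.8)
The plan is to read off the $0,1$-spider by specialising the general $m,n$-spider of Fig.~\ref{fig:compositeCS-N} to $m=0$ and $n=1$, and then to collapse the resulting picture using the copy rules of ZX-calculus. This is legitimate because Theorem~\ref{thm:compositeCS-N} has already established that Fig.~\ref{fig:compositeCS-N} genuinely is a spider of a classical structure on $N$ qubits; hence its $0,1$-spider is exactly the diagram one obtains by drawing Fig.~\ref{fig:compositeCS-N} with no input legs and a single output.

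First I would impose $m=0$. Since the blue input boxes of Fig.~\ref{fig:compositeCS-N} are repeated $m$ times, taking $m=0$ deletes them outright, so each constituent spider $S_j$ is left with nothing feeding it from the input side. Imposing $n=1$ retains a single copy of the red output box, so each $S_j$ reduces to its own $0,1$-spider, i.e.\ the unit state of the $j$-th single-qubit classical structure, wired into the connecting wires that link qubit $j$ to its neighbours according to Definition~\ref{def:composite-CS}.

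The decisive step is to show that these connecting wires disconnect. Each connecting wire in Table~\ref{tab:connecting-wire} carries a white (complementary-coloured) spider, and the unit state meeting it is a copiable point for that complementary structure. Consequently the copy rules, Eqs.~\ref{eq:zx9-copy-green} and \ref{eq:zx10-copy-red}, apply: the white spider copies the black spider $S_j$, which severs the entangling link and leaves only the individual units, dressed with whatever Hadamards the connecting wire prescribes. Performing this copy at every qubit peels the connecting wires off one by one and reduces the whole diagram to the separable form stated in the corollary, valid up to an overall scalar.

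The main obstacle I expect is bookkeeping rather than anything conceptual: one must confirm that the copy rule really fires for each of the connecting-wire types in Table~\ref{tab:connecting-wire}, since the colour of the interior spider and the placement of the Hadamards vary from case to case, and one must check that the scalars generated by each copy (which we discard, following the scalar-ignoring convention adopted for the ZX-calculus in the previous chapter) never annihilate the state. Once the copy rule is seen to apply uniformly across all constituents, the simplification is immediate and the asserted equivalence up to scalars follows.
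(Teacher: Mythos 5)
Your proposal is correct and follows essentially the same route as the paper: specialise Fig.~\ref{fig:compositeCS-N} to $m=0$, $n=1$ and then invoke the fact that the white spider labelled $S_j$ copies the black spider $S_j$ (the paper states this "by definition", which is exactly the copy-rule step you spell out via Eqs.~\ref{eq:zx9-copy-green} and \ref{eq:zx10-copy-red}). Your additional remarks on case-by-case bookkeeping for the connecting wires and on scalars are just a more explicit rendering of the same one-line argument.
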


\section{Chapter Summary}\label{sec:chapter4-summary}

In this chapter:
\begin{itemize}
    \item We showed how to compose two classical structures on a single qubit to form a composite classical structure on two qubits via two methods:
    \begin{enumerate}
        \item[\textbf{C1}] Separably composing their spiders with matching number of legs, i.e. the spiders composed remain separable or not entangled (see Section \ref{sec:SCCS});
        \item[\textbf{C2}] Joining their spiders with matching number of legs via one of the following connecting wires (see Section \ref{sec:NSCS}):
            \begin{longtable}{c c c c}
            \includegraphics[scale=0.2]{images/symbols/conwire1.pdf} &
            \includegraphics[scale=0.2]{images/symbols/conwire2.pdf} &
            \includegraphics[scale=0.2]{images/symbols/conwire3.pdf} &
            \includegraphics[scale=0.2]{images/symbols/conwire4.pdf}
            \end{longtable}
        to form a non-separable classical structure.  
    \end{enumerate}
    \item We also showed how to construct a composite classical structure on $N$ qubits, where $N>2$, by composing constituents on each pair of qubits in $N$ qubits via Methods \textbf{C1} or \textbf{C2}. 
\end{itemize}

\chapter{Separability of Complementary CS}\label{sec:results}

Classical structures and their spiders provide intuitive images of compatible observables and measurements, and those that are incompatible. The former is depicted by the equation in Eq. \ref{eq:decoherence} which could be interpreted as zero information loss, and the latter could be described by Eq. \ref{eq:complementarity}, where the discontinuity between the input system and the output system can be interpreted as complete information loss on the input system. There is a stronger version of complementarity, called strong complementarity (see Definition \ref{def:strong-complementarity}) \cite{Coecke2012}. An implication of strong complementarity between two classical structures, represented by $\black$ and $\blue$, is the following equation:
\begin{equation}
    \includegraphics[scale=0.2]{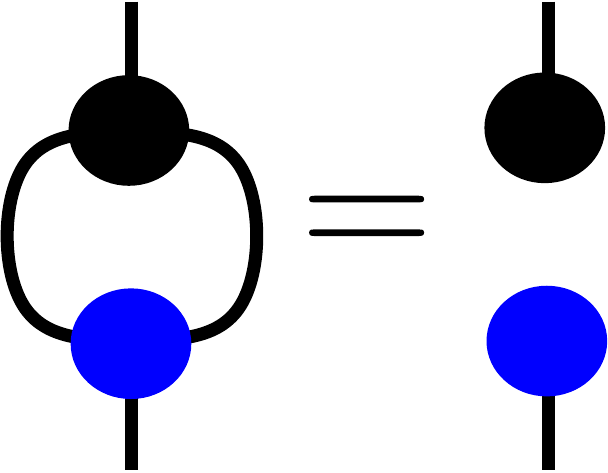}
\end{equation}
which can be thought of as the exact opposite of Eq. \ref{eq:decoherence}.

Previously, we presented two procedures for composing classical structures on a single qubit, namely $\mathcal{X},\mathcal{Y}$ and $\mathcal{Z}$, in order to obtain classical structures on multiple qubits. We devised these procedures so that we may study the entanglement structure of sets of mutually unbiased bases in qubit systems. As was discussed in the previous chapter, the separability of classical structures obtained through these procedure are apparent from the presence of the connecting wires on the legs of their spiders. In this chapter, we shall present the complete sets of complementary classical structures on two and three qubits that can be obtained by composing the complementary classical structures on a single qubit via Methods \textbf{C1} and \textbf{C2} from Section \ref{sec:chapter4-summary}. 

\section{On Notations}

To denote arbitrary spiders, we shall use spiders with nodes that are neither green nor red. Some examples are $\black$,$\blue$,$\purple$,$\pink$,$\orange$, etc. At times, we shall use spiders to represent classical structures and write $\black\in\mathcal{A}$ to mean classical structure $\mathcal{A}$ consists of spiders $\black$.

Within the diagram of a spider belonging to a composite classical structure, there are two types of spiders. The first one is the central spiders which belong to constituents of the composite classical structure. We shall use coloured nodes with their interior filled in along with their outlines. The second one is the spiders which form part of the connecting wires on the legs of the central spiders. For these spiders, we shall use nodes with only their outlines coloured while their interiors are blank. The colour of the outline of such spider depends on the central spider to which it is connected. For example, if the central spider is $\black$, then the spider on its leg is $\blackb$.

Fig. \ref{fig:gen-CCS2Q} is an example of an arbitrary composite classical structure on two qubits.
\begin{figure}[h]
    \centering
    \includegraphics[scale=0.2]{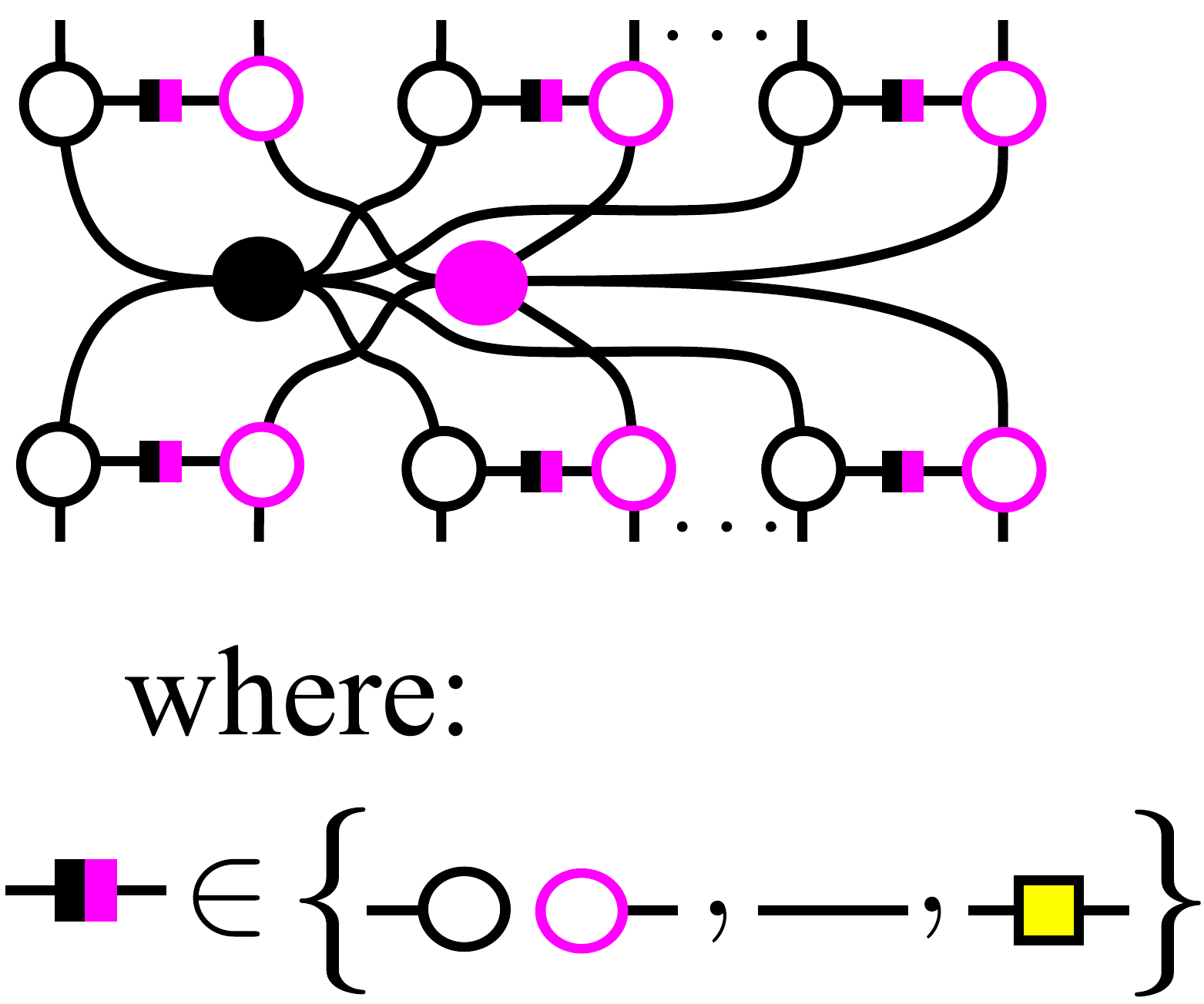}
    \caption{An arbitrary composite classical structure on two qubits.}
    \label{fig:gen-CCS2Q}
\end{figure}
The first option for $\cwbkp$ is given to include separably composed classical structures. In this case, $\blackb$ and $\pinkb$ can belong to either $\mathcal{X}$ or $\mathcal{Z}$. Due to Eqs. \ref{eq:zx1-fuse-green} and \ref{eq:zx2-fuse-red}, the leg nodes would simply disappear. If the composite classical structure is non-separable, then
the diagram:
\begin{equation*}
    \includegraphics[scale=0.2]{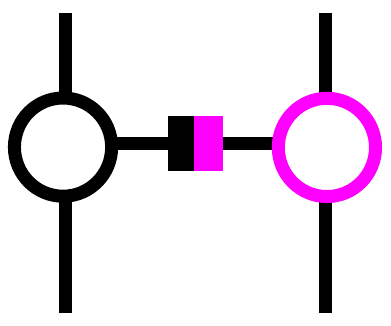}
\end{equation*}
is a connecting wire. This means that $\cwbkp$, $\blackb$, and $\pinkb$ depend on $\black$ and $\pink$ as prescribed in Table \ref{tab:connecting-wire}. For example, if $\black\in\mathcal{X}$ and $\pink\in\mathcal{Y}$, then $\blackb,\pinkb\in\mathcal{Z}$ and $\cwbkp=\had$.

\section{Complementarity Diagrams in Qubit Systems}\label{sec:CD-intro} 

Two classical structures are complementary if they satisfy the following equation:
\begin{equation}\label{eq:complementarity}
    \includegraphics[scale=0.2]{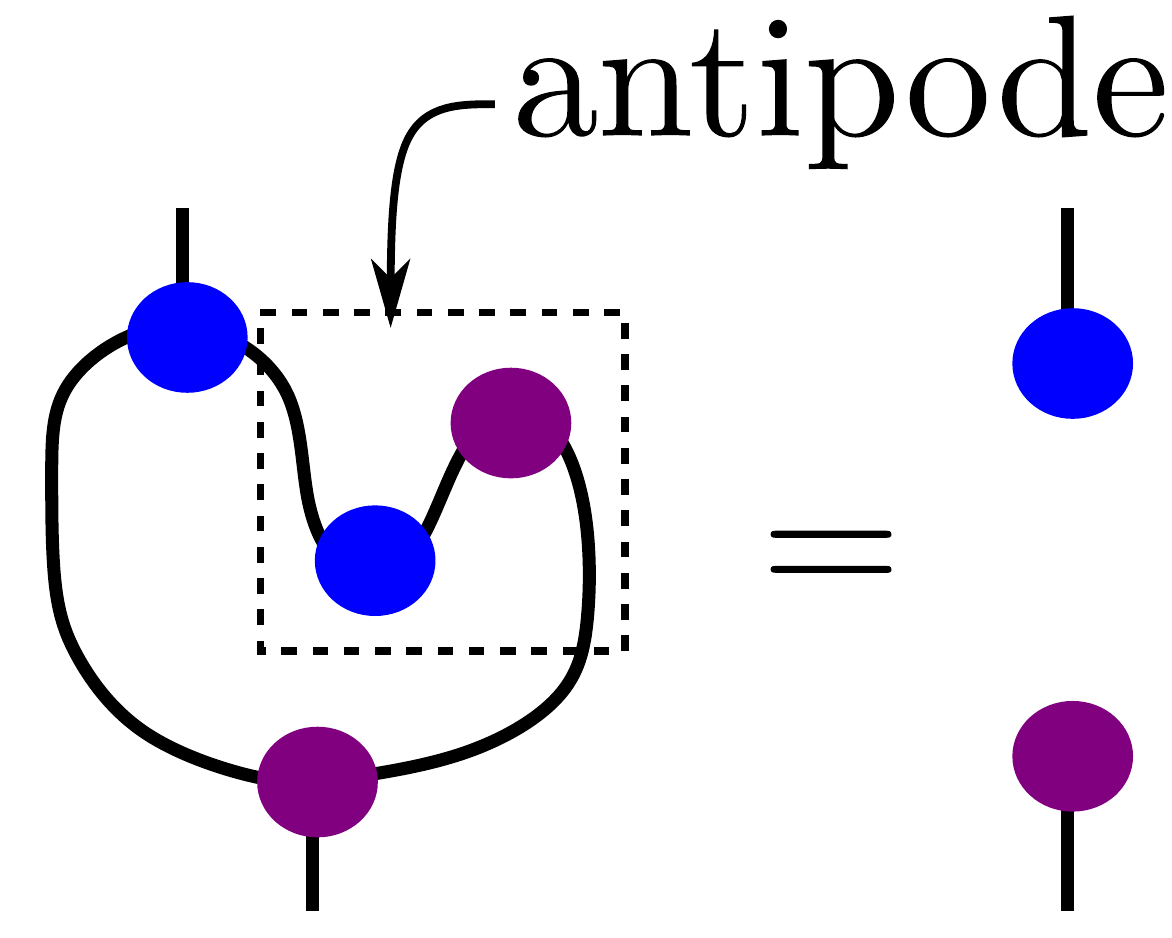}
\end{equation}
We call the above equation the \textbf{complementarity condition}, and the LHS diagram in the equation as the \textbf{complementarity diagram}, abbreviated as CD, between $\blue$ and $\purple$. 

From Eq. \ref{eq:complementarity}, we define the antipode between $\blue$ and $\purple$ as follows:
\begin{equation}
    \includegraphics[scale=0.2]{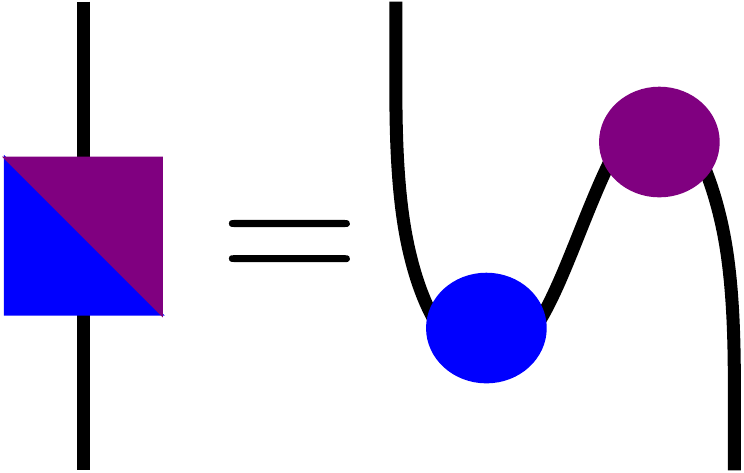}
\end{equation}
\begin{proposition}\label{prop:antipode}
Suppose $\mathcal{A},\mathcal{B},\mathcal{C},\mathcal{D}\in\{\mathcal{X},\mathcal{Y},\mathcal{Z}\}$. Let $\blue\in\mathcal{A}$, $\pink\in\mathcal{B}$, $\purple\in\mathcal{C}$ and $\orange\in\mathcal{D}$. For a complementarity diagram between a composite classical structure with constituents $\mathcal{A}$ and $\mathcal{B}$, and a composite classical structure with constituents $\mathcal{C}$ and $\mathcal{D}$, the following is true:
\begin{equation}
    \includegraphics[scale=0.2]{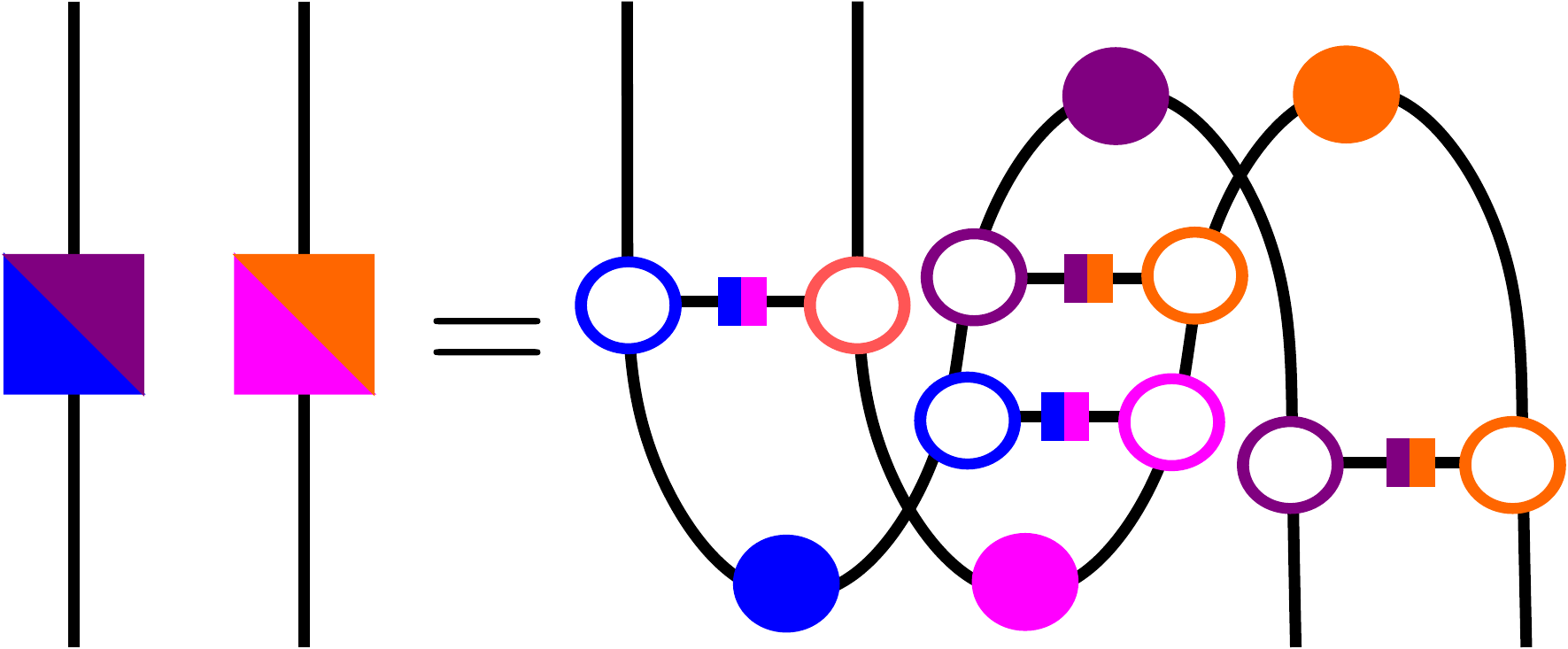}
\end{equation} 
\begin{proof}
For the 0,2-spiders of $\mathcal{X,Y}$ and $\mathcal{Z}$, we have:
\begin{equation}
\includegraphics[scale=0.2]{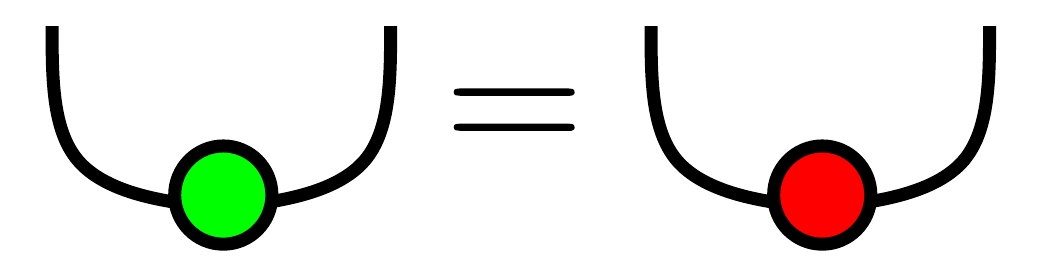}\label{eq:cupZX}
\end{equation}
\begin{equation}
\includegraphics[scale=0.2]{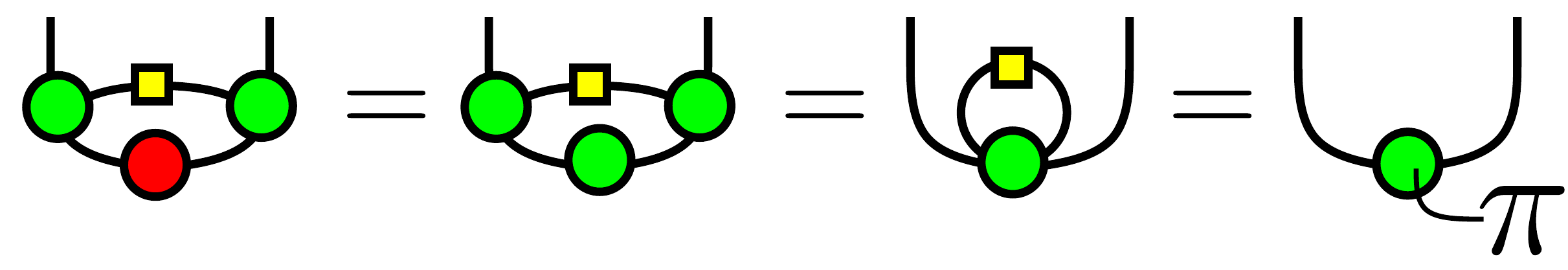}\label{eq:cupY}
\end{equation}
Therefore:
\begin{equation}
\includegraphics[scale=0.2]{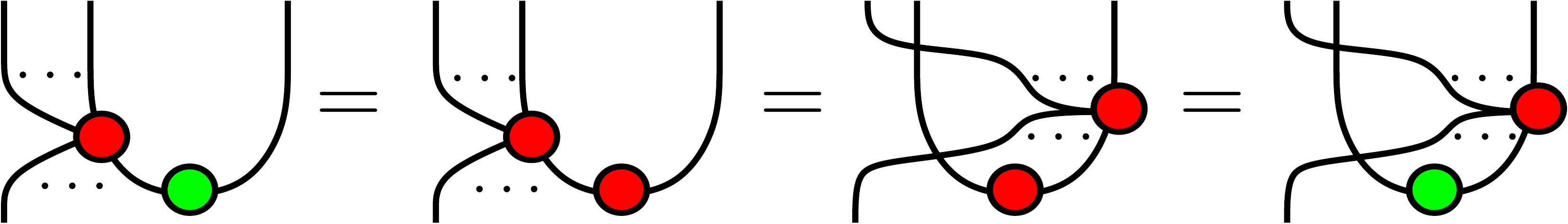}
\end{equation}
\begin{equation}
\includegraphics[scale=0.2]{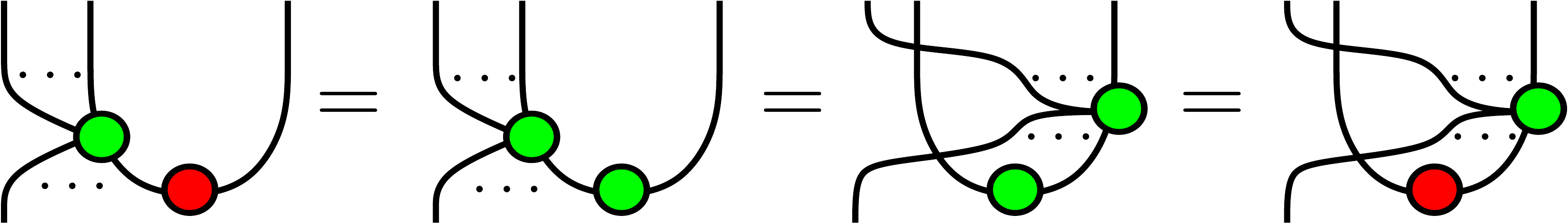}
\end{equation}
\begin{equation}
\includegraphics[scale=0.2]{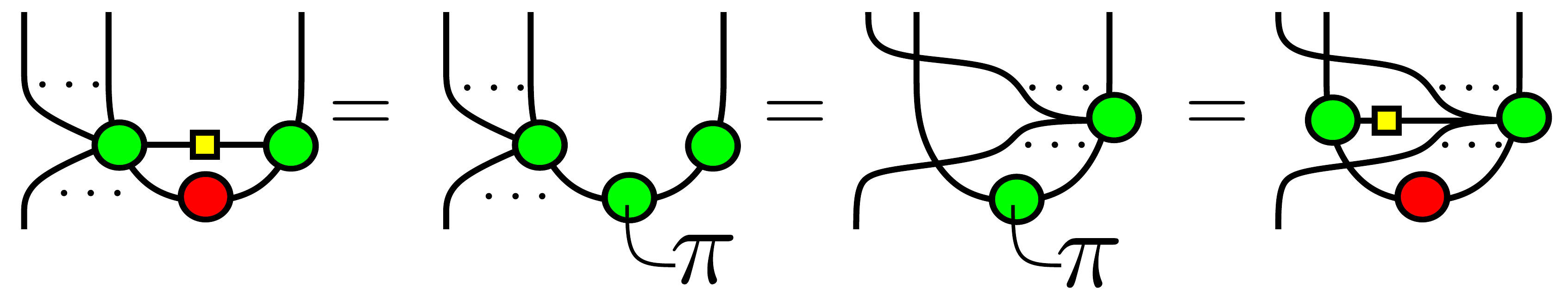}
\end{equation}
This means that for a classical structure that is either $\mathcal{X},\mathcal{Y}$ or $\mathcal{Z}$, represented by $\black$, we have:
\begin{equation}
    \includegraphics[scale=0.2]{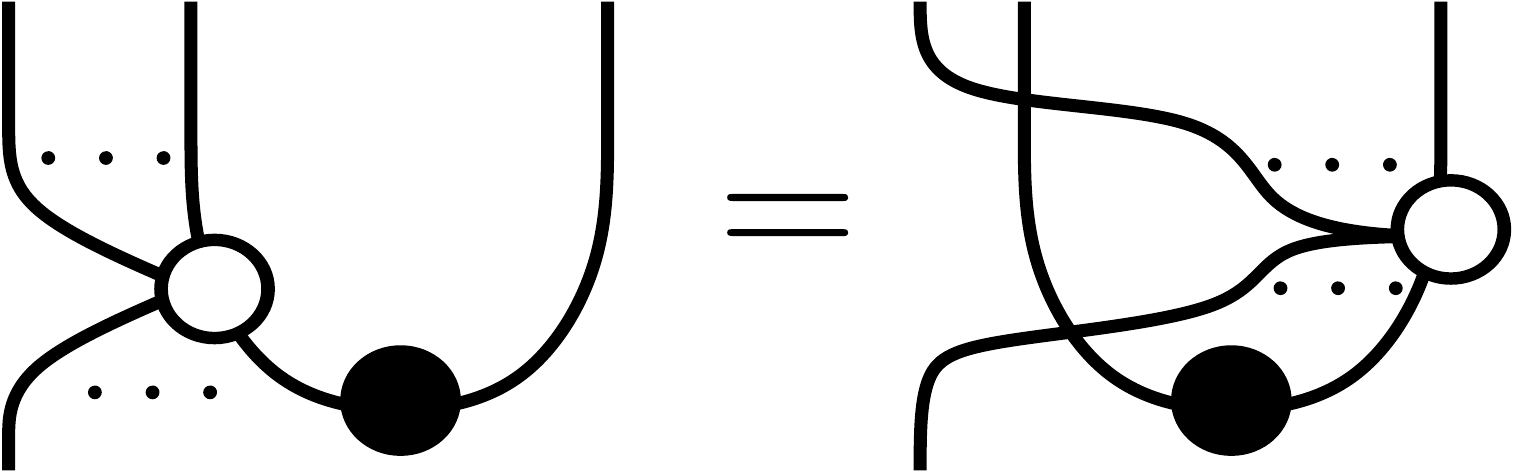}
\end{equation}
Thus:
\begin{equation*}
    \includegraphics[scale=0.2]{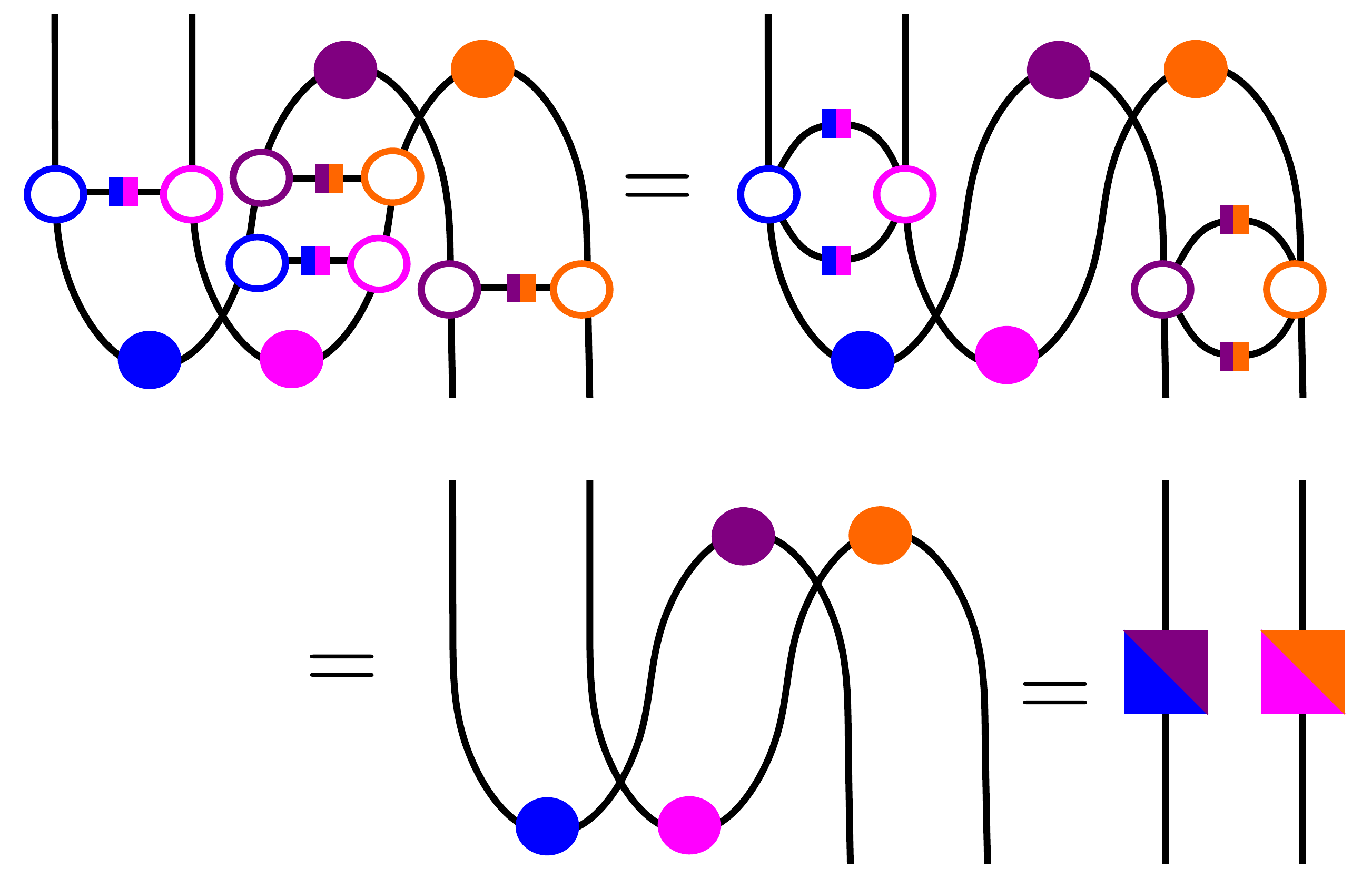}
\end{equation*}
\end{proof}
\end{proposition}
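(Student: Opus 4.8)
The plan is to reduce the statement to a purely local property of the cups (the $0,2$-spiders) of the single-qubit classical structures $\mathcal{X}$, $\mathcal{Y}$ and $\mathcal{Z}$, and then lift it to the composite setting one leg at a time. First I would write out the antipode between $\blue$ and $\purple$ using its defining equation (stated just before the proposition), so that the connecting wires prescribed by Table~\ref{tab:connecting-wire} appear explicitly on the legs that are bent around by the cups of the composite structure. The goal is then to push each connecting-wire component through its neighbouring cup so that it re-emerges, suitably transformed, on the opposite leg, which is exactly what the claimed identity records.

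The core of the argument is a sliding lemma for cups: for each constituent $\black\in\{\mathcal{X},\mathcal{Y},\mathcal{Z}\}$ I would show that a leg-node of a connecting wire composed onto one leg of the $0,2$-spider of $\black$ equals the corresponding node on the other leg. For $\mathcal{X}$ and $\mathcal{Z}$ this is immediate, since their cups are the undecorated bent wires of the standard ZX spiders, so any green or red node simply slides along the bent wire. The interesting case is $\mathcal{Y}$: because $\mathcal{Y}$ was built from $\mathcal{X}$ together with the controlled-$Z$ decorations in the previous chapter, its cup is not the bare bent wire but carries an extra $\had$/phase decoration, and this is where the real work lies.

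Concretely, I would first record the cup identities: the $0,2$-spiders of $\mathcal{X}$ and $\mathcal{Z}$ are bent wires, while that of $\mathcal{Y}$ equals a bent wire decorated by the single-qubit unitary coming from its construction. With these in hand, the three colour-specific sliding identities should follow from the green and red $\pi$-copy rules (Eqs.~\ref{eq:zx7-copy-green-pi} and \ref{eq:zx8-copy-red-pi}), the colour change rule (Eq.~\ref{eq:zx12-had-transform}) and the Hadamard relations, and they then package into a single generic sliding identity valid for any $\black\in\{\mathcal{X},\mathcal{Y},\mathcal{Z}\}$.

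Finally, I would apply the generic sliding identity to each leg of the composite cup in turn. Since the composite classical structures are built qubit-wise, with constituents $\mathcal{A},\mathcal{B}$ on one pair and $\mathcal{C},\mathcal{D}$ on the other, sliding the connecting wires past the two cups independently and reassembling yields precisely the asserted equation. The main obstacle, as flagged, is verifying the sliding identity in the $\mathcal{Y}$ case: the extra decoration on the $\mathcal{Y}$-cup must be shown to commute correctly with the connecting-wire nodes, which demands careful bookkeeping of $\had$'s and $\pi$-phases via the ZX rewrite rules, rather than the trivial self-duality that disposes of $\mathcal{X}$ and $\mathcal{Z}$.
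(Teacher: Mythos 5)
Your proposal matches the paper's own proof essentially step for step: it first records the $0{,}2$-spider (cup) identities for $\mathcal{X}$, $\mathcal{Y}$ and $\mathcal{Z}$, then derives a colour-specific sliding identity for each, packages them into a single generic sliding rule for any $\black\in\{\mathcal{X},\mathcal{Y},\mathcal{Z}\}$, and finally applies that rule leg by leg to the composite cup to obtain the stated equation. Your identification of the $\mathcal{Y}$ cup as the only nontrivial case is exactly where the paper's computation concentrates as well, so there is nothing to add.
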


\begin{proposition}\label{prop:fuse-antipode}
Eqs \ref{eq:cupZX} and \ref{eq:cupY} also imply that a complementarity diagram between a pair of classical structures in $\{\mathcal{X},\mathcal{Y},\mathcal{Z}\}$ can be simplified as follows:
\begin{equation}\label{eq:fuse-antipode}
    \includegraphics[scale=0.2]{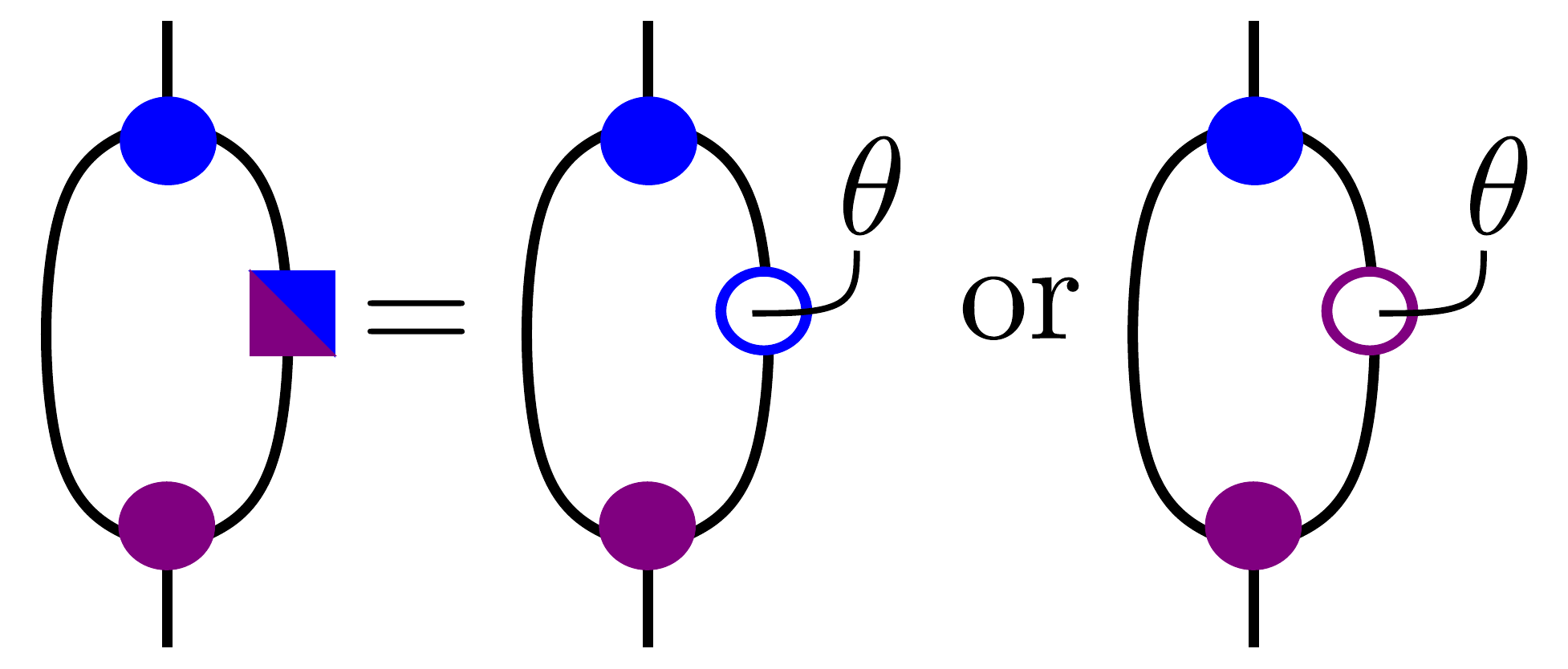}
\end{equation}
where $\theta=0$ if $\blue$ and $\purple$ both belong to either $\mathcal{X}$ or $\mathcal{Z}$ or they both belong to $\mathcal{Y}$,
and $\theta=\pi$ if $\blue$ belongs to either $\mathcal{X}$ or $\mathcal{Z}$ and $\purple\in\mathcal{Y}$ or vice versa.
\begin{proof}
If $\blue$ and $\purple$ belong to either $\mathcal{X}$ or $\mathcal{Z}$, then both sides of Eq. \ref{eq:fuse-antipode} are equal to:
\begin{equation*}
    \includegraphics[scale=0.2]{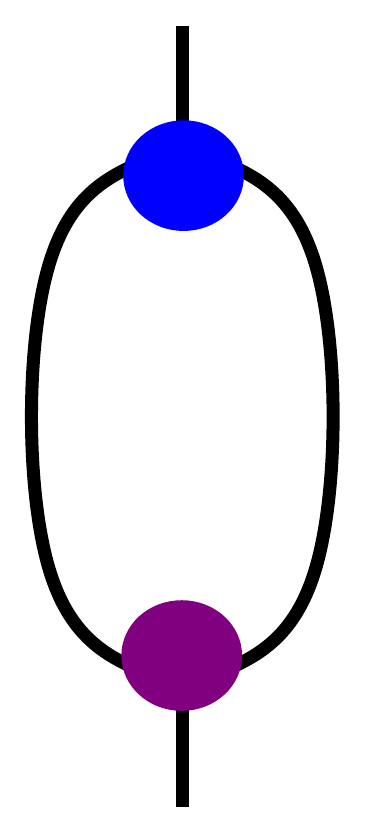}
\end{equation*}
due to eq. \ref{eq:cupZX} for LHS and due to the fact that the 1,1-spider with zero phase of any classical structure is the identity for RHS.

If both $\blue$ and $\purple$ belong to $\mathcal{Y}$, then the antipode must be the identity as the antipode between any classical structure and itself is the identity.

If $\blue\in\mathcal{Y}$ and $\purple$ belong to either $\mathcal{X}$ or $\mathcal{Z}$, then LHS of eq. \ref{eq:fuse-antipode} is equal to:
\begin{equation*}
    \includegraphics[scale=0.2]{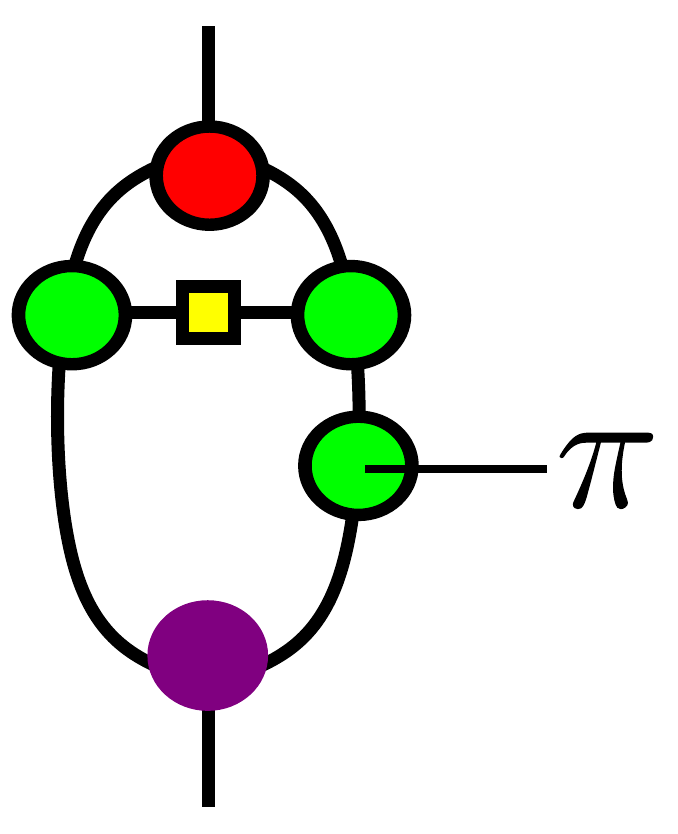}
\end{equation*}
If $\purple\in\mathcal{Y}$ and $\blue$ either belongs to $\mathcal{X}$ or $\mathcal{Z}$, then LHS of eq. \ref{eq:fuse-antipode} is equal to a similar diagram as the adjoint of the diagram above, except the purple node is replaced with a blue node.
\end{proof}
\end{proposition}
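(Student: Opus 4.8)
The plan is to reduce the single-qubit complementarity diagram to a single fused spider by a case analysis on which of $\mathcal{X},\mathcal{Y},\mathcal{Z}$ the two structures $\blue$ and $\purple$ belong to, using the cup identities of Eqs.~\ref{eq:cupZX} and~\ref{eq:cupY} together with the sliding identity established in Proposition~\ref{prop:antipode}. The central idea is that, as far as the complementarity diagram is concerned, the only difference between the three classical structures lives in their $0,2$-spiders (cups): the cups of $\mathcal{X}$ and $\mathcal{Z}$ coincide up to a scalar, whereas the cup of $\mathcal{Y}$ differs from them by a $\pi$-phase inserted on one leg. Hence sliding the $\purple$-node around the cup onto the wire of the $\blue$-node either leaves no phase behind (when the two cups agree) or deposits a single $\pi$-phase (when exactly one of them is the $\mathcal{Y}$-cup), and this residual phase is precisely the $\theta$ of Eq.~\ref{eq:fuse-antipode}.

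First I would invoke Proposition~\ref{prop:antipode} to push the $\purple$-spider along the cup so that it sits on the same wire as the $\blue$-spider; once the two central nodes are adjacent, spider fusion collapses the diagram to a single $1,1$-spider carrying whatever phase has accumulated. This isolates the computation of $\theta$ from the surrounding structure and turns the proposition into a purely local statement about cups.

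Then I would split into the four cases of the statement. When $\blue,\purple\in\{\mathcal{X},\mathcal{Z}\}$, Eq.~\ref{eq:cupZX} gives that the relevant cups agree, no phase is produced, and both sides of Eq.~\ref{eq:fuse-antipode} reduce to the bare $1,1$-spider with zero phase, which is the identity; so $\theta=0$. When $\blue,\purple\in\mathcal{Y}$, the antipode between a classical structure and itself is the identity, so again $\theta=0$. When exactly one of $\blue,\purple$ lies in $\mathcal{Y}$, Eq.~\ref{eq:cupY} shows the $\mathcal{Y}$-cup differs by a $\pi$-phase, so the slide deposits exactly that phase and $\theta=\pi$; the two sub-cases ($\blue\in\mathcal{Y}$ versus $\purple\in\mathcal{Y}$) are mirror images, related by taking the adjoint of the diagram and interchanging the two node colours.

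The step I expect to be the main obstacle is the careful bookkeeping in the mixed case: I must verify that sliding across the mismatched $\mathcal{Y}$-cup deposits the $\pi$-phase on the correct wire and that no additional phase is hidden in the antipode, since $\mathcal{Y}$ comes from a genuinely complex basis and its conjugation is nontrivial. Confirming that the accumulated phase is exactly $\theta\in\{0,\pi\}$ and nothing more --- and that the remaining data reassembles into the single spider on the right-hand side of Eq.~\ref{eq:fuse-antipode} up to the scalars we are ignoring --- is where the real content lies; the remaining cases are then routine applications of Eqs.~\ref{eq:cupZX} and~\ref{eq:cupY} and the fusion rule.
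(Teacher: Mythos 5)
Your proposal is correct and follows essentially the same route as the paper: a case analysis on which of $\mathcal{X},\mathcal{Y},\mathcal{Z}$ the two spiders belong to, using Eqs.~\ref{eq:cupZX} and~\ref{eq:cupY} to observe that the only difference between the cups is a possible $\pi$-phase on one leg, handling the $\mathcal{Y},\mathcal{Y}$ case via the fact that the antipode of a structure with itself is the identity, and treating the two mixed cases as adjoint/colour-swapped mirror images. The explicit appeal to the sliding step of Proposition~\ref{prop:antipode} before fusing is a slightly more spelled-out version of what the paper does implicitly, but the substance is identical.
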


As a consequence of Propositions \ref{prop:antipode} and \ref{prop:fuse-antipode}, there are three types of complementarity diagrams we may obtain between composite classical structures on two qubits that are produced by Method \textbf{C1} or \textbf{C2}, or both. These are the diagrams below:
\begin{longtable}{ccc}
    \includegraphics[scale=0.2]{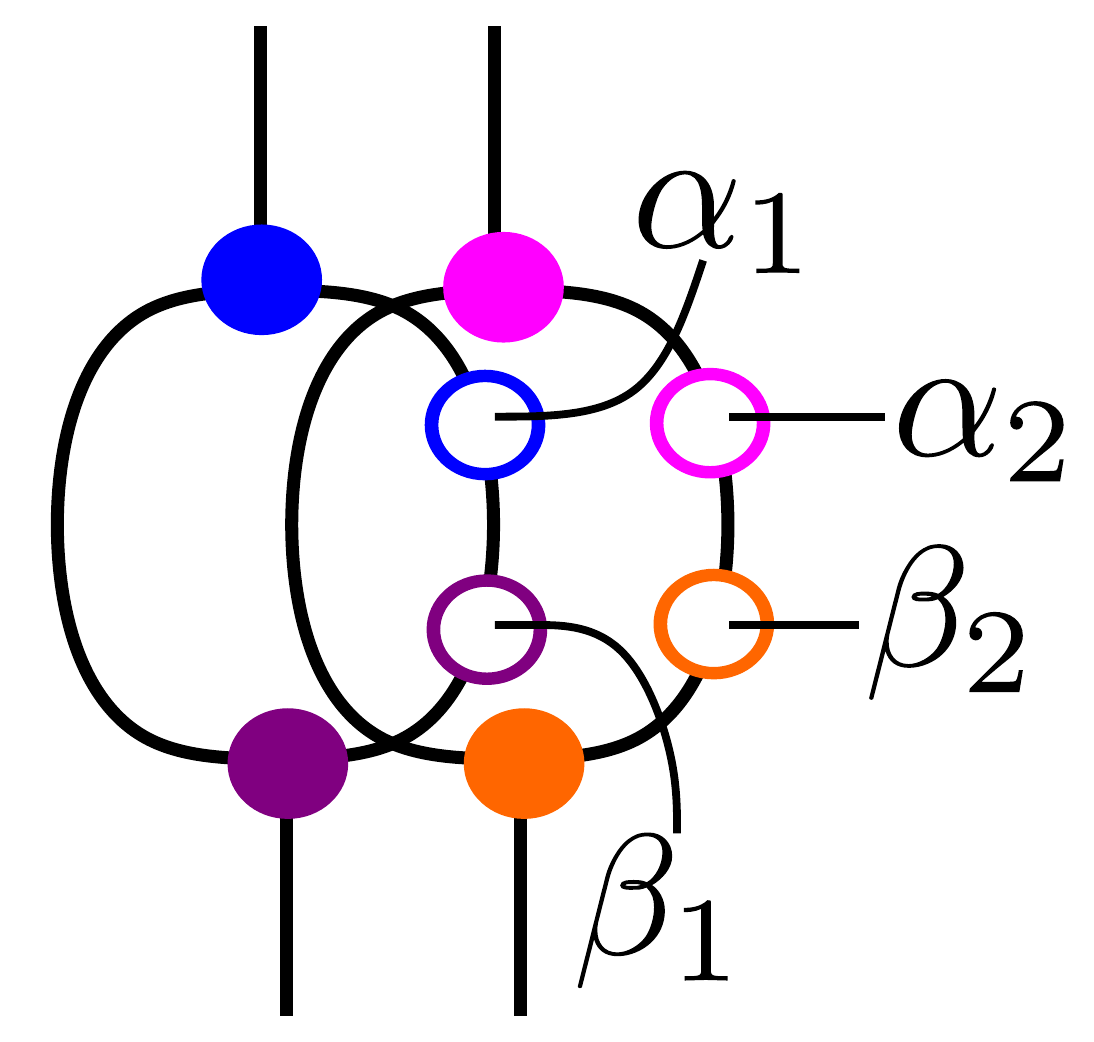} &
    \includegraphics[scale=0.2]{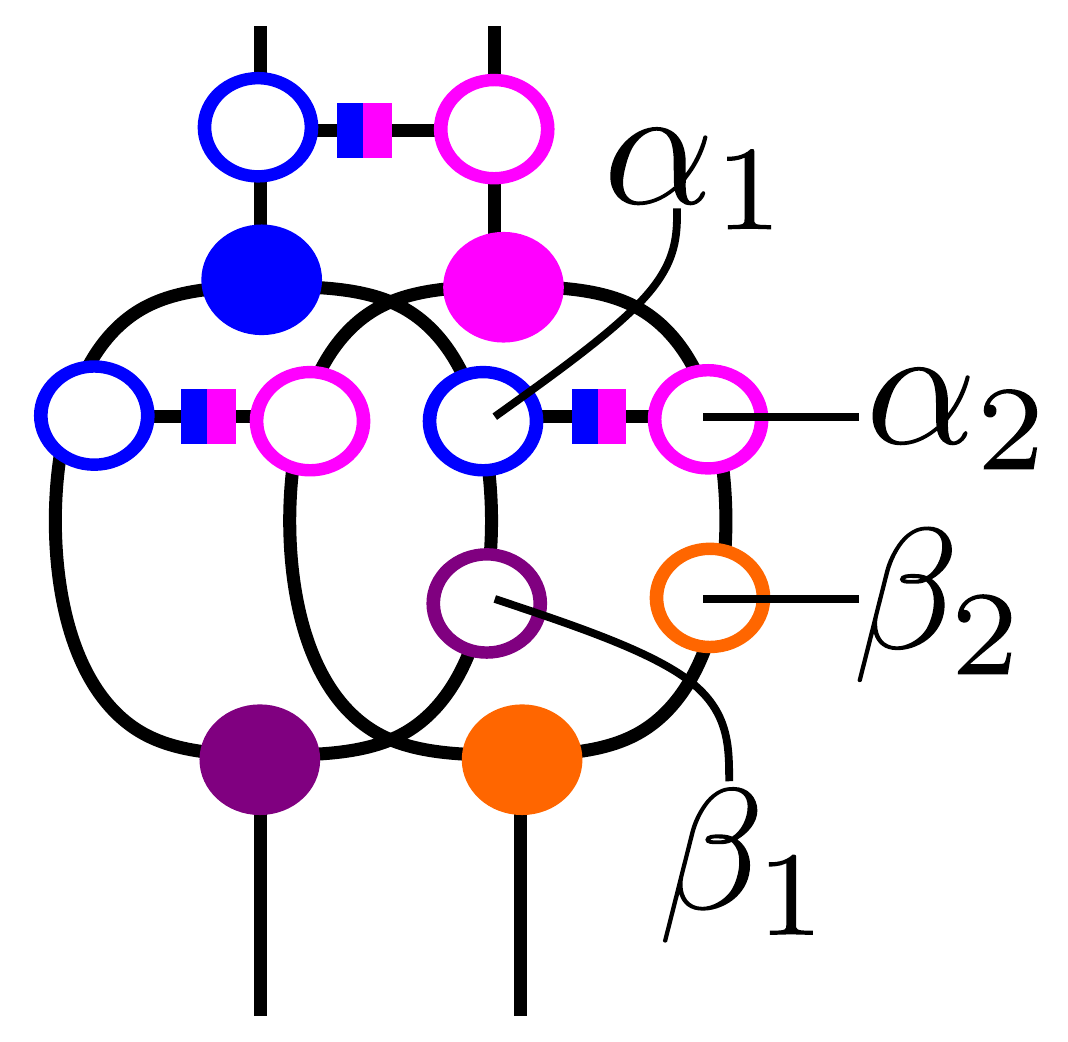} &
    \includegraphics[scale=0.2]{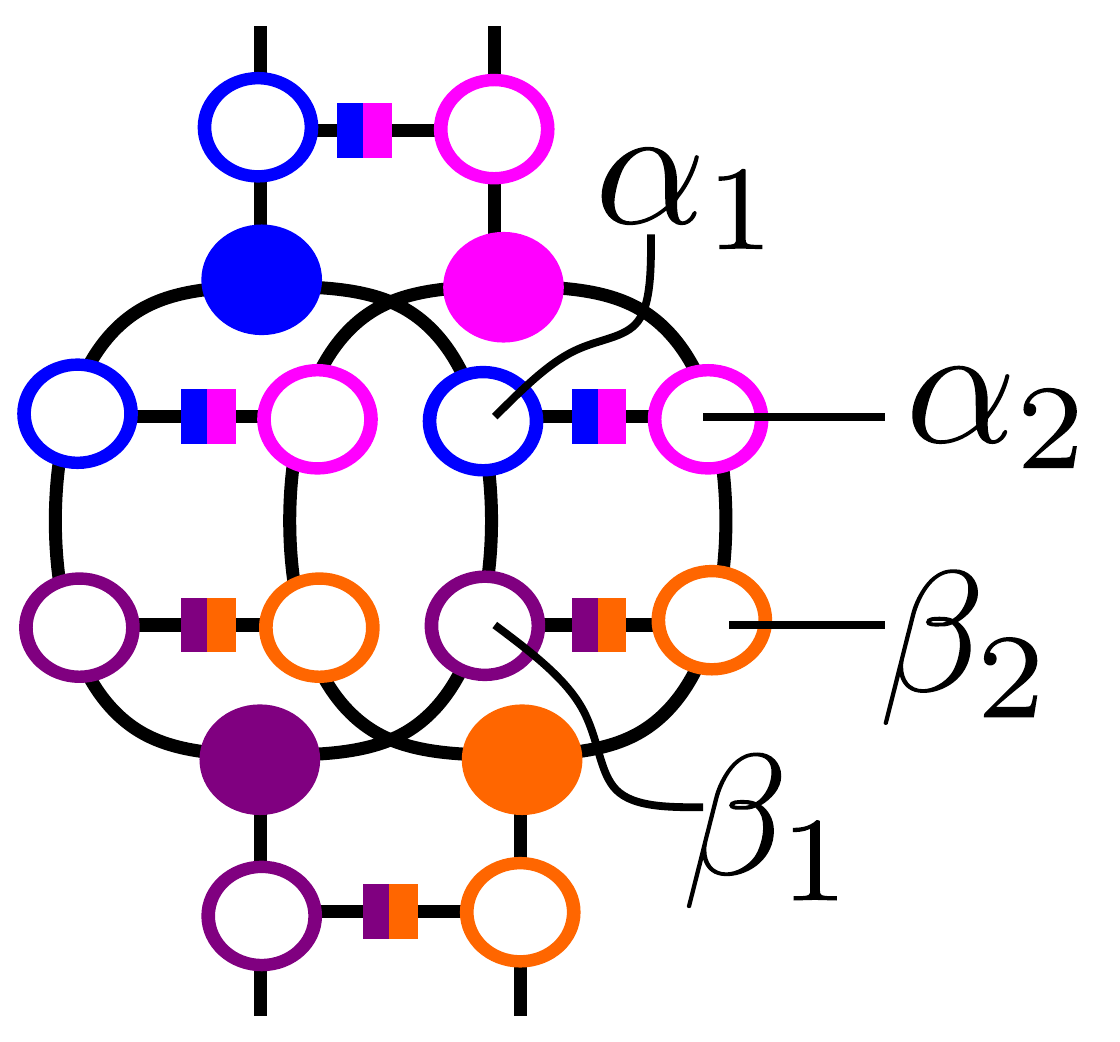}
\end{longtable}
\noindent where $\alpha_i,\beta_j\in\{0,\pi\}$ for $i,j\in\{1,2\}$. 

Each of the diagrams above satisfies the complementarity condition if it is equal to the following diagram:
\begin{equation*}
    \includegraphics[scale=0.2]{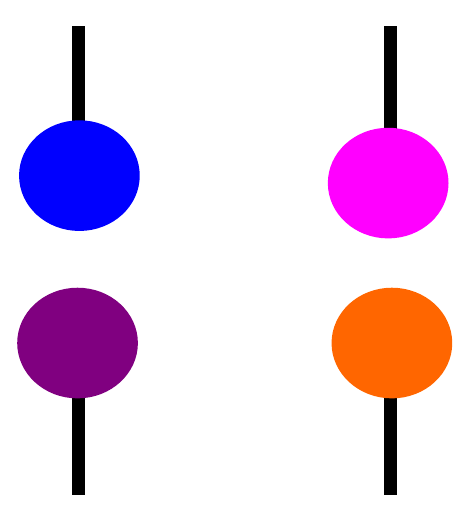}
\end{equation*}
However, we will not be able to rewrite the preceding diagrams using the rules of ZX-calculus unless we fix the classical structures to which $\blue,\pink,\purple$ and $\orange$ belong. This would be quite tedious since we would have to consider all the possibilities for each spider and connecting wires in the diagrams above, and even more so when determining the complementarity between a pair of composite classical structures on $N$ qubits where $N>2$.

A simpler way to go about this is to partition all the complementarity diagrams into equivalence classes. In doing so, we only need to check the complementarity of a representative diagram for each class. 

\begin{theorem}\label{thm:comp-diagram-equivalence}
For $N$ qubits, let $\mathsf{D}_N$ be the set of all complementarity diagrams between a pair of composite classical structures on $N$ qubits as defined in Definition \ref{def:composite-CS}. Suppose the relation $r\in\mathsf{D}_N\times\mathsf{D}_N$ is defined as follows: $D_1rD_2$ if and only if there exists a unitary $U=P\circ V_1\otimes V_2\otimes\cdots\otimes V_N$ where $V_j\in\{\text{id}_{\mathbb{C}^2},\had\}$ and $P$ is a permutation on $N$ qubits, such that $D_2=U^\dagger\circ D_1\circ U$ or $D_2=U^\dagger\circ D_1^\dagger\circ U$. Then $r$ is an equivalence. 
\begin{proof}
For a diagram $D\in\mathsf{D}_N$, $DrD$ since the identity of $(\mathbb{C}^2)^{\otimes N}$ for any $N\in\mathbb{N}$ is unitary.

Suppose for $D_1,D_2\in\mathsf{D}$, $D_1rD_2$. Then there is some unitary $U$ on $N$ qubits such that $D_2=U^\dagger\circ D_1\circ U$ or $D_2=U^\dagger\circ D_1^\dagger\circ U$. This means that $D_1=U\circ D_2 U^\dagger$ or $D_1=U\circ D_2^\dagger \circ U^\dagger$. So, $D_2rD_1$.

Suppose for $D_1,D_2,D_3\in\mathsf{D}$, $D_1rD_2$ and $D_2rD_3$. Then $D_2=U_1^\dagger\circ D_1\circ U_1$ or $D_2=U_1^\dagger\circ D_1^\dagger\circ U_1$, or $D_3=U_2^\dagger\circ D_2\circ U_2$ or $D_3=U_2^\dagger\circ D_2^\dagger\circ U_2$ for some unitaries $U_1, U_2$ on $N$ qubits. So  $D_3$ must be equal to either $U_3^\dagger\circ D_1\circ U_3$ or $U_3^\dagger\circ D_1^\dagger\circ U_3$ where $U_3\in\{U_2\circ U_1,U_2\circ U_1^\dagger, U_2^\dagger\circ U_1,U_2^\dagger\circ U_1^\dagger\}$. So, $D_1rD_3$.
\end{proof}
\end{theorem}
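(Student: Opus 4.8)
The plan is to verify the three defining properties of an equivalence relation — reflexivity, symmetry, and transitivity — where the single structural fact that makes all three go through cleanly is that the admissible unitaries form a group. Concretely, I would first let $G$ denote the set of all unitaries of the form $U = P \circ (V_1 \otimes \cdots \otimes V_N)$ with $P$ a permutation of the $N$ qubits and each $V_j \in \{\mathrm{id}_{\mathbb{C}^2}, \had\}$, and prove that $G$ is closed under composition and under taking adjoints. Three elementary observations drive this: (i) $\had$ is both self-adjoint and self-inverse, and so is $\mathrm{id}$, whence any local factor $V = V_1 \otimes \cdots \otimes V_N$ satisfies $V^\dagger = V$ and the two-element set $\{\mathrm{id}, \had\}$ is closed under multiplication; (ii) a qubit permutation is unitary with $P^\dagger = P^{-1}$ again a permutation; and (iii) conjugating a tensor product by a permutation merely permutes its factors, so a permutation can be slid past a local factor at the cost of relabelling. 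Combining these, given $U, U' \in G$ I can rewrite $U \circ U'$ by moving the permutation of $U$ rightward past the local part of $U'$, collecting the two permutations into one and fusing the local factors componentwise; since each component product lands back in $\{\mathrm{id}, \had\}$, the result is again of the required form. The same sliding argument shows $U^\dagger \in G$.

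With the group property in hand, the three properties are short. For reflexivity I would take $U = \mathrm{id} \in G$, giving $D = U^\dagger \circ D \circ U$ for every $D \in \mathsf{D}_N$. For symmetry, if $D_2 = U^\dagger \circ D_1 \circ U$ with $U \in G$, then $D_1 = U \circ D_2 \circ U^\dagger = (U^\dagger)^\dagger \circ D_2 \circ (U^\dagger)$ with $U^\dagger \in G$; the adjoint branch $D_2 = U^\dagger \circ D_1^\dagger \circ U$ is handled identically after applying $\dagger$ to both sides and using that $\dagger$ is involutive and contravariant, yielding $D_1 = (U^\dagger)^\dagger \circ D_2^\dagger \circ (U^\dagger)$.

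For transitivity I would write out the four combinations arising from $D_1 r D_2$ and $D_2 r D_3$, since each of the two relations may or may not carry a dagger. In every case, substituting one relation into the other and using contravariance of $\dagger$ collapses the two conjugations into a single conjugation by the composite $U_3 := U_1 \circ U_2$, which lies in $G$ by the closure established above; a dagger survives on $D_1$ precisely when an odd number of the two input relations carried one. Hence $D_1 r D_3$, completing the argument.

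The main obstacle — and the step the stated proof passes over quickly when it merely asserts $U_3 \in \{U_2 \circ U_1, \dots\}$ — is exactly the closure of the admissible class under composition: one must check that a product of two ``permutation-times-local-Hadamards'' unitaries is again of that shape, which is where the interplay between permuting factors and the multiplicative closure of $\{\mathrm{id}, \had\}$ is essential. Everything else is routine bookkeeping of adjoints. I would therefore foreground the group lemma as the crux and treat reflexivity, symmetry, and transitivity as corollaries of it.
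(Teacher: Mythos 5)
Your proposal is correct and follows the same overall route as the paper's proof --- a direct verification of reflexivity, symmetry, and transitivity --- but it adds one genuinely important ingredient that the paper omits: the closure lemma showing that the admissible unitaries $U=P\circ(V_1\otimes\cdots\otimes V_N)$ with $V_j\in\{\text{id}_{\mathbb{C}^2},H\}$ form a group. The paper's proof only ever uses the fact that $U$ is unitary, and the author explicitly concedes this in the remark following the theorem; as literally stated, however, the relation $r$ requires the conjugating unitary to have the prescribed permutation-times-local-Hadamards form, so transitivity is incomplete without checking that $U_1\circ U_2$ is again of that form. Your argument supplies exactly this: sliding the permutation past the local tensor factor (at the cost of relabelling the factors), composing the two permutations, and using that $\{\text{id},H\}$ is closed under multiplication because $H$ is self-inverse, with self-adjointness of $H$ handling closure under $\dagger$. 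Your bookkeeping in the transitivity step is also cleaner than the paper's: substituting $D_2=U_1^\dagger\circ D_1^{(\dagger)}\circ U_1$ into $D_3=U_2^\dagger\circ D_2^{(\dagger)}\circ U_2$ always yields conjugation by $U_3=U_1\circ U_2$ (with a dagger on $D_1$ iff an odd number of the two input relations carried one), whereas the paper's list $\{U_2\circ U_1,U_2\circ U_1^\dagger,\dots\}$ introduces spurious daggers and reverses the order --- harmless only because, as you prove, the class is a group. In short, your version proves the theorem as stated; the paper's proves a slightly weaker statement (equivalence under conjugation by arbitrary unitaries) and waves at the rest.
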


The proof for Theorem \ref{thm:comp-diagram-equivalence} uses the property of the unitary instead of its specific definition in the statement, i.e. as a composition of a permutation on $N$ qubits, $\text{id}_{\mathbb{C}^2}$ and $\had$. So the theorem would still hold even if we are not as precise about the unitary as we were in the theorem's statement. However, since we have limited the constituents of the composite classical structures in the present work to $\mathcal{X},\mathcal{Y}$ and $\mathcal{Z}$, only a unitary as defined in Theorem \ref{thm:comp-diagram-equivalence} can transform a diagram in $\mathsf{D}_N$ and into another diagram in the set.

\begin{definition}\label{def:slice}
Let $\mathsf{D}_N$ be defined as in Theorem \ref{thm:comp-diagram-equivalence}. A diagram in $\mathsf{D}_N$ can be split into $N$ slices. Each slice takes the following form:
\begin{equation*}
    \includegraphics[scale=0.2]{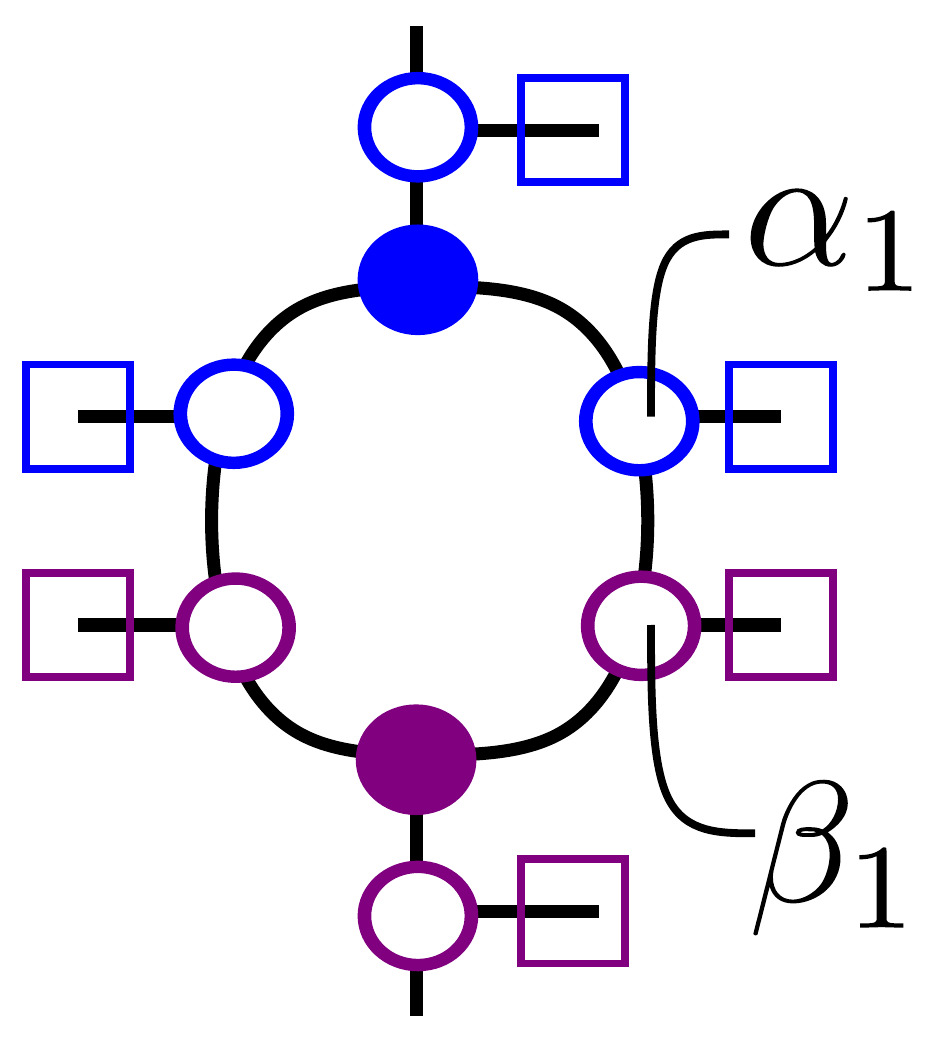}
\end{equation*}
\end{definition}

We have used !-boxes (see Definition \ref{def:!-boxes}) to indicate the connecting wires on the top and bottom spiders in the diagram. Above, !-boxes with the same colour must contain the same number of wires. 

Suppose for $D_1,D_2\in\mathsf{D}_N$, $D_1=U^\dagger\circ D_2\circ U$ where $U=V_1\otimes\cdots\otimes V_N$ such that for some $m\in\{1,...,N\}$, $V_m=\had$ and $V_j=\text{id}_{\mathbb{C}^2}$ for $j\not=m$. Then $D_1$ is equivalent to $D_2$. So the slice above can be represented by the following diagrams:
\begin{equation*}
    \includegraphics[scale=0.2]{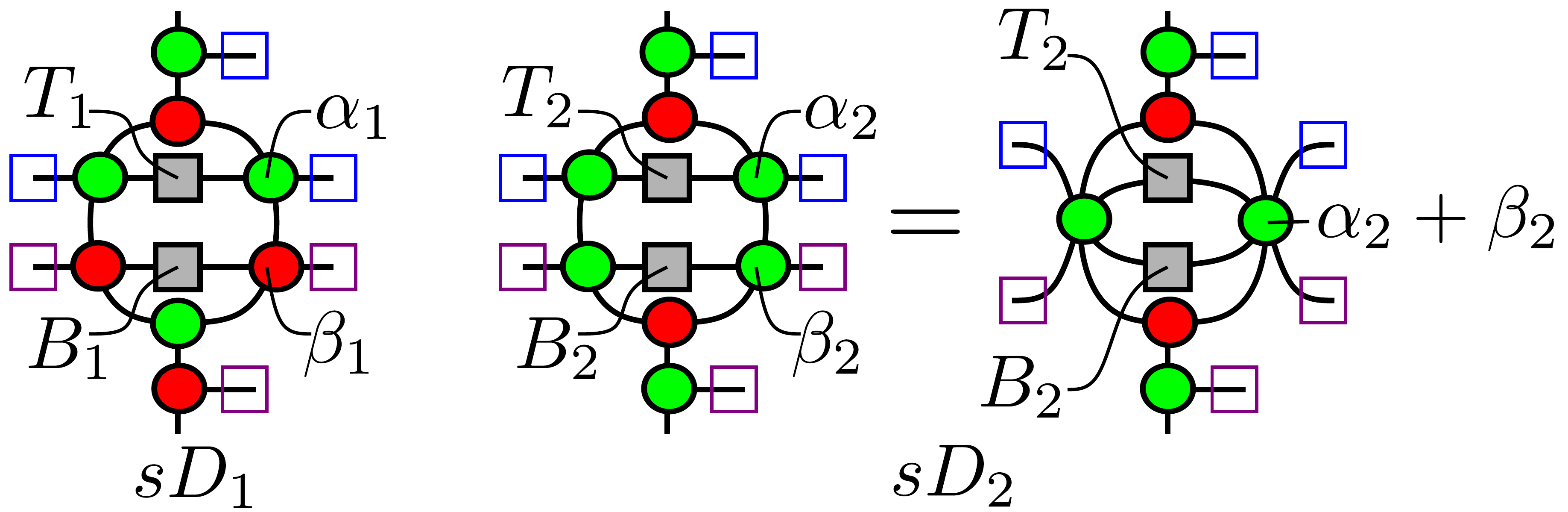}
\end{equation*}
where:
\begin{equation*}
    \includegraphics[scale=0.2]{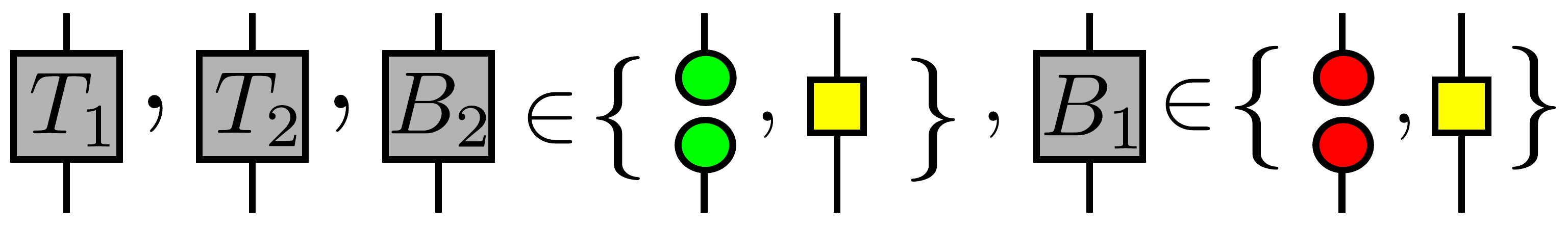}
\end{equation*}
For $sD_1$ we have allowed for bottom spider to take the form:
\begin{equation*}
    \includegraphics[scale=0.2]{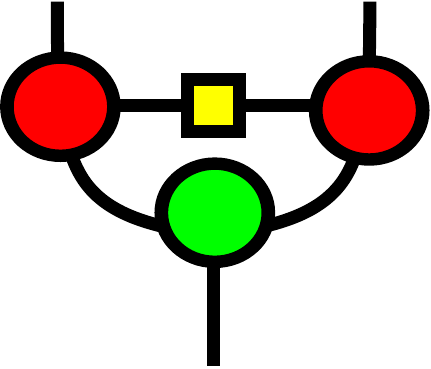}
\end{equation*}
However, we have not introduced a classical structure with such a spider. We do this because the connecting wires on the top and bottom spiders are not interchangeable. So $sD_1$ includes the following diagram:
\begin{equation*}
    \includegraphics[scale=0.2]{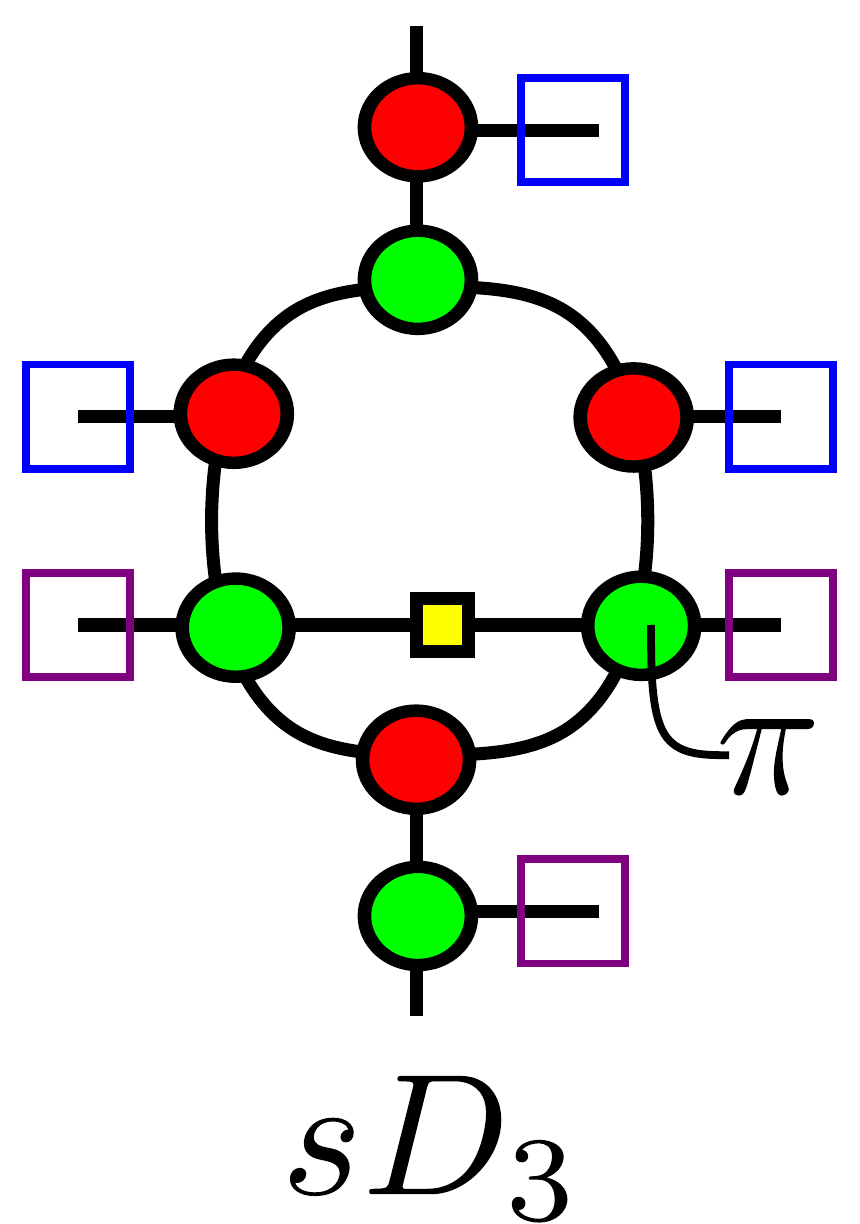}
\end{equation*}

There are three forms which $sD_1$ can take:
\begin{equation*}
    \includegraphics[scale=0.2]{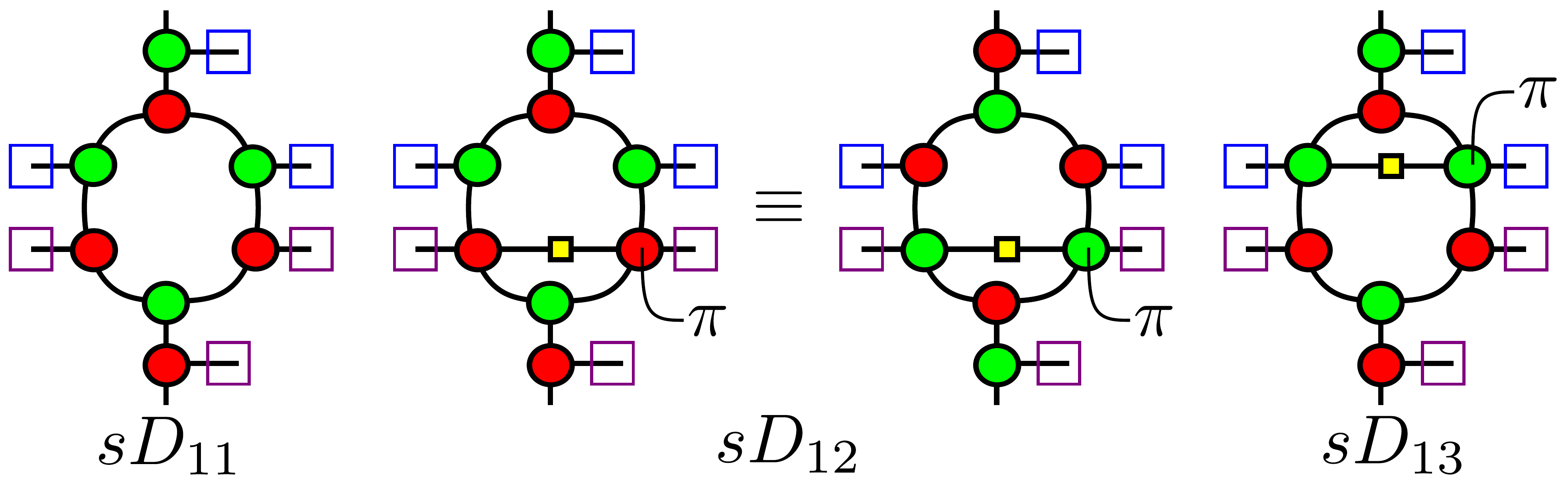}
\end{equation*}

And $sD_2$ can take two forms:
\begin{equation*}
    \includegraphics[scale=0.2]{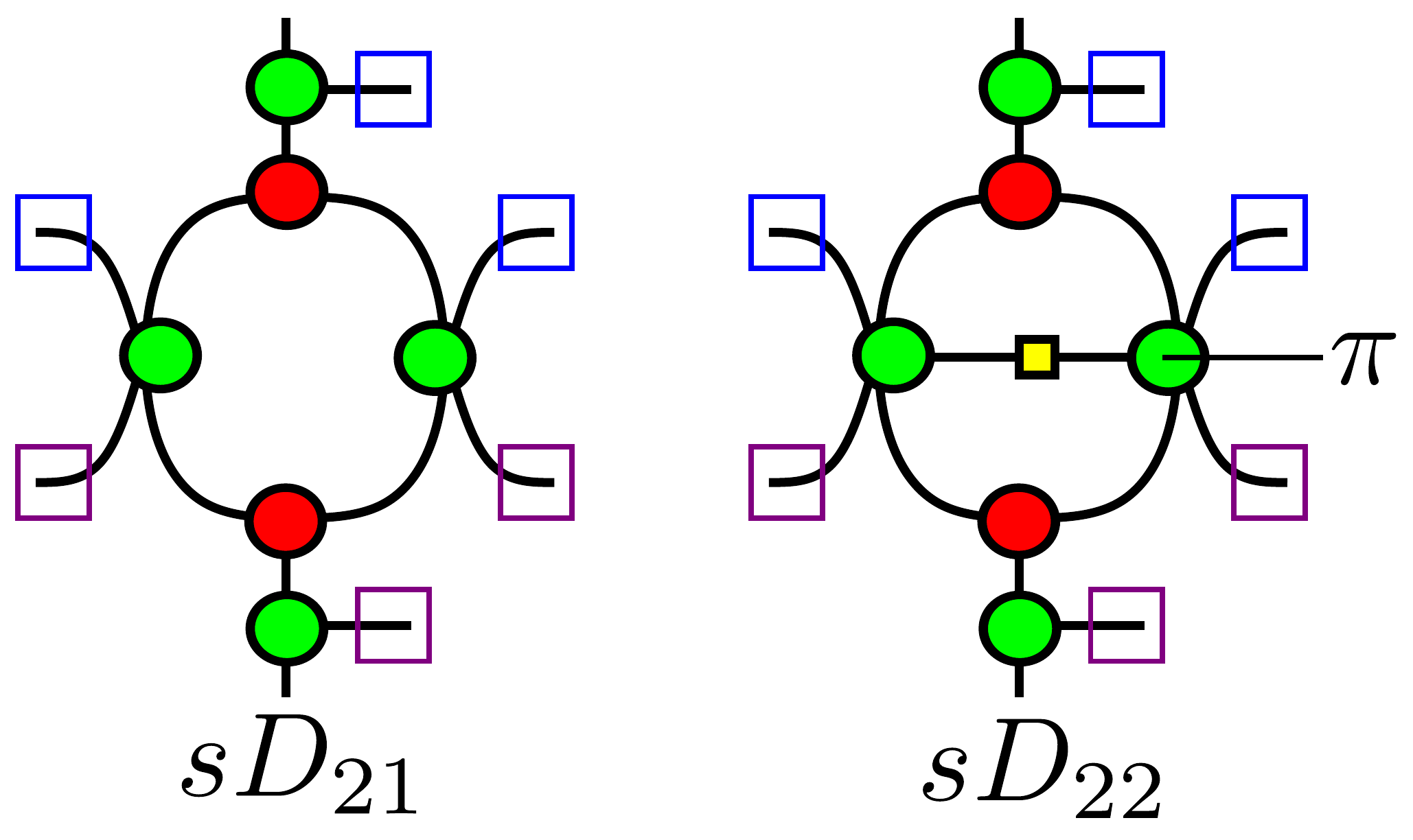}
\end{equation*}

To check complementarity between all pairs of composite classical structures on $N$ qubits, it is sufficient to look at composite complementarity diagrams consisting of $sD_{11},sD_{12},sD_{13},sD_{21}$ and $sD_{22}$.

\begin{proposition}\label{prop:slice-special-case}
Let $m$ be the number of wires contained in the blue !-box and $n$ be the number wires contained in the purple !-box. Then $sD_{11}=sD_{12}$ if $m=0$ and $sD_{11}=sD_{13}$ if $n=0$.
\begin{proof}
Suppose $m=0$. Then $sD_{11}$ can be rewritten as:
\begin{equation*}
    \includegraphics[scale=0.2]{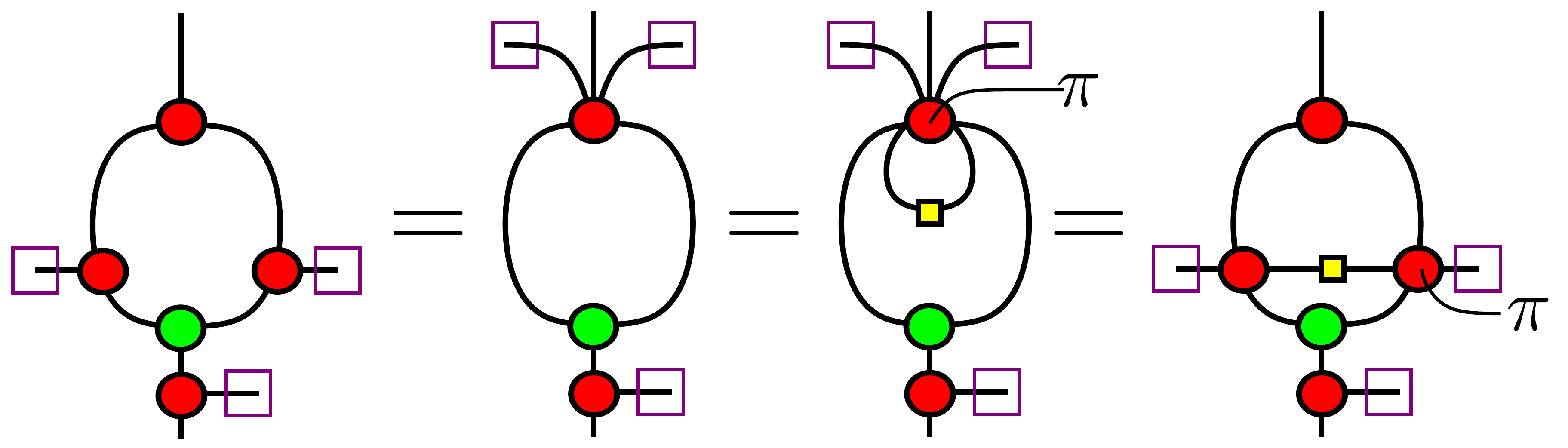}
\end{equation*}
$sD_{11}=sD_{13}$ if $n=0$ can be proven in a similar way.
\end{proof}
\end{proposition}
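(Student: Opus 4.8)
The plan is to prove both equalities by a direct application of the ZX-calculus fusion rules, Eqs.~\ref{eq:zx1-fuse-green} and \ref{eq:zx2-fuse-red}, which merge adjacent same-coloured spiders into one. The three forms $sD_{11}$, $sD_{12}$, $sD_{13}$ of the slice $sD_1$ differ only in the connecting-wire decorations attached to the top and bottom central spiders, and these decorations are exactly what the two !-boxes govern: the blue !-box carries the $m$ connecting wires on the top spider, the purple !-box carries the $n$ connecting wires on the bottom spider. The key observation is that an empty !-box contributes no leg nodes, so whatever wire structure distinguishes one form from another simply vanishes, and the surviving leg nodes (if any) fuse into their parent central spider by the relevant fuse rule.

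First I would handle $sD_{11}=sD_{12}$ under the hypothesis $m=0$. Emptying the blue !-box removes every connecting-wire node on the top central spider; after collapsing any residual node into the central spider via Eq.~\ref{eq:zx1-fuse-green} or \ref{eq:zx2-fuse-red}, the top half of $sD_{11}$ becomes a bare spider that no longer records the feature separating it from $sD_{12}$, so the two diagrams coincide. I would then establish $sD_{11}=sD_{13}$ under $n=0$ by the mirror-image argument: now the purple !-box is empty, the bottom connecting-wire nodes disappear, and the distinction between $sD_{11}$ and $sD_{13}$ -- which lives entirely in the bottom decoration, including the option where the bottom spider takes the $\mathcal{Y}$-equivalence form -- is erased. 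Since the roles of the top/blue and bottom/purple data are symmetric, this second case needs no computation beyond interchanging orientations and colours.

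The main obstacle is bookkeeping rather than conceptual: I must check that the feature distinguishing $sD_{11}$ from $sD_{12}$ (respectively from $sD_{13}$) is genuinely confined to the block controlled by the emptied !-box, so that its removal yields an honest equality and not merely equality up to a leftover phase or Hadamard. In particular I would confirm that when the relevant !-box is empty no surviving node carries a nonzero phase and no connecting-wire Hadamard persists after fusion, so that $sD_{11}$ and the target form reduce to literally the same diagram under the scalar-free convention adopted earlier.
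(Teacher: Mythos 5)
Your plan matches the paper's proof, which is itself just a single diagram rewriting: with the blue (resp.\ purple) !-box empty, the connecting-wire decorations that distinguish $sD_{11}$ from $sD_{12}$ (resp.\ $sD_{13}$) disappear, and the surviving leg nodes fuse into the central spiders via Eqs.~\ref{eq:zx1-fuse-green} and~\ref{eq:zx2-fuse-red}, leaving identical slices. The bookkeeping check you flag --- that no residual phase or Hadamard survives the fusion under the scalar-free convention --- is precisely what the paper's displayed rewriting of $sD_{11}$ at $m=0$ establishes, and the $n=0$ case is dispatched by the same mirror argument the paper invokes.
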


Proposition \ref{prop:slice-special-case} tells us that $sD_{12}$ if $m=0$ and $sD_{13}$ if $n=0$ can be subsumed by $sD_{11}$.  

\begin{definition}
For slices $sD_{11}, sD_{12}, sD_{13}, sD_{21}$ and $sD_{22}$, let $m$ be the number of wires contained in the blue !-box and $n$ be the number of wires contained in the purple !-box.
The generating set of $\mathsf{D}_N$, denoted as $\mathsf{G}_N$, is a subset of $\mathsf{D}_N$ such that each slice of a complementarity diagram in $\mathsf{G}_N$ is either 
$sD_{11}$, $sD_{21}$ and $sD_{22}$ for any $m,n\in\mathbb{N}\cup\{0\}$, $sD_{12}$ for $m\in\mathbb{N}, n\in\mathbb{N}\cup\{0\}$, and $sD_{13}$ for $n\in\mathbb{N}, m\in\mathbb{N}\cup\{0\}$.  
\end{definition}

To find maximal complete sets of mutually complementary classical structures on $N$ qubits, we shall form a graph where the vertices are the composite classical structures on $N$ qubits. A pair of vertices/classical structures have an edge between them if they are complementary. However, it is highly inefficient to check the complementarity of every pair of classical structures on $N$ qubits. For example, there are 171 unique pairs of composite classical structure on two qubits, and there are 23,435 unique pairs of composite classical structures on three qubits. So instead, we shall check the complementarity condition of a smaller set of complementarity diagrams which generate all complementarity diagrams on $N$ qubits, i.e. complementarity diagrams consisting of the slices mentioned previously. Then we shall encode the information of those complementarity diagrams which satisfy the complementarity condition as matrices called `names' (to be defined), and gathered to form a test set.   

\begin{definition}\label{def:name-CS}
Let $\mathcal{A}$ be a composite classical structure on $N$ qubits. The name of $\mathcal{A}$ is a $(N+1)\times N$ matrix, $A$, with entries $A_{pq}$, defined as follow:
\begin{itemize}
    \item The first row of $A$ consists of information on the constituents of $\mathcal{A}$, i.e.
    \begin{equation}
        A_{1q}=
        \begin{cases}
        0 & \text{if the } q \text{-th constituent of } \mathcal{A} \text{ is } \mathcal{X}\\ 
        1 & \text{if the } q \text{-th constituent of } \mathcal{A} \text{ is } \mathcal{Y}\\
        Z & \text{if the } q \text{-th constituent of } \mathcal{A} \text{ is } \mathcal{Z}
        \end{cases}
    \end{equation}
    \item The $p,q$-th entry of $A$ where $p>1$ gives the information on the separability of $\mathcal{A}$ on the $(p-1)$-th and $q$-th qubits, i.e. for $p>2$,
    \begin{equation}
        A_{pq}=
        \begin{cases}
        0 & \text{if there is no connecting wire between}\\
        & \text{the } (p-1) \text{-th and } q \text{-th qubits in }\mathcal{A}\\
        1 & \text{if } A_{1p},A_{1q}\in\{0,1\}\\
        i & \text{if } A_{1p}=A_{1q}=Z\\
        j & \text{if } A_{1q}\in\{0,1\}\text{ and }A_{1p}=Z\\
        k & \text{if } A_{1q}=Z\text{ and }A_{1p}\in\{0,1\}
        \end{cases}
    \end{equation}
\end{itemize}
\end{definition}

To represent a complementarity diagram between a pair of classical structures, we define a binary operation between the names of those classical structures.

\begin{definition}\label{def:name-CD}
Let $\mathcal{A}$ and $\mathcal{B}$ be composite classical structures on $N$ qubits with names $A$ and $B$, respectively. The name of the complementarity diagram between $\mathcal{A}$ and $\mathcal{B}$ is denoted by $A*B$, and is defined as entry-wise addition modulo 2 of $A$ and $B$, i.e. $(A*B)_{pq}=A_{pq}+_{\text{mod }2}B_{pq}$.  
\end{definition}

\noindent Below is an example of a complementarity diagram (diagram on the left) and its name (the matrix on the right):
\begin{equation*}
    \includegraphics[scale=0.2]{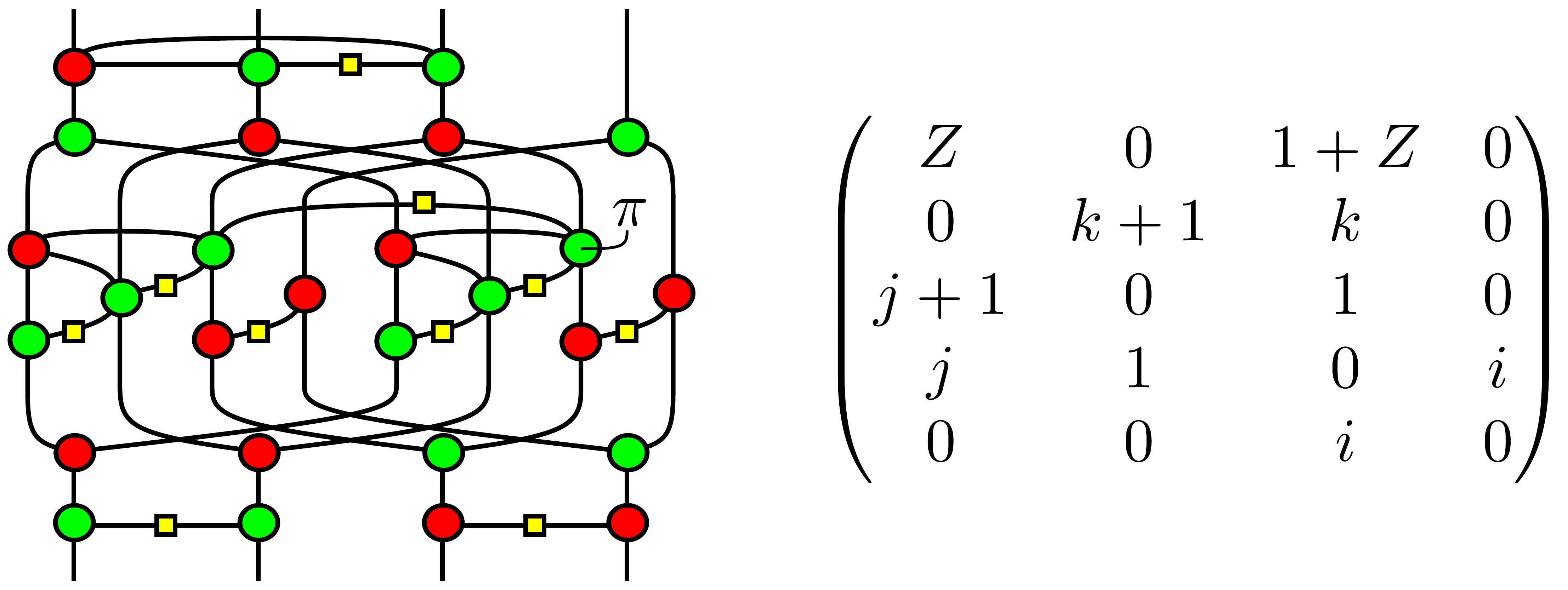}
\end{equation*}

Partitioning complementarity diagrams on $N$ qubits with respect to the equivalence defined in Theorem \ref{thm:comp-diagram-equivalence}, we shall obtain equivalence classes where we can say that \textbf{every member of a class satisfies the complementarity condition if a single member of the class satisfies the complementarity condition}. 

\begin{corollary}\label{cor:CD-equivalence}
Let $|P|$ be a partition of $\mathsf{D}_N$ with respect to $r$ in Theorem \ref{thm:comp-diagram-equivalence}. Let $C\in|P|$ and $D_0\in C$. Suppose $D_0$ satisfies the complementarity condition. Then all $D\in C$ satisfies the complementarity condition.
\begin{proof}
If $D_0$ satisfies the complementarity condition, then:
\begin{equation*}
    \includegraphics[scale=0.2]{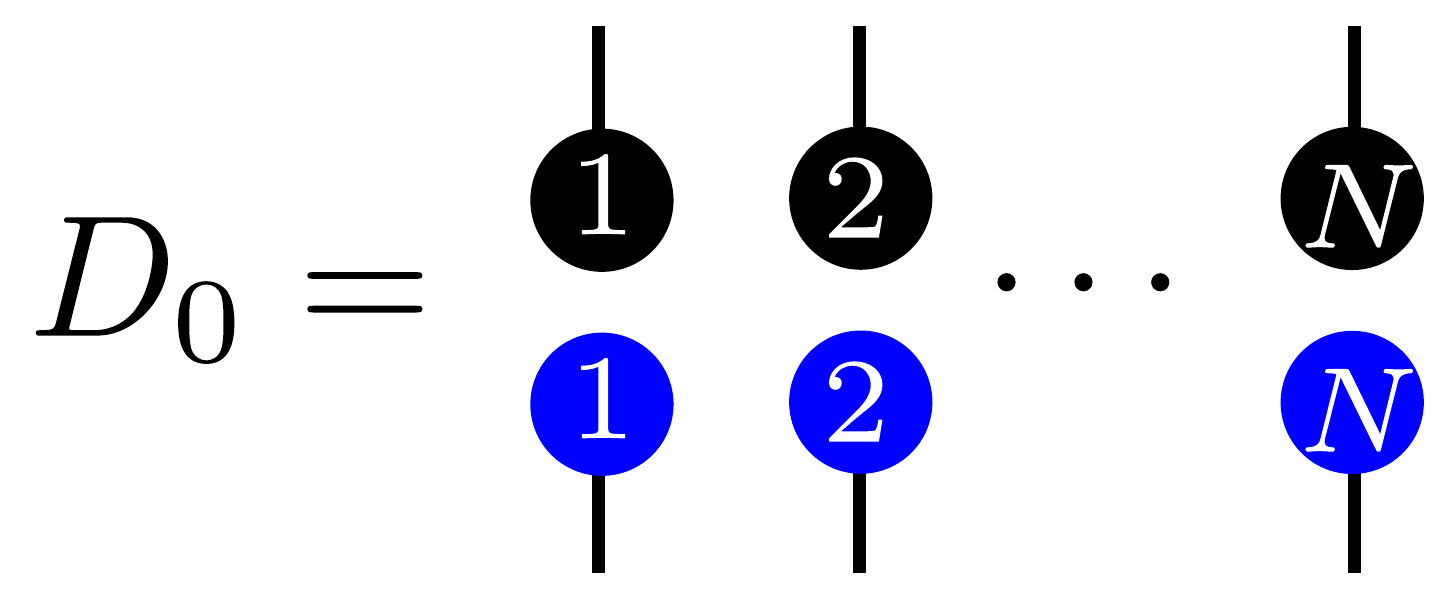}
\end{equation*}
Applying any permutation or $\had$ on any of the qubits to $D_0$, we shall obtain a diagram of the following form:
\begin{equation*}
    \includegraphics[scale=0.2]{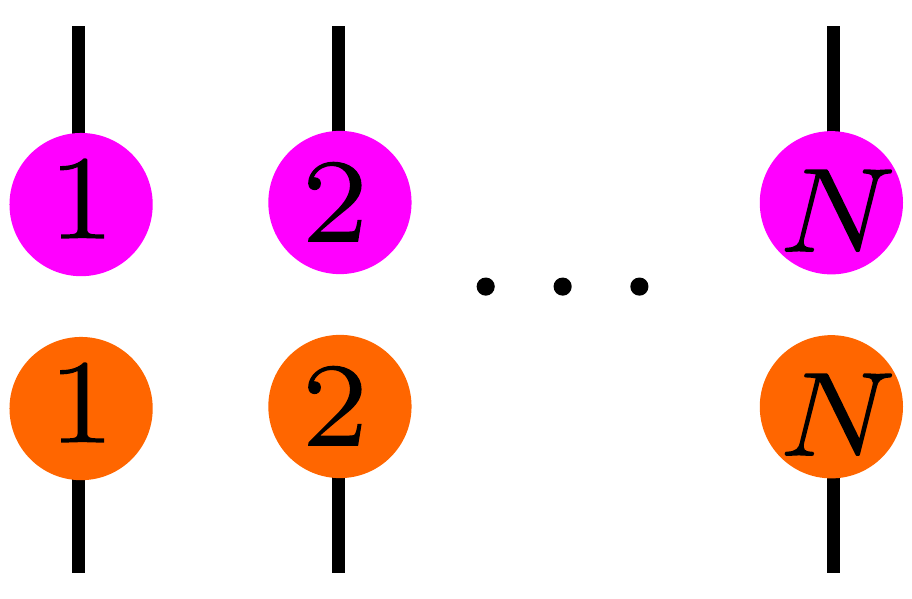}
\end{equation*}
which tells us that any other complementarity diagram belonging to $C$ also satisfies the complementarity condition. 
\end{proof}
\end{corollary}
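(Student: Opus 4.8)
The plan is to turn the complementarity condition into an equation between string diagrams and then show that the $\circ$-conjugation appearing in the relation $r$ of Theorem~\ref{thm:comp-diagram-equivalence} sends a solution of that equation to another solution. By Eq.~\ref{eq:complementarity}, saying that $D_0$ satisfies the complementarity condition means that $D_0$ equals the prescribed \emph{disconnected} diagram, i.e.\ a diagram that $\circ$-factors as an output state sitting above an input effect, with no wire carrying information from the bottom boundary to the top boundary. I would record this factorisation as $D_0 = \Psi \circ \Phi$, where the input effect $\Phi$ and the output state $\Psi$ are assembled from the $1,0$- and $0,1$-spiders of the two classical structures of which $D_0$ is the complementarity diagram.

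Given an arbitrary $D \in C$, the definition of $r$ supplies a unitary $U = P \circ V_1 \otimes \cdots \otimes V_N$ with each $V_j \in \{\mathrm{id}_{\mathbb{C}^2}, \had\}$ so that $D = U^\dagger \circ D_0 \circ U$ or $D = U^\dagger \circ D_0^\dagger \circ U$. In the first case I would substitute the factorisation and use associativity of $\circ$ to write $D = (U^\dagger \circ \Psi) \circ (\Phi \circ U)$; the central $\circ$ still separates input from output, so $D$ is again disconnected. This shows that $D$ has the \emph{shape} demanded by the complementarity condition, and it remains only to identify the two factors with the correct state and effect for the classical structures underlying $D$.

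That identification is the substantive step, and I would carry it out by tracking how the units and counits of $\mathcal{X}, \mathcal{Y}, \mathcal{Z}$ behave under the building blocks of $U$. The permutation $P$ merely reorders the qubit factors, so it poses no difficulty. For a Hadamard on a single qubit I would invoke the colour-change rule (Eq.~\ref{eq:zx12-had-transform}) to see that $\had$ interchanges the unit/counit of $\mathcal{Z}$ with that of $\mathcal{X}$, and, using the cup identities Eqs.~\ref{eq:cupZX} and~\ref{eq:cupY} together with Proposition~\ref{prop:antipode}, that it fixes the unit/counit of $\mathcal{Y}$ up to an antipode which slides harmlessly through the caps. Since scalars are ignored, these relations are exact, and they are precisely the relabelling of constituents that turns $D_0$'s classical structures into $D$'s. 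Hence $U^\dagger \circ \Psi$ and $\Phi \circ U$ are exactly the output state and input effect prescribed for $D$, so $D$ satisfies the complementarity condition.

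Finally, the dagger case $D = U^\dagger \circ D_0^\dagger \circ U$ follows by the same argument applied to $D_0^\dagger$: reflecting $\Psi \circ \Phi$ horizontally yields $\Phi^\dagger \circ \Psi^\dagger$, which is again a state-over-effect disconnected diagram, so $D_0^\dagger$ also satisfies the complementarity condition and the previous paragraph applies verbatim. I expect the only real obstacle to be the bookkeeping in the third paragraph: one must confirm that for \emph{every} assignment of constituents $\mathcal{X}, \mathcal{Y}, \mathcal{Z}$ to the qubits the Hadamards carry $D_0$'s disconnected diagram onto exactly the one prescribed for $D$, and not merely onto some disconnected diagram. The antipodes generated whenever a $\mathcal{Y}$-constituent meets a $\had$ are the delicate point, but Proposition~\ref{prop:antipode} is tailored to absorb them.
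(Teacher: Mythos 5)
Your proof is correct and takes essentially the same route as the paper: write the disconnected diagram as a state over an effect, conjugate by $U$, and observe that the resulting factors are again the prescribed state and effect for the transformed classical structures. You are in fact somewhat more explicit than the paper, which compresses both the identification of the transformed units/counits (your third paragraph) and the adjoint case into the single phrase ``applying any permutation or $\had$ on any of the qubits to $D_0$.''
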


\begin{longtable}{|p{2cm}|p{8.5cm}|}
\caption{Correspondence between criteria of equivalence for complementarity diagrams and their names.}
\\ \hline
\textbf{Equivalence Criteria} & \textbf{Change in Name}\\ \hline
\endhead
Adjoint & No change since $*$ is commutative.\\ \hline
The $p$-th and $q$-th slices in a complementarity diagram are permuted & The $p$-th and $q$-th columns of the name are permuted, along with the $(p+1)$-th and $(q+1)$-th rows are permuted.\\ \hline
Apply $\had$ on the $p$-th slice & 
\begin{itemize}
    \item No change in first entry of $p$-th column;
    \item The entries in the second to $N$-th entry of the $p$-th column are tranformed in the following way:
    \begin{eqnarray*}
        0 \mapsto 0\\
        1 \mapsto k\\
        i \mapsto j\\
        j \mapsto i\\
        k \mapsto 1
    \end{eqnarray*}
    \item The entries in the $(p+1)$-th row are transformed in the following way:
    \begin{eqnarray*}
    0 \mapsto 0\\
    1 \mapsto j\\
    i \mapsto k\\
    j \mapsto 1\\
    k \mapsto i
    \end{eqnarray*}
\end{itemize} \\ \hline
Slices that represent different pairs of constituents &
\begin{equation*}
\includegraphics[scale=0.2]{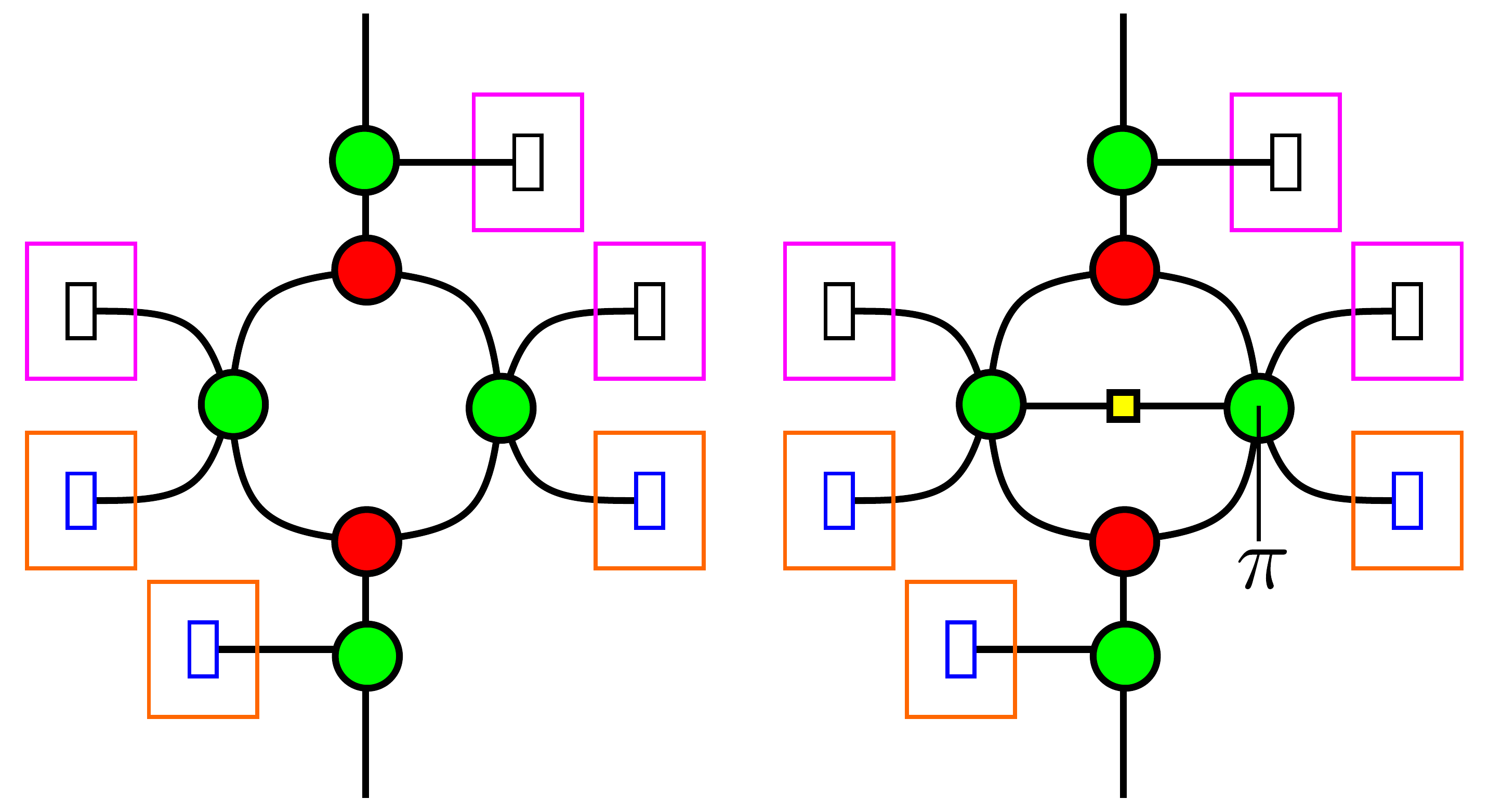}
\end{equation*}\\
 & No change in name.\\ \hline
Slices that represent different pairs of constituents &
\begin{equation*}
    \includegraphics[scale=0.2]{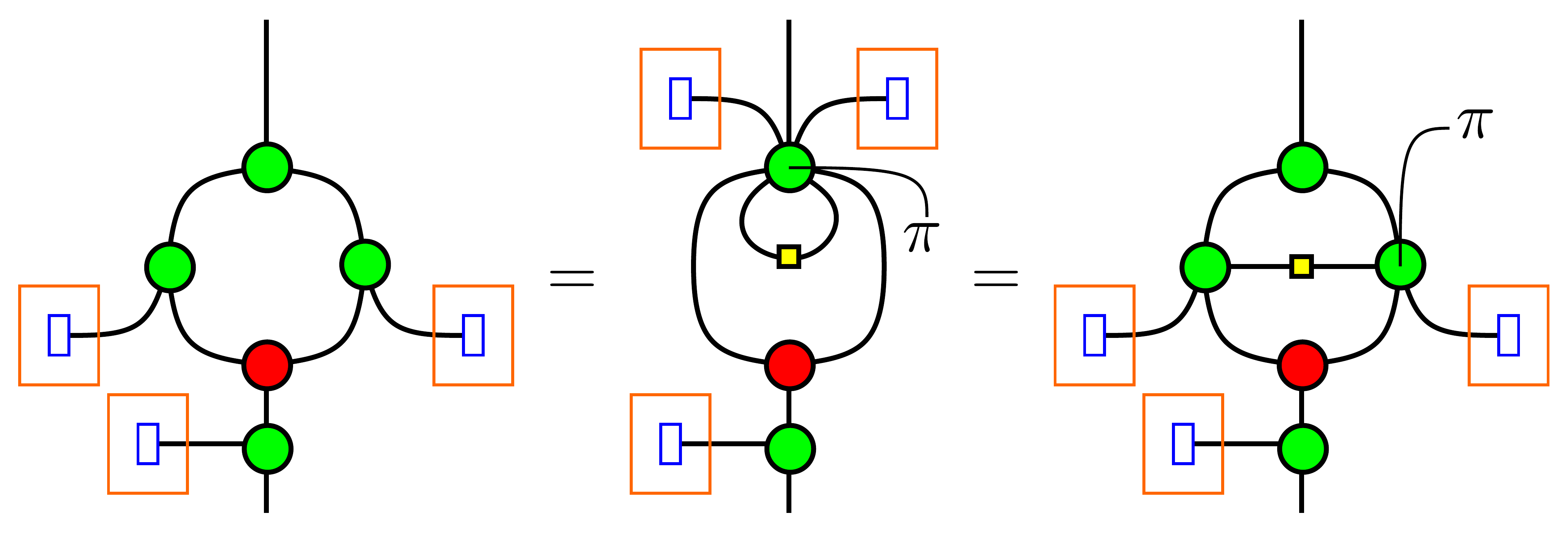}
\end{equation*}\\ 
& First entry of the corresponding column in the name is transformed in the following way:
$$Z\mapsto 1+Z$$\\
\hline
\label{tab:equiv-CD}
\end{longtable}

For some of the criteria in Theorem \ref{thm:comp-diagram-equivalence} and consequently Corollary \ref{cor:CD-equivalence}, the names of complementarity diagrams do not change but there are others where the names are transformed. In Table \ref{tab:equiv-CD}, we shows this correspondence.

Finally, we have all the tools we need to devise a strategy for finding maximal complete sets of mutually complementary classical structures on $N$ qubits. The following is an outline of our strategy:

\begin{enumerate}
    \item Check whether or not each member of $\mathsf{G}_N$ satisfies the complementarity condition;
    \item Gather all members of $\mathsf{G}_N$ which satisfy the complementarity condition into a set denoted as $\mathsf{G}_N^\text{pass}$;
    \item Identify the name of each member in $\mathsf{G}_N^\text{pass}$ and collect them in a set denoted as $\mathsf{P}_N$;
    \item Expand $\mathsf{P}_N$ to include other names of complementarity diagrams within the same equivalence class (see Table \ref{tab:equiv-CD}), and denote the expanded set as $\mathsf{T}_N$;
    \item Generate the names of the complementarity diagrams between all pairs of composite CS on $N$ qubits;
    \item Check whether or not each name obtained from the previous step is a member of $\mathsf{T}_N$;
    \begin{itemize}
        \item If a name is a member of $\mathsf{T}_N$, then the pair of composite CS it represents are complementary;
        \item If not, then the pair of composite CS the name represents are not complementary;
    \end{itemize}
    \item Form a graph where its vertices are composite CS on $N$ qubits, and for each pair of vertices, there exists an edge if, as composite CS, the vertices are complementary, and we denote this graph as $\mathbf{G}_N$;
    \item Search for complete subgraphs in $\mathbf{G}_N$ and each subgraph will give us a complete set of mutually complementary CS on $N$ qubits.
\end{enumerate}

In the procedure above, we used Quantomatic \cite{Quantomatic} to compute the results for step 1 and we used Mathematica \cite{WolframComputing} to compute the results for steps 4 to 8. 

In the next two sections, we shall apply this strategy to search for maximal complete sets of mutually complementary CS on two and three qubits. To simplify and improve readability of complementarity diagrams, we shall forego the connecting wires on the input and output wires of the diagrams and any phases contained in them. For the latter, whether or not a complementarity diagram contains a phase is apparent. That is, there is a phase in a particular slice of a complementarity diagram when only the constituent on the top of the slice or only the one at the bottom is $\mathcal{Y}$. Otherwise, there is no phase in the slice. For the former, we can indeed prove that a complementarity diagram satisfies the complementarity condition if and only if that same complementarity diagram without connecting wires on its input and output wires also satisfies the complementarity condition.

\begin{proposition}\label{prop:in-out-conwire}
A complementarity diagram on $N$ qubits satisfies the complementarity condition if and only if that same complementarity diagram without connecting wires on its input and output wires also satisfies the complementarity condition.
\end{proposition}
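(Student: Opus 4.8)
The plan is to exhibit the complementarity diagram with external connecting wires as a unitary conjugate of the same diagram with those wires stripped, and then to cancel the unitaries on both sides of the complementarity condition. Write $D$ for the full complementarity diagram on $N$ qubits and $D_0$ for the diagram obtained by deleting the connecting wires on its external input and output legs while leaving all internal structure (the central spiders and the connecting wires joining them) untouched. Since the connecting wires are, by their construction in Section \ref{sec-join-spiders}, unitary processes built from $\had$ and $\text{id}_{\mathbb{C}^2}$, the external input wires assemble into a unitary $W_{\text{in}}$ and the external output wires into a unitary $W_{\text{out}}$, so that $D = W_{\text{out}}\circ D_0\circ W_{\text{in}}$.

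First I would record the corresponding factorization of the target of the complementarity condition. The disconnected diagram $T$ against which $D$ is compared is itself assembled from the $1,0$- and $0,1$-spiders of the two composite classical structures, and by the corollary to Theorem \ref{thm:compositeCS-N} these composite spiders are exactly the constituent spiders with the connecting wire applied. Consequently $T$ factors as $T = W_{\text{out}}\circ T_0\circ W_{\text{in}}$ with the \emph{same} external unitaries $W_{\text{in}},W_{\text{out}}$, where $T_0$ is the disconnected target for $D_0$. Propositions \ref{prop:antipode} and \ref{prop:fuse-antipode}, which describe how the constituent spiders slide through the cups and caps, are the tools that make this matching precise.

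Given both factorizations the equivalence is immediate: the complementarity condition $D = T$ reads $W_{\text{out}}\circ D_0\circ W_{\text{in}} = W_{\text{out}}\circ T_0\circ W_{\text{in}}$, and composing with $W_{\text{out}}^\dagger$ on the output and $W_{\text{in}}^\dagger$ on the input --- legitimate because each connecting wire satisfies Eq. \ref{eq:unitary} --- cancels the external unitaries and leaves $D_0 = T_0$, which is precisely the complementarity condition for the wire-free diagram. Running the cancellation in the opposite direction supplies the converse. In effect this is a special case of Corollary \ref{cor:CD-equivalence}: conjugation by the external connecting wires is exactly the kind of symmetry under which satisfaction of the complementarity condition is invariant.

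The main obstacle I anticipate is the second step, namely verifying that the disconnected target $T$ carries exactly the same external connecting wires as $D$, rather than some rearranged version of them. This is where one must be careful about the distinction between the connecting wires on the top spider and those on the bottom spider (they are not interchangeable, as already noted in the discussion of $sD_1$), and about the phase that a $\mathcal{Y}$-constituent introduces when it is slid through a cup. Once the bookkeeping of which Hadamards and phases sit on which external leg is settled --- using the sliding rules of Proposition \ref{prop:antipode} --- the unitary cancellation itself is routine.
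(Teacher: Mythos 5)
Your argument is correct, but it takes a genuinely different route from the paper's. You factor the complementarity diagram as $D = W_{\text{out}}\circ D_0\circ W_{\text{in}}$, where the external connecting wires assemble into unitaries, show that the disconnected target factors through the same unitaries (the corollary to Theorem \ref{thm:compositeCS-N} is indeed the right tool here, since it says the composite $0,1$-spider absorbs its connecting wire up to scalar), and then cancel using Eq. \ref{eq:unitary}. The paper instead argues slice by slice: assuming $D$ satisfies the complementarity condition, each pair of slices of $D$ is in the disconnected form, and the stripped diagram $D'$ differs from $D$ only in the number of wires inside certain !-boxes; since the disconnected form is stable under varying those multiplicities, every pair of slices of $D'$ is also disconnected and $D\simeq D'$ up to scalar. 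Your global unitary-cancellation argument is cleaner and makes explicit \emph{why} the external wires are removable (they are invertible), at the cost of having to verify the factorization of the target --- precisely the bookkeeping of Hadamards and $\mathcal{Y}$-phases that you flag; the paper's local rewriting sidesteps that verification but buries it in the !-box manipulations. One caveat: your closing remark that this is ``a special case of Corollary \ref{cor:CD-equivalence}'' is not literally right. The unitaries in Theorem \ref{thm:comp-diagram-equivalence} are local, of the form $P\circ V_1\otimes\cdots\otimes V_N$ with $V_j\in\{\text{id}_{\mathbb{C}^2},\had\}$, whereas the external connecting wires are entangling two-qubit gates, so conjugation by them is not an instance of the relation $r$. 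This does not damage your proof, which relies only on unitarity and on the target being disconnected, but that citation should be dropped in favour of a direct appeal to the cancellation property of unitaries.
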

\begin{proof}
Let $D$ be a complementarity diagram on $N$ qubits. Suppose $D$ satisfies the complementarity condition. Then, the $p$-th and $q$-th slices of the diagram take the following form:
\begin{equation*}
    \includegraphics[scale=0.2]{images/51-75/68-comp-cond.pdf}
\end{equation*}
Meanwhile, the diagram $D$ without connecting wires on its input and output wires, denoted by $D'$, has the following as its $p$-th and $q$-th slices:
\begin{equation*}
    \includegraphics[scale=0.2]{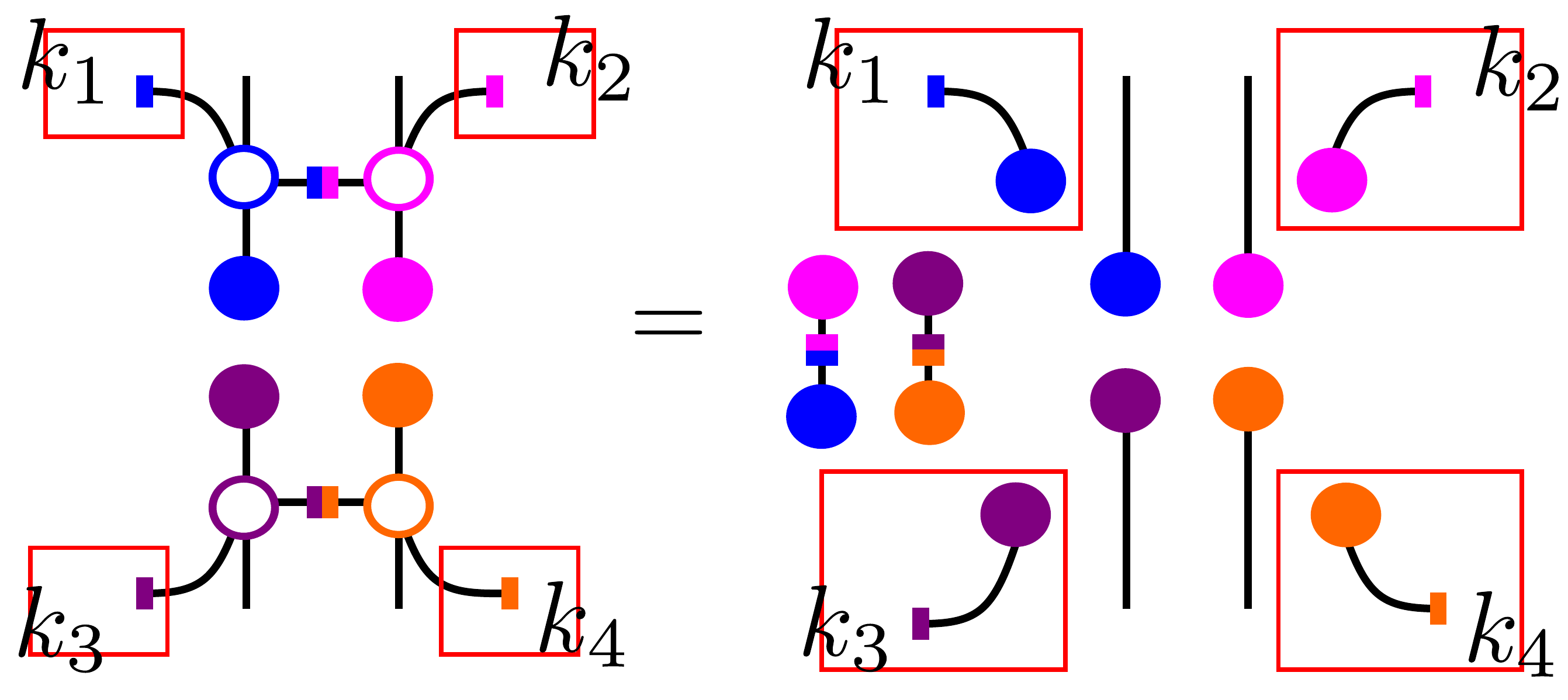}
\end{equation*}
The labels of the red !-boxes indicate that the repetitions within the boxes may vary in frequency. 

Since we have taken the pair of slices to be arbitrary, we can apply the above result to all pairs of slices in $D'$. Thus, any pair of slices in $D'$ must take the following form:
\begin{equation*}
    \includegraphics[scale=0.2]{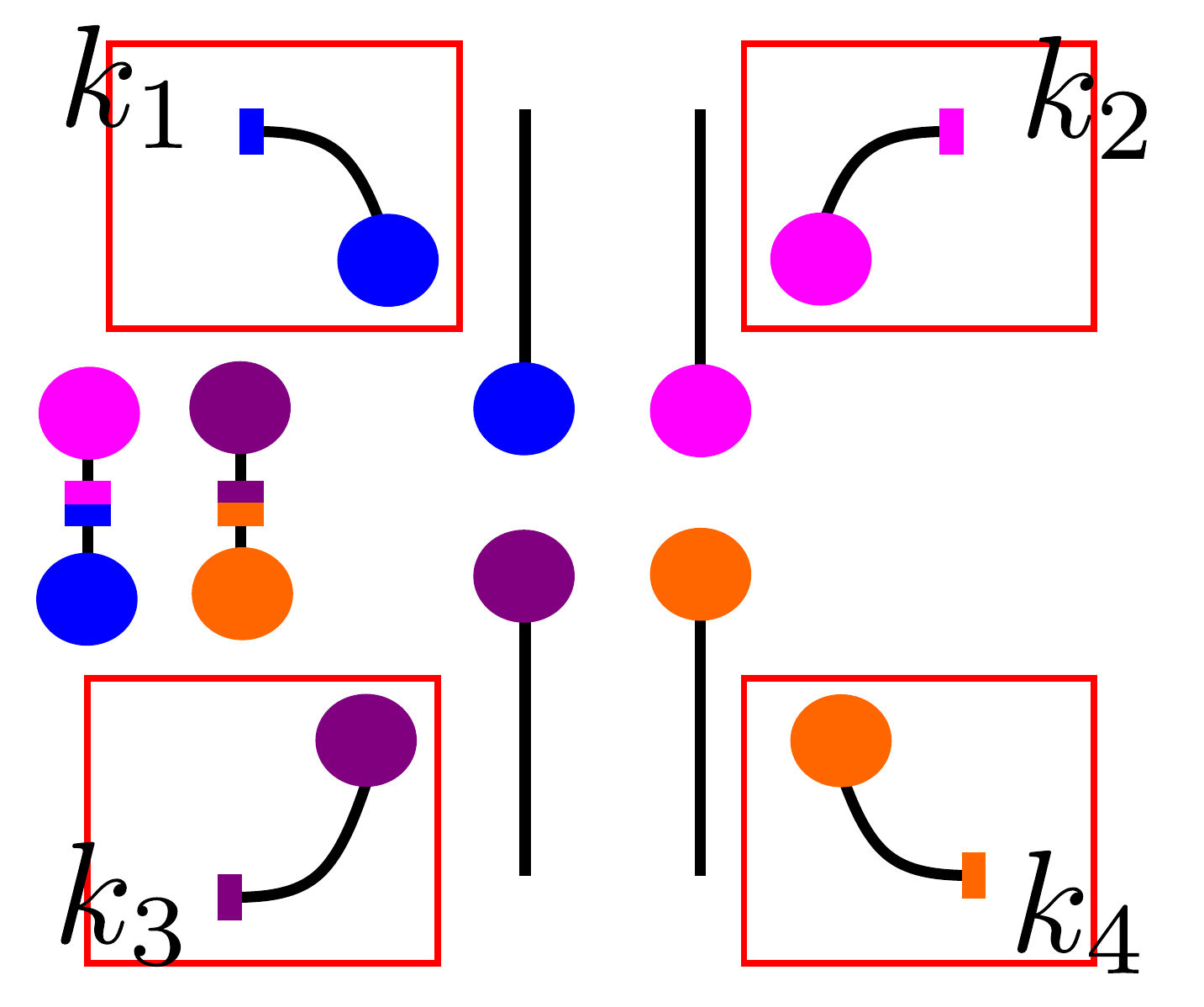}
\end{equation*}
which means that $D$ and $D'$ are equivalent up to scalars.

The converse can be proven in a similar way. 
\end{proof}

\section{Complementary CS on Two Qubits}\label{sec:comp2Q}

\begin{figure}[!ht]
    \centering
    \begin{longtable}{cccc}
    \includegraphics[scale=0.2]{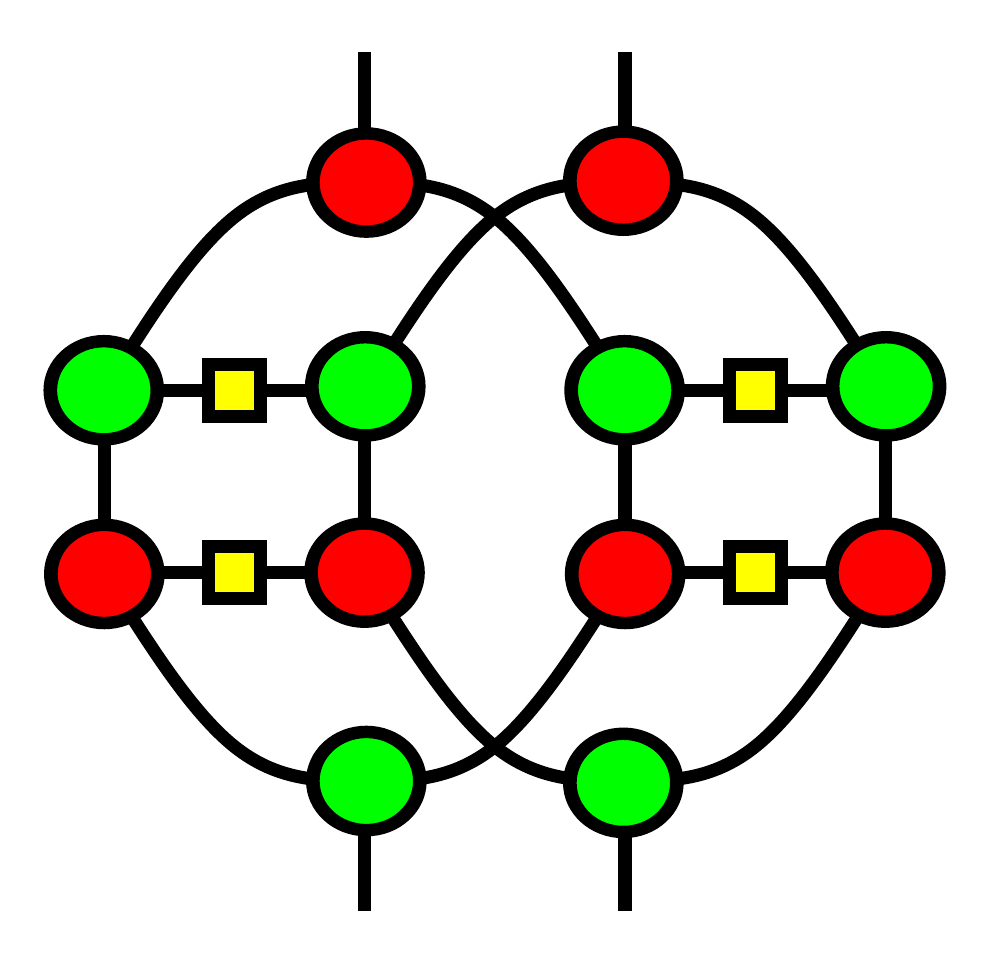} &
    \includegraphics[scale=0.2]{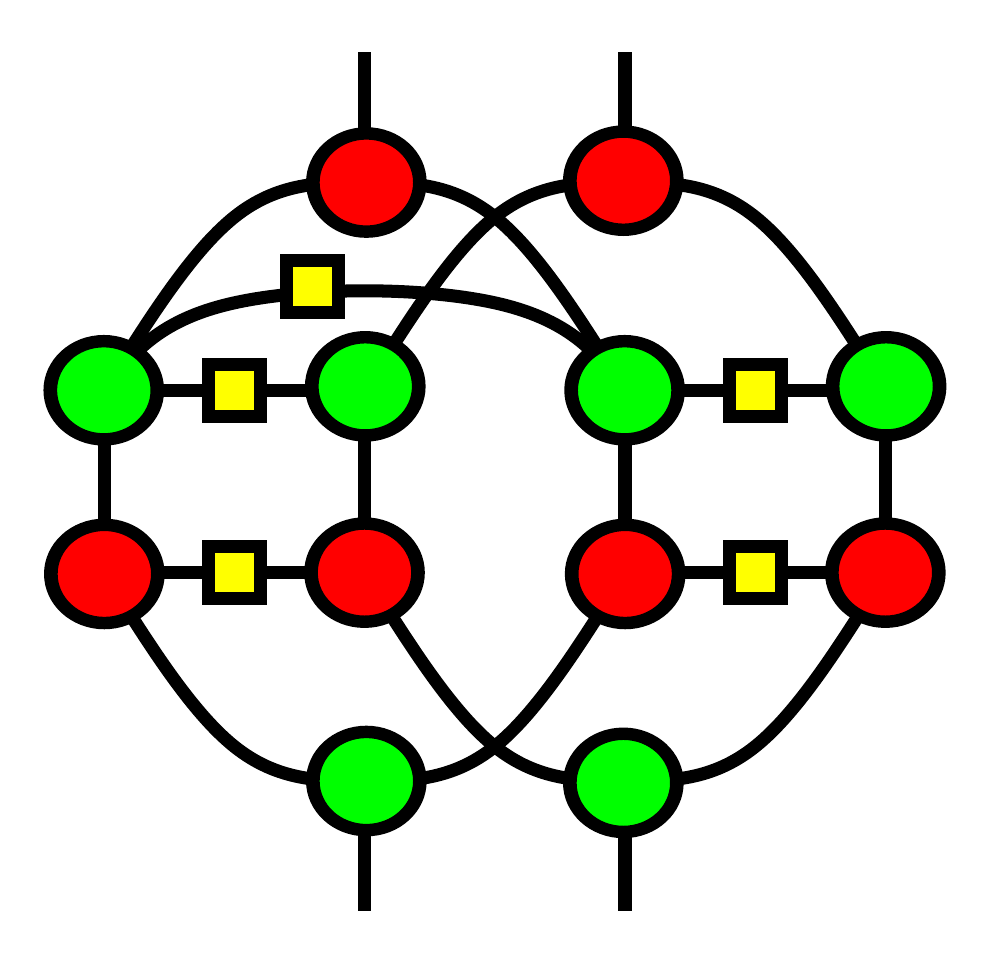} &
    \includegraphics[scale=0.2]{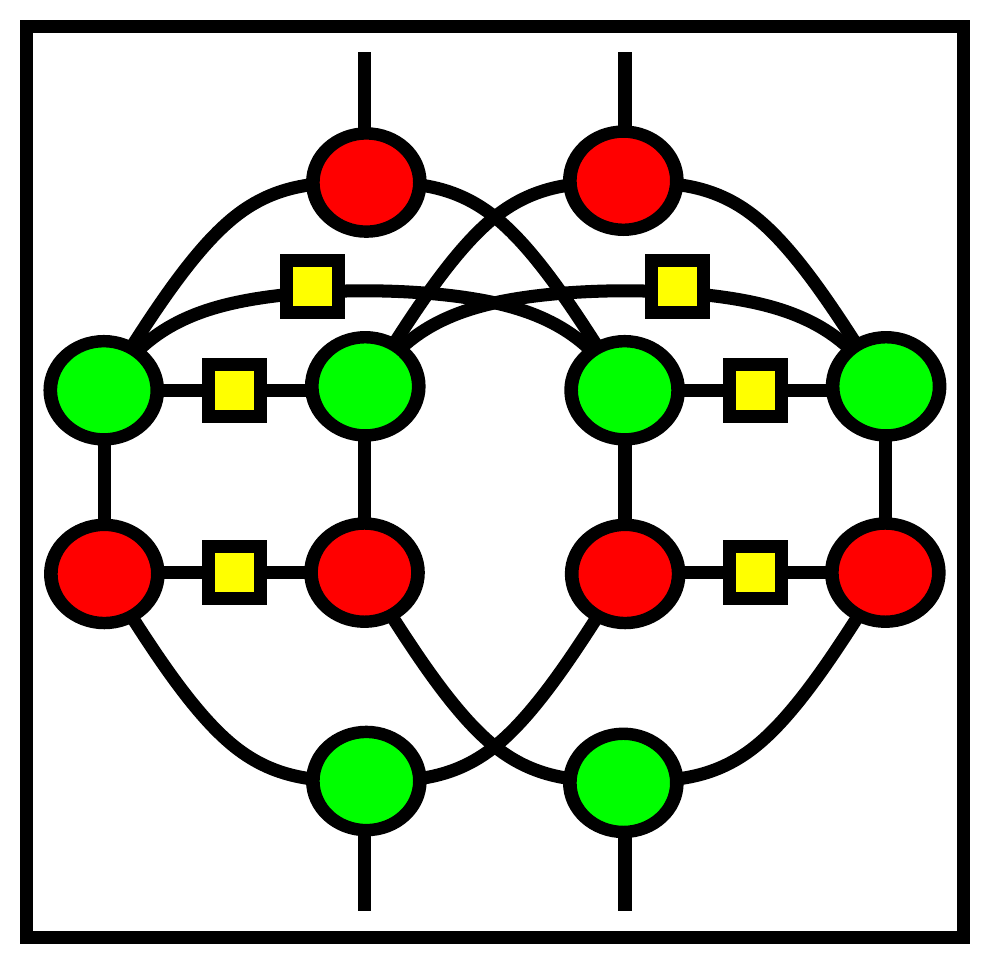} &
    \includegraphics[scale=0.2]{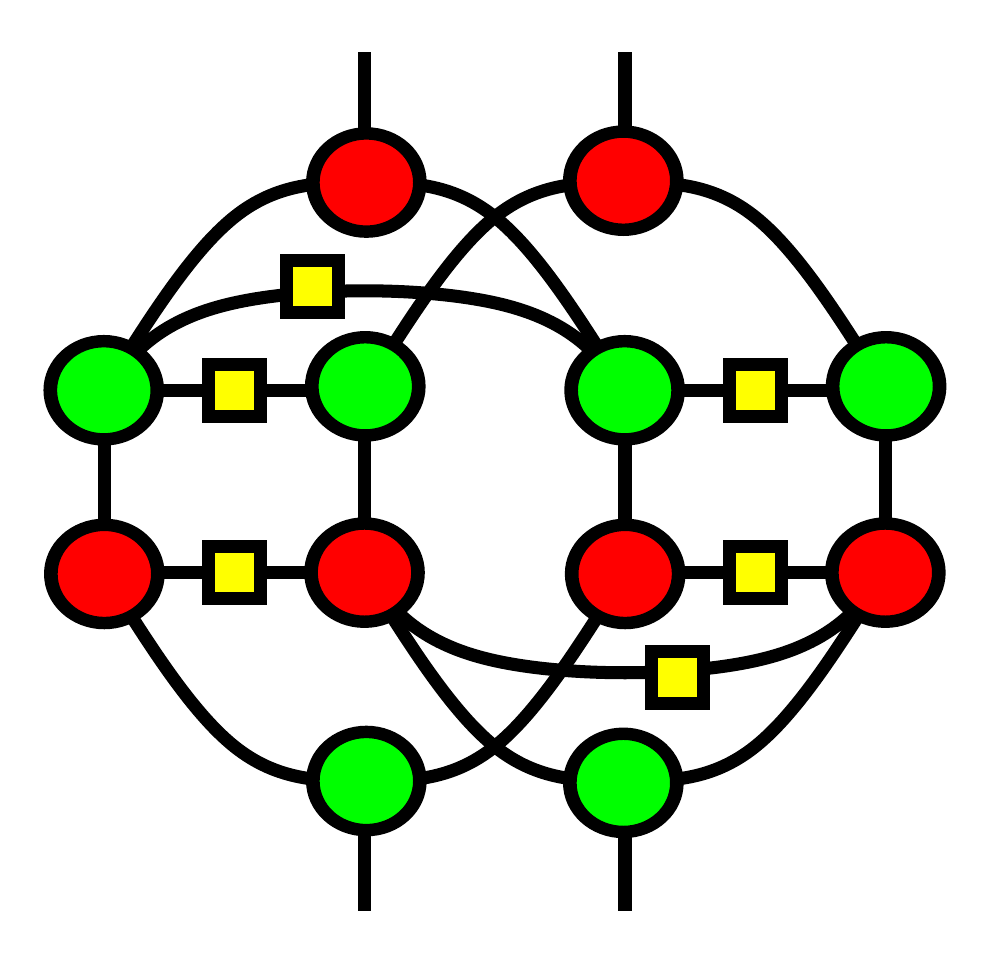}\\
    \includegraphics[scale=0.2]{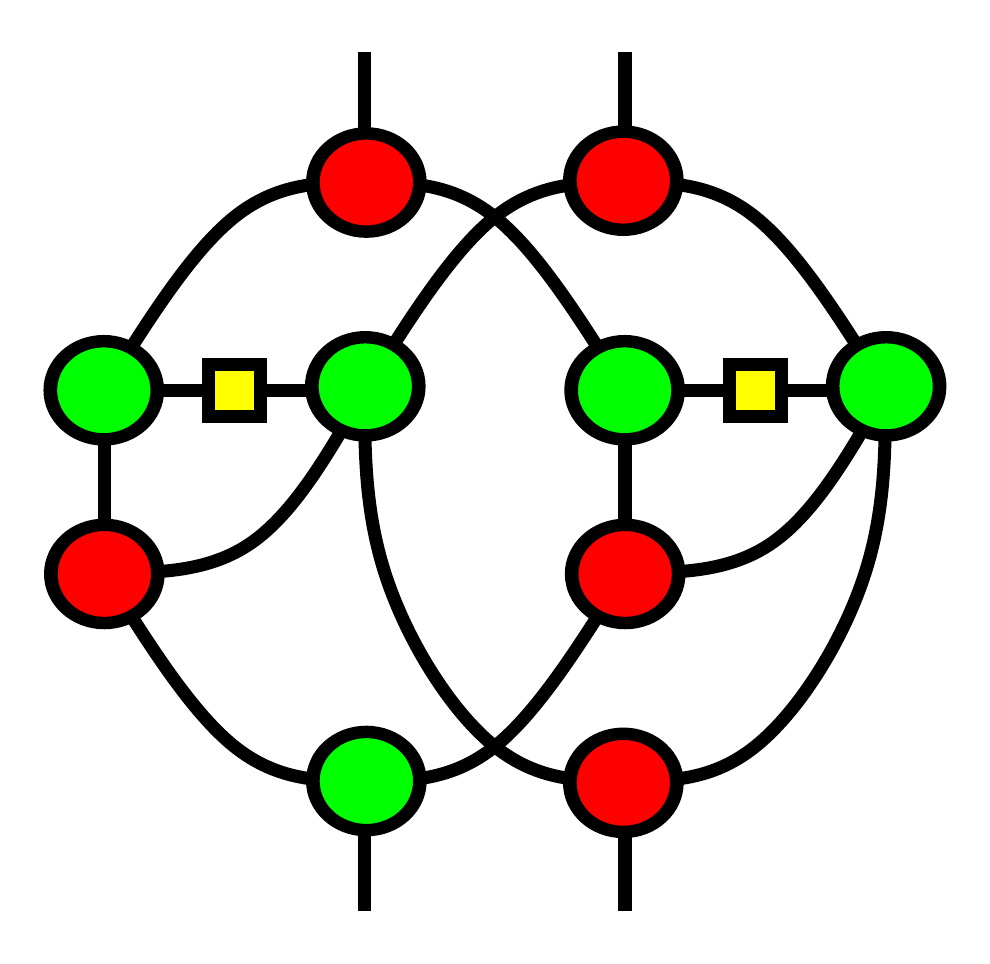} &
    \includegraphics[scale=0.2]{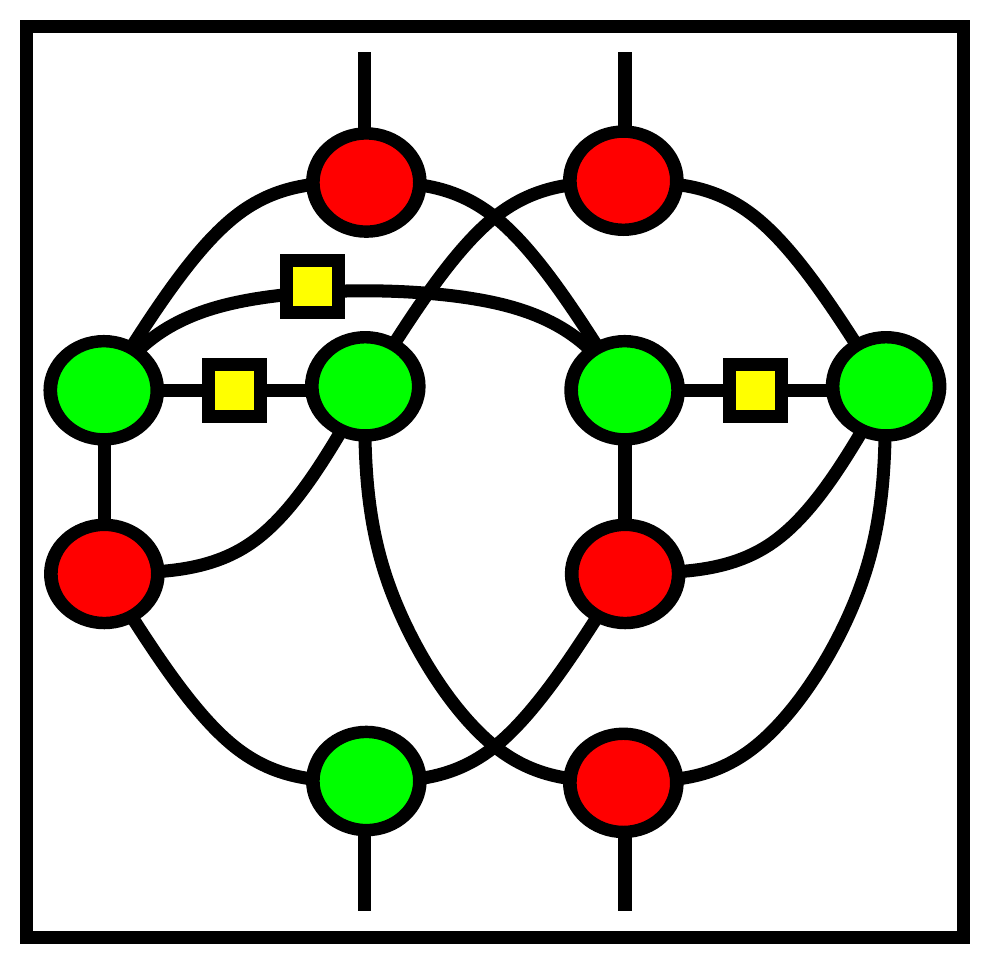} &
    \includegraphics[scale=0.2]{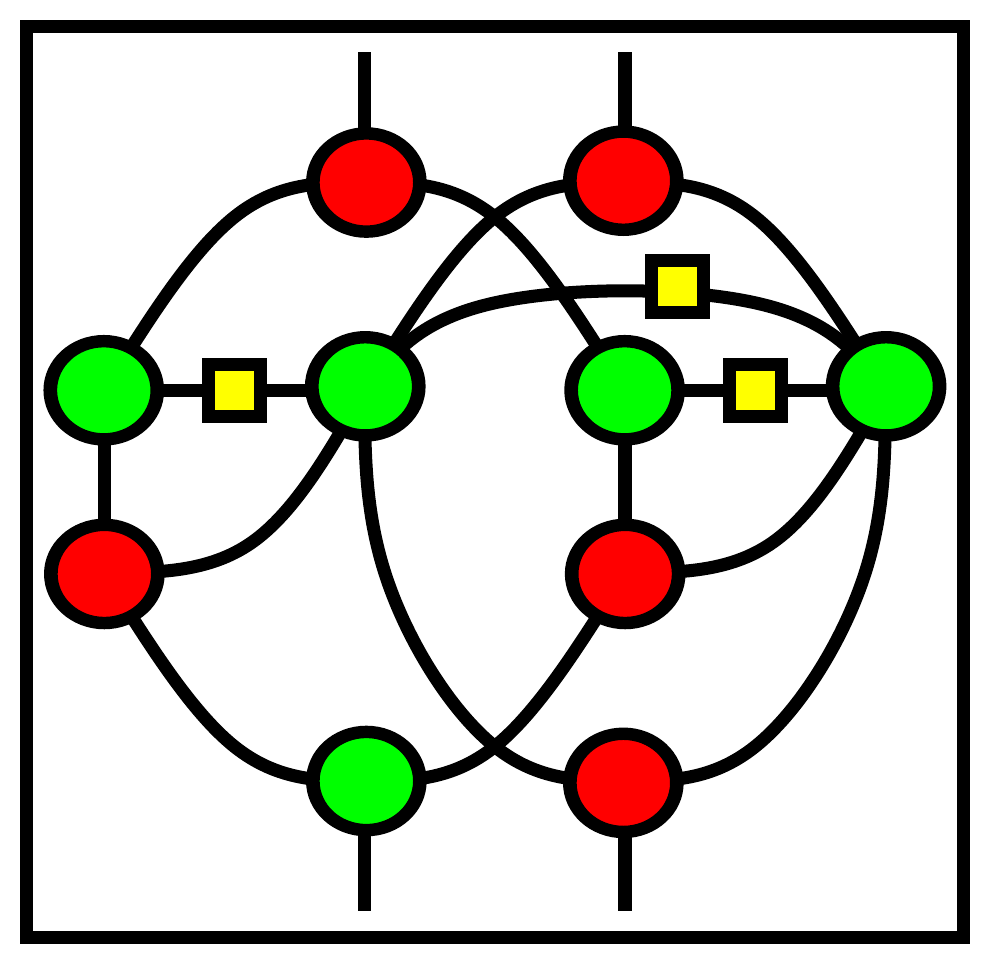} &
    \includegraphics[scale=0.2]{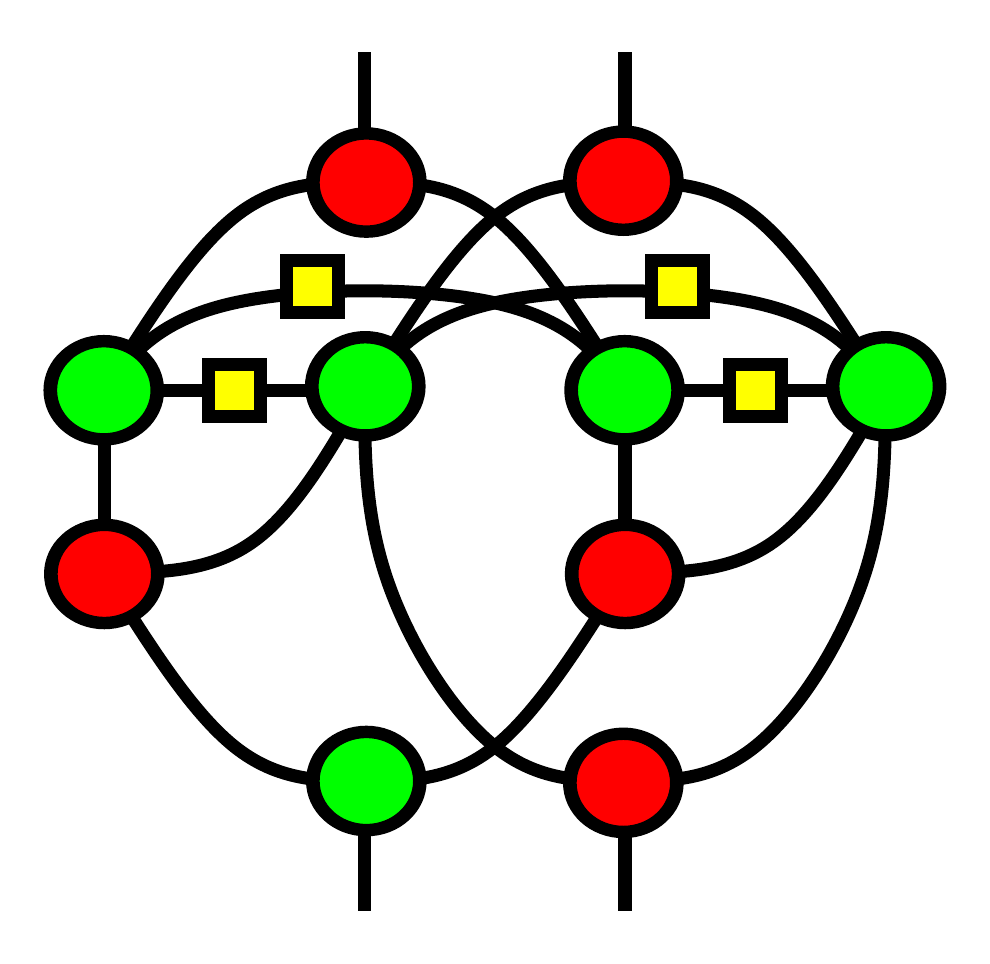}\\
    \includegraphics[scale=0.2]{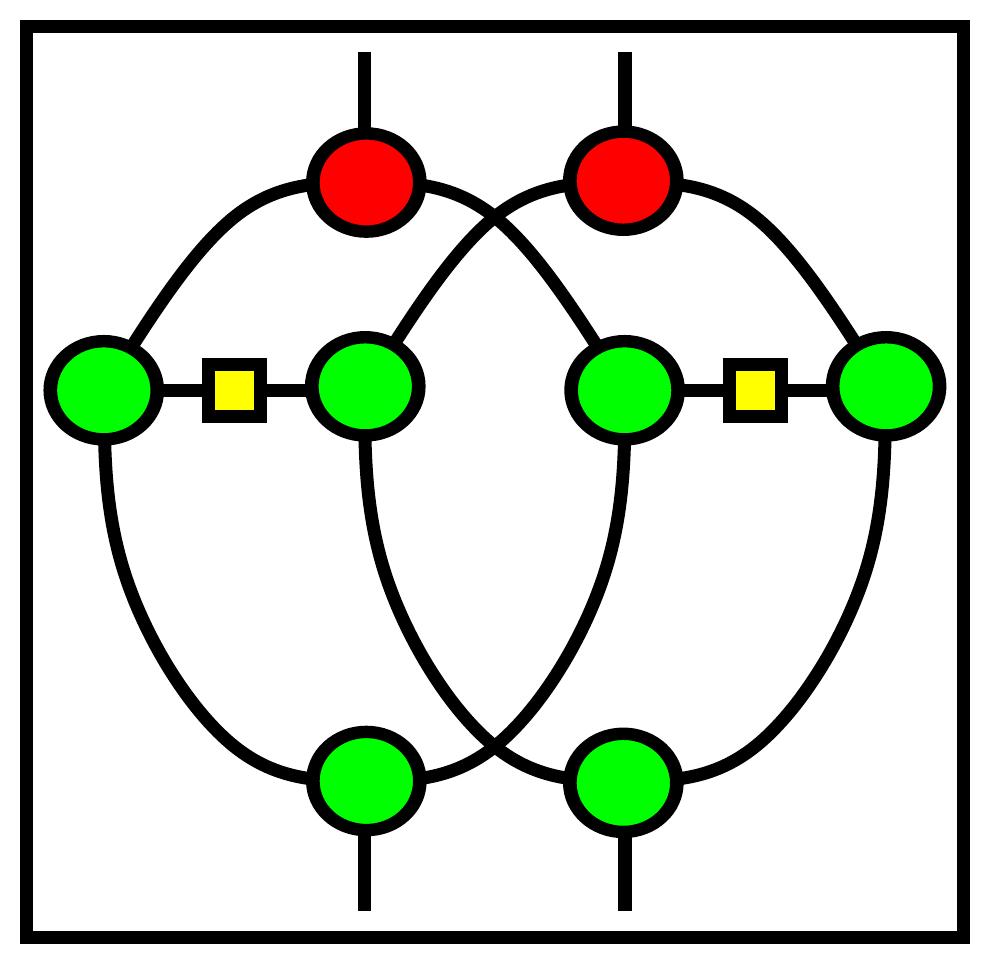} &
    \includegraphics[scale=0.2]{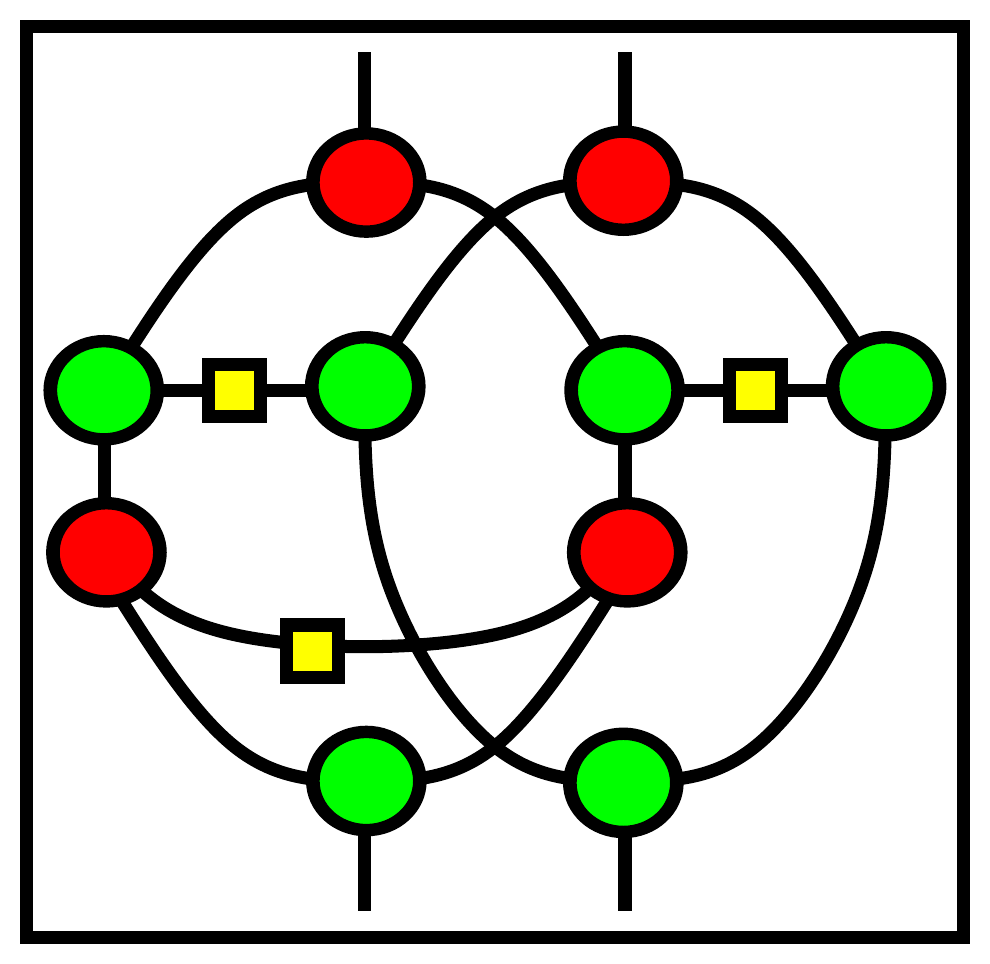} &
    \includegraphics[scale=0.2]{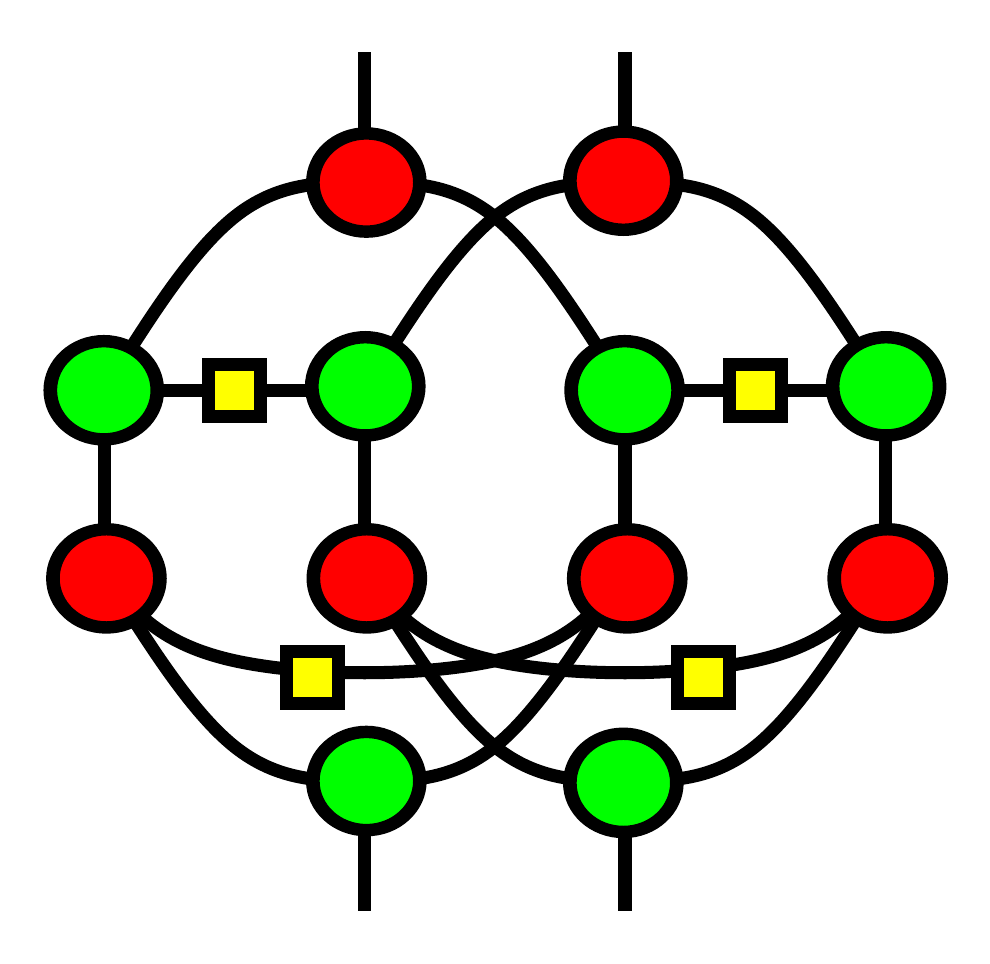} &
    \includegraphics[scale=0.2]{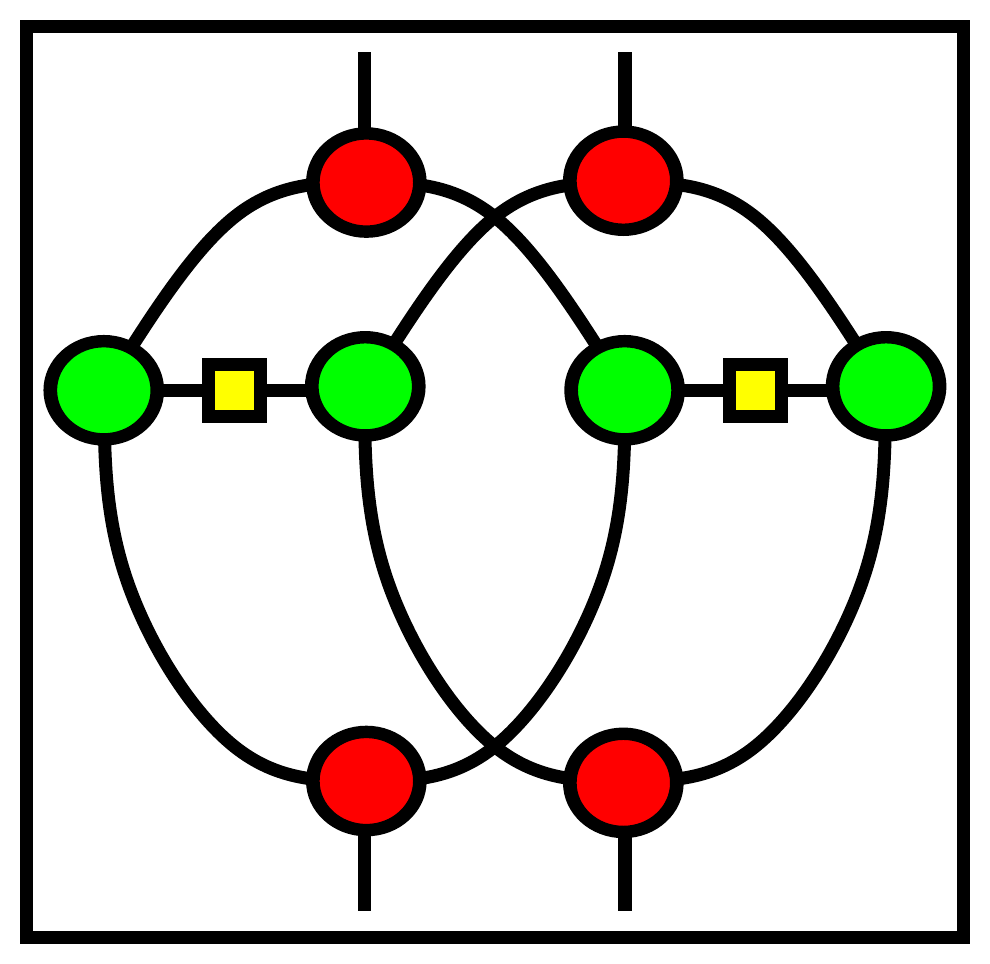}\\
    \includegraphics[scale=0.2]{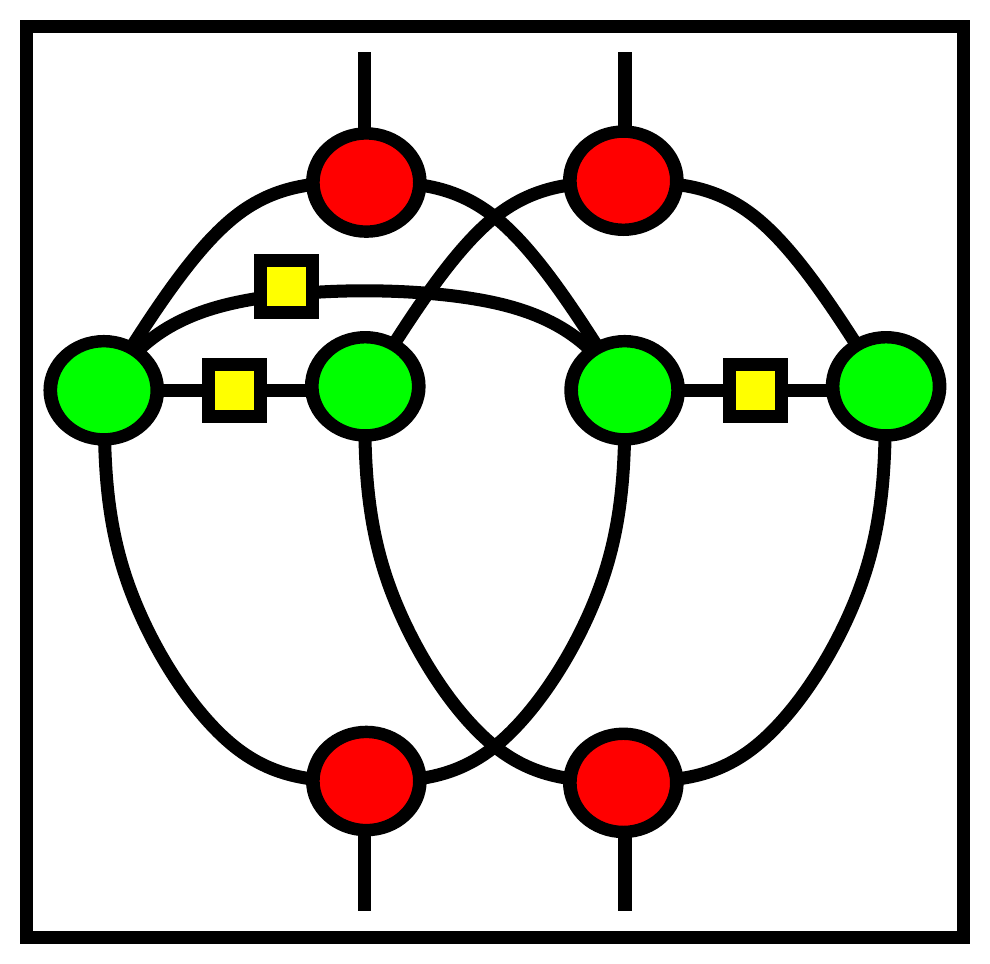} &
    \includegraphics[scale=0.2]{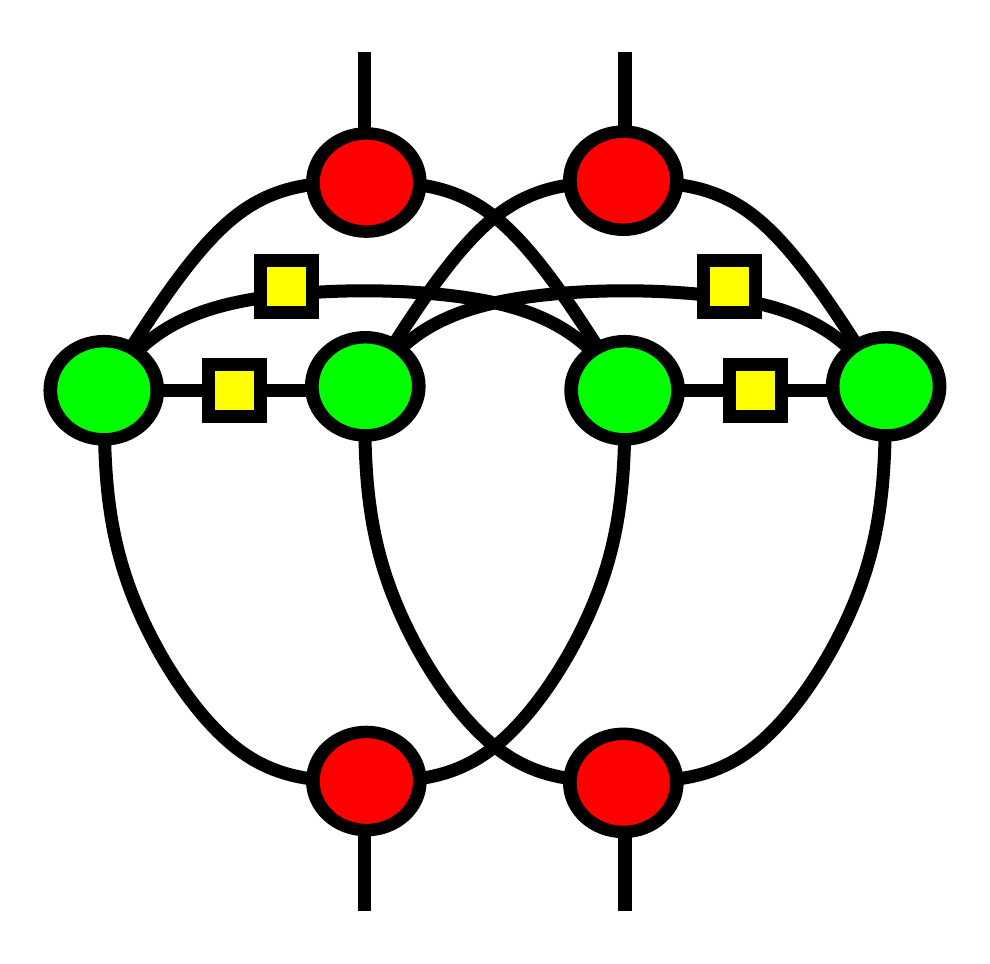} &
    \includegraphics[scale=0.2]{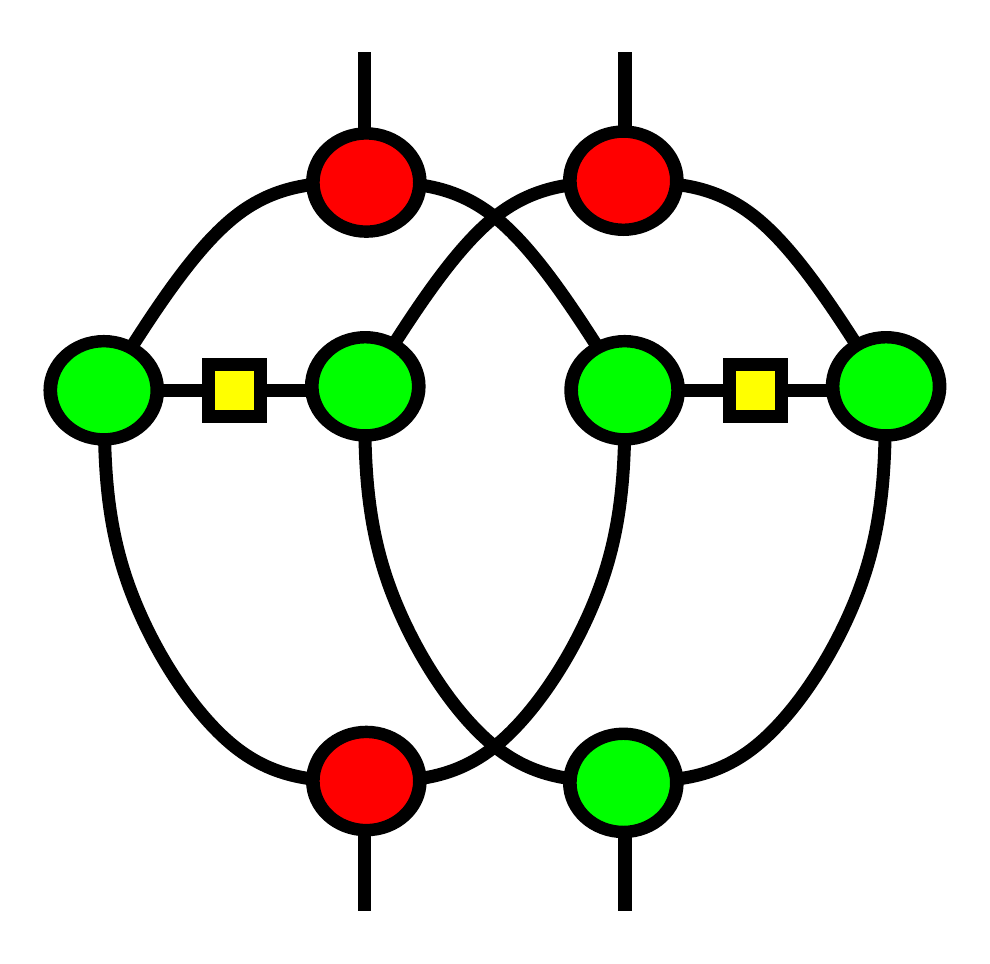} &
    \includegraphics[scale=0.2]{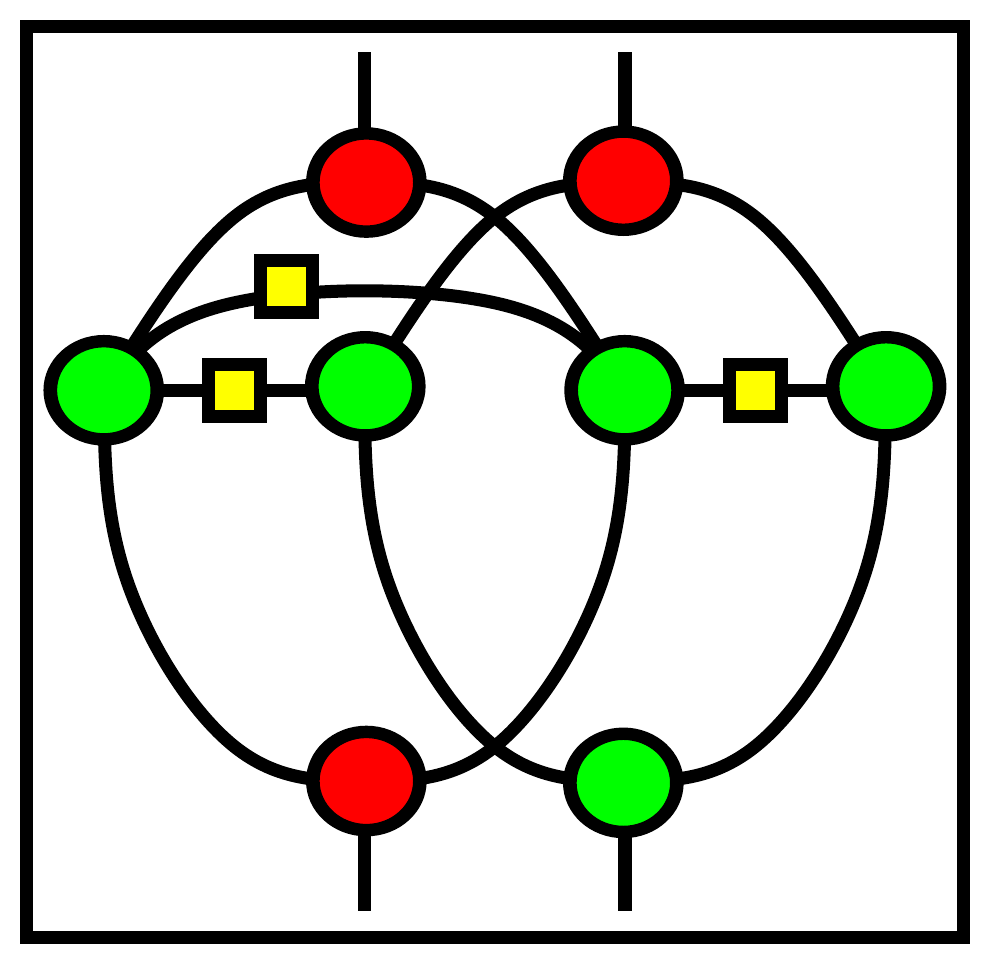}\\
     &
    \includegraphics[scale=0.2]{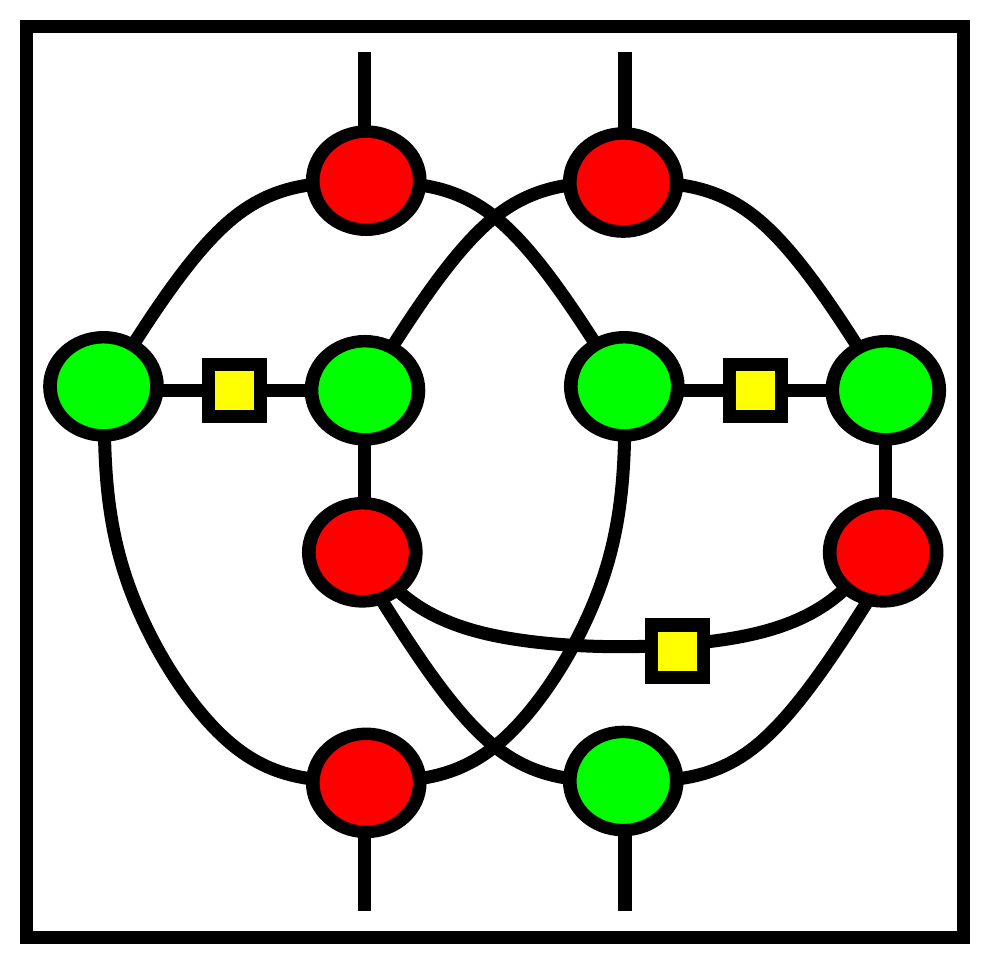} &
    \includegraphics[scale=0.2]{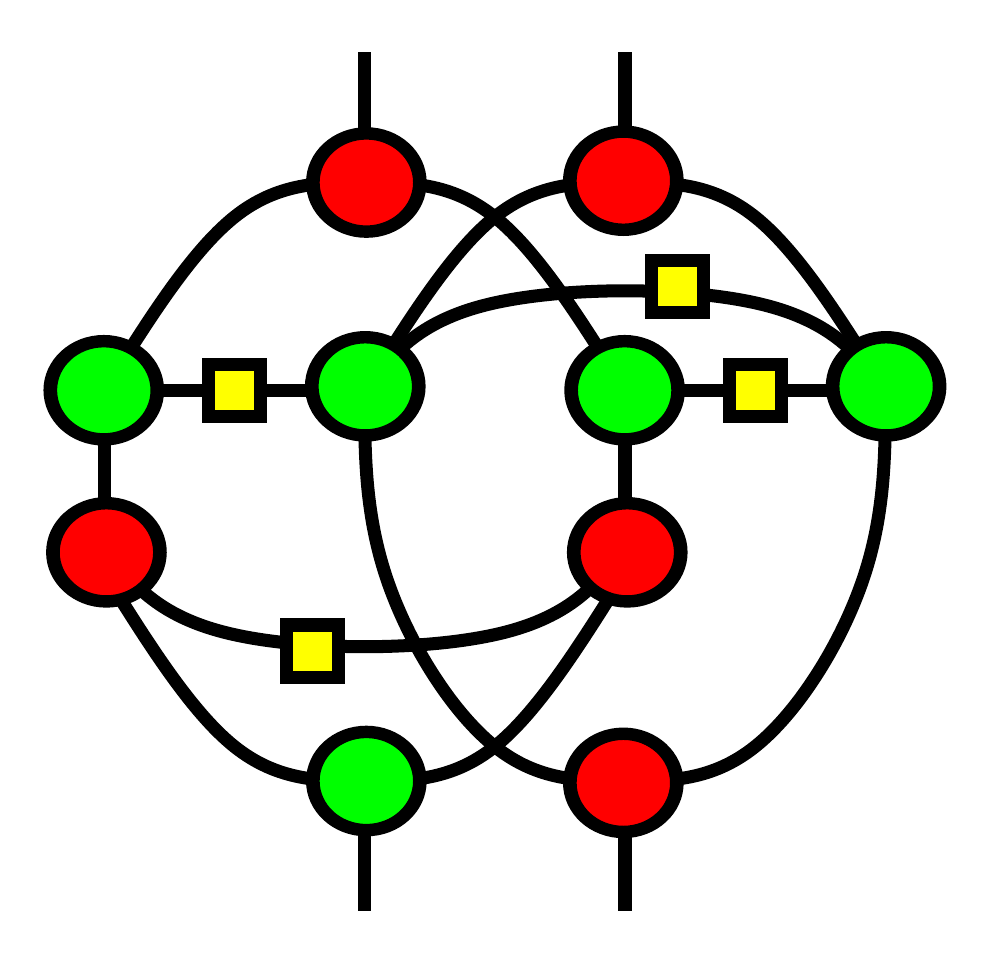} &
    \end{longtable}
    \caption{Entangled Complementarity Diagrams in $\mathsf{G}_2$}
    \label{fig:CD-2Q}
\end{figure}

Fig. \ref{fig:CD-2Q} gives the members of $\mathsf{G}_2$ which are entangled, and those complementarity diagrams within boxes in the figure are the ones which satisfy the complementarity condition, i.e. they are members of $\mathsf{G}_2^\text{pass}$.

To complete $\mathsf{G}_2^\text{pass}$, we need to include those complementarity diagrams which are separable. This can be easily obtained from complementarity diagrams on a single qubit, of which there are three (see Fig. \ref{fig:CD-1Q}). The middle and rightmost diagrams in Fig. \ref{fig:CD-1Q} are complementarity diagrams which satisfy the complementarity condition. We can separably compose them to obtain separable complementarity diagrams on $N$ qubits which satisfy the complementarity condition. For two qubits, those complementarity diagrams are given in Fig. \ref{fig:sep-CDpass-2Q}.

\begin{figure}[!ht]
\centering
    \includegraphics[scale=0.2]{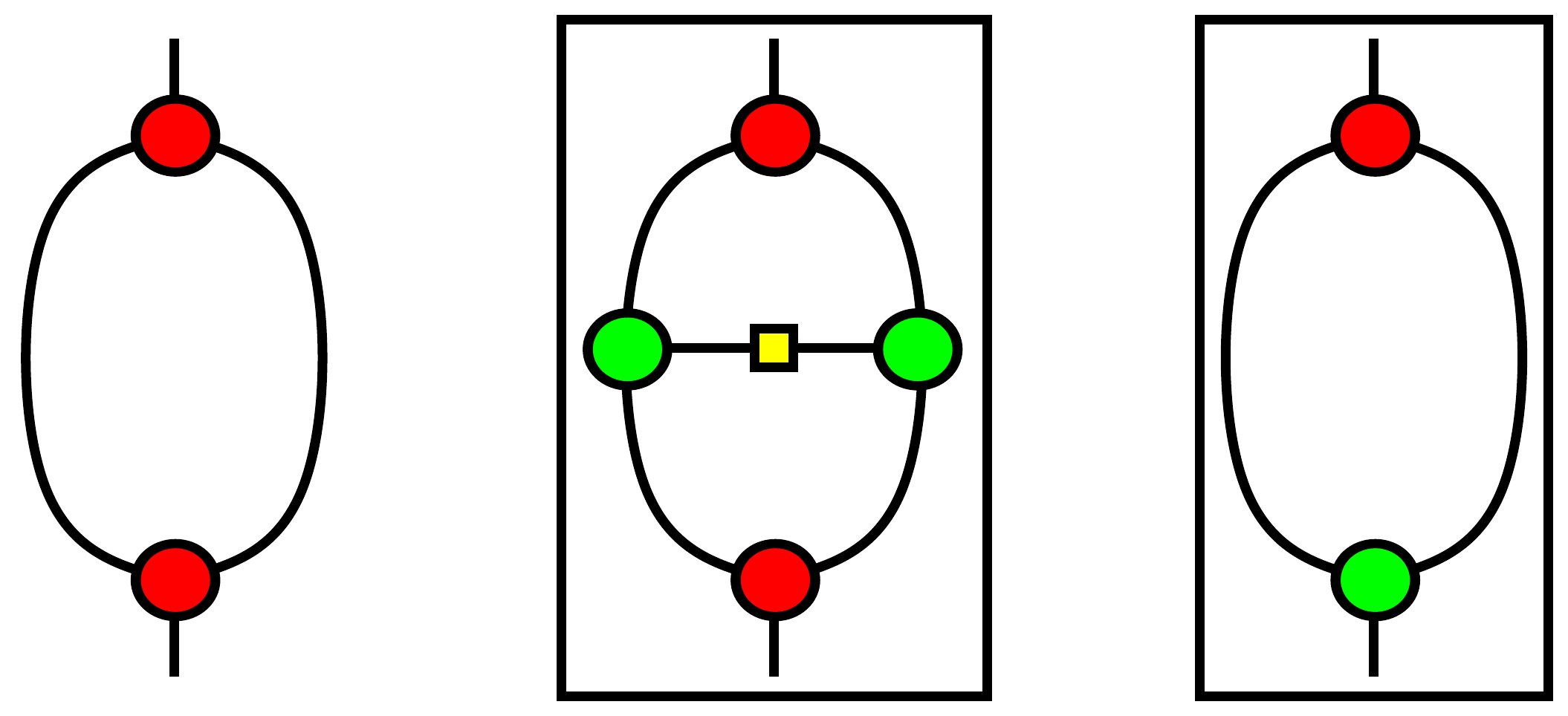}
		\caption{Complementarity diagrams on a single qubit.}
    \label{fig:CD-1Q}
\end{figure}

\begin{figure}[!ht]
    \centering
    \includegraphics[scale=0.2]{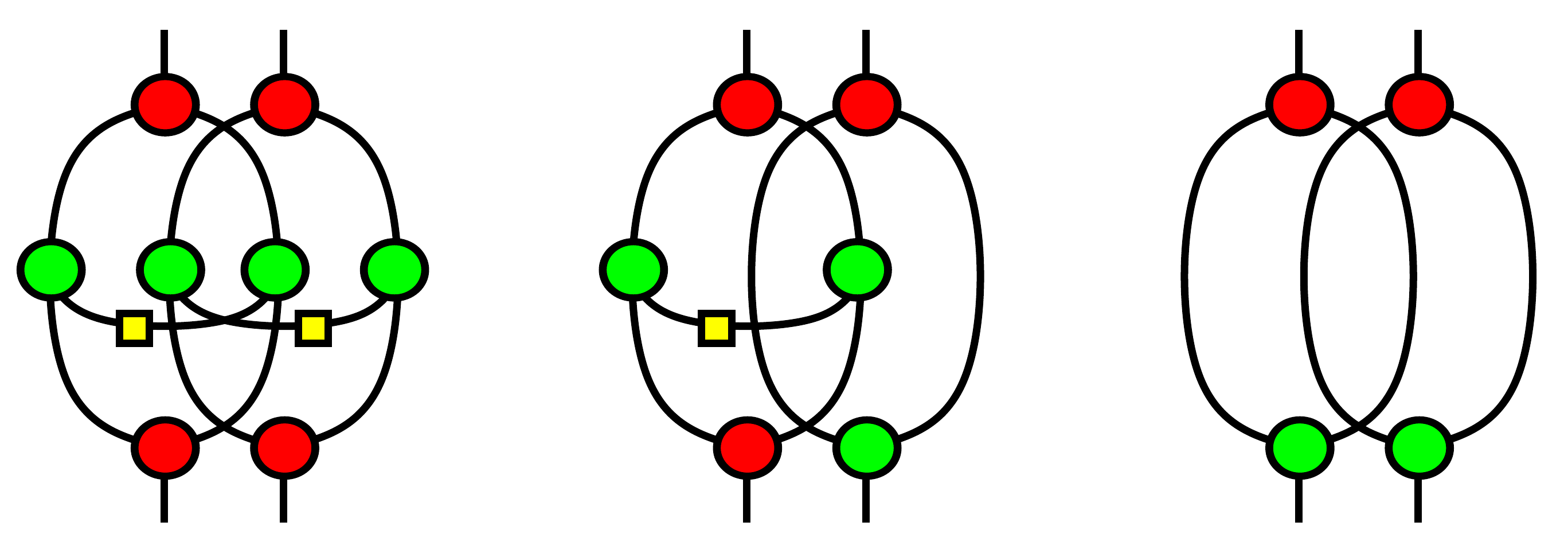}
    \caption{Separable Complementarity Diagrams in $G_2^\text{pass}$}
    \label{fig:sep-CDpass-2Q}
\end{figure}

The three diagrams in Fig. \ref{fig:sep-CDpass-2Q} along with the boxed complementarity diagrams in Fig. \ref{fig:CD-2Q} are all the members of $\mathsf{G}_2^\text{pass}$. From $\mathsf{G}_2^\text{pass}$, we obtain $\mathsf{P}_2$:

\begin{align*}
\mathsf{P}_2 & = &
\left\{
\begin{pmatrix}
1+Z & 1+Z\\
0 & 1+i\\
1+i & 0
\end{pmatrix},
\begin{pmatrix}
1+Z & 0\\
0 & 1+j\\
1+k & 0
\end{pmatrix},
\begin{pmatrix}
Z & 1\\
0 & 1+j\\
1+k & 0
\end{pmatrix},
\begin{pmatrix}
Z & Z\\
0 & 1\\
1 & 0
\end{pmatrix},
\begin{pmatrix}
1+Z & Z\\
0 & j\\
k & 0
\end{pmatrix}\right., 
\\
& &
\quad \left.
\begin{pmatrix}
0 & 0\\
0 & 1\\
1 & 0
\end{pmatrix},
\begin{pmatrix}
1 & 0\\
0 & 1\\
1 & 0
\end{pmatrix},
\begin{pmatrix}
1 & Z\\
0 & 1\\
1 & 0
\end{pmatrix},
\begin{pmatrix}
0 & 1+Z\\
0 & k\\
j & 0
\end{pmatrix},
\begin{pmatrix}
1 & 1\\
0 & 0\\
0 & 0
\end{pmatrix},
\begin{pmatrix}
1 & Z\\
0 & 0\\
0 & 0
\end{pmatrix},
\begin{pmatrix}
Z & Z\\
0 & 0\\
0 & 0
\end{pmatrix}
\right\}
\end{align*}

Performing steps 5 to 8 in the previous section, we obtain the graph $\mathbf{G}_2$ and its complete subgraphs, which are given in Fig. \ref{fig:complete-sub-g2}. Recall that the vertices of $\mathbf{G}_2$ are composite classical structures on two qubits, and edges in the graph are between those pairs of classical structures which are complementary. 


\begin{figure}[!ht]
    \centering
\begin{longtable}{cccc}
\includegraphics[scale=0.25]{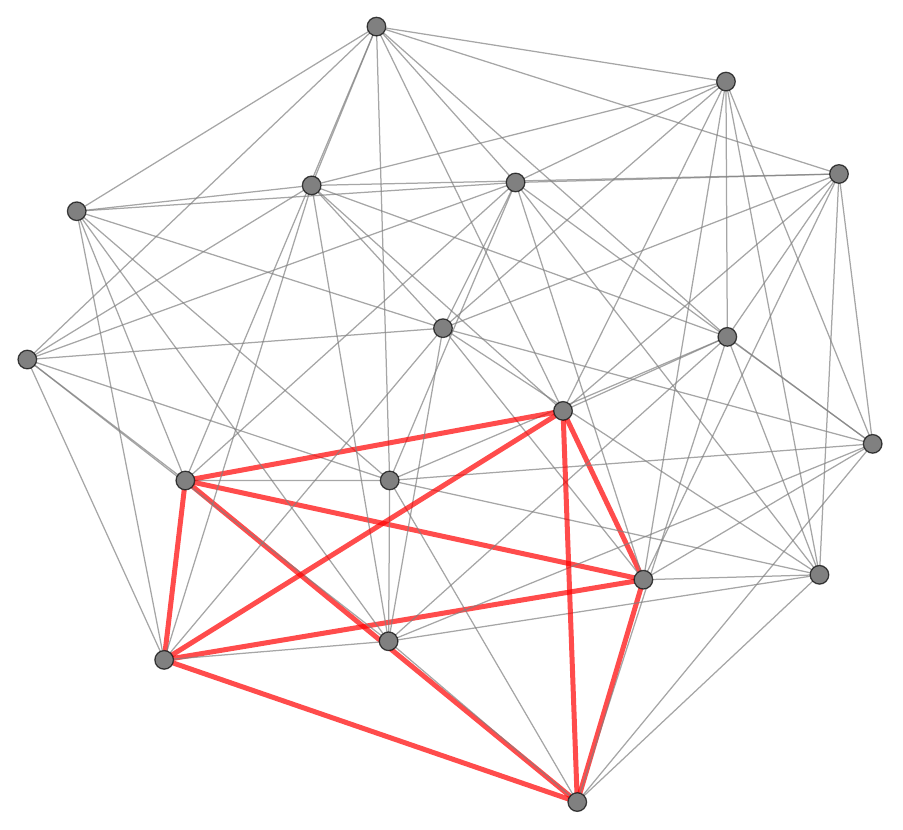} &
\includegraphics[scale=0.25]{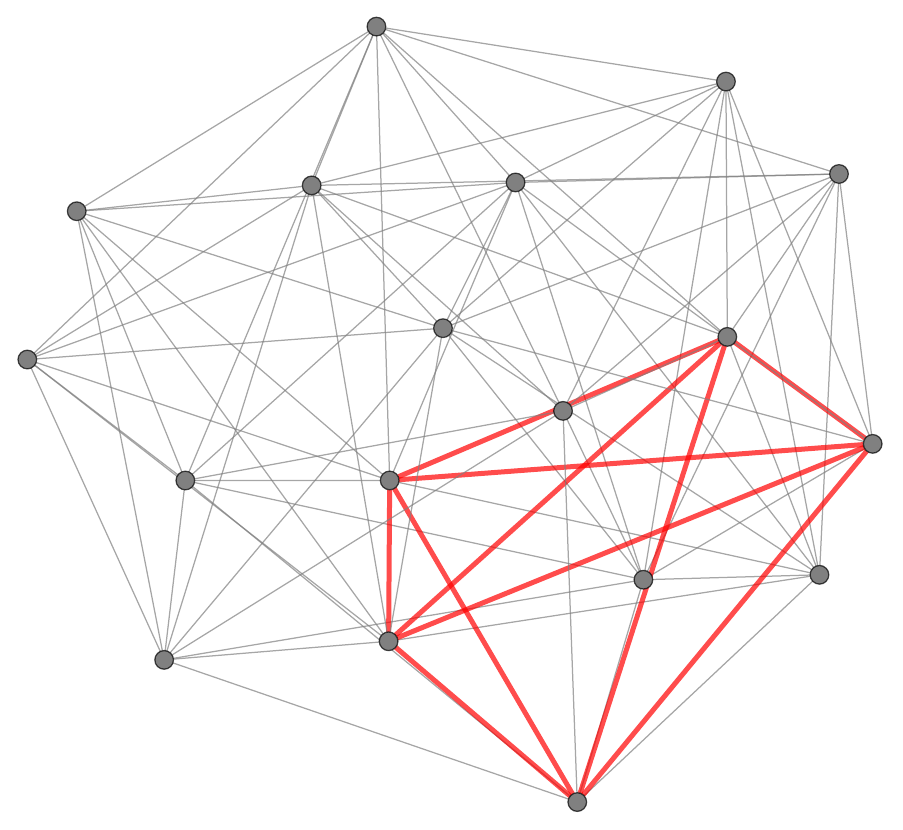} &
\includegraphics[scale=0.25]{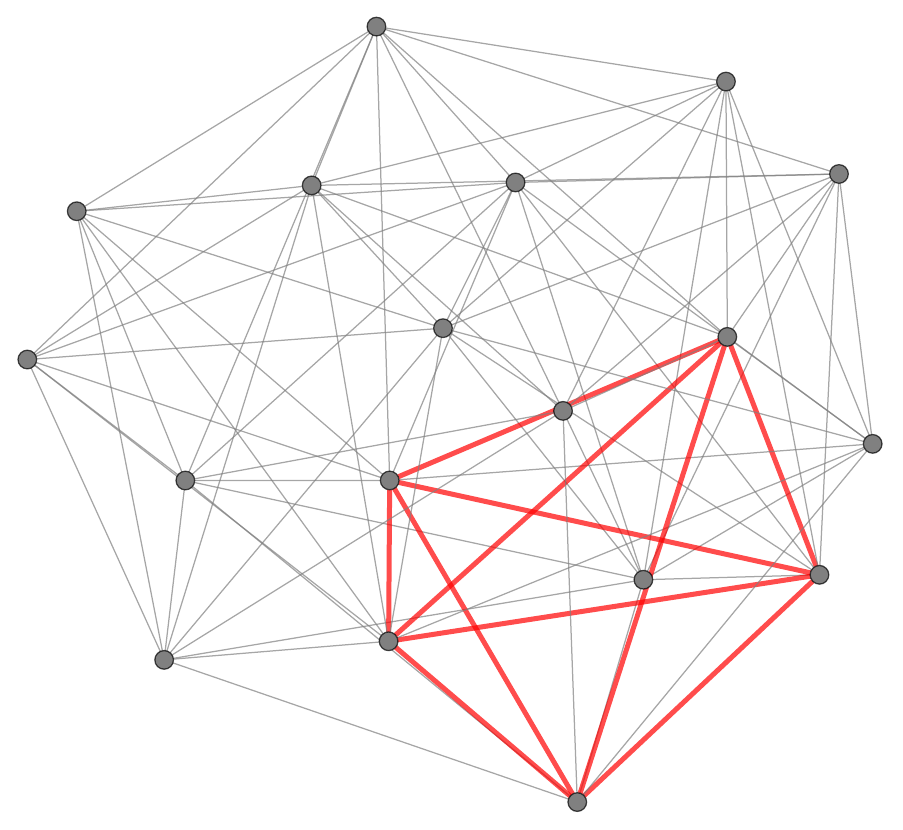} & 
\includegraphics[scale=0.25]{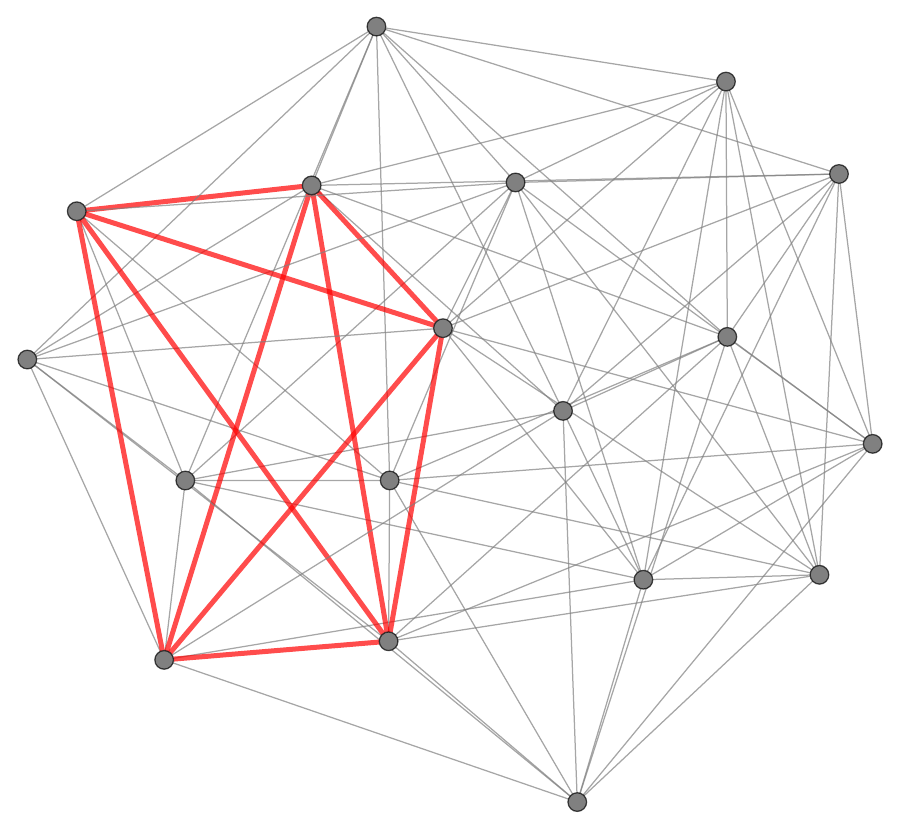} \\
\includegraphics[scale=0.25]{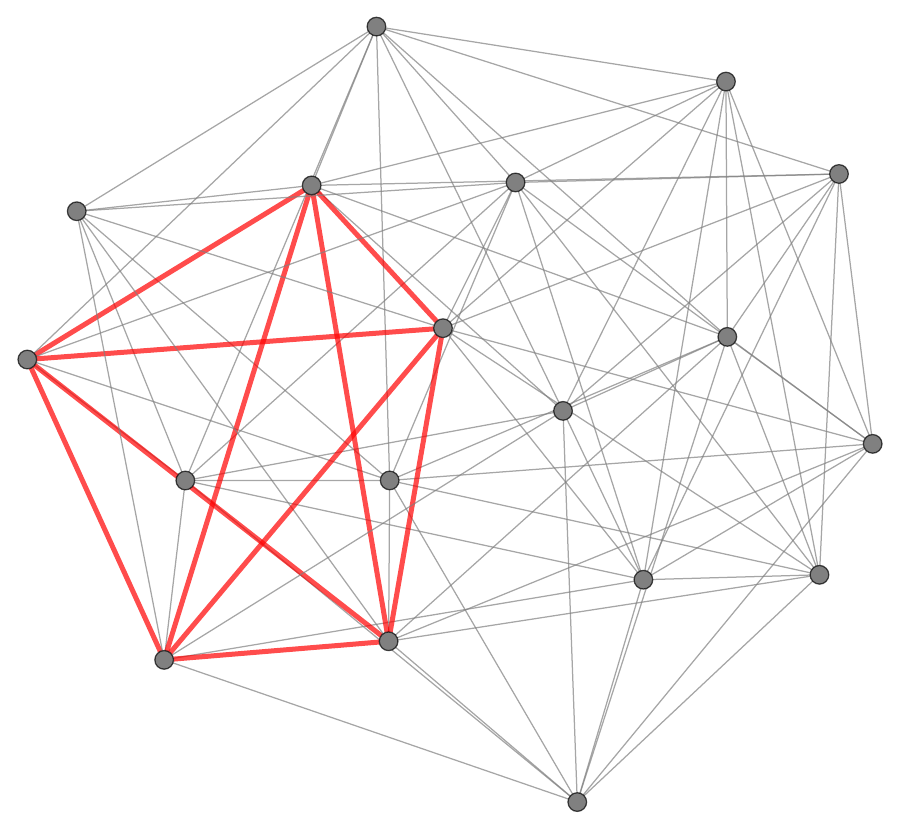} &
\includegraphics[scale=0.25]{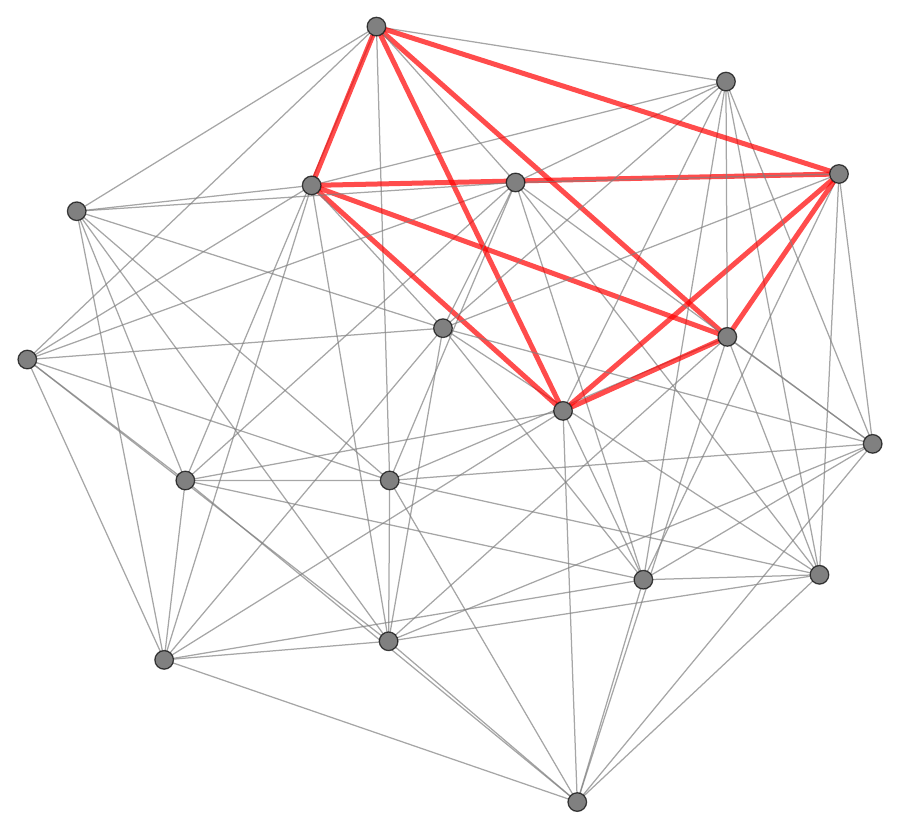} &
\includegraphics[scale=0.25]{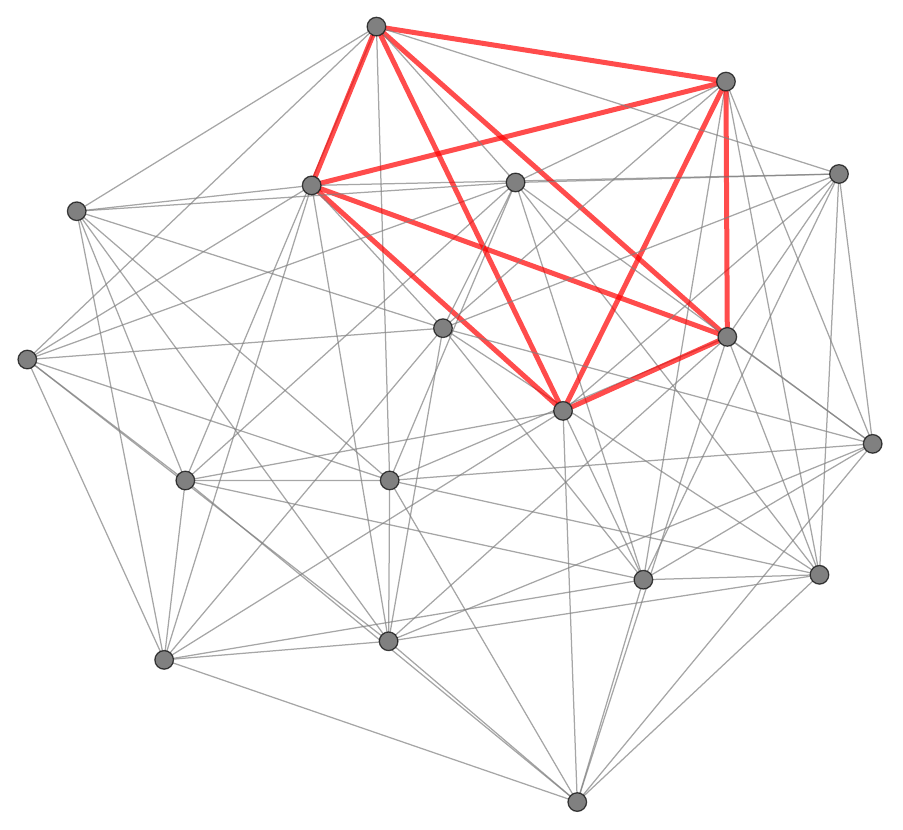} &
\includegraphics[scale=0.25]{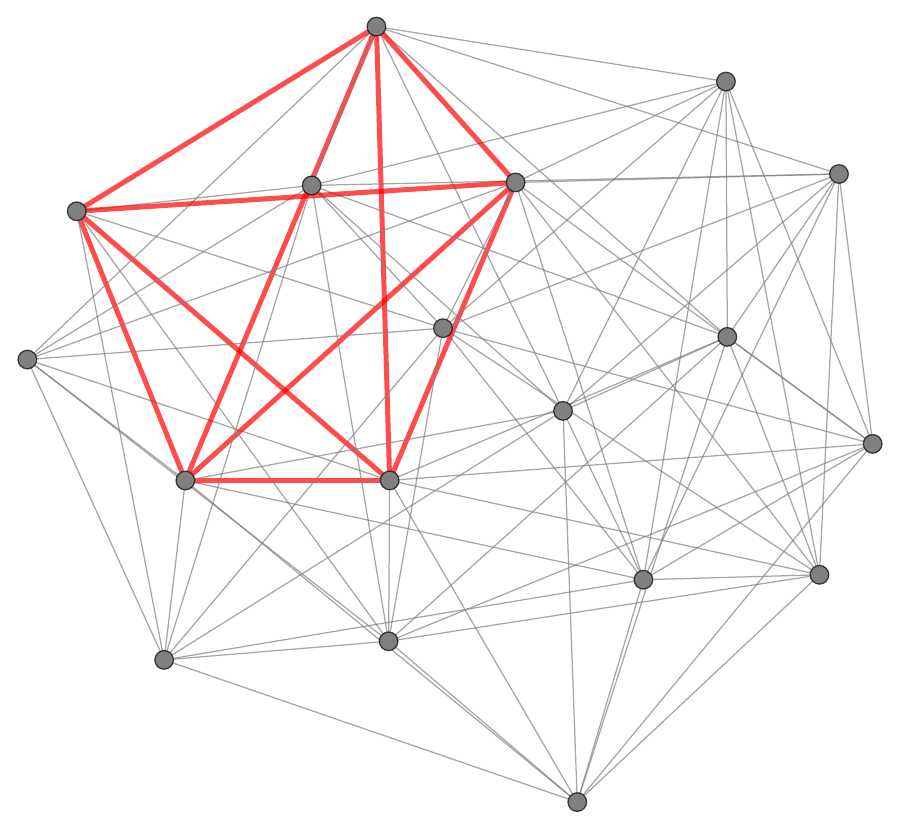} \\ 
\includegraphics[scale=0.25]{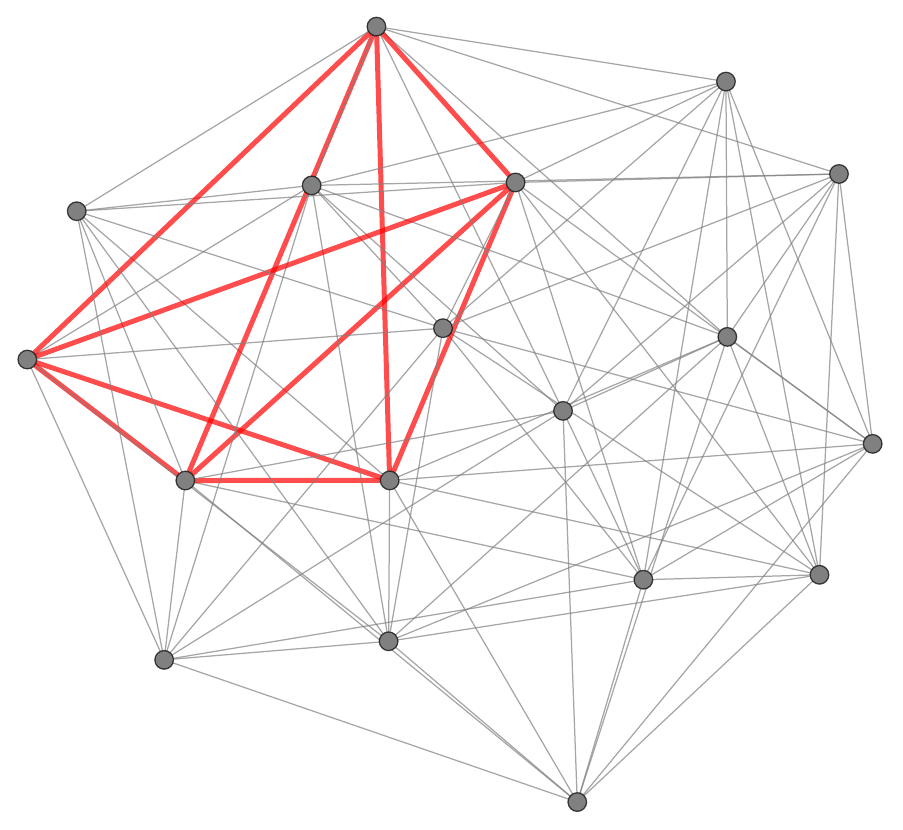} & 
\includegraphics[scale=0.25]{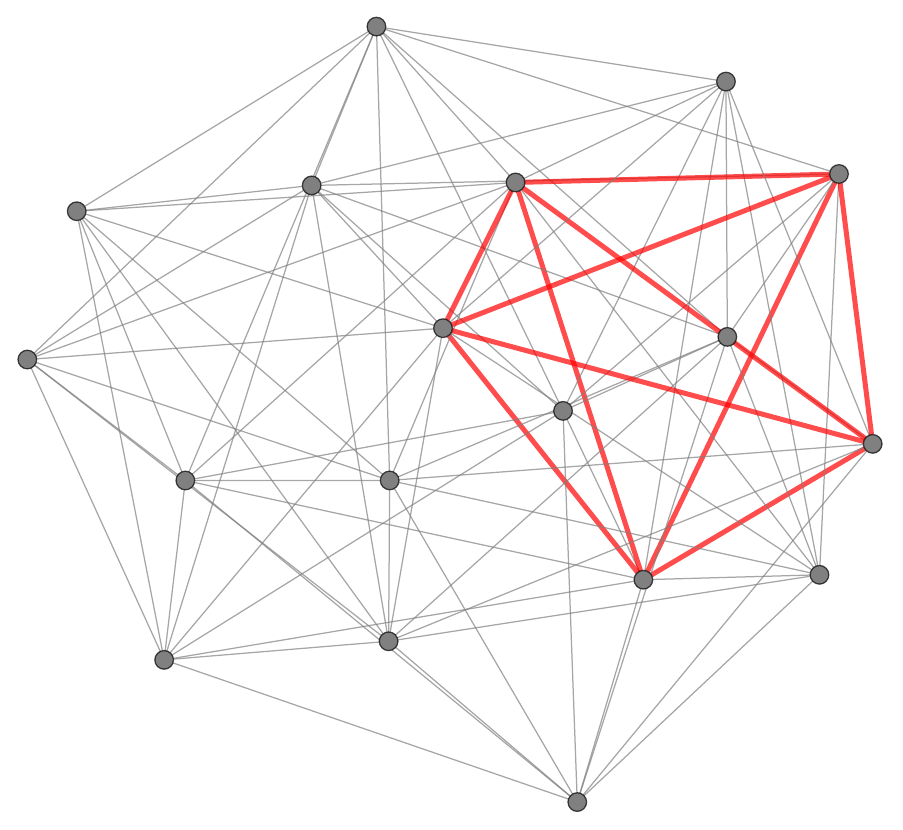} &
\includegraphics[scale=0.25]{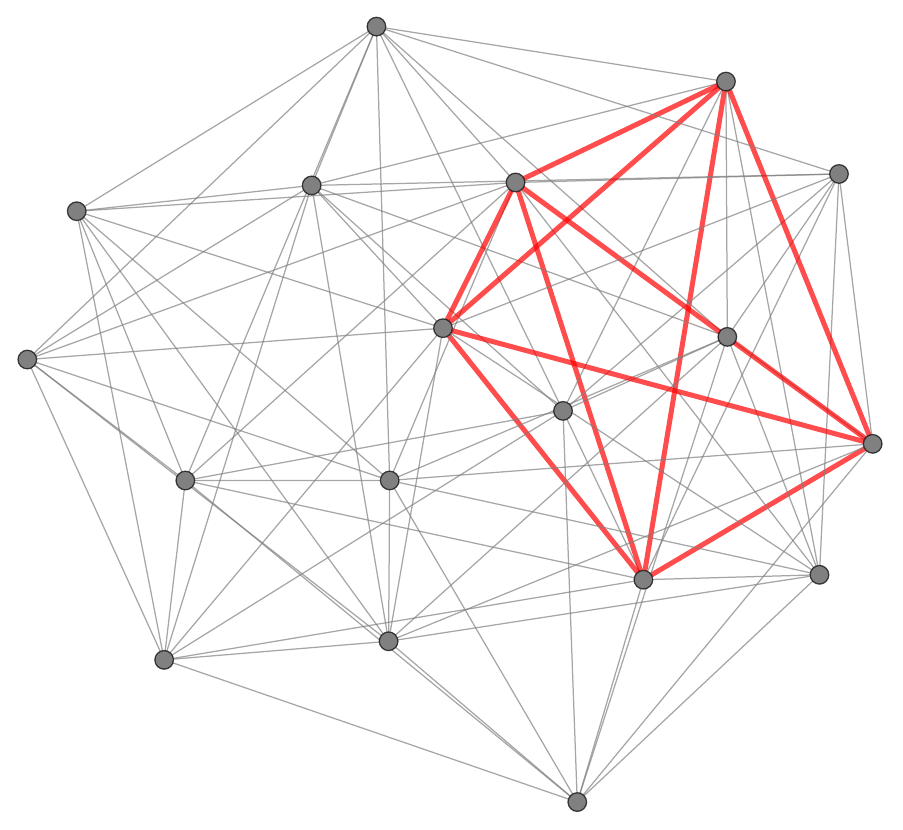} &
\includegraphics[scale=0.25]{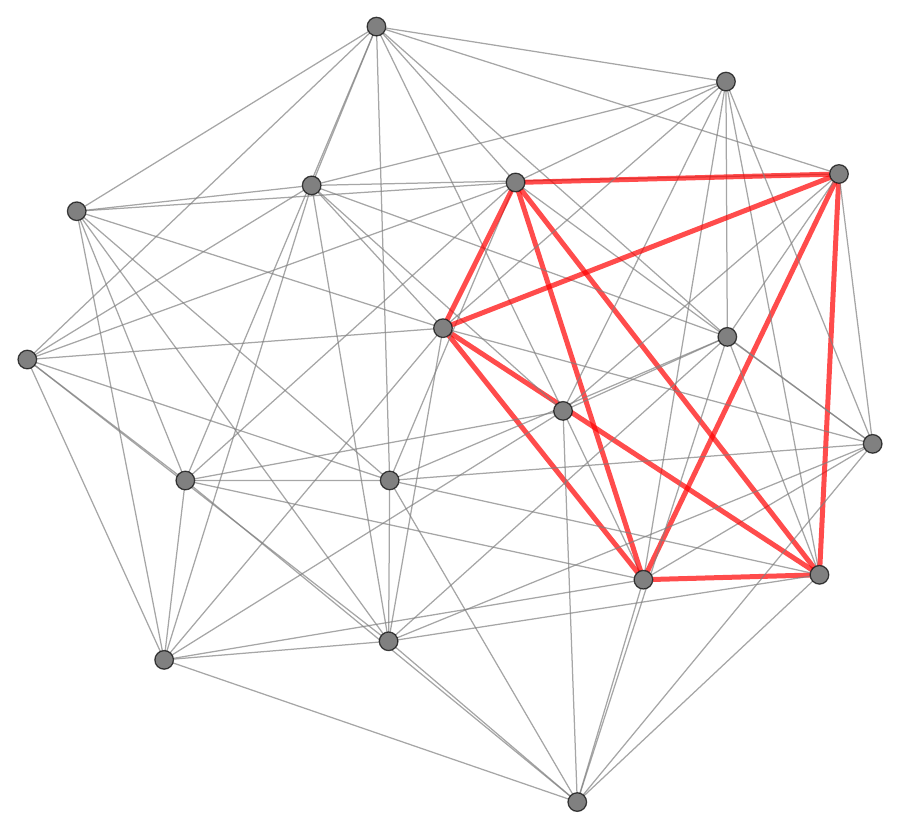} \\
& \includegraphics[scale=0.25]{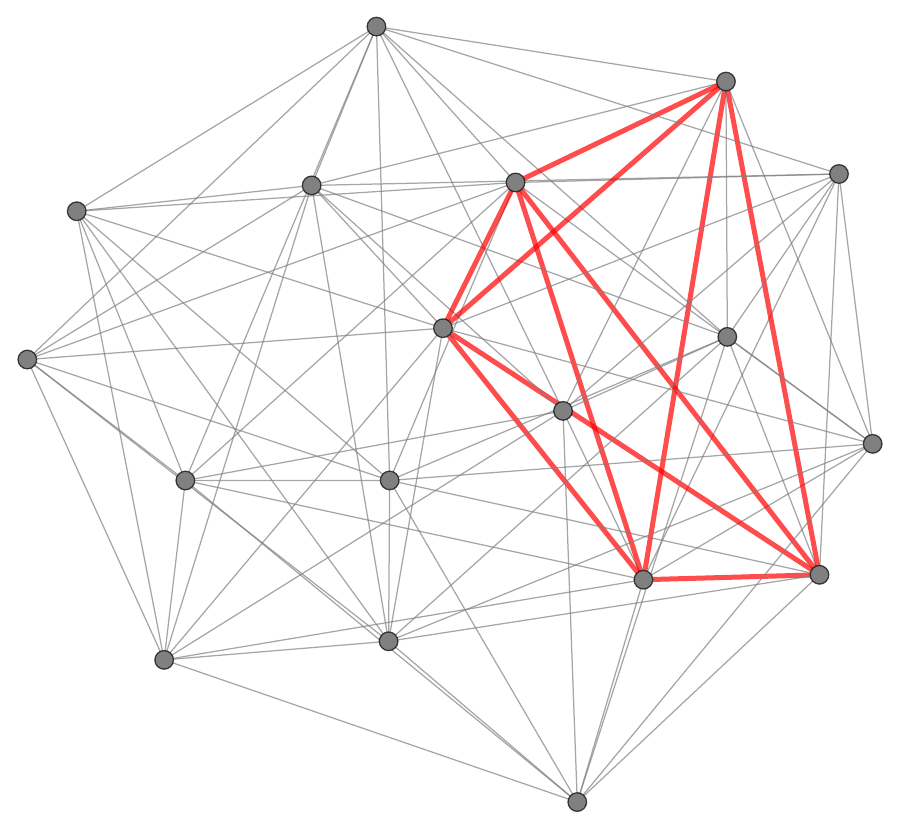} &&
\end{longtable}
    \caption{Complete subgraphs of $\mathbf{G}_2$}
    \label{fig:complete-sub-g2}
\end{figure}

As maximal complete sets of mutually complementary classical structures on two qubits, the complete subgraphs in Fig. \ref{fig:complete-sub-g2} are given as the following sets of names of classical structures:
\begin{eqnarray*}
\left\{
\begin{pmatrix}0&0\\0&0\\0&0\end{pmatrix},
\begin{pmatrix}0&1\\0&1\\1&0\end{pmatrix},
\begin{pmatrix}1&0\\0&1\\1&0\end{pmatrix},
\begin{pmatrix}1&1\\0&0\\0&0\end{pmatrix},
\begin{pmatrix}Z&Z\\0&0\\0&0\end{pmatrix}
\right\}\\
\left\{
\begin{pmatrix}0&1\\0&0\\0&0\end{pmatrix},
\begin{pmatrix}1&0\\0&0\\0&0\end{pmatrix},
\begin{pmatrix}1&1\\0&1\\1&0\end{pmatrix},
\begin{pmatrix}Z&Z\\0&0\\0&0\end{pmatrix},
\begin{pmatrix}Z&Z\\0&i\\i&0\end{pmatrix}
\right\}\\
\left\{
\begin{pmatrix}0&0\\0&1\\1&0\end{pmatrix},
\begin{pmatrix}0&1\\0&0\\0&0\end{pmatrix},
\begin{pmatrix}1&0\\0&0\\0&0\end{pmatrix},
\begin{pmatrix}1&1\\0&1\\1&0\end{pmatrix},
\begin{pmatrix}Z&Z\\0&0\\0&0\end{pmatrix}
\right\}\\
\left\{
\begin{pmatrix}0&Z\\0&0\\0&0\end{pmatrix},
\begin{pmatrix}1&0\\0&0\\0&0\end{pmatrix},
\begin{pmatrix}1&0\\0&1\\1&0\end{pmatrix},
\begin{pmatrix}Z&0\\0&j\\k&0\end{pmatrix},
\begin{pmatrix}Z&1\\0&0\\0&0\end{pmatrix}
\right\}\\
\left\{
\begin{pmatrix}0&Z\\0&0\\0&0\end{pmatrix},
\begin{pmatrix}0&Z\\0&k\\j&0\end{pmatrix},
\begin{pmatrix}1&0\\0&0\\0&0\end{pmatrix},
\begin{pmatrix}1&0\\0&1\\1&0\end{pmatrix},
\begin{pmatrix}Z&1\\0&0\\0&0\end{pmatrix}
\right\}\\
\left\{
\begin{pmatrix}0&Z\\0&0\\0&0\end{pmatrix},
\begin{pmatrix}1&1\\0&0\\0&0\end{pmatrix},
\begin{pmatrix}1&1\\0&1\\1&0\end{pmatrix},
\begin{pmatrix}Z&0\\0&0\\0&0\end{pmatrix},
\begin{pmatrix}Z&1\\0&j\\k&0\end{pmatrix}
\right\}\\
\left\{
\begin{pmatrix}0&Z\\0&0\\0&0\end{pmatrix},
\begin{pmatrix}1&1\\0&0\\0&0\end{pmatrix},
\begin{pmatrix}1&1\\0&1\\1&0\end{pmatrix},
\begin{pmatrix}1&Z\\0&k\\j&0\end{pmatrix},
\begin{pmatrix}Z&0\\0&0\\0&0\end{pmatrix}
\right\}\\
\left\{
\begin{pmatrix}0&1\\0&0\\0&0\end{pmatrix},
\begin{pmatrix}0&1\\0&1\\1&0\end{pmatrix},
\begin{pmatrix}1&Z\\0&0\\0&0\end{pmatrix},
\begin{pmatrix}Z&0\\0&0\\0&0\end{pmatrix},
\begin{pmatrix}Z&0\\0&j\\k&0\end{pmatrix}
\right\}\\
\left\{
\begin{pmatrix}0&1\\0&0\\0&0\end{pmatrix},
\begin{pmatrix}0&1\\0&1\\1&0\end{pmatrix},
\begin{pmatrix}0&Z\\0&k\\j&0\end{pmatrix},
\begin{pmatrix}1&Z\\0&0\\0&0\end{pmatrix},
\begin{pmatrix}Z&0\\0&0\\0&0\end{pmatrix}
\right\}\\
\left\{
\begin{pmatrix}0&0\\0&0\\0&0\end{pmatrix},
\begin{pmatrix}1&Z\\0&0\\0&0\end{pmatrix},
\begin{pmatrix}Z&1\\0&0\\0&0\end{pmatrix},
\begin{pmatrix}Z&1\\0&j\\k&0\end{pmatrix},
\begin{pmatrix}Z&Z\\0&i\\i&0\end{pmatrix}
\right\}\\
\left\{
\begin{pmatrix}0&0\\0&0\\0&0\end{pmatrix},
\begin{pmatrix}1&Z\\0&0\\0&0\end{pmatrix},
\begin{pmatrix}1&Z\\0&k\\j&0\end{pmatrix},
\begin{pmatrix}Z&1\\0&0\\0&0\end{pmatrix},
\begin{pmatrix}Z&Z\\0&i\\i&0\end{pmatrix}
\right\}\\
\left\{
\begin{pmatrix}0&0\\0&0\\0&0\end{pmatrix},
\begin{pmatrix}0&0\\0&1\\1&0\end{pmatrix},
\begin{pmatrix}1&Z\\0&0\\0&0\end{pmatrix},
\begin{pmatrix}Z&1\\0&0\\0&0\end{pmatrix},
\begin{pmatrix}Z&1\\0&j\\k&0\end{pmatrix}
\right\}\\
\left\{
\begin{pmatrix}0&0\\0&0\\0&0\end{pmatrix},
\begin{pmatrix}0&0\\0&1\\1&0\end{pmatrix},
\begin{pmatrix}1&Z\\0&0\\0&0\end{pmatrix},
\begin{pmatrix}1&Z\\0&k\\j&0\end{pmatrix},
\begin{pmatrix}Z&1\\0&0\\0&0\end{pmatrix}
\right\}
\end{eqnarray*}

\section{Complementary CS on Three Qubits}\label{sec:CCS-3Q}

For three qubits, a complementarity diagram formed by composite classical structures $\mathcal{A}$ and $\mathcal{B}$ --- where the constituents of $\mathcal{A}$ have spiders $\blue, \purple$ and $\orange$, and constituents of $\mathcal{B}$ have spiders $\pink$, $\brown$ and $\cyan$ --- satisfies the complementarity condition if it is equal to the following diagram:
\begin{equation*}
    \includegraphics[scale=0.2]{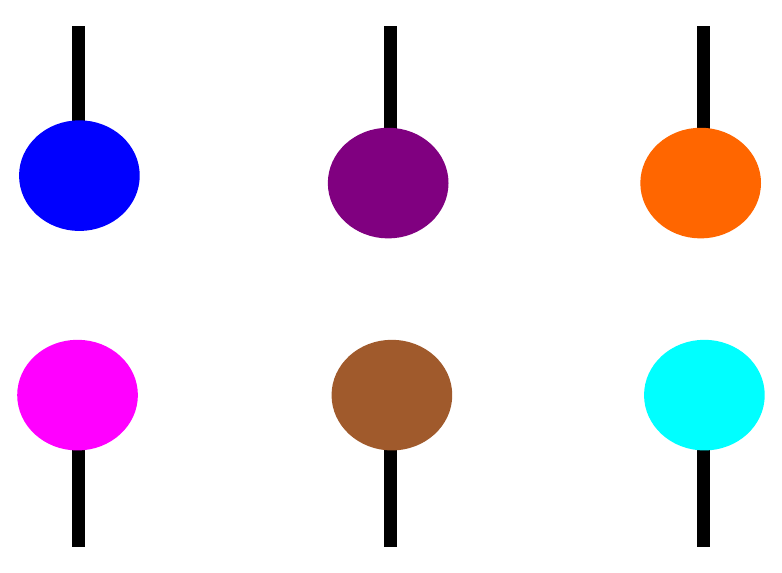}
\end{equation*}

To obtain LHS of the complementarity condition (Eq. \ref{eq:complementarity}), we find complementarity diagrams on three qubits which consists of the following slices:
\begin{equation*}
    \includegraphics[scale=0.2]{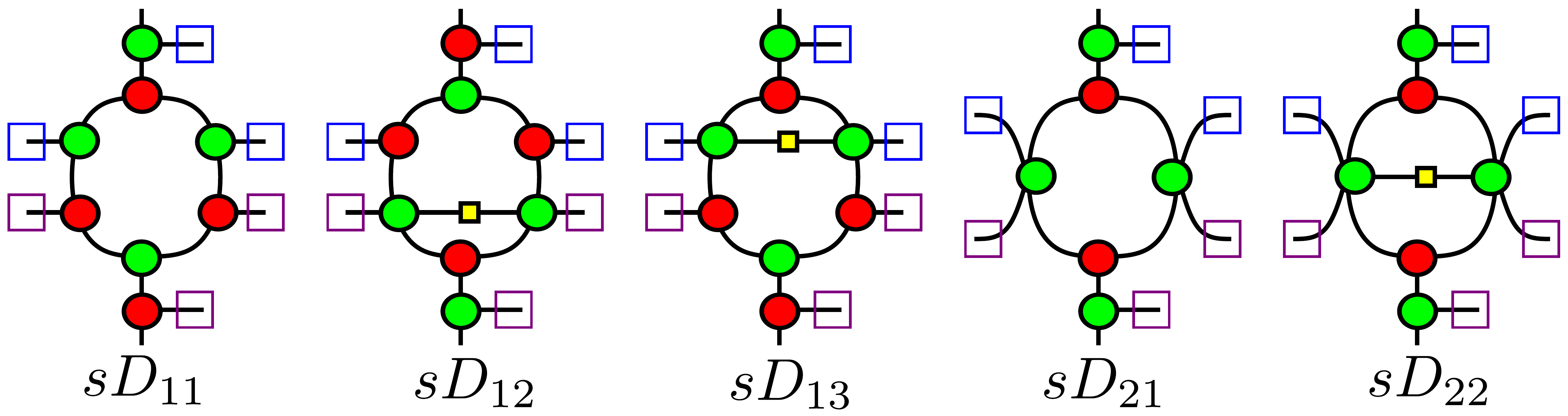}
\end{equation*}

However, we already know the complementarity diagrams on a single qubit and two qubits which generate complementarity diagrams on those respective number of qubits. So, we can ignore, for the moment, those complementarity diagrams on three qubits which are separable on all qubits and those which are biseparable, i.e. complementarity diagrams which are separable on two pairs of qubits and non-separable on one pair of qubits. 

Out of the non-separable complementarity diagrams on three qubits which contain $sD_{11}, sD_{12}, sD_{13}, sD_{21},$ and $sD_{22}$ as subdiagrams, we can eliminate those with names containing at least one column for which the first entry is $1+Z$ and the other entries of the column do not belong in $\{i,k,1+i,1+k,j+i,j+k\}$. We can further reduce the complementarity diagrams by keeping one representative of complementarity diagrams which are equivalent up to permutation on qubits, and discarding the rest. Such a reduced complementarity diagrams on three qubits form a subset of $\mathsf{G}_3$ which contain members which are non-separable on all qubits. We denote this set as $\mathsf{G}_3^\text{ent}$. The members of the $\mathsf{G}_3^\text{ent}$ are given in Section \ref{sec:entCD3Q}. The matrices within boxes in Section \ref{sec:entCD3Q} are names of complementarity diagrams belonging to $\mathsf{G}^\text{ent}_3\cap\mathsf{G}^\text{pass}_3$. We can obtain the rest of the members of $\mathsf{G}^\text{pass}_3$ from $\mathsf{G}^\text{pass}_1$ and $\mathsf{G}^\text{pass}_2$.   

The following are the separable complementarity diagrams on three qubits which satisfy the complementarity condition:

\begin{equation*}
    \includegraphics[scale=0.18]{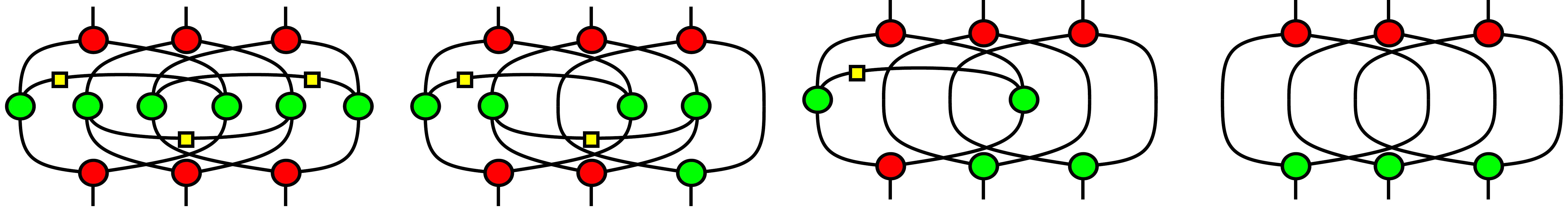}
\end{equation*}

For the biseparable complementarity diagrams on three qubits, we can simply separably compose those complementarity diagrams on a single qubit and those on two qubits, shown below, which satisfy the complementarity condition:
\begin{equation*}
    \includegraphics[scale=0.2]{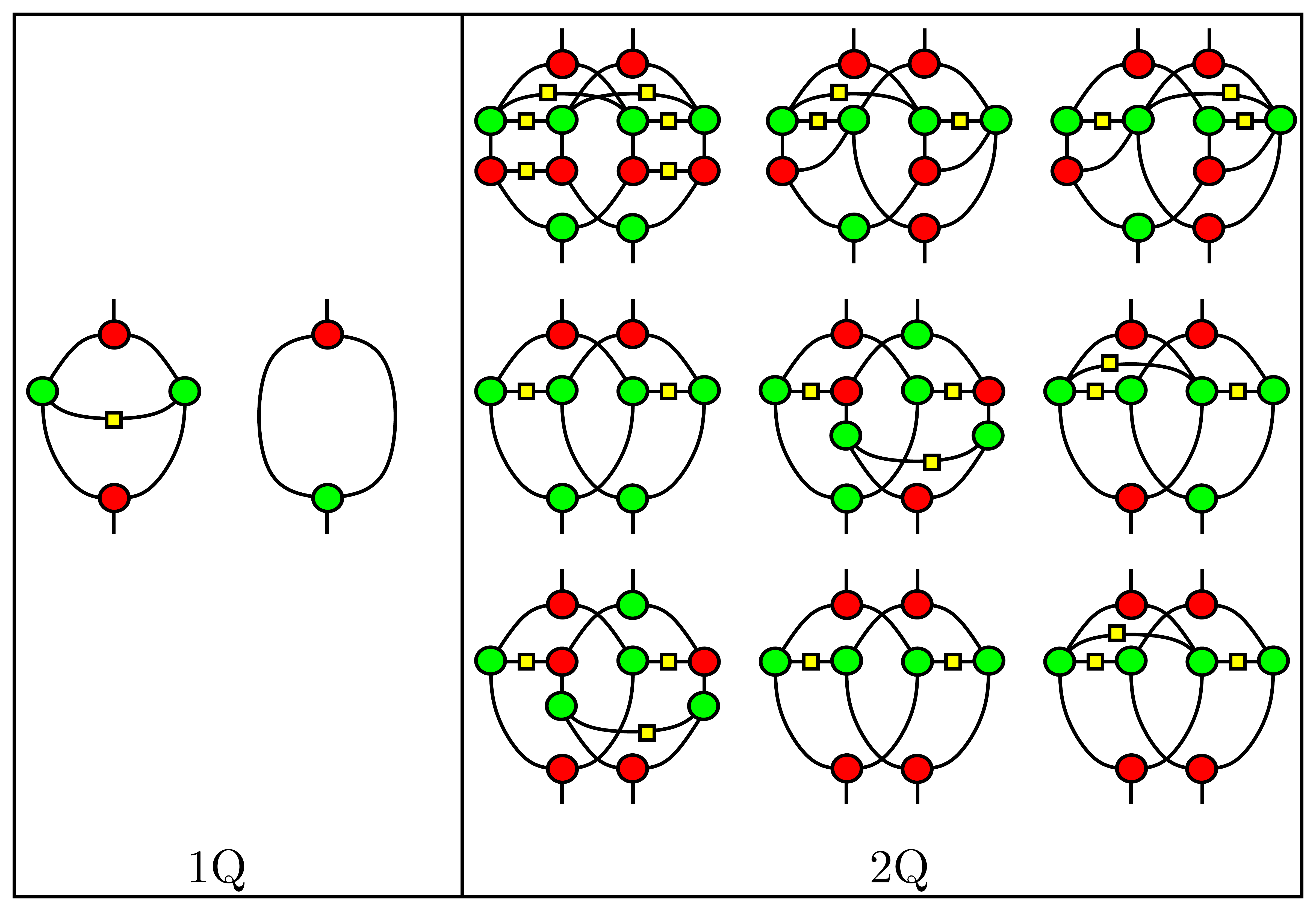}
\end{equation*}

So, in total, $\mathsf{G}^\text{pass}_3$ has 251 members. With this, we are able to obtain $\mathsf{P}_3$, and consequently, $\mathsf{T}_3$. Then we can check complementarity between every pair of composite classical structures on three qubits and form the graph $\mathbf{G}_3$. Via our computation using Mathematica, we found 32,448 complete sets of mutually complementary classical structures on three qubits. All such sets have entanglement configurations in accordance to the results in reference \cite{Romero2005}. Here, we give one example for each entanglement configuration. By entanglement configuration, we mean $(n_\text{SC},n_\text{BS},n_\text{NS})$ of a complete set of mutually complementary classical structures where $n_\text{SC}$ is the number of separably-composed classical structures in the set, $n_\text{BS}$ is the number of biseparable classical structures in the set, and $n_\text{NS}$ is the number of non-separable classical structures in the set. For three qubits, there are four entanglement configurations: (3,0,6), (2,3,4), (1,6,2), and (0,9,0) \cite{Romero2005}. 

An example of a complete set of mutually complementary classical structures for (3,0,6) is:
\input{list/eg306}

An example of a complete set of mutually complementary classical structures for (2,3,4) is:
\input{list/eg234}

An example of a complete set of mutually complementary classical structures for (1,6,2) is:
\input{list/eg162}

An example of a complete set of mutually complementary classical structures for (0,9,0) is:
\input{list/eg090}

\section{Chapter Summary}

In this chapter,
\begin{itemize}
    \item We introduced complementarity diagrams in Section \ref{sec:CD-intro} and presented its general form for composite classical structures on two qubits;
    \item Generalizing to $N$ qubits, we showed how to partition complementarity diagrams into equivalence classes so that we may search for those diagrams which satisfy the complementarity diagrams more efficiently;
    \item We also highlighted those complementarity diagrams which can be formed by more than a single pair of composite classical structures by splitting a complementarity diagram into slices which correspond to a subdiagram on an individual qubit; 
    \item Finally, we introduced a procedure for obtaining every pair of complementary classical structures for the case of two and three qubits (see Section \ref{sec:entCD3Q});
    \item Once we have determined all pairs of complementary classical structures, we form a graph with the classical structures as the vertices and there is an edge between two classical structures if they are complementary. From this graph, we obtain the maximal complete sets of mutually complementary classical structures. 
\end{itemize}

\chapter{Conclusion and Future Work}\label{sec:discussion}

In this thesis, we devised a procedure to compose classical structures on qubits. The two main ingredients of our procedure is the set of three mutually complementary classical structures on a single qubit --- the constituents which are denoted by $\mathcal{X},\mathcal{Y}$ and $\mathcal{Z}$ --- and bipartite processes which `stitch together' these constituents, i.e. connecting wires. Our procedure makes explicit the separability of the classical structures with respect to their underlying bases. We then showed that any pair of composite classical structures, obtained via our procedure, which are complementary produce a complementarity diagram that is separable on all qubits. Once we have established this condition for complementarity, we generated complementarity diagrams for the case of two and three qubits.

For two qubits, we generated entangled complementarity diagrams from 18 complementarity diagrams (see Fig. \ref{fig:CD-2Q}). We need not check all separable CD on two qubits since we can obtain those which satisfy the complementarity condition by separably composing CD on one qubit which satisfy the complementarity condition. Similarly, we can generate entangled complementarity diagrams on three qubits from 470 CD (see Section \ref{sec:entCD3Q}), and we can obtain separable and biseparable CD on three qubits which satisfy the complementarity condition from the CD on one and two qubits which satisfy the complementarity condition. For each of the two cases, we formed a graph with classical structures on the relevant number of qubits as its vertices and there is an edge between a pair of classical structures if they are complementary. From this graph, we obtain complete sets of mutually complementary classical structures by searching for complete subgraphs in the graph. In total, we found 13 maximal complete sets of mutually complementary classical structures on two qubits, and 32,448 maximal complete sets of mutually complementary classical structures on three qubits. Furthermore, the entanglement configuration of these sets agree with the results by \citet{Romero2005}.  

There is a steep increase of  
these generating complementarity diagrams from two qubits to three qubits. We suspect this trend would continue as the number of qubits grows. This does not quite work in our favour as we seek to simplify the search for complete sets of mutually complementary classical structures. However, relative to the set of all pairs of classical structures on three qubits, the size of the generating set of complementarity diagrams is small, i.e. 470 generating complementarity diagrams vs. 23,220 pairs of classical structures. So our procedure does significantly reduce the number of pairs of classical structures to be checked. 

In the following sections, we shall propose a strategy to make our procedure more efficient. This includes a discussion on composing connecting wires and classical structures with underlying bases containing equivalent states.  

\section{Composing Connecting Wires}

The following are examples of maximal complete sets of complementary classical structures on three qubits with entanglement configurations (1,6,2) and (3,0,6), respectively:
\input{list/egXY162}
\input{list/egXY306fr162}

We can obtain one of the sets above from the other by `deleting' or `generating' entanglement between the first and second constituents in each of the classical structures in either set. Entanglement between two constituents can be deleted due to Proposition \ref{prop:hopf-had}, which implies the following:
\begin{equation}\label{eq:cw-delete}
    \includegraphics[scale=0.2]{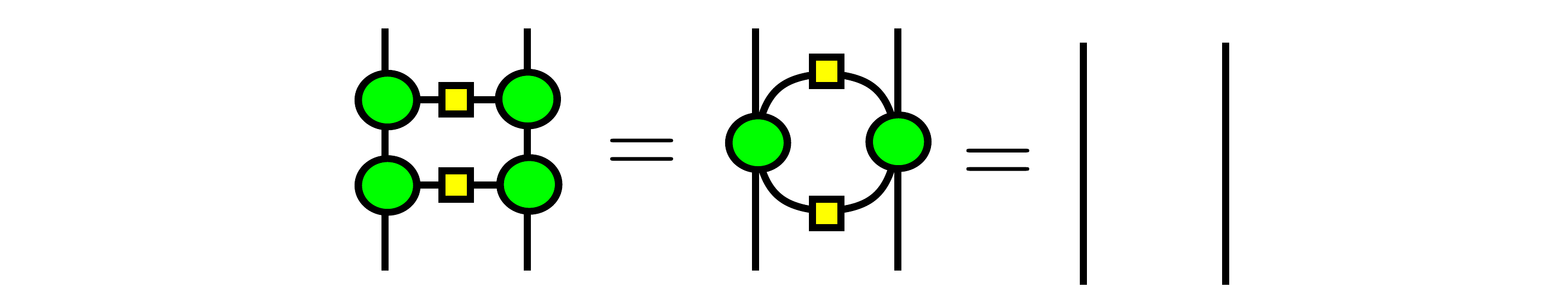}
\end{equation}
Using the previous equation, we obtain the following equation:
\begin{equation}
    \includegraphics[scale=0.2]{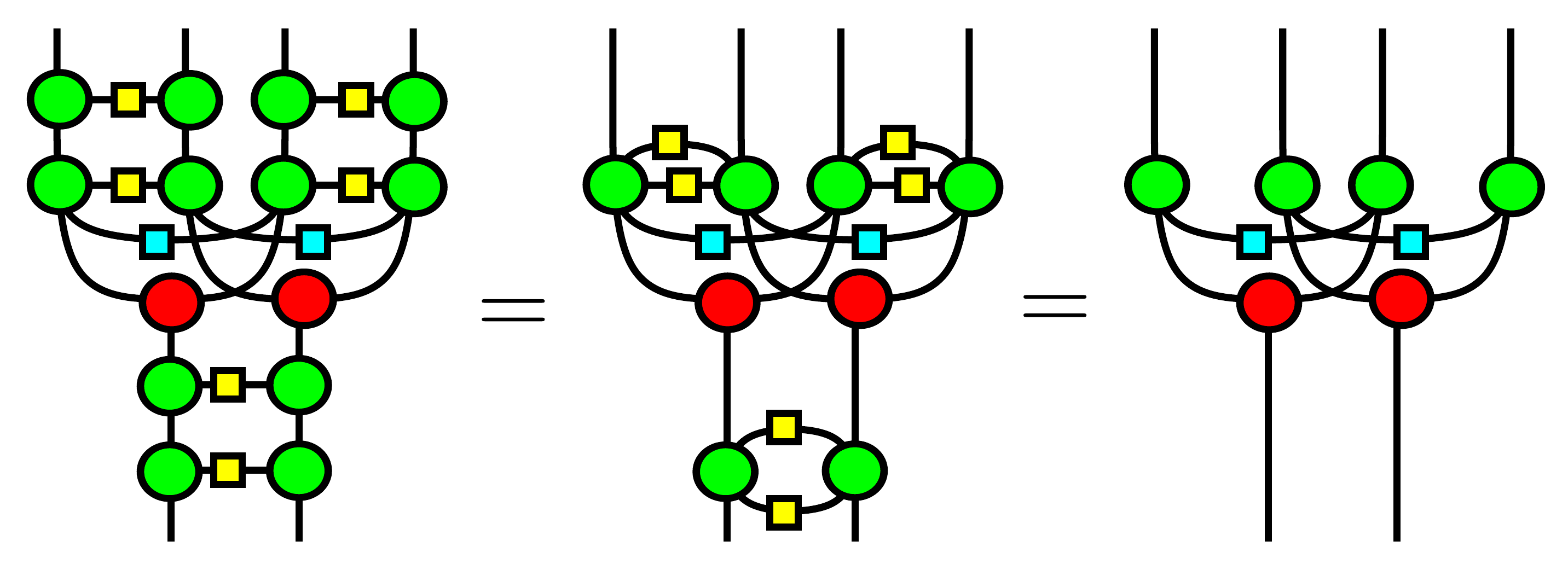} \label{eq:cw-xy-delete}
\end{equation}
where:
\begin{equation*}
    \includegraphics[scale=0.2]{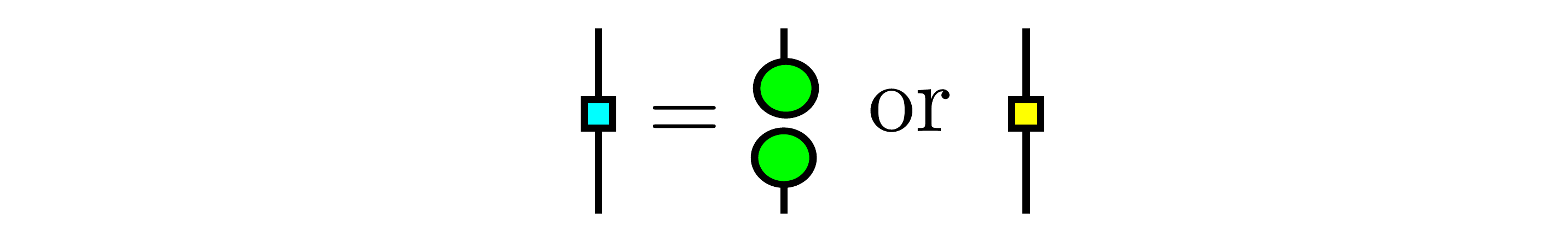}
\end{equation*}
We can generate entanglement between two qubits in the following way:
\begin{equation}\label{eq:cw-xy-generate}
    \includegraphics[scale=0.2]{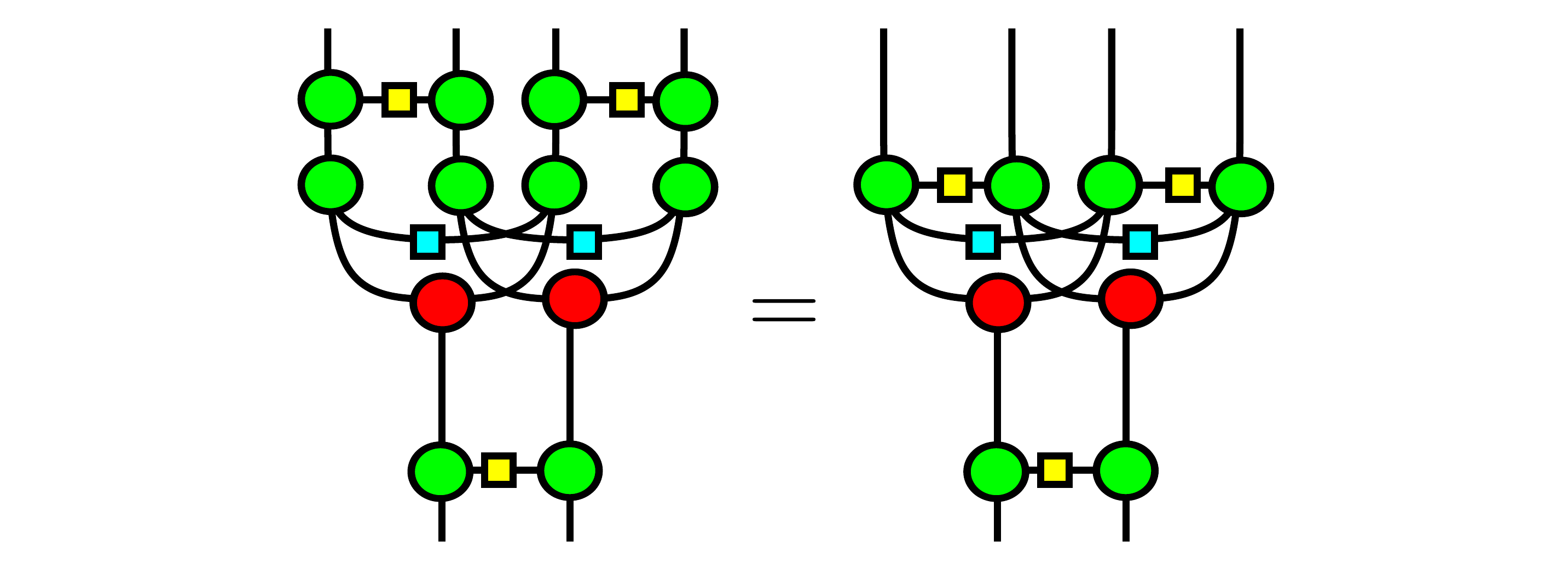}
\end{equation}
where:
\begin{equation*}
    \includegraphics[scale=0.2]{images/76-100/86-condition-xy.pdf}
\end{equation*}
So, entanglement is deleted and generated between two constituents via $\cwgg$.
However, Eqs. \ref{eq:cw-xy-delete} and \ref{eq:cw-xy-generate} do not include $\mathcal{ZZZ}$ which is present in both sets in the example above. In both equations, entanglement is generated or deleted between two constituents via $\cwgg$. For $\mathcal{ZZZ}$, entanglement cannot be generated by $\cwgg$ between any pair of its constituents due to the following:
\begin{equation}
    \includegraphics[scale=0.2]{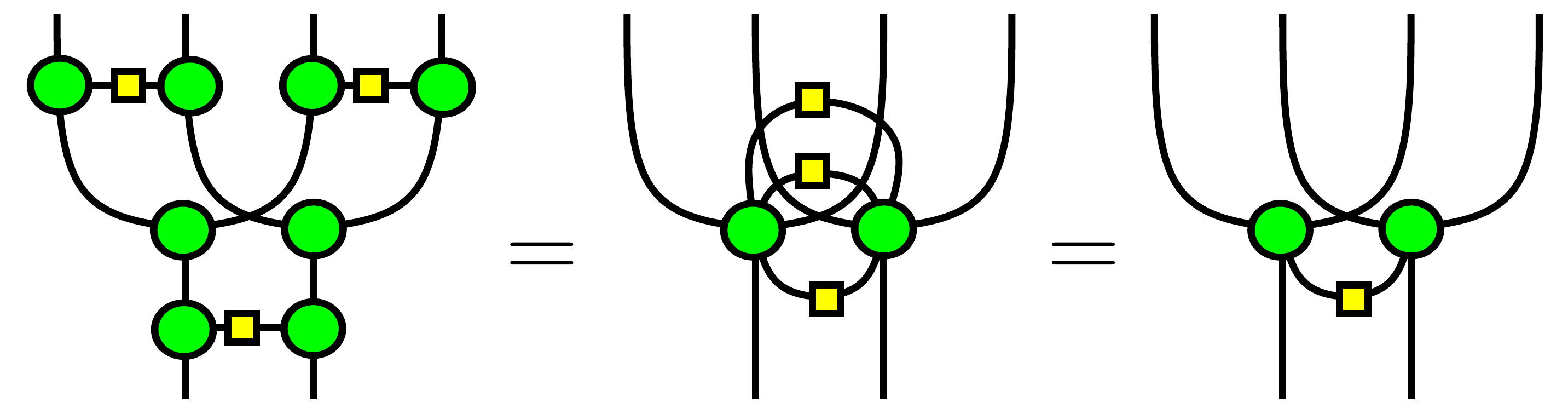} \label{eq:cw-z-same}
\end{equation}
The equivalence between $\mathcal{Z}\mathcal{Z}$ and a classical structure on two qubit which contains the spider on RHS of Eq. \ref{eq:cw-z-same} is given by Proposition \ref{prop:case3-SC}

We can describe the transformation between the two sets as a mapping between names. The transformation between the set with configuration (1,6,2) to the set with configuration (3,0,6), and vice versa, is described by the following mapping: 
\begin{equation*}
\begin{pmatrix}
c_1 & c_2 & c_3\\
0 & e_{1,2} & e_{1,3}\\
e_{1,2} & 0 & e_{2,3}\\
e_{1,3} & e_{2,3} & 0
\end{pmatrix}
\mapsto
\begin{cases}
\begin{pmatrix}
c_1 & c_2 & c_3\\
0 & e_{1,2}+1 & e_{1,3}\\
e_{1,2}+1 & 0 & e_{2,3}\\
e_{1,3} & e_{2,3} & 0
\end{pmatrix}
& \text{ if } c_1,c_2\in\{0,1\}
\\
\\
\begin{pmatrix}
c_1 & c_2 & c_3\\
0 & e_{1,2} & e_{1,3}\\
e_{1,2} & 0 & e_{2,3}\\
e_{1,3} & e_{2,3} & 0
\end{pmatrix} & \text{otherwise}
\end{cases}
\end{equation*}
where + is addition modulo 2. 

Unfortunately, the deletion and generation of entanglement allowed by Eqs. \ref{eq:cw-xy-delete} and \ref{eq:cw-xy-generate} are only applicable to complete sets of complementary classical structures which do not contain a composite classical structure with at least one constituent that is $\mathcal{Z}$, excluding $\mathcal{ZZZ}$. In fact, for the entanglement configuration (0,9,0), such a complete set of complementary classical structures does not exist. A question one might pose is: can we expand this technique to other types of constituents and/or connecting wires? As of now, we do not have a definitive answer for this, but it is of immense interest to us. Such a technique could simplify the search for complementary pairs of composite classical structures that we presented in the previous section, or ideally, it could give us every maximal complete set of complementary classical structures from only a single set of complementary classical structures. 
But before we get ahead of ourselves, we discuss some results which could serve as a jumping-off point for devising such a technique.  

\begin{proposition}\label{prop:cw-swap}
Composing $\cwgg$ and $\cwrr$, we obtain the following:
\begin{equation}\label{eq:cw-swap}
    \includegraphics[scale=0.2]{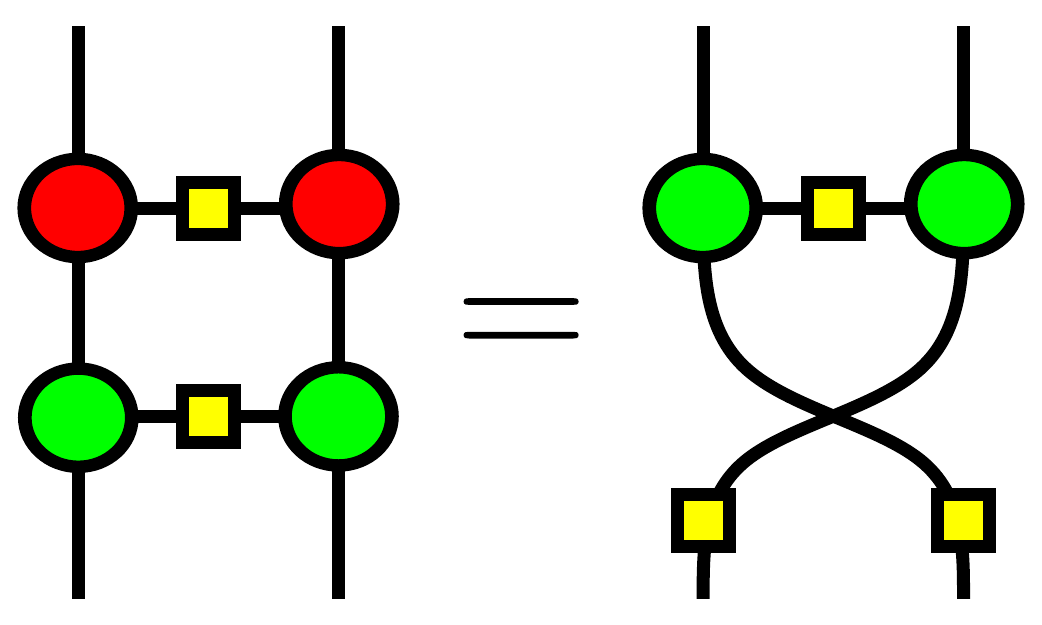}
\end{equation}
That is, the composition above results in $\cwgg$ where the input qubits are transformed by $\had$ and swapped with each other. 
\end{proposition}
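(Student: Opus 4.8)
The plan is to prove this as an identity of two-qubit Clifford operations entirely within the ZX-calculus, rewriting the left-hand side into the right-hand side. First I would record the two connecting wires in their diagrammatic form: $\cwgg$ is the controlled-$Z$ gate, i.e. two green spiders joined by a single $\had$ edge, and $\cwrr$ is its red (X-basis) analogue, two red spiders joined by a $\had$ edge. Stacking them as prescribed gives a diagram in which, on each qubit wire, a green spider sits directly above (or below) a red spider, while the two $\had$ edges connect the greens to one another and the reds to one another.

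The core of the rewrite would proceed by unifying the colours. I would apply the colour-change rule (Eq. \ref{eq:zx12-had-transform}) to the two red spiders of $\cwrr$, turning them green at the cost of placing $\had$ boxes on each of their legs; wherever two Hadamards then meet on a common edge I would cancel them using the Hadamard self-inverse rule (Eq. \ref{eq:dzx5-had-unitary}). Green spiders that become joined by a plain wire along a qubit line are then merged with the green fuse rule (Eq. \ref{eq:zx1-fuse-green}). This collapses the picture to a small all-green network whose only remaining internal connections are $\had$ edges, together with leftover $\had$ boxes on the input wires. The entangling content of this network — the part responsible for a wire crossing — is then resolved using the bialgebra rule (Eq. \ref{eq:zx11-bialgebra}) and its consequence the Hopf rule (Eq. \ref{eq:dzx1-hopf}), which is precisely the structure that converts a cycle of complementary spiders into a permutation of wires. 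Reading off the result should leave a single $\had$-edged green pair — another $\cwgg$ — above a crossing of the two wires whose inputs carry $\had$ boxes, which is exactly the right-hand side. Scalars are dropped throughout, as elsewhere in the paper.

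I expect the main obstacle to be this final step: seeing that the residual $\had$-edge network genuinely forces a \emph{swap} of the two wires rather than collapsing to a second controlled-$Z$, and verifying that the leftover local gates are exactly $\had$ on the (swapped) inputs and nothing more. The safest way to pin this down, and to guard against being off by a stray local Clifford, is to run the same reduction after first rewriting both $\cwgg$ and $\cwrr$ as $\mathsf{CNOT}$ gates (via the colour-change rule and the Euler decomposition, Eq. \ref{eq:zx13-had-euler}) and then invoking the standard decomposition of the swap into three alternating $\mathsf{CNOT}$s, which is itself a bialgebra identity. Since the ZX-calculus is sound for pure qubit quantum mechanics, this second route matches the direct diagrammatic reduction and confirms the stated equality.
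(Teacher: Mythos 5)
Your overall strategy --- a direct ZX rewrite of $\cwrr\circ\cwgg$ into $\cwgg$ preceded by a swap and a pair of Hadamards --- is the same kind of argument the paper gives (its proof is a single chain of diagram equalities), so the approach is right in kind. Two concrete points, though. First, the intermediate simplification you describe does not happen as stated: after colour-changing the two spiders of $\cwrr$, the new $\had$ on each downward leg sits between a spider of $\cwgg$ and the recoloured spider above it, so nothing cancels there and no two green spiders become joined by a plain wire --- the green fuse step is vacuous. What you actually obtain is a four-cycle of green spiders whose internal edges all carry $\had$, with $\had$ on the two external legs of the recoloured pair; reducing that four-cycle to $\cwgg\circ\mathrm{SWAP}$ is the entire content of the proposition, and it is precisely the step you defer to ``bialgebra and Hopf''. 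So as written the plan correctly identifies where the work is but does not do it. Second, your fallback via the three-$\mathsf{CNOT}$ decomposition of the swap is sound but needs one more observation: the left-hand side contains only two entangling gates, so you must first insert $\cwgg\circ\cwgg=\mathrm{id}$ to write $\cwrr\circ\cwgg=\cwgg\circ(\cwgg\circ\cwrr\circ\cwgg)$ and then recognise $\cwgg\circ\cwrr\circ\cwgg$ as the Hadamard-dressed three-$\mathsf{CNOT}$ swap; also, converting $\cwgg$ and $\cwrr$ into $\mathsf{CNOT}$s needs only the colour-change rule, not the Euler decomposition (which the paper deliberately avoids except for $\had$ itself). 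With those two repairs the argument goes through and agrees with the paper's computation.
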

\begin{proof}
\begin{equation*}
    \includegraphics[scale=0.2]{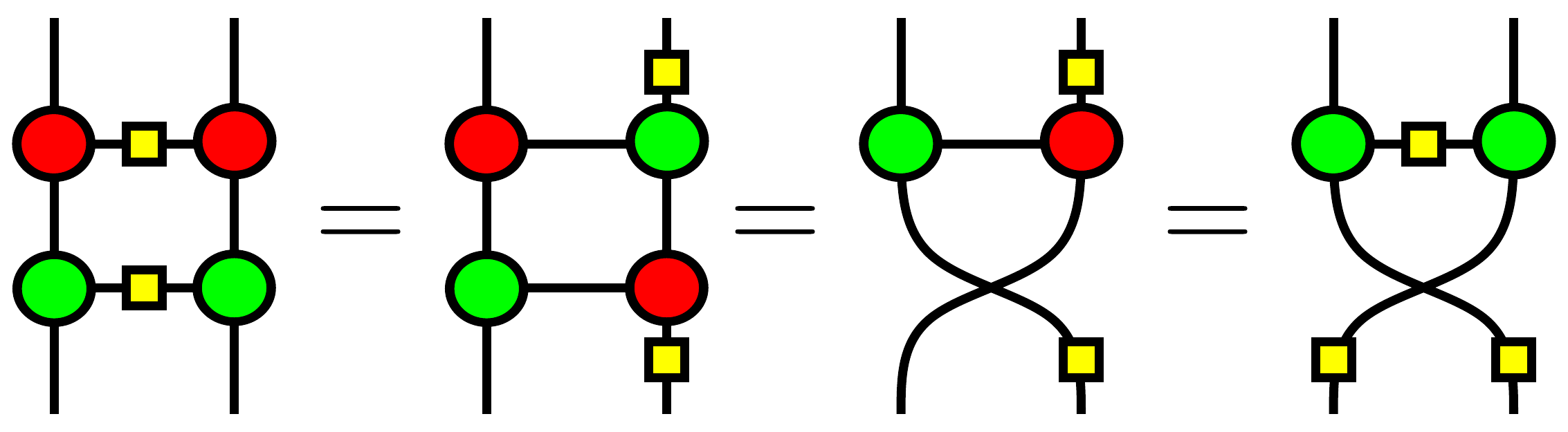}
\end{equation*}
\end{proof}

The previous proposition results in the following:
\begin{equation*}
    \includegraphics[scale=0.2]{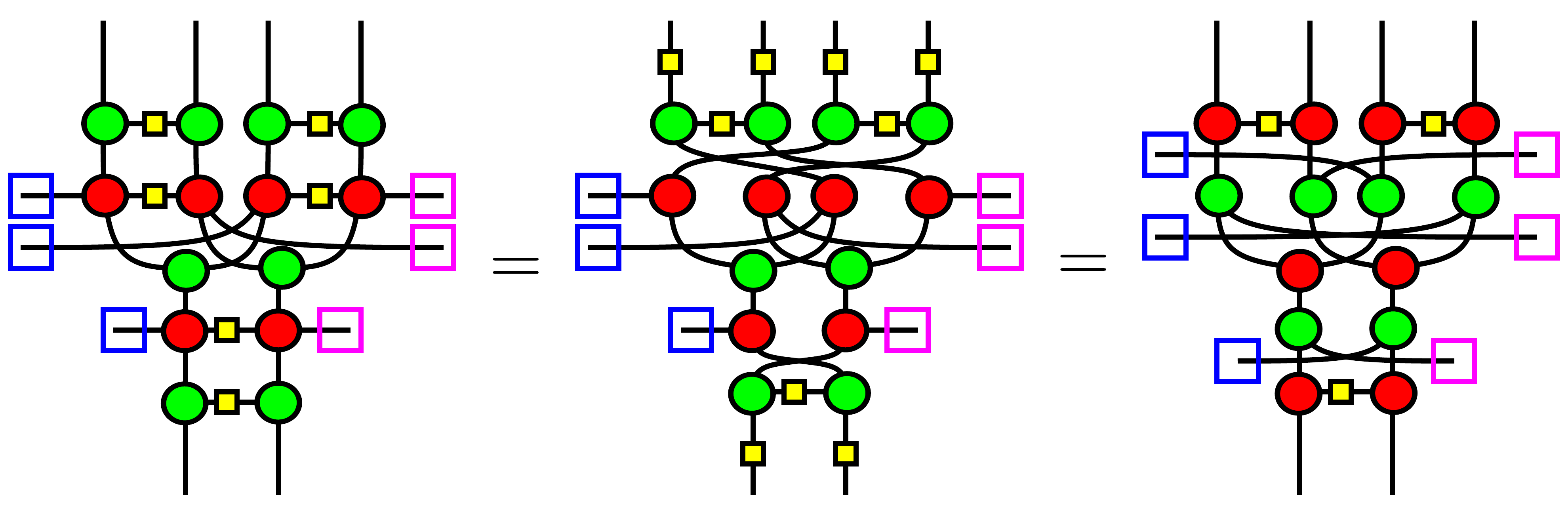}
\end{equation*}
So composing $\cwgg$ on the legs of an entangled pair of constituents $\mathcal{Z}\diamond\mathcal{Z}$ in any composite classical structure transforms the constituents and swaps their positions. 

Other types of composed connecting wires to consider are:
\begin{equation*}
    \includegraphics[scale=0.2]{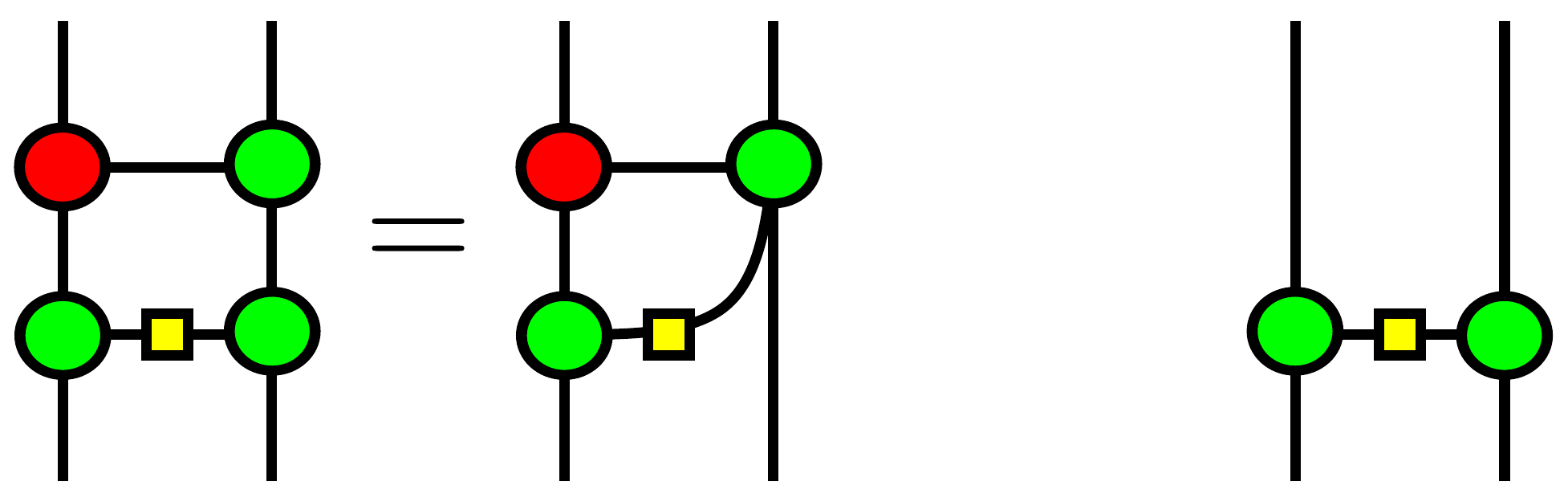}
\end{equation*}
The diagram on the right considers a pair of separable constituents for which the legs are composed with $\cwgg$. We have considered such a case for constituents $\mathcal{XX},\mathcal{XY},\mathcal{YX}$ and $\mathcal{YY}$ in Eqs. \ref{eq:cw-xy-delete} and \ref{eq:cw-xy-generate}. So we need only consider the composed connecting wire on the right for cases where the constituents are $\mathcal{XZ},\mathcal{YZ}$ and $\mathcal{ZZ}$.

Before we proceed further, we shall discuss the underlying bases of composite classical structures in an explicit manner, rather than the abstract way that we have done so far. We hope to use classical structures without referring to their underlying bases in future works, but for now, we shall content ourselves with proofs that utilize their underlying bases. 


\begin{theorem}
A state $\ket{\psi}$ belongs to the underlying basis of a composite classical structure on $N$ qubits, i.e. a classical structure consisting of the following spiders:
\begin{equation*}
    \includegraphics[scale=0.15]{images/51-75/58-gen-composite-N.pdf}
\end{equation*}
if and only if $\ket{\psi}$ takes the following form:
\begin{equation*}
    \includegraphics[scale=0.2]{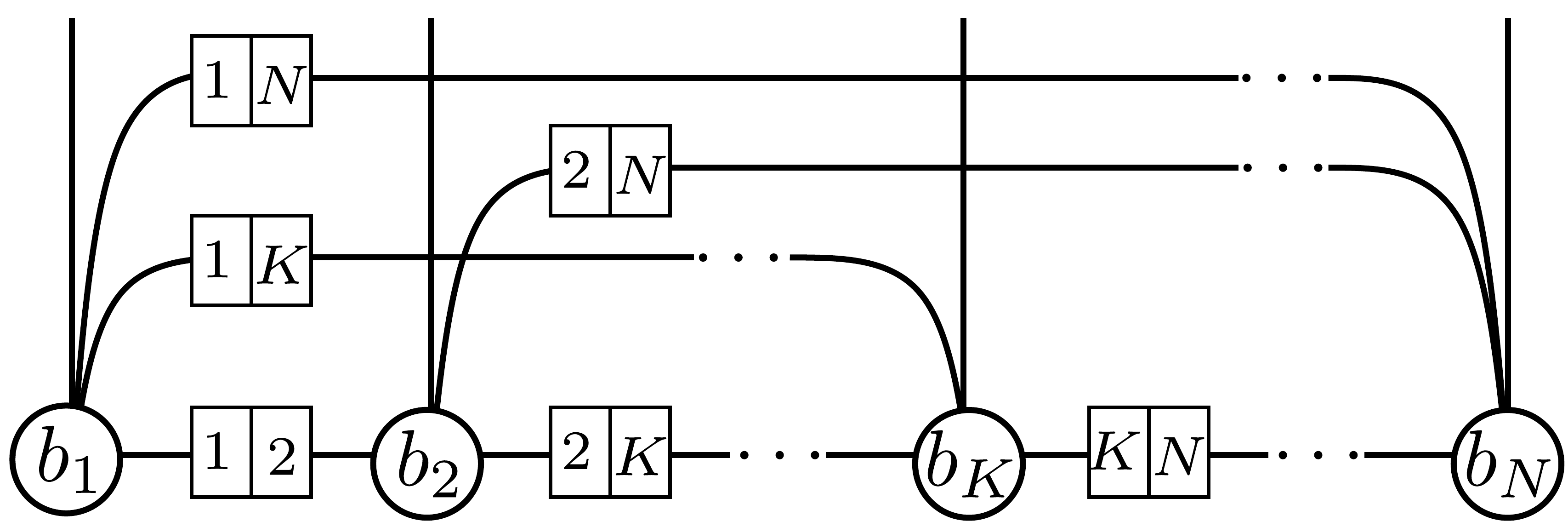}
\end{equation*}
where the white spider labelled $b_j$ is the white $S_j$ spider with phase $\beta_j$, and the white $S_j$ 0,1-spider with phase $\beta_j$ belongs to the underlying basis of the classical structure consisting of the black $S_j$ spider.

To get a more precise picture of the state above, we can look at a pair of slices of the state, say the $j$-th and $k$-th slices, along with the connecting wire between them:
\begin{equation*}
    \includegraphics[scale=0.3]{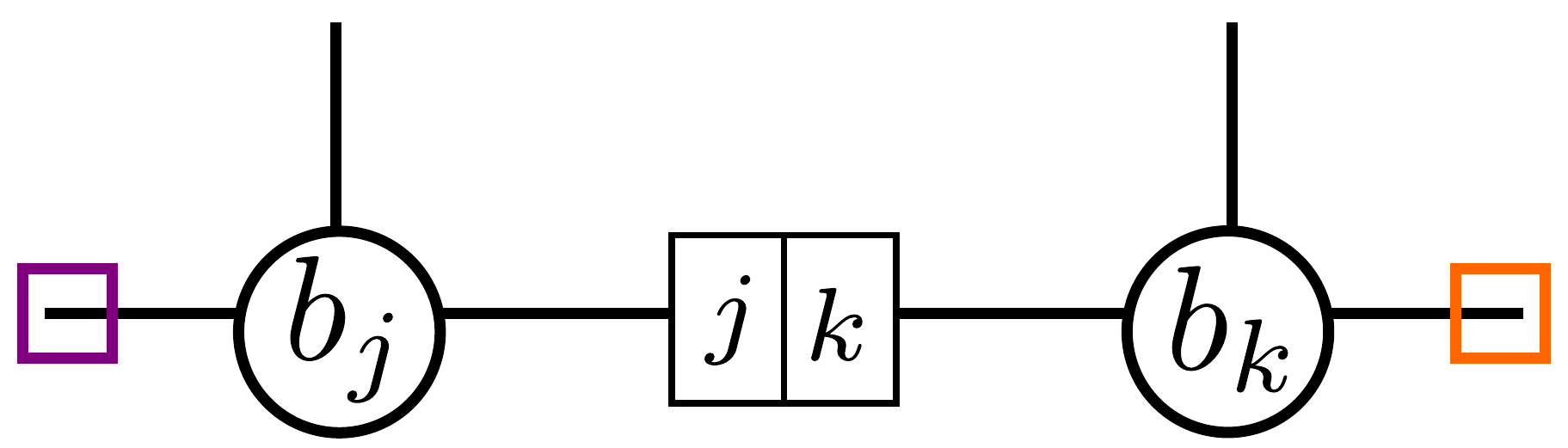}
\end{equation*}
\end{theorem}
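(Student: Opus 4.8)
The plan is to identify the underlying basis of a classical structure with its set of copyable states, and then to pin those states down by a dimension count, which yields both directions of the biconditional simultaneously. Recall from \citet{Coecke2013New} that for a classical structure on a space $H$ the elements of the associated orthogonal basis are exactly the non-zero states copied by the comultiplication, i.e. the $1,2$-spider, and that there are precisely $\dim H$ of them. For a composite classical structure on $N$ qubits $\dim H = 2^N$, so it suffices to exhibit $2^N$ linearly independent copyable states; these must then constitute the entire underlying basis, with no room for any further element.

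First I would establish the forward direction: every state $\ket{\psi}$ of the stated form is copied by the composite $1,2$-spider. I would feed such a state into the composite comultiplication and argue slice by slice, taking the two-qubit ``pair of slices'' picture displayed in the statement as the basic unit. At the seam where the connecting wires carrying $\ket{\psi}$ meet the input connecting wires of the $1,2$-spider, the two connecting wires cancel, since each is unitary (Section \ref{sec-join-spiders}); what remains is, on each qubit $j$, the constituent basis state $\ket{b_j}$ fed into the constituent comultiplication $S_j$. By the defining copyability of the constituent bases, $S_j$ sends $\ket{b_j}$ to $\ket{b_j}\otimes\ket{b_j}$, and the output connecting wires reassemble the two groups of $N$ legs into two copies of $\ket{\psi}$, so the composite $1,2$-spider returns $\ket{\psi}\otimes\ket{\psi}$. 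The non-routine ingredient here is the interaction between a connecting wire and a comultiplication, namely that copying commutes with threading a state through a connecting wire; I would verify this by a direct ZX-calculus rewrite for each entry of Table \ref{tab:connecting-wire}, reducing it to the copy and bialgebra rules, with the fusion property already proved in Theorem \ref{thm:compositeCS-N} guaranteeing that this compatibility is consistent.

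Next I would count. There are exactly two admissible choices of phase $\beta_j$ per qubit, namely the two that make the white $S_j$ $0,1$-spider a basis state of the constituent, giving $2^N$ stated-form states in total. Writing each such state as $W\,(\ket{b_1}\otimes\cdots\otimes\ket{b_N})$ with $W$ the composite of the connecting wires, the $2^N$ separable states $\ket{b_1}\otimes\cdots\otimes\ket{b_N}$ are orthonormal, being a product of constituent orthonormal bases, and $W$ is a composition of unitary connecting wires, hence unitary; therefore the $2^N$ stated-form states are orthonormal and in particular linearly independent. Combining the two facts, the $2^N$ stated-form states are copyable and orthonormal, so by the dimension count they exhaust the underlying basis, and a state lies in the underlying basis if and only if it is of the stated form.

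The step I expect to be the main obstacle is the slice-local compatibility lemma in the forward direction, that is, showing that each connecting wire slides through the constituent comultiplication in such a way that copying is preserved. This is the one point where the specific form of the connecting wires, rather than mere unitarity, is used; everything else reduces to a citation, a unitarity argument, or the counting bound.
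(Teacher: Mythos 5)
Your proposal is correct and follows essentially the same route as the paper's proof: you show that every stated-form state is copied by the composite comultiplication by reducing the computation to the constituent spiders (the paper organizes this as a pairwise reduction over slices $j,k$ with the connecting wire between them, while you organize it as a cancellation of the connecting wires at the input seam followed by constituent copying and reassembly), and you then obtain the converse by counting the $2^N$ distinct stated-form states against the $2^N$-element underlying basis, which is exactly the paper's closing argument.
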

\begin{proof}
In this proof, we shall utilize the equivalent spiders of $\mathcal{Y}$ from Theorem \ref{thm:y-equivalence}, and so $\mathcal{Y}$ in a composite classical structure is:
\begin{equation*}
    \includegraphics[scale=0.3]{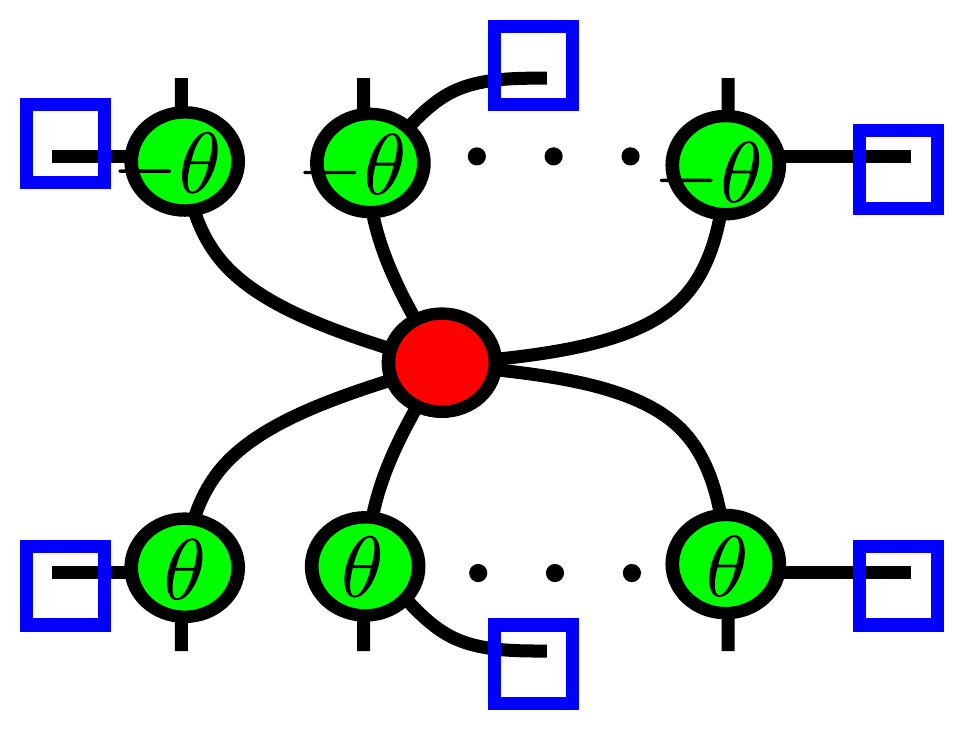}
\end{equation*}
where $\theta=\frac{\pi}{2}$.

The underlying basis of $\mathcal{Y}$ is:
\begin{equation*}
    \includegraphics[scale=0.35]{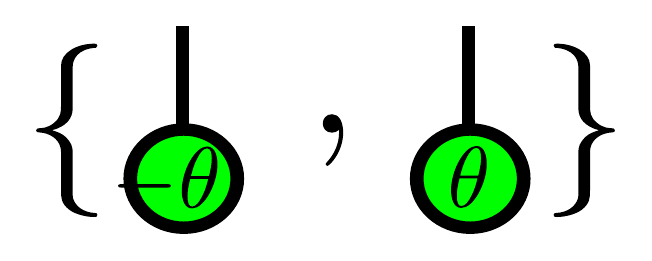}
\end{equation*}
since the following holds;
\begin{equation*}
    \includegraphics[scale=0.3]{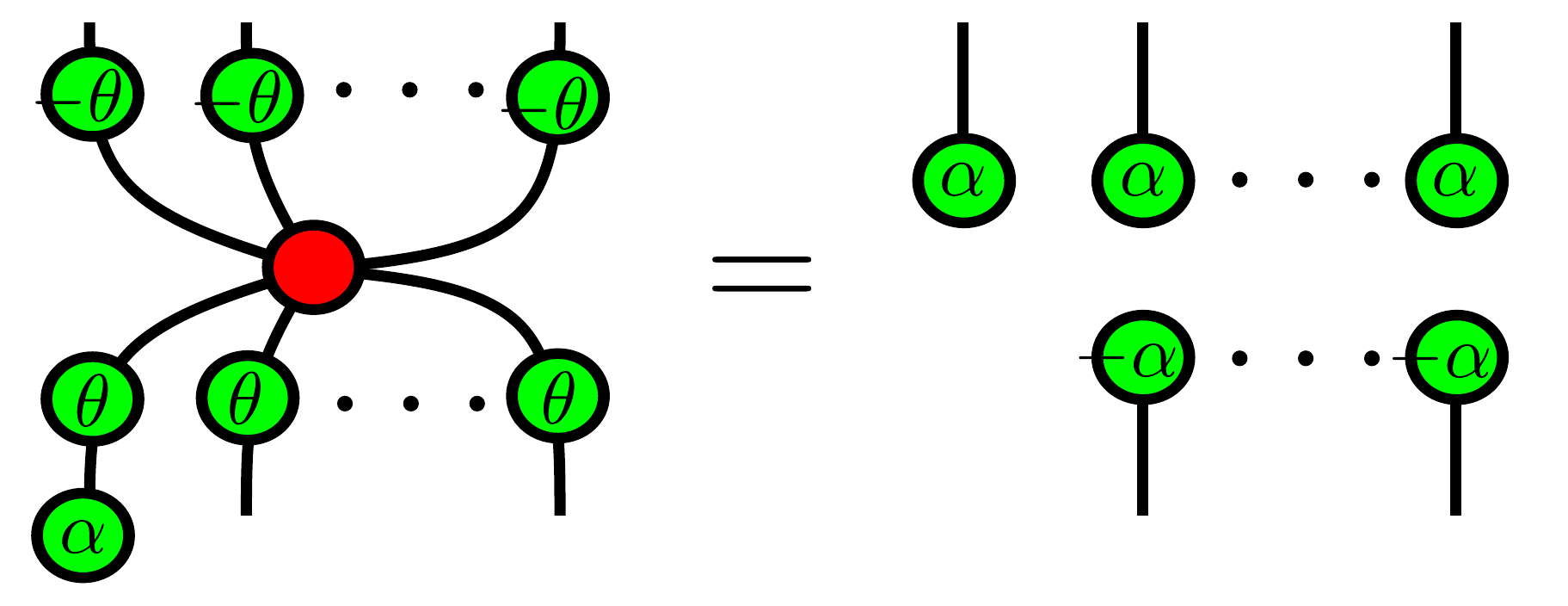}
\end{equation*}
where $\alpha\in\{\frac{\pi}{2},-\frac{\pi}{2}\}$. 

Furthermore, recall that the underlying basis of $\mathcal{Z}$ is:
\begin{equation*}
    \includegraphics[scale=0.3]{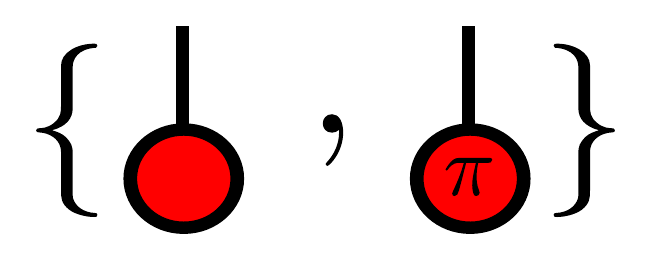}
\end{equation*} 
and the underlying basis of $\mathcal{X}$ is:
\begin{equation*}
    \includegraphics[scale=0.35]{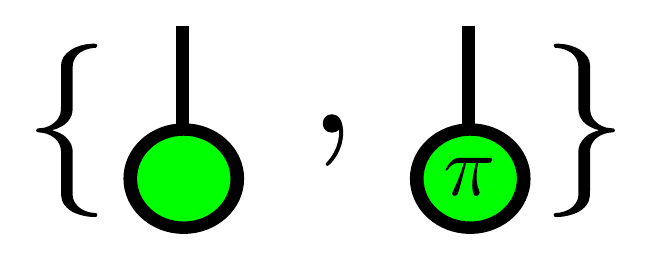}
\end{equation*}

Now, a state $\bunit$ belongs to the underlying basis of a classical structure with spiders $\white$ if:
\begin{equation*}
    \includegraphics[scale=0.2]{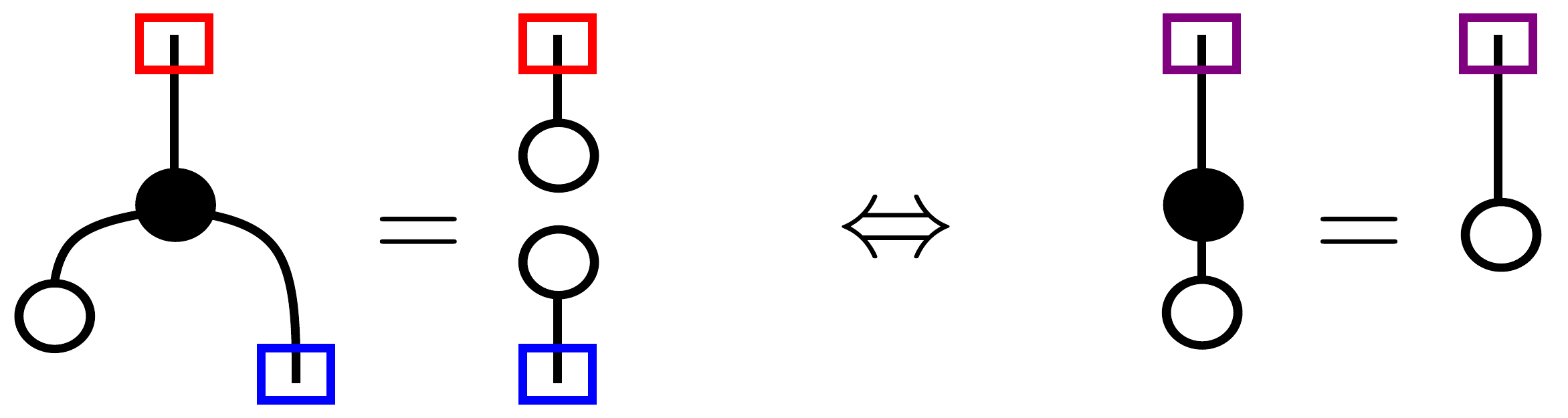}
\end{equation*}
We will show that the theorem's statement is true by proving the equation on the right.

For the $1,n$-spider of a composite classical structure on $N$ qubits, the following is the diagram containing the $j$-th and $k$-th slices along with the connecting wire between them:
\begin{equation*}
    \includegraphics[scale=0.2]{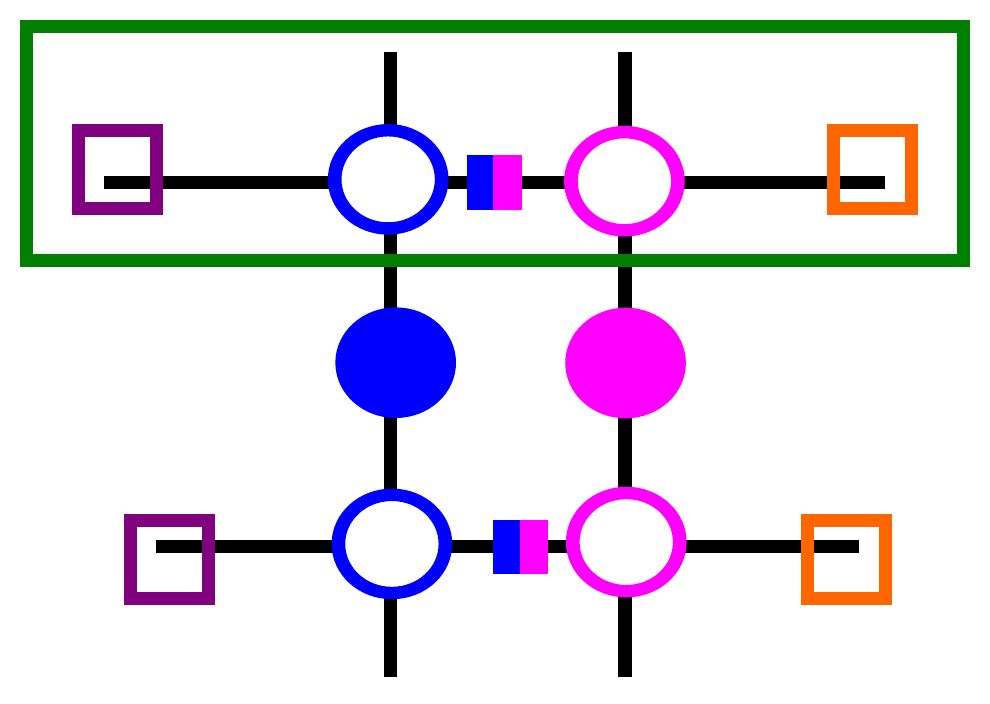}
\end{equation*}
Moreover,  due to the form that states of the underlying bases of $\mathcal{X}$, $\mathcal{Y}$ and $\mathcal{Z}$ take (see above), the $j$-th and $k$-th slices of the state in the theorem's statement can be drawn as follows:
\begin{equation*}
    \includegraphics[scale=0.2]{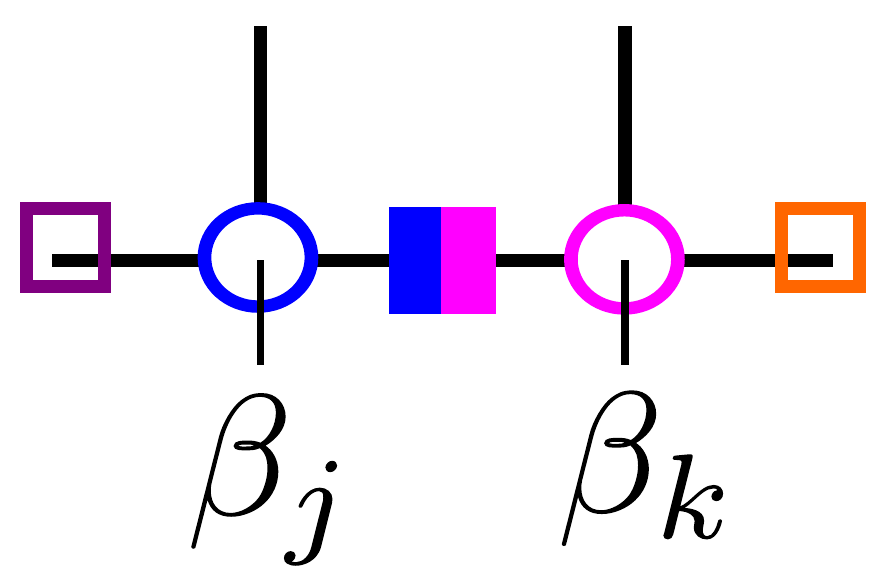}
\end{equation*}
where $\beta_j,\beta_k\in\{0,\pi,\frac{\pi}{2},-\frac{\pi}{2}\}$ depending on which one of $\mathcal{X},\mathcal{Y}$ and $\mathcal{Z}$ that $\blue$ and $\pink$ belong to. 

So, composing the previous two diagrams above, we have: 
\begin{equation*}
    \includegraphics[scale=0.2]{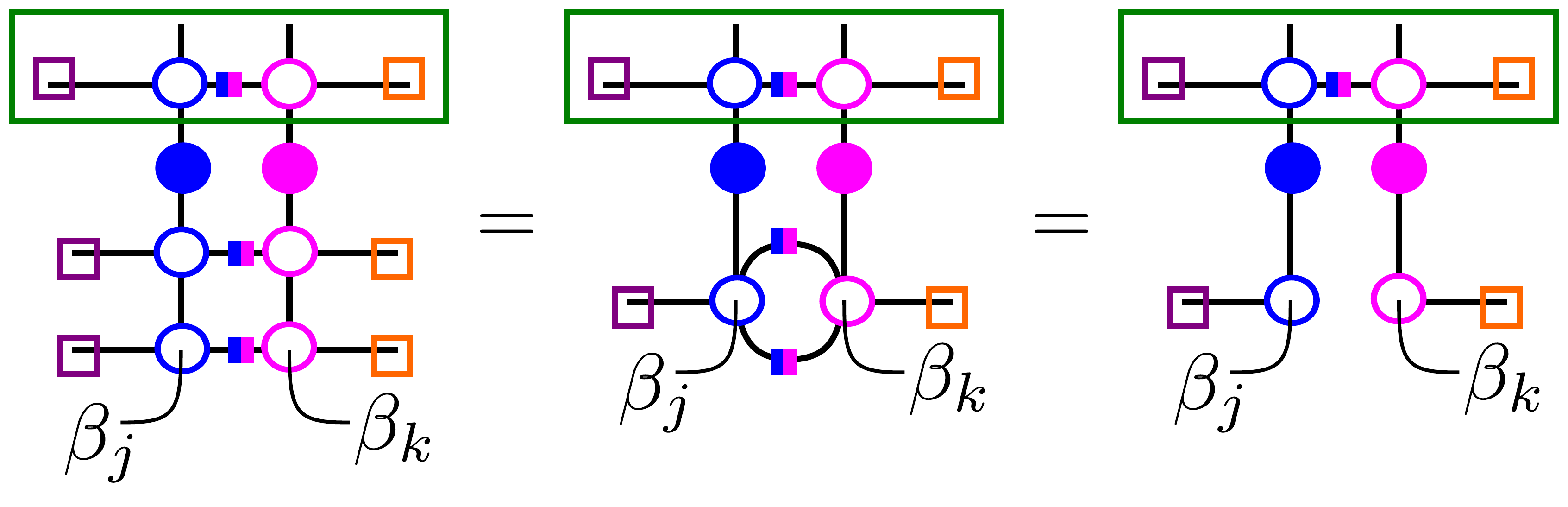}
\end{equation*}
However, since the pair of slices above are arbitrary, we can say the same for other pairs of slices, and so the rightmost diagram can be reduced to the following:
\begin{equation*}
    \includegraphics[scale=0.2]{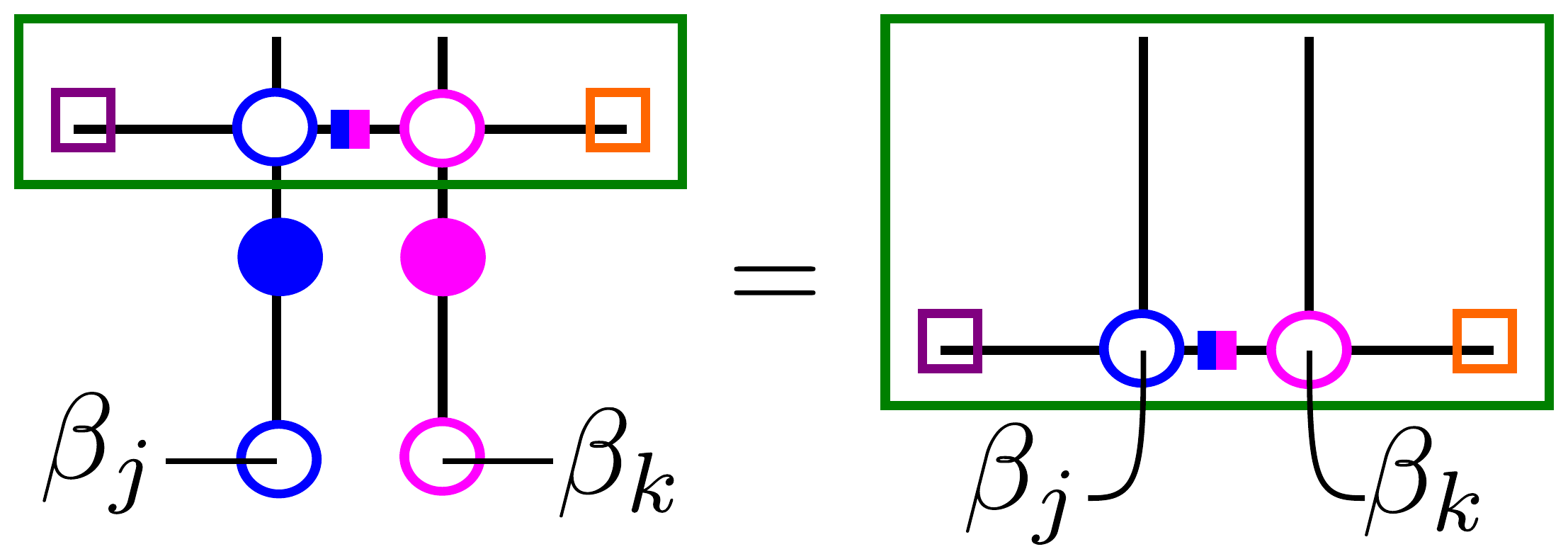}
\end{equation*}

To prove the converse, we only need to consider the number of states that can be obtained from:
\begin{equation*}
    \includegraphics[scale=0.2]{images/76-100/90-statement-1.pdf}
\end{equation*}
Let us call this state $\ket{\psi}$. As mentioned in the theorem's statement, the white 0,1-spider labelled $b_j$ belongs to the underlying basis of the classical structure consisting of the black spider $S_j$, and for any $j$, such a basis contains two states due to it being a basis for a Hilbert space of dimension 2. So there are $2^N$ unique states that $\ket{\psi}$ can take, which is the exact number of members in a basis for $N$ qubits. We have shown that all $\ket{\psi}$ belong to the underlying basis of the composite classical structure consisting of spiders as in the theorem's statement. Therefore, there cannot exist a state which does not take the form $\ket{\psi}$ in the basis. Otherwise, the size of the basis would be greater than $2^N$ which is not true.   
\end{proof}

By determining the underlying basis of a composite classical structure, we are now able to compare them with the underlying bases of some classical structures consisting of spiders joined via composite connecting wires. Below is an example of such a comparison. 

\paragraph{Example 1}

\begin{equation*}
    \includegraphics[scale=0.18]{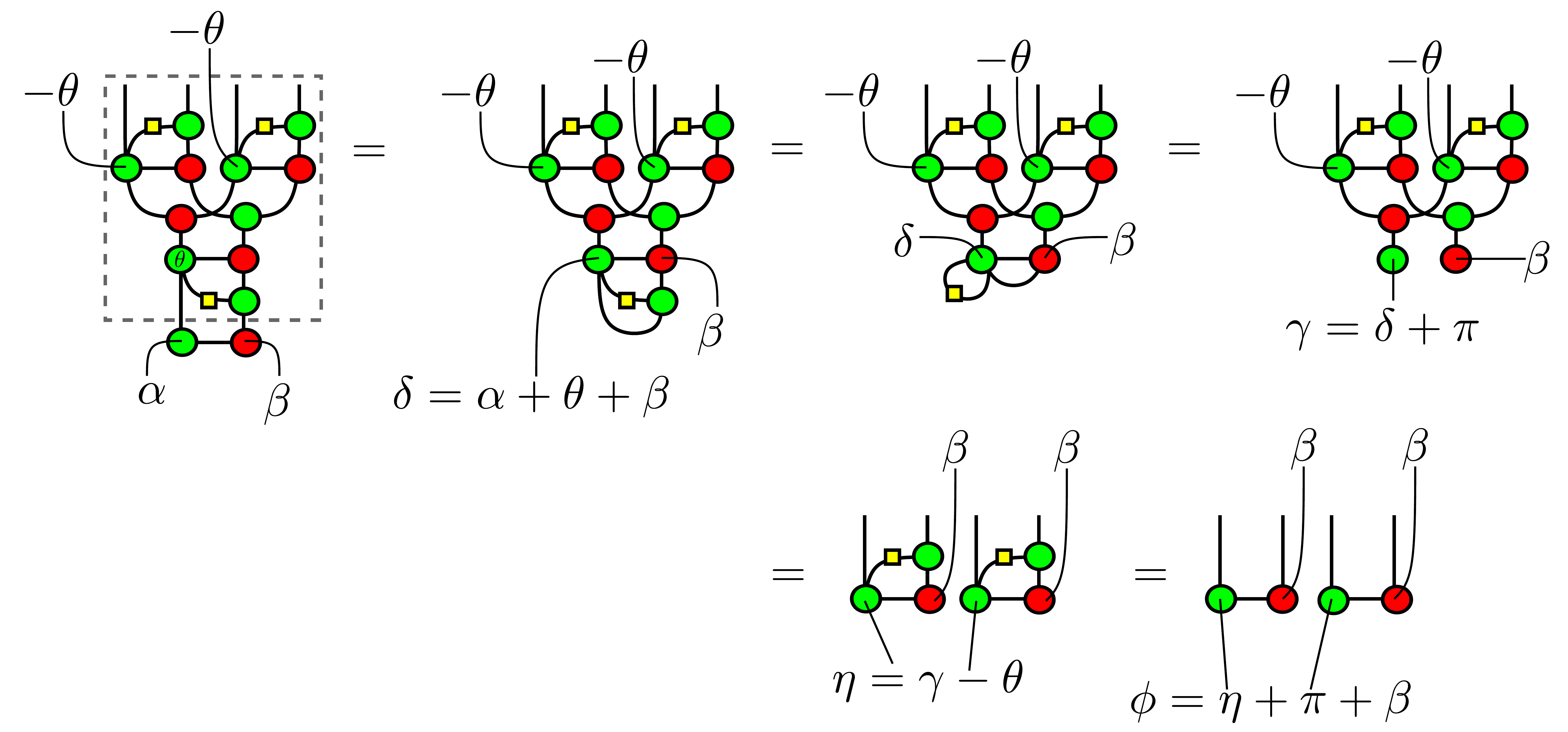}
\end{equation*}
where
\begin{equation*}\label{eq:cond-eg-2}
\alpha =
\begin{cases}
 0\text{ or }\pi & \text{if }\theta=0\\
 \frac{\pi}{2}\text{ or }-\frac{\pi}{2} & \text{ if } \theta=\frac{\pi}{2}
\end{cases}
\end{equation*}
\begin{equation*}\label{eq:cond-eg-3}
    \beta\in\{0,\pi\}
\end{equation*}
Thus,
\begin{equation*}
    \phi=\eta+\pi+\beta=\gamma-\theta+\pi+\beta=\delta+\pi-\theta+\pi+\beta=\alpha+\theta+\beta-\theta+\beta=\alpha
\end{equation*}
since $\alpha,\beta$ and $\theta$ are treated as angles $[-\pi,\pi]$ and addition between these variables are modulo $2\pi$.

So the underlying basis of the classical structure consisting of the 1,2-spider (within the dashed-line box) above, is the same as the one for $\mathcal{X}\diamond\mathcal{Z}$.

Another similar example is:

\paragraph{Example 2}

\begin{equation*}
    \includegraphics[scale=0.2]{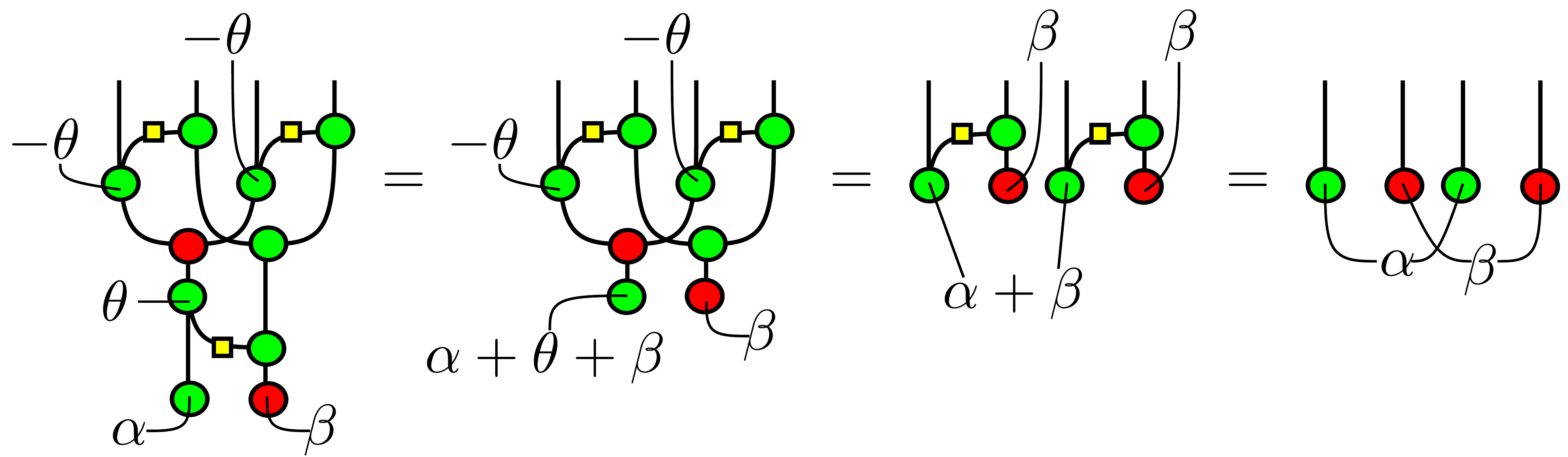}
\end{equation*}
where $\alpha,\beta$ and $\theta$ are as above.

The underlying basis of the classical structure consisting of the 2,1-spider above is then the same as $\mathcal{X}\mathcal{Z}$.

For the rest of the examples in this section, we only state the underlying bases of the corresponding classical structures without proof as the proofs are consume a lot of space. These proofs are straight forward albeit tedious. 

In the following two examples, composing $\cwgg$ on the first and second qubits to the spider does not change the original entanglement of the classical structure. 

\begin{longtable}{ccc}
\textbf{Example 3} & &\\
\\
& \includegraphics[scale=0.2]{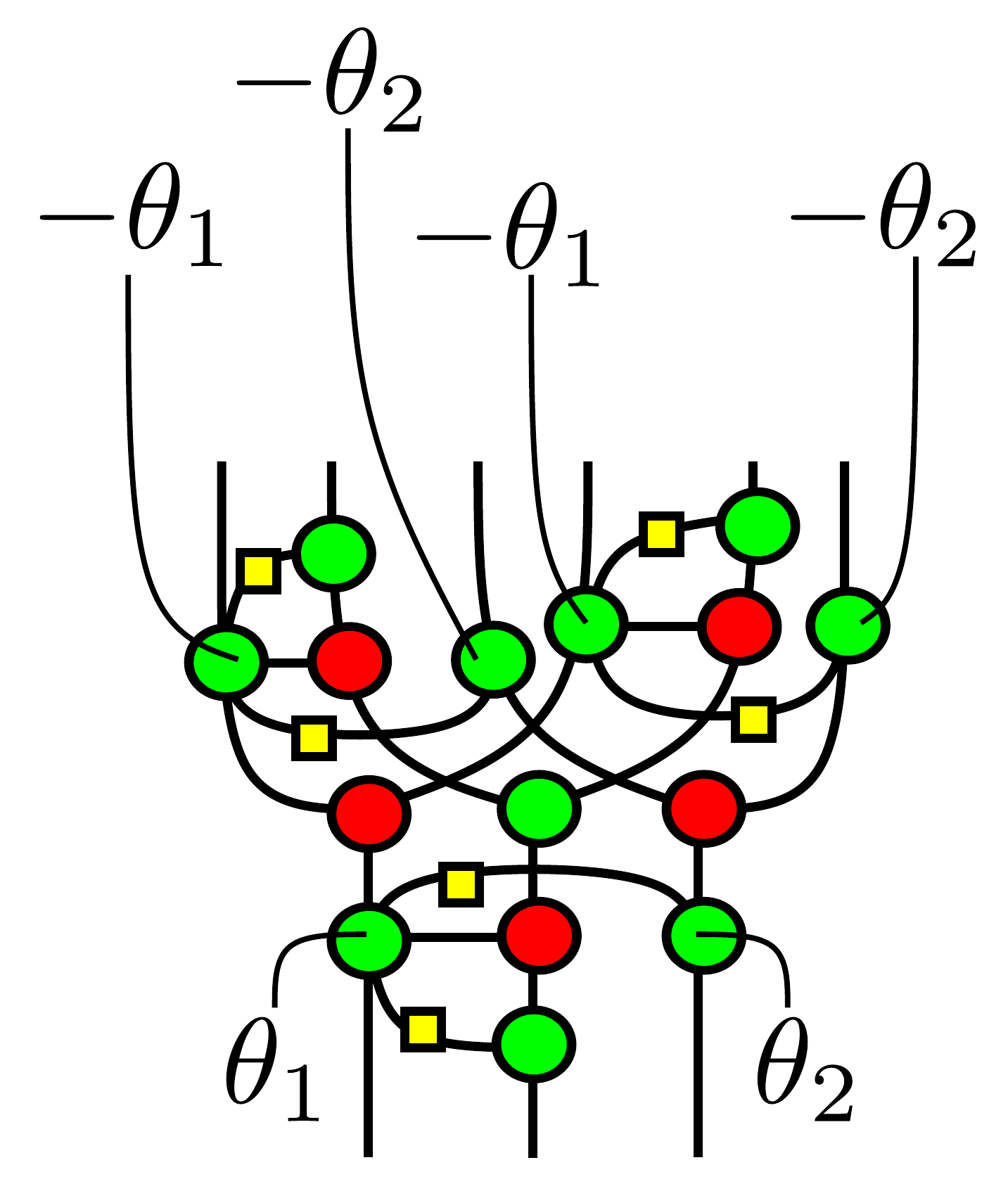} &
\includegraphics[scale=0.2]{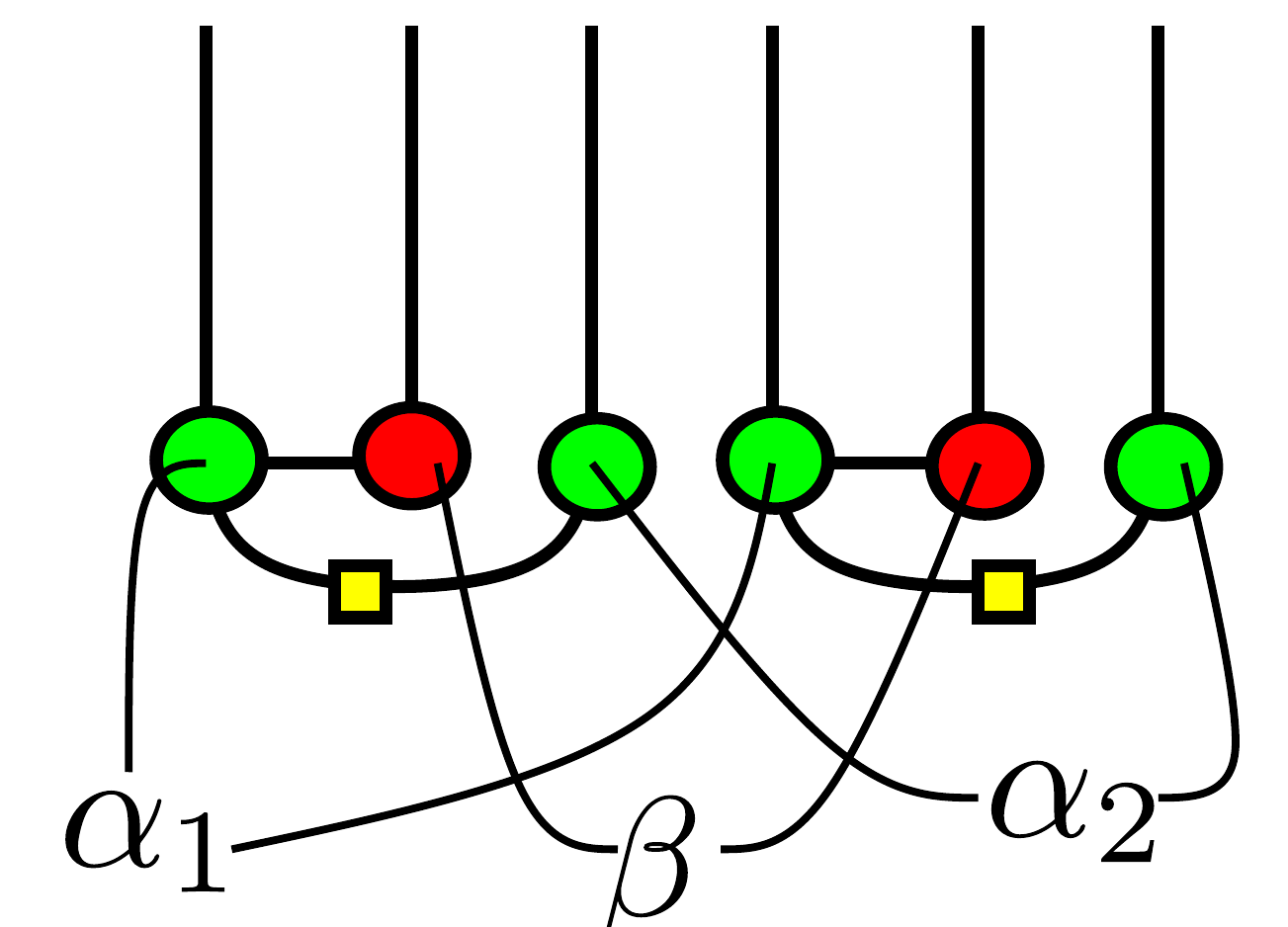}\\
& 2,1-spider & Underlying basis
\end{longtable}

\begin{longtable}{ccc}
\textbf{Example 4} & &\\
\\
& \includegraphics[scale=0.2]{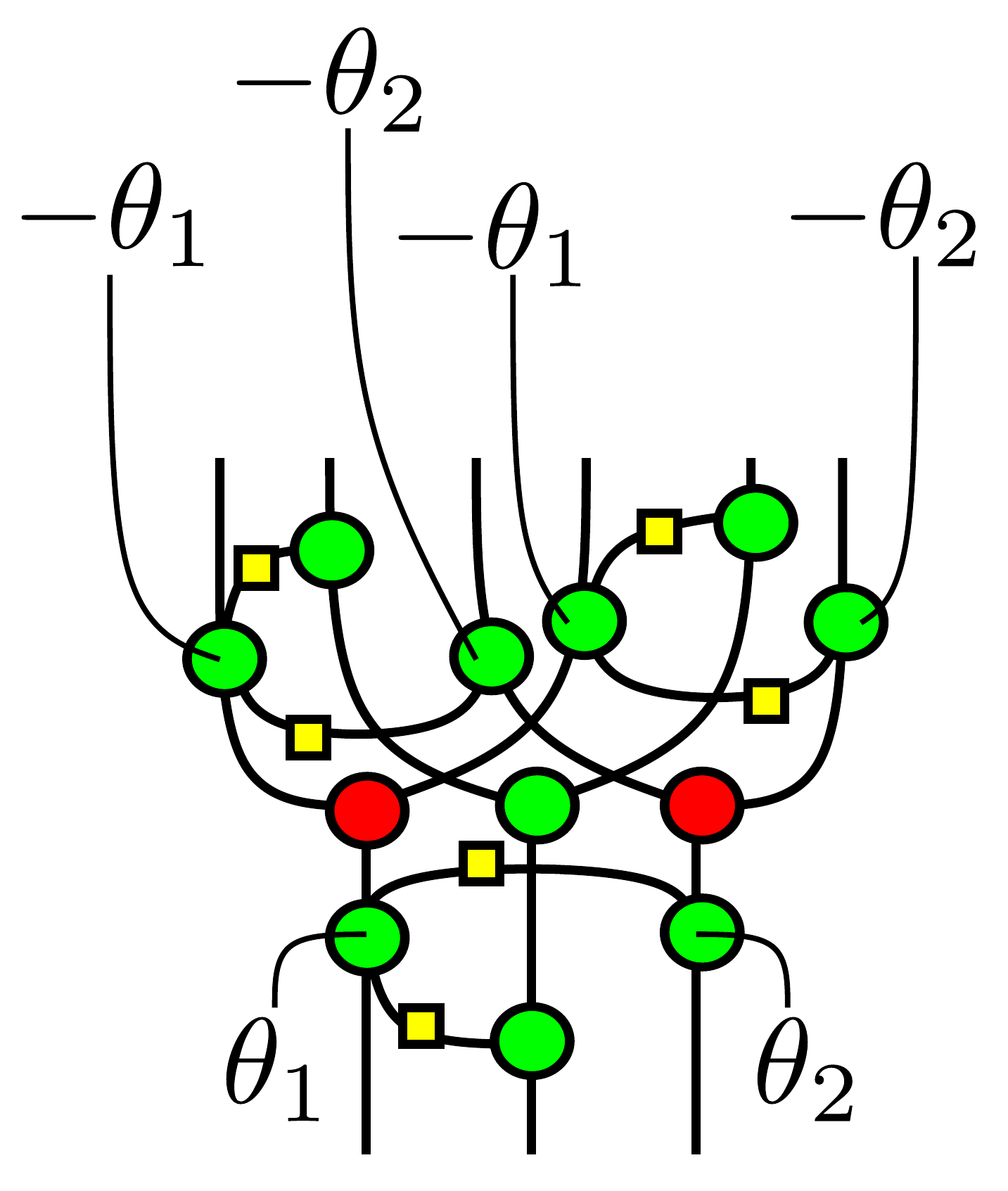} &
\includegraphics[scale=0.2]{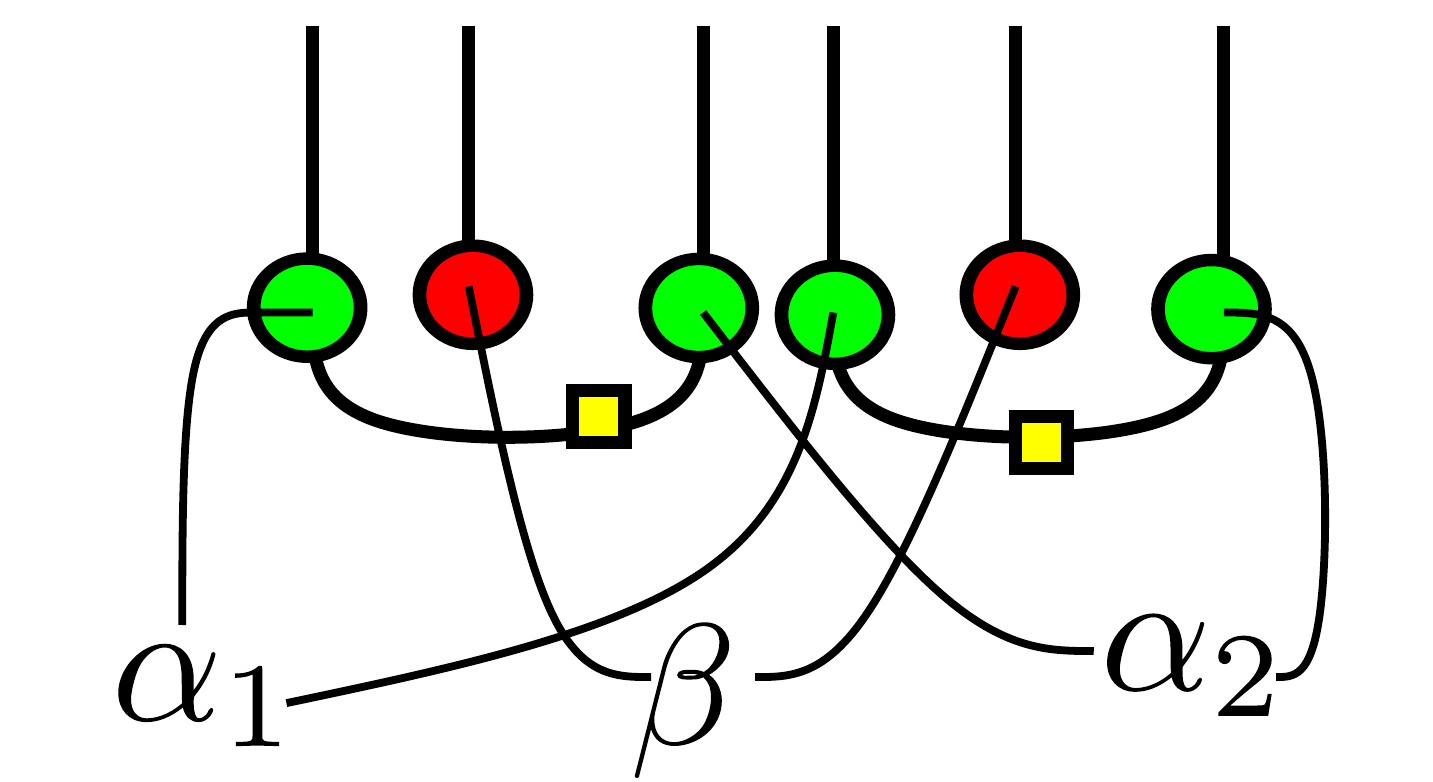}\\
& 2,1-spider & Underlying basis
\end{longtable}

However, it is not true in general that composing $\cwgg$ to connecting wires will not change the entanglement of a classical structure. There are examples where the same procedure generates new entanglement or deletes an old one. Interestingly, the new separability is not between the first and second qubits, and instead, entanglement is generated or deleted between the first and third qubits.

In the forthcoming examples,
\begin{align*}
    \alpha_j=
    \begin{cases}
     0\text{ or }\pi & \text{ if } \theta_j=0\\
     \frac{\pi}{2}\text{ or }-\frac{\pi}{2} & \text{ if } \theta_j=\frac{\pi}{2}
    \end{cases}\\
   \beta\in\{0,\pi\} 
\end{align*}

\begin{longtable}{ccc}
\textbf{Example 5} & &\\
\\
&\includegraphics[scale=0.2]{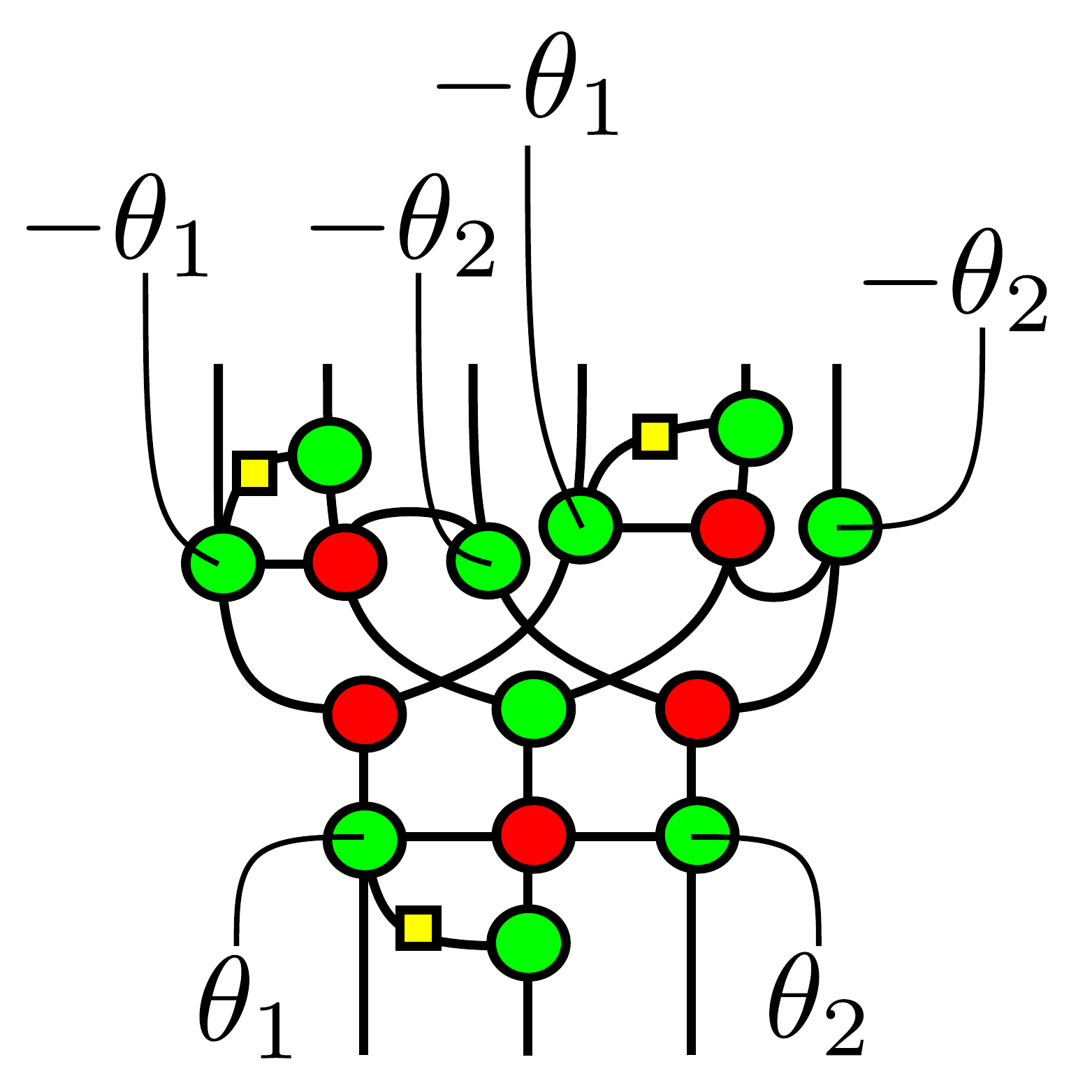} &\includegraphics[scale=0.2]{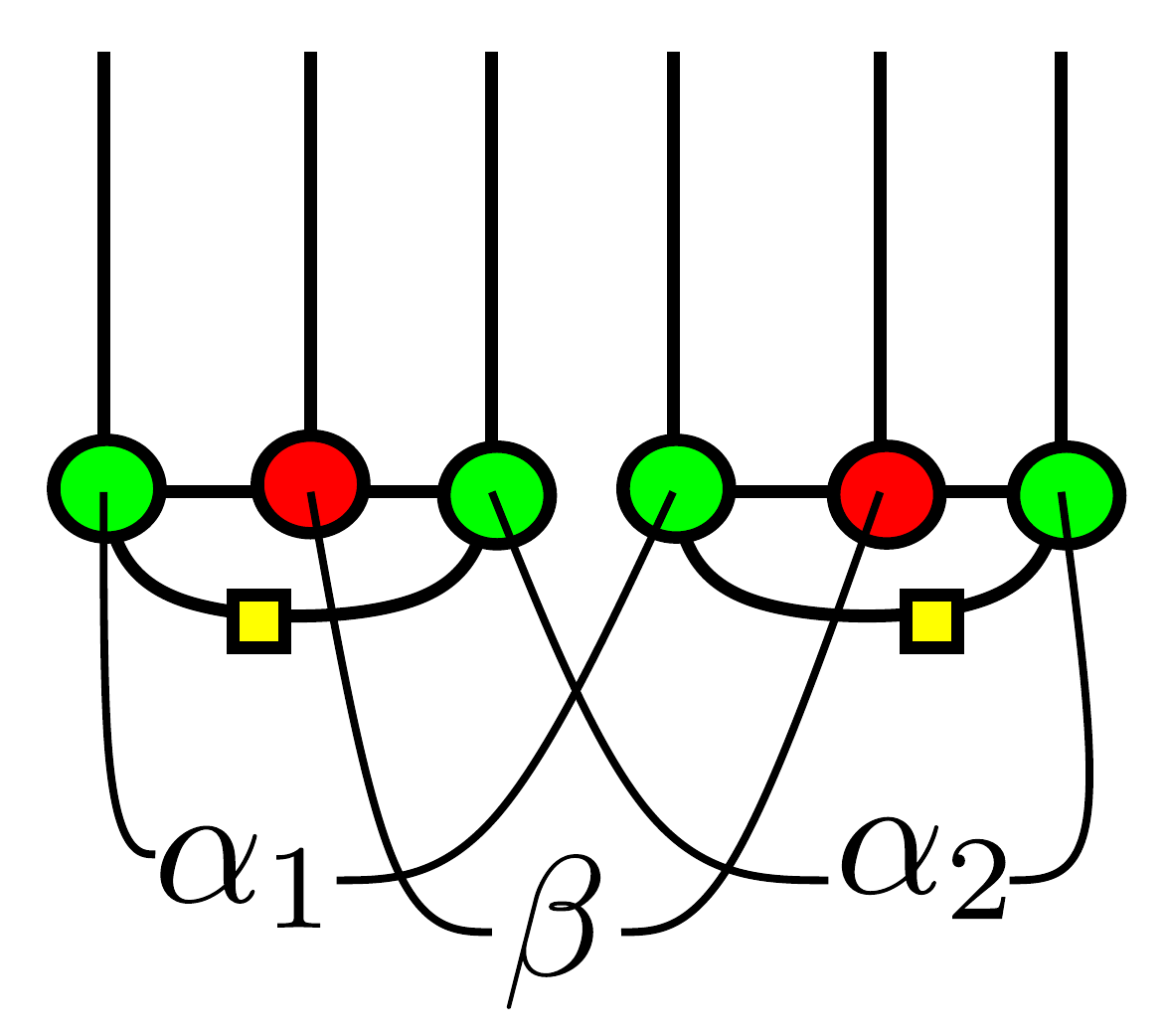}\\
&2,1-spider & Underlying basis\\
\\
\textbf{Example 6} & &\\
\\
&\includegraphics[scale=0.2]{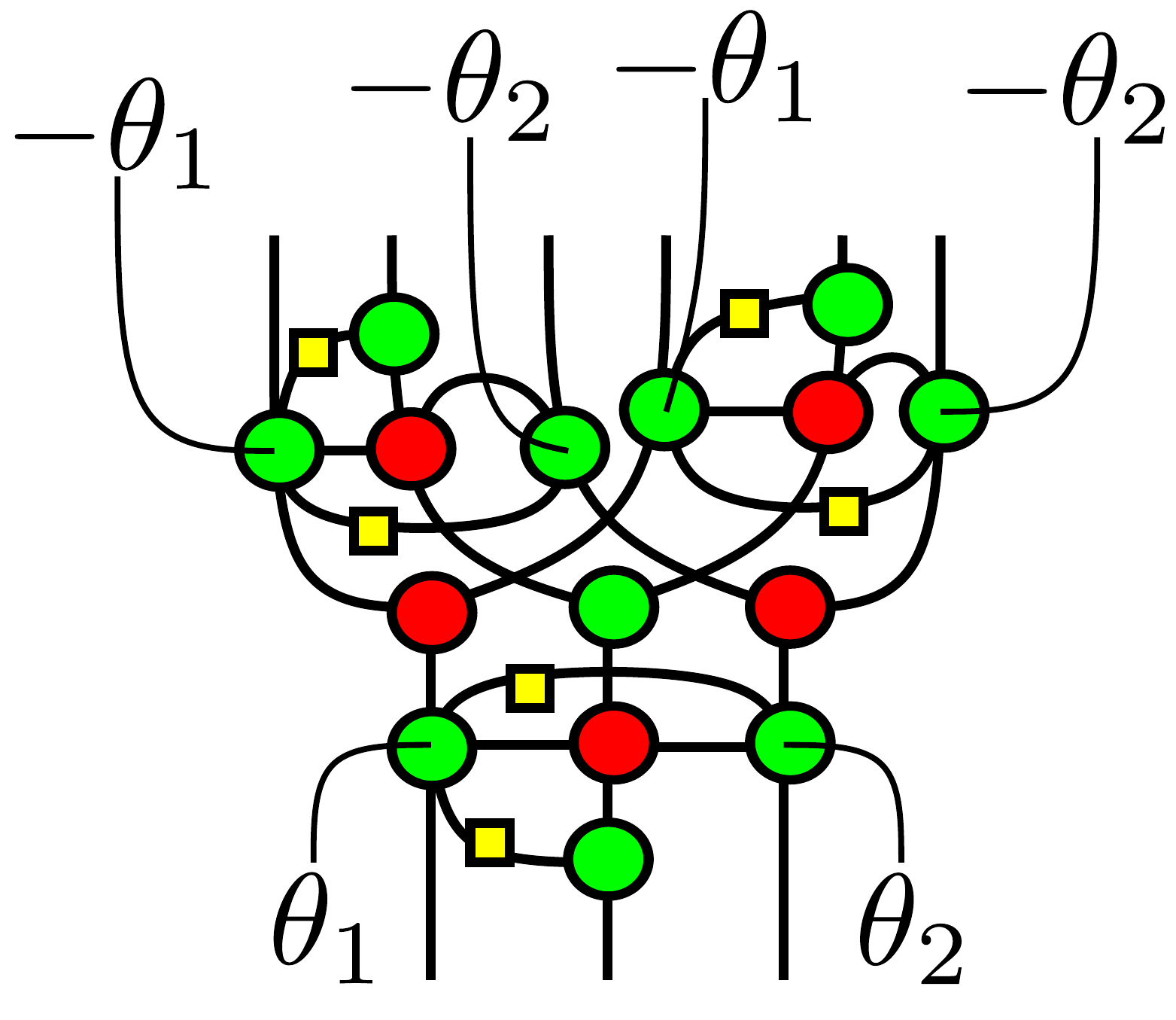} &\includegraphics[scale=0.2]{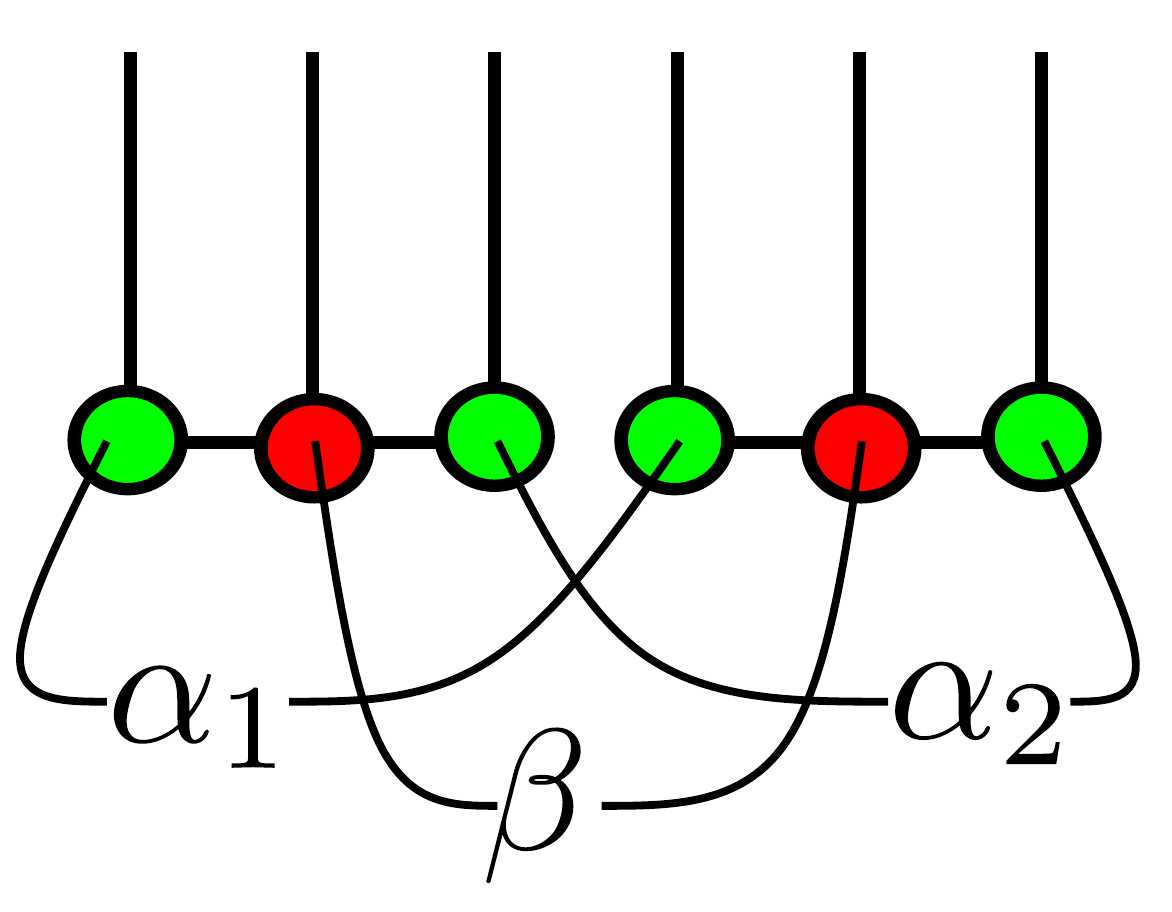}\\
&2,1-spider & Underlying basis\\
\\
\textbf{Example 7} & &\\
\\
&\includegraphics[scale=0.2]{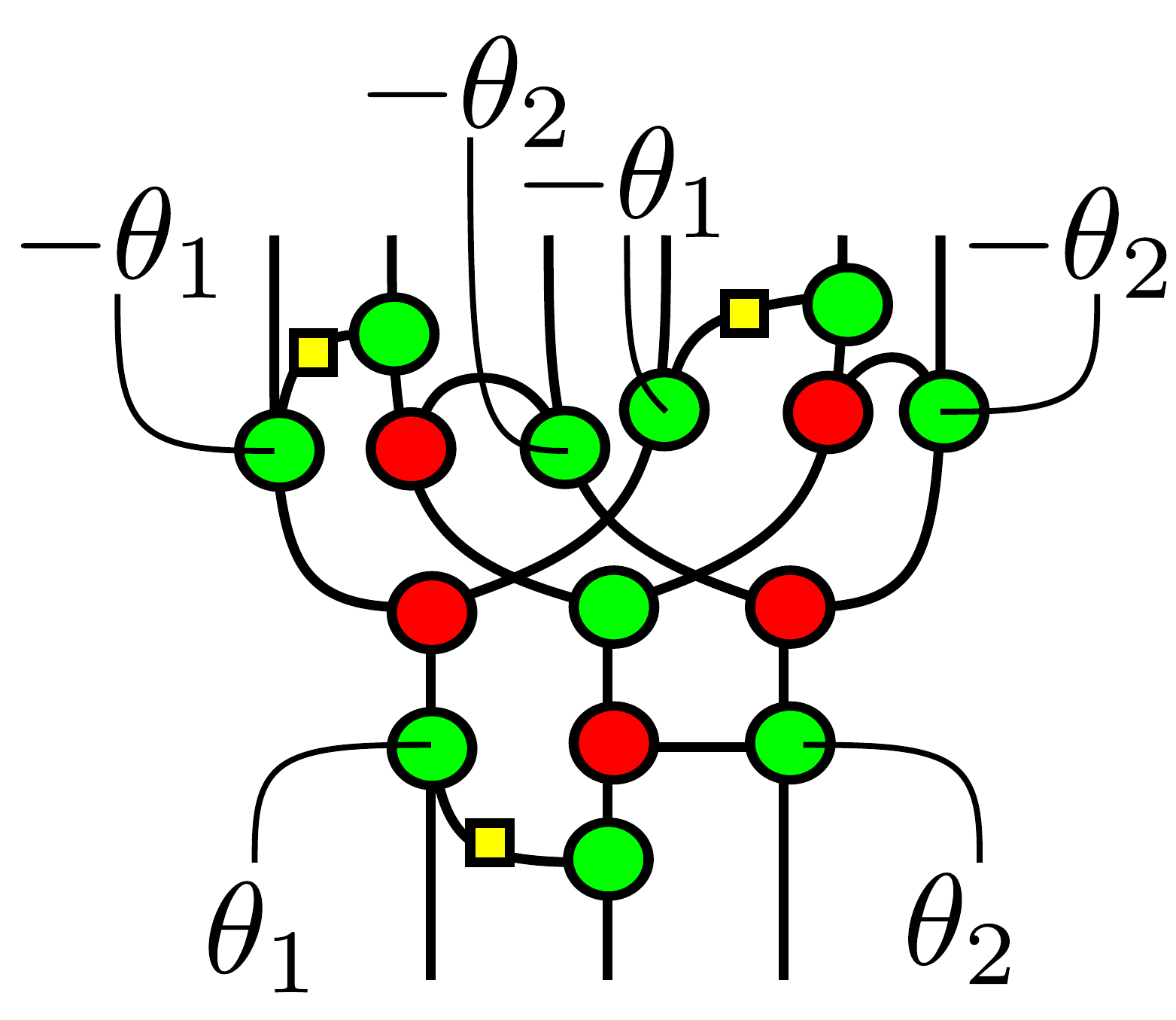} &\includegraphics[scale=0.2]{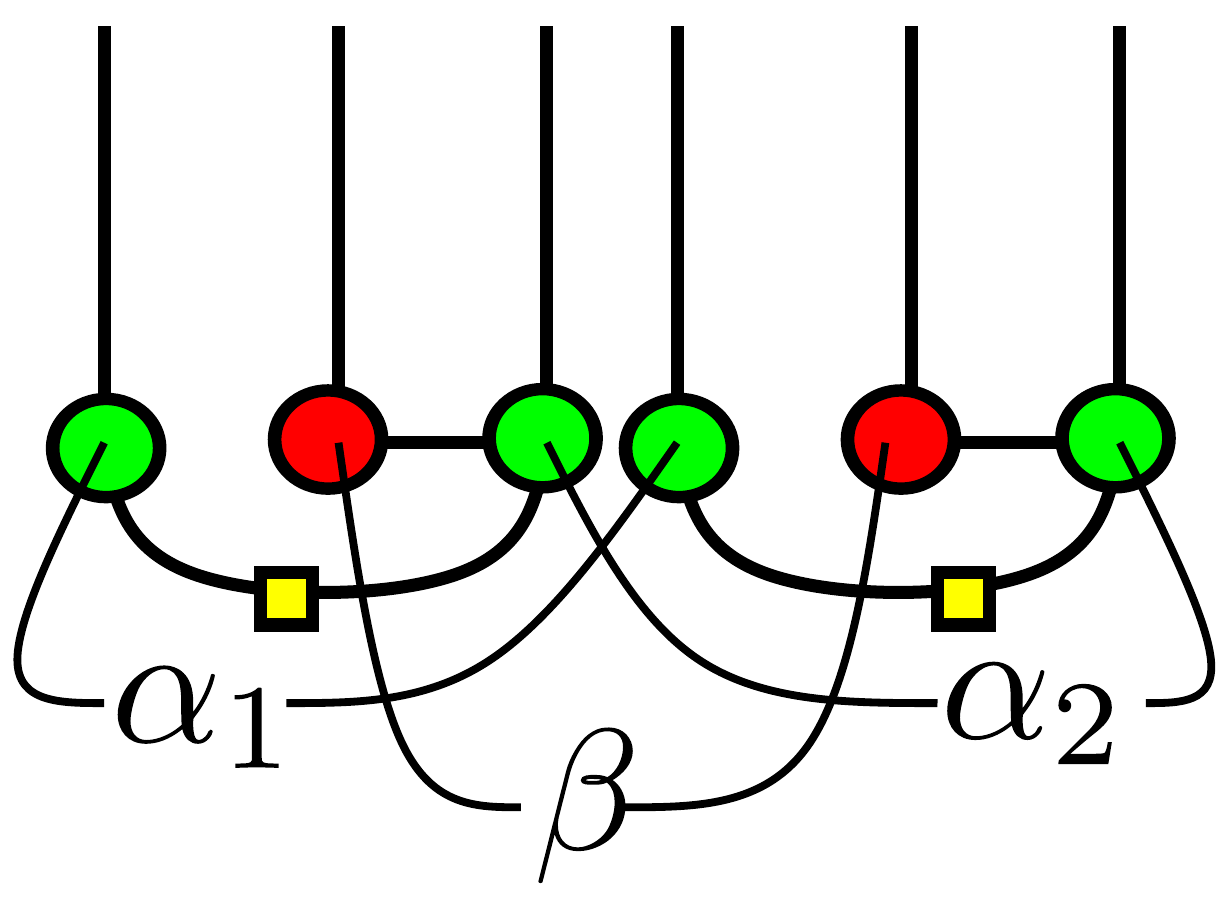}\\
&2,1-spider & Underlying basis
\end{longtable}

\newpage
\begin{longtable}{ccc}
\textbf{Example 8} & &\\
\\
&\includegraphics[scale=0.2]{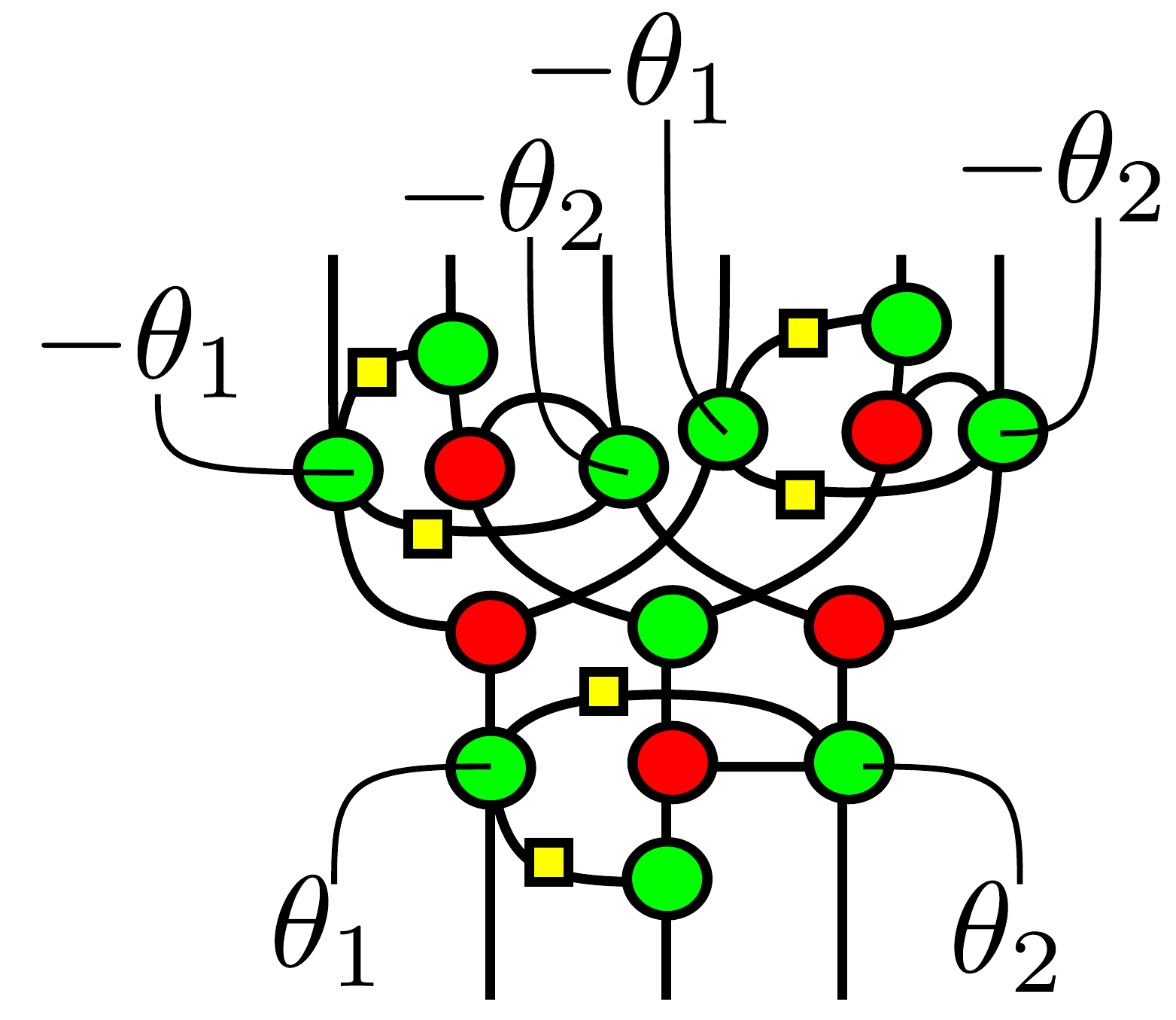} &\includegraphics[scale=0.2]{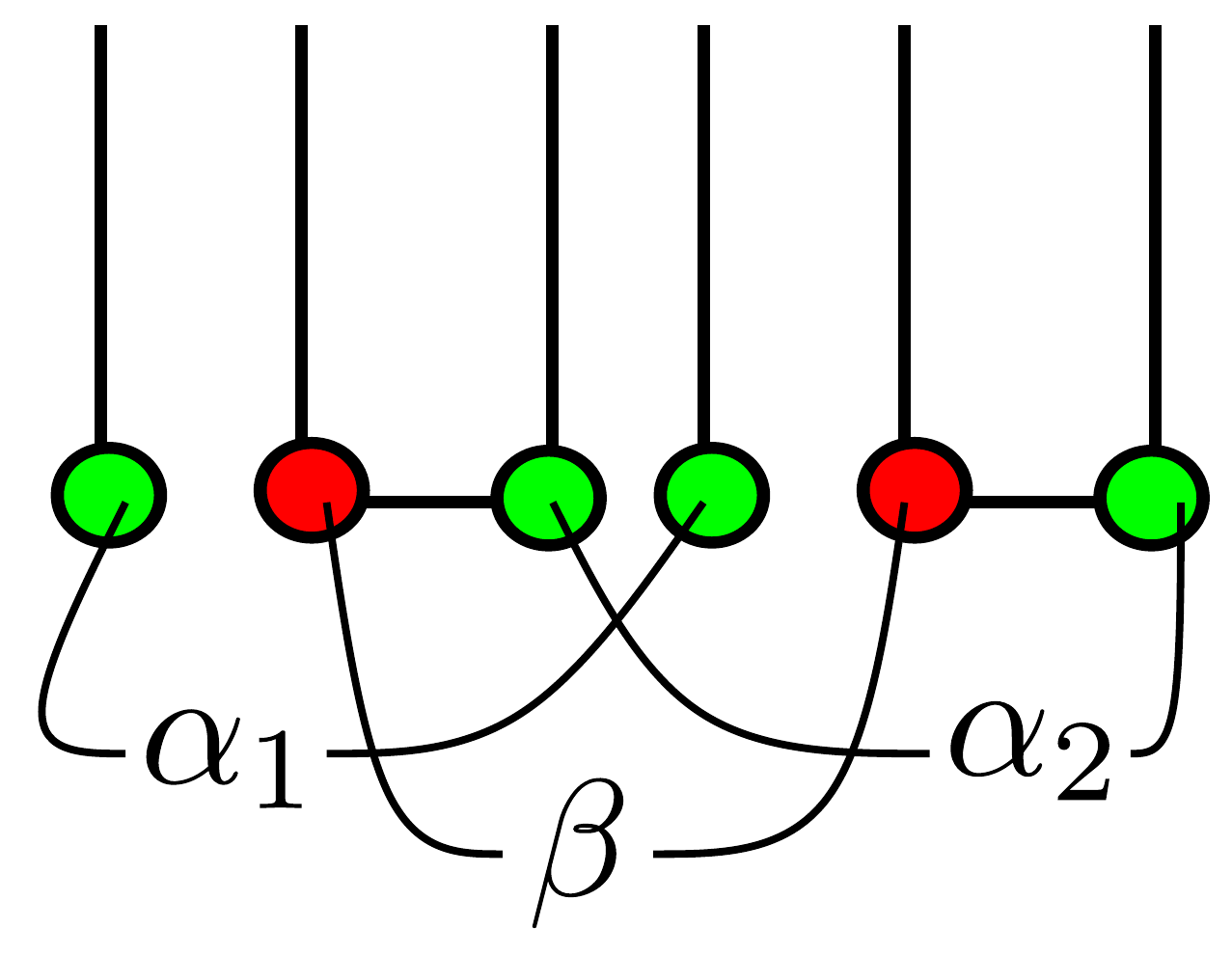}\\
&2,1-spider & Underlying basis\\
\end{longtable}

Using the examples above, we can devise a procedure to obtain a set of complementary classical structures on three qubits with one entanglement configuration from another set with a different configuration. We can describe such a procedure by a mapping on the names of classical structures on three qubits. Below, we define the mapping, which we shall denote as $\text{ent}_m$, where $m\in\{0,1,2\}$. 

\noindent Let 
$A=
\begin{pmatrix}
c_0 & c_1 & c_2\\
0 & e_{1,0} & e_{2,0}\\
e_{0,1} & 0 & e_{2,1}\\
e_{0,2} & e_{1,2} & 0
\end{pmatrix}$
be the name of a classical structure on three qubits. We define $\text{ent}_m$ on the permutations of the columns and rows of $A$. So the the indices of the entries of the names in the definition below are modulo 3.  

If $c_m\in\{0,1\}$, $c_{m+1}=Z$, $c_{m+2}\in\{0,1\}$ and $e_{m+1,m+2}=k$,
\begin{equation*}
\begin{pmatrix}
c_m & c_{m+ 1} & c_{m+ 2}\\
0 & e_{m+ 1,m} & e_{m+ 2,m}\\
e_{m,m+ 1} & 0 & e_{m+ 2,m+ 1}\\
e_{m,m+ 2} & e_{m+ 1,m+2} & 0
\end{pmatrix}
\xmapsto{\text{ent}_m}
\begin{pmatrix}
c_m & c_{m+1} & c_{m+2}\\
0 & e_{m+1,m} & e_{m+2,m}+1\\
e_{m,m+1} & 0 & e_{m+2,m+1}\\
e_{m,m+2}+1 & e_{m+1,m+2} & 0
\end{pmatrix}
\end{equation*}

If $c_m\in\{0,1\}$, $c_{m+1}=Z$, $c_{m+2}=Z$ and $e_{m+1,m+2}=i$,
\begin{equation*}
\begin{pmatrix}
c_m & c_{m+1} & c_{m+2}\\
0 & e_{m+1,m} & e_{m+2,m}\\
e_{m,m+1} & 0 & e_{m+2,m+1}\\
e_{m,m+2} & e_{m+1,m+2} & 0
\end{pmatrix}
\xmapsto{\text{ent}_m}
\begin{pmatrix}
c_m & c_{m+1} & c_{m+2}\\
0 & e_{m+1,m} & e_{m+2,m}+k\\
e_{m,m+1} & 0 & e_{m+2,m+1}\\
e_{m,m+2}+j & e_{m+1,m+2} & 0
\end{pmatrix}
\end{equation*}

If $c_m=Z$, $c_{m+1}\in\{0,1\}$, $c_{m+2}\in\{0,1\}$ and $e_{m+2,m}=k$,
\begin{equation*}
\begin{pmatrix}
c_m & c_{m+ 1} & c_{m+ 2}\\
0 & e_{m+ 1,m} & e_{m+ 2,m}\\
e_{m,m+ 1} & 0 & e_{m+ 2,m+ 1}\\
e_{m,m+ 2} & e_{m+ 1,m+2} & 0
\end{pmatrix}
\xmapsto{\text{ent}_m}
\begin{pmatrix}
c_m & c_{m+1} & c_{m+2}\\
0 & e_{m+1,m} & e_{m+2,m}\\
e_{m,m+1} & 0 & e_{m+2,m+1}+1\\
e_{m,m+2} & e_{m+1,m+2}+1 & 0
\end{pmatrix}
\end{equation*}

If $c_m=Z$, $c_{m+1}\in\{0,1\}$, $c_{m+2}=Z$ and $e_{m+2,m}=i$,
\begin{equation*}
\begin{pmatrix}
c_m & c_{m+1} & c_{m+2}\\
0 & e_{m+1,m} & e_{m+2,m}\\
e_{m,m+1} & 0 & e_{m+2,m+1}\\
e_{m,m+2} & e_{m+1,m+2} & 0
\end{pmatrix}
\xmapsto{\text{ent}_m}
\begin{pmatrix}
c_m & c_{m+1} & c_{m+2}\\
0 & e_{m+1,m} & e_{m+2,m}\\
e_{m,m+1} & 0 & e_{m+2,m+1}+k\\
e_{m,m+2} & e_{m+1,m+2}+j & 0
\end{pmatrix}
\end{equation*}

If $c_m=c_{m+1}=Z$ and $e_{m,m+1}=i$, 
\begin{equation*}
\begin{pmatrix}
Z & Z & c_{m+2}\\
0 & i & e_{m+2,m}\\
i & 0 & e_{m+2,m+1}\\
e_{m,m+2} & e_{m+1,m+2} & 0
\end{pmatrix}
\xmapsto{\text{ent}_m}
\begin{pmatrix}
0 & 0 & c_{m+2}\\
0 & 1 & e_{m+2,m}'\\
1 & 0 & e_{m+2,m+1}'\\
e_{m+1,m+2}' & e_{m,m+2}' & 0
\end{pmatrix}
\end{equation*}
where 
$e_{p,q}'=
\begin{cases}
0 & \text{ if } e_{p,q}=0\\
1 & \text{ if } e_{p,q}\in\{j,k\}\\
j & \text{ if } e_{p,q}=i \text{ and } p<q\\
k & \text{ if } e_{p,q}=i \text{ and } p>q
\end{cases}$

If $c_m\in\{0,1\}$, $c_{m+1}=Z$ and $e_{m+1,m+2}=0$, or $c_m=Z$, $c_{m+1}\{0,1\}$ and $e_{m,m+2}=0$, the classical structure in unchanged under $\text{ ent}_m$. The same is true for $c_m=c_{m+1}=Z$ and $e_{m,m+1}=0$.

Recall the example of a maximal complete set of complementary classical structures on three qubits with entanglement configuration (0,9,0) from Section \ref{sec:CCS-3Q}: 

\input{list/eg090}

Suppose, we denote this set by eg090. 

Applying $\text{ent}_1$ to members of eg090, we obtain:

\input{list/egcwgg234-090}

Applying $\text{ent}_2\circ\text{ent}_1$ to members of eg090, we obtain:

\input{list/egcwgg162-090}

Indeed, both sets are also maximal complete sets of complementary classical structures on three qubits, where the first set has entanglement configuration (2,3,4) and the second set has entanglement configuration (1,6,2). Here, we described the composition of $\cwgg$ on the legs of composite classical structures via the mapping $\text{ent}_m$. It shows that understanding composite connecting wires leads to a more efficient way of finding maximal complete sets of complementary classical structures.

\section{Equivalent Classical Structures}

In this section, we use the notation $\bullet$ to show that the spiders of a classical structure is composed with $\cwgg$ at their legs. For example, we write the classical structures in Examples 1 and 2 from the previous section as $\mathcal{X}\diamond\bullet\mathcal{Z}$ or $\mathcal{Y}\diamond\bullet\mathcal{Z}$, and $\mathcal{X}\bullet\mathcal{Z}$ or $\mathcal{Y}\bullet\mathcal{Z}$, respectively.

As we previously mentioned in Section \ref{sec:process-intro}, the 1,2-spider of a classical structure can be interpreted as encoding classical data into a quantum system, and the 2,1-spider of a classical structure can be interpreted as measuring a quantum system where the outcome is in the underlying basis of the classical structure. Consequently, we can test the compatibility between two classical structures by composing a 1,2-spider and a 2,1-spider. On one extreme, such a composition will give us the identity process, implying that the two spiders belong to the same classical structure, and on the other extreme, we obtain a disconnected process implying that the two spiders belong to a pair of complementary classical structures. So it is reasonable to infer that such a composition could provide some information on the similarity between the underlying bases of two classical structures. 

We begin with $\mathcal{Z}\bullet\mathcal{Z}$ and $\mathcal{ZZ}$. These two have underlying bases which are equivalent up to scalars. That is, the underlying basis of $\mathcal{Z}\bullet\mathcal{Z}$ is $\{\ket{00},\ket{01},\ket{10},-\ket{11}\}$ and the underlying basis of $\mathcal{ZZ}$ is $\{\ket{00},\ket{01},\ket{10},\ket{11}\}$. Composing the 1,2-spider of $\mathcal{Z}\bullet\mathcal{Z}$ and the 2,1 spider $\mathcal{ZZ}$, we obtain:
\begin{equation}\label{eq:equiv-entangled-center}
    \includegraphics[scale=0.2]{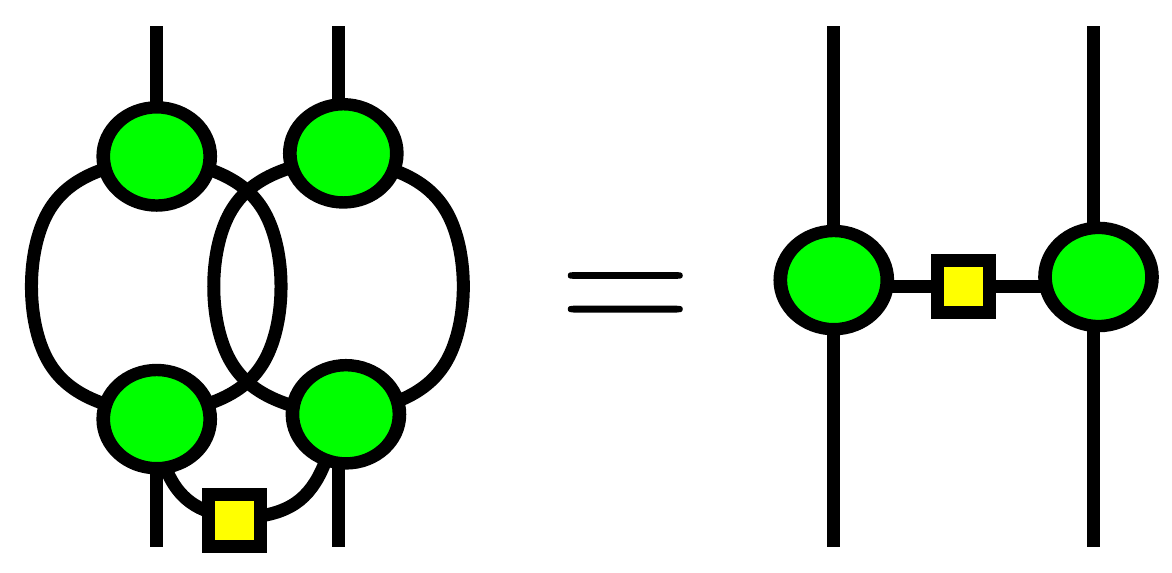}
\end{equation}
This implies that $\mathcal{Z}\bullet\mathcal{Z}$ and $\mathcal{ZZ}$ are not equivalent. 

Indeed, \citeauthor{Coecke2011b} discussed this problem in Sections 5 and 6 of reference \cite{Coecke2011b}. They proposed a solution which involves distinguishing between states and its vector representation. That is, a \textbf{state} $\psi$ is the equivalence class containing the unit vector $\ket{\psi}$ and all unit vectors $\ket{\phi}$ which satisfy $\ket{\phi}=e^{i\theta}\ket{\psi}$ for some $\theta\in[-\pi,\pi]$. The counterpart of a classical structure in such a projective space is called a state basis, which is an orthonormal basis in a finite-dimensional Hilbert space along with a state which is the 0,1-spider of the aforementioned classical structure.

\begin{longtable}{|c|c|}
\caption{Examples of classical structures with underlying bases that are equivalent up to scalars, where $\mathcal{A},\mathcal{B}\in\{\mathcal{X},\mathcal{Y}\}$.}
\label{tab:CSequiv}
\\
\hline
$\mathcal{A}\bullet\mathcal{Z}$ & $\mathcal{AZ}$\\ \hline
$\mathcal{A}\diamond\bullet\mathcal{Z}$ & $\mathcal{A}\diamond\mathcal{Z}$\\ \hline
$\mathcal{A}\bullet\mathcal{Z}\mathcal{B}\diamond$ & $\mathcal{AZ}\mathcal{B}\diamond$\\ \hline
$\mathcal{A}\diamond\bullet\mathcal{Z}\mathcal{B}\diamond$ & $\mathcal{A}\diamond\mathcal{Z}\mathcal{B}\diamond$\\ \hline
$\mathcal{A}\bullet\mathcal{Z}\diamond\mathcal{B}$ & $\mathcal{AZ}\diamond\mathcal{B}\diamond$\\ \hline
$\mathcal{A}\bullet\mathcal{Z}\diamond\mathcal{B}\diamond$ & $\mathcal{AZ}\diamond\mathcal{B}$\\ \hline
$\mathcal{A}\diamond\bullet\mathcal{Z}\diamond\mathcal{B}$ & $\mathcal{A}\diamond\mathcal{Z}\diamond\mathcal{B}\diamond$\\ \hline
$\mathcal{A}\diamond\bullet\mathcal{Z}\diamond\mathcal{B}\diamond$ & $\mathcal{A}\diamond\mathcal{Z}\diamond\mathcal{B}$ \\ \hline
\end{longtable}

However, in the present work, we are more interested in the separability of a classical structure. While $\mathcal{Z}\bullet\mathcal{Z}$ and $\mathcal{ZZ}$ are not equivalent with respect to their state bases, their underlying vector bases have members which are all separable. Other examples of such a similarity between classical structures can be found in Table \ref{tab:CSequiv}. So is there a way to characterize this invariance in separability? We do not have an answer as of yet, but we can speculate based on Eq. \ref{eq:equiv-entangled-center} and other similar results involving other pairs of classical structures with underlying vector bases containing equivalent states.

\begin{conjecture}\label{conj:equivalentCS}
Let $\mathcal{A}$ and $\mathcal{B}$ be classical structures on $N$ qubits with spiders $\black$ and $\blue$, respectively. The underlying bases of $\mathcal{A}$ and $\mathcal{B}$ contain equivalent states if they satisfy the following equation:
\begin{equation}\label{eq:conjecture-equiv}
    \includegraphics[scale=0.2]{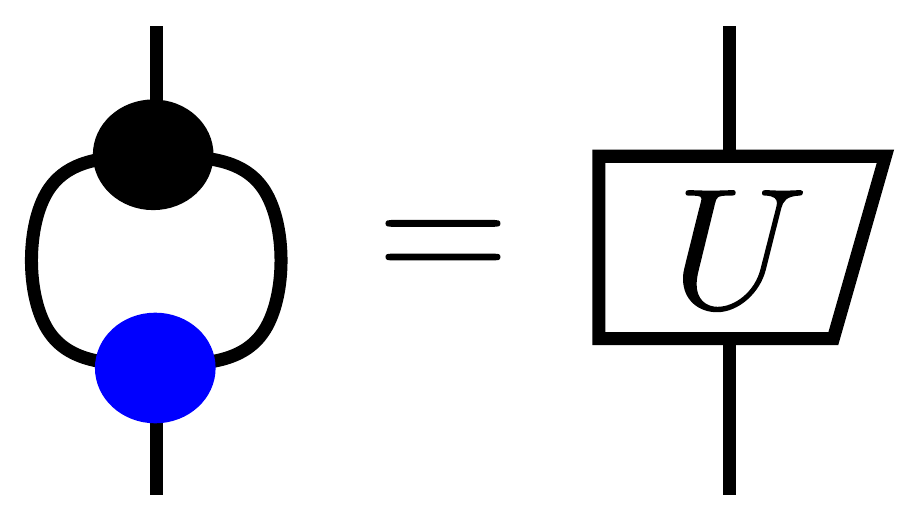}
\end{equation}
for some unitary process $U$.
\end{conjecture}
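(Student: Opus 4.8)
The plan is to convert the purely diagrammatic equation Eq.~\ref{eq:conjecture-equiv} into a statement about a single linear map and then analyse that map with an elementary norm inequality. By the correspondence of \citet{Coecke2013New}, the classical structures $\mathcal{A}$ and $\mathcal{B}$ are exactly the orthonormal bases $\{\ket{a_i}\}$ and $\{\ket{b_j}\}$ whose elements are copied by their respective $1,2$-spiders; write $c_{ij}=\braket{b_j}{a_i}$. The left-hand diagram of Eq.~\ref{eq:conjecture-equiv} composes the $1,2$-spider of $\mathcal{A}$ with the $2,1$-spider of $\mathcal{B}$, so it is the transition map $T=\mu_{\mathcal{B}}\circ\delta_{\mathcal{A}}$, exactly the object computed in Eq.~\ref{eq:equiv-entangled-center}. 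Since $\delta_{\mathcal{A}}\ket{a_i}=\ket{a_i}\otimes\ket{a_i}$ and $\mu_{\mathcal{B}}$ annihilates off-diagonal pairs in the $\mathcal{B}$-basis, expanding $\ket{a_i}=\sum_j c_{ij}\ket{b_j}$ gives $T\ket{a_i}=\sum_j c_{ij}^{2}\ket{b_j}$. This is the common generalisation of Eqs.~\ref{eq:decoherence} and~\ref{eq:complementarity}: when $\mathcal{A}=\mathcal{B}$ one recovers the identity, and when they are complementary $T$ is disconnected.

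First I would record the normalisation identity $\sum_j|c_{ij}|^2=1$, which holds because each $\ket{a_i}$ is a unit vector and $\{\ket{b_j}\}$ is orthonormal. Next, assuming Eq.~\ref{eq:conjecture-equiv}, i.e.\ $T=U$ for some unitary $U$, I would extract from $\|U\ket{a_i}\|=\|\ket{a_i}\|=1$ the consequence $\sum_j|c_{ij}|^{4}=\sum_j|c_{ij}^{2}|^{2}=1$.

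The heart of the argument is the comparison of these two sums. Setting $p_j=|c_{ij}|^2\in[0,1]$, we have $\sum_j p_j=1$ while $\sum_j p_j^{2}=1$; since $p_j^2\le p_j$ on $[0,1]$ with equality precisely when $p_j\in\{0,1\}$, the two sums can coincide only if exactly one $p_j$ equals $1$ and all others vanish. Hence each $\ket{a_i}$ equals a single $\ket{b_{\sigma(i)}}$ times a unit-modulus scalar, i.e.\ $\ket{a_i}$ and $\ket{b_{\sigma(i)}}$ are equivalent states in the sense introduced in this section. Finally, because $T=U$ is invertible, the assignment $i\mapsto\sigma(i)$ is a bijection on the finite $2^N$-element index set, so the two bases coincide as collections of states up to global phase, which is the asserted conclusion.

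I expect the main obstacle to be not either step in isolation but the bookkeeping that makes the identification of the left-hand diagram with $T=\mu_{\mathcal{B}}\circ\delta_{\mathcal{A}}$ rigorous inside the scalar-sensitive ZX-calculus, keeping careful track of the holomorphic square $c_{ij}^{2}$ (rather than $|c_{ij}|^2$) produced by the spiders and of the conjugation conventions forced by the dagger-compact structure. Because only the moduli $|c_{ij}^{2}|=|c_{ij}|^2$ enter the norm computation, these conventions do not affect the inequality; but they must be pinned down to state the converse direction (equivalent states $\Rightarrow$ $T$ unitary) cleanly, should one wish to strengthen the conjecture into an ``if and only if''.
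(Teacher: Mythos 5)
The paper does not prove this statement at all: it is presented as an open conjecture, motivated only by worked examples and counterexamples (the text says explicitly ``we do not have an answer as of yet'' and later speaks of ``confirming Conjecture \ref{conj:equivalentCS}'' as future work). Your proposal therefore cannot be compared to a proof in the paper; it is an independent argument, and as far as I can tell it is a correct one. Translating the metric diagram into the transition map $T=\mu_{\mathcal{B}}\circ\delta_{\mathcal{A}}$ with $T\ket{a_i}=\sum_j c_{ij}^{2}\ket{b_j}$ is exactly the semantics the thesis assigns to its spiders (each $m,n$-spider is $\sum_i\ket{e_i}^{\otimes n}\bra{e_i}^{\otimes m}$ for an orthonormal basis, so $\delta$ copies and $\mu$ kills off-diagonal pairs), and the rest is airtight: unitarity forces $\sum_j|c_{ij}|^4=1$, normalisation gives $\sum_j|c_{ij}|^2=1$, and $p^2\le p$ on $[0,1]$ with equality only at $\{0,1\}$ pins each $\ket{a_i}$ to a single $\ket{b_{\sigma(i)}}$ up to phase; injectivity of $\sigma$ follows from orthogonality of the $\ket{a_i}$ (or from invertibility of $T$, as you say). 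This is consistent with all of the paper's examples, e.g.\ $T=\mathrm{diag}(1,1,1,-1)$ for the pair $\mathcal{Z}\bullet\mathcal{Z}$, $\mathcal{ZZ}$, and the rank-one $T$ for complementary pairs.

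The one point you should nail down before calling this a theorem is the caveat you already flag, but it is more than bookkeeping: the thesis systematically discards scalars, so if Eq.~\ref{eq:conjecture-equiv} is meant only ``up to a scalar'' (i.e.\ $T=\lambda U$), your norm computation yields $\sum_j|c_{ij}|^4=|\lambda|^2$ rather than $1$, and the inequality argument no longer closes on its own. In dimension $2$ one can still finish (orthogonality of the columns of $(c_{ij}^2)$ forces $c_{11}\bar c_{21}=0$, hence a monomial matrix), but for general $N$ the scalar-tolerant version needs a separate argument or must be excluded by fiat. You should state explicitly that you are using the special (orthonormal) normalisation $\mu\circ\delta=\mathrm{id}$, which the paper's Eq.~\ref{eq:decoherence} guarantees, and that ``unitary'' is meant on the nose. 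With that hypothesis made precise, your argument settles the conjecture, and the converse you mention is immediate since a monomial change of basis squares to a monomial matrix.
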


An example other than Eq. \ref{eq:equiv-entangled-center} is:
\begin{equation}
    \includegraphics[scale=0.15]{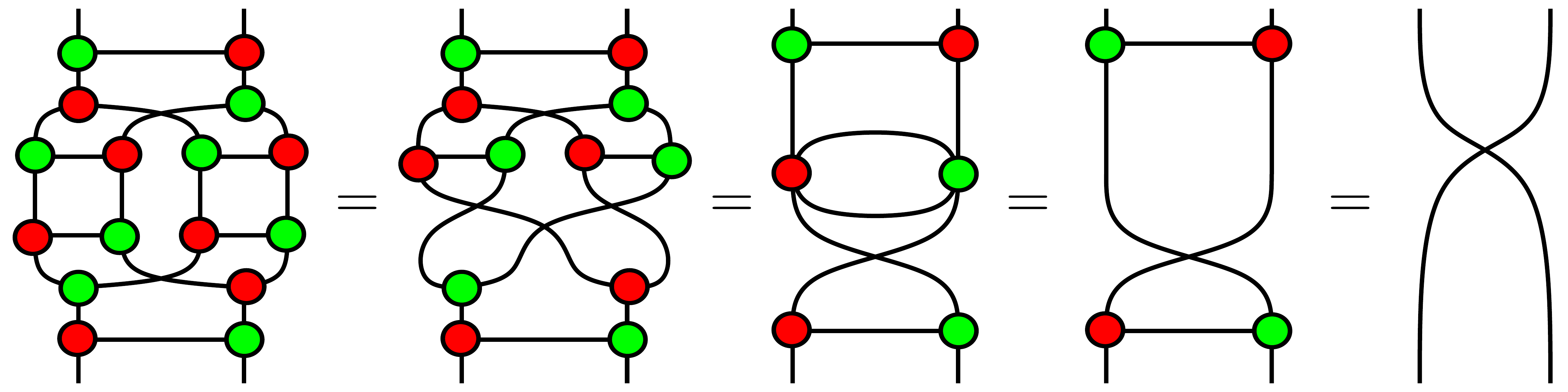}
\end{equation}
The underlying bases of $\mathcal{X}\diamond\mathcal{Z}$ and $\mathcal{Z}\diamond\mathcal{X}$ are equivalent, up to scalars, to:
\begin{equation*}
    \includegraphics[scale=0.25]{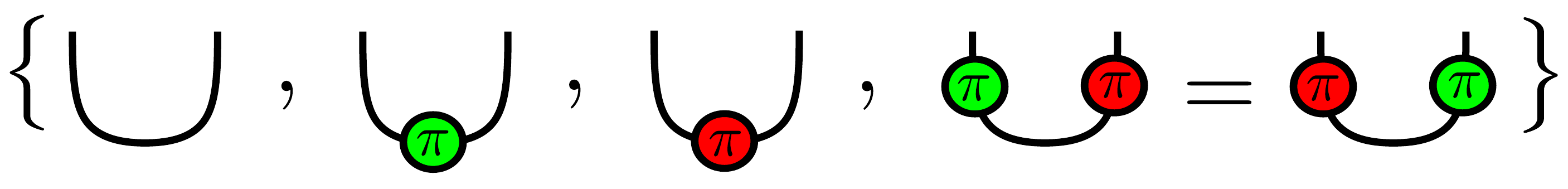}
\end{equation*}
For the pairs of classical structures in Table \ref{tab:CSequiv} in the third row and rows below, we obtain the identity by composing the 1,2-spider of the classical structure on the left to the 2,1-spider of the classical structure on the right. For example: 
\begin{equation*}
    \includegraphics[scale=0.18]{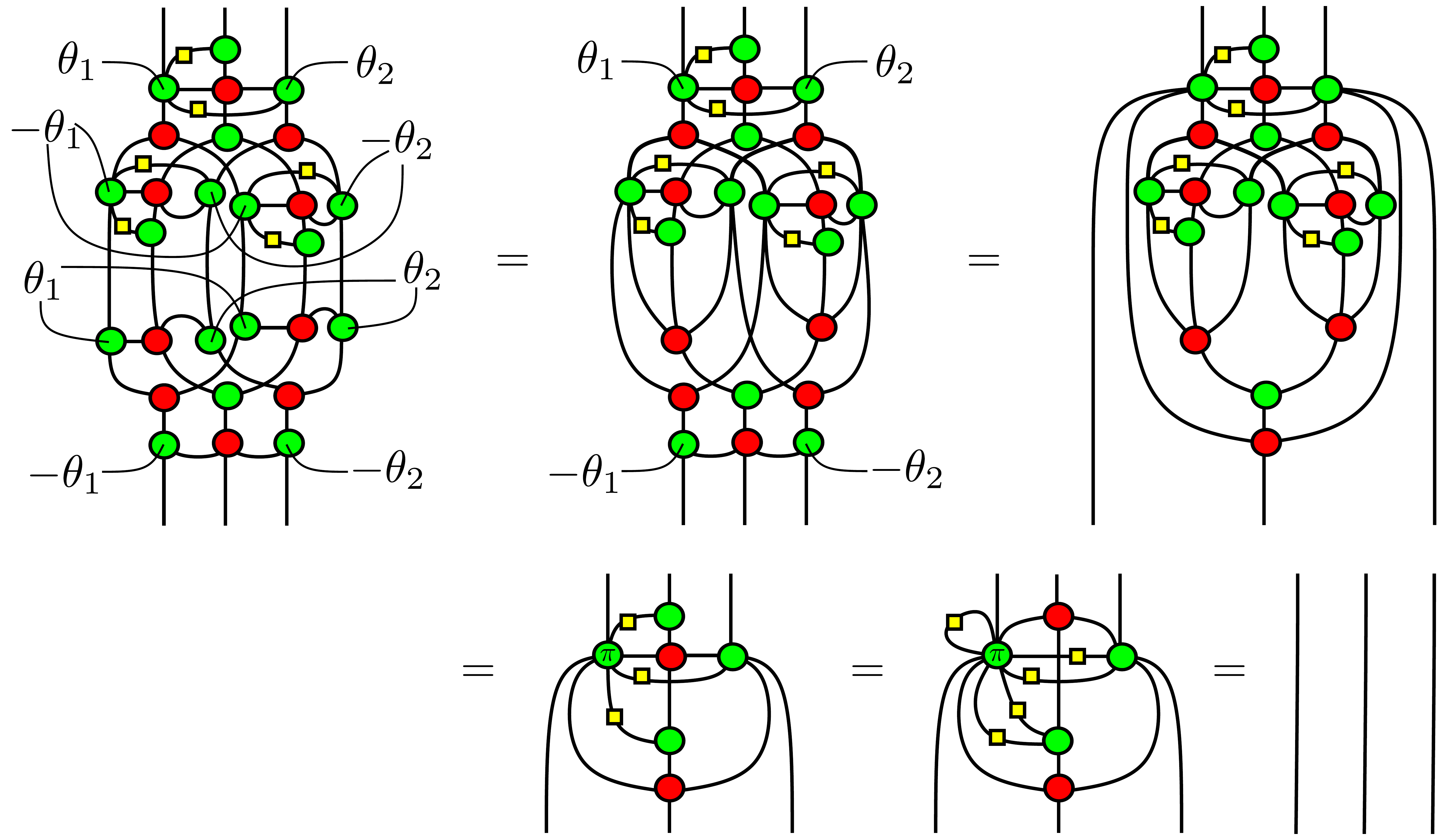}
\end{equation*}

We can also look at some examples that do not satisfy Eq. \ref{eq:conjecture-equiv}. The most obvious example would be a pair of complementary classical structures. No unitary exists which could result in the identity when the 1,2-spider of one of the classical structures is composed to the 2,1-spider of the other classical structure. However, a pair of complementary classical structures is not the only example for those classical structures which satisfy Eq. \ref{eq:conjecture-equiv}.

An example of a pair of classical structures on two qubits which do not satisfy Eq. \ref{eq:conjecture-equiv} is $\mathcal{X}\diamond\mathcal{X}$ and $\mathcal{Z}\diamond\mathcal{X}$:
\begin{equation*}
    \includegraphics[scale=0.2]{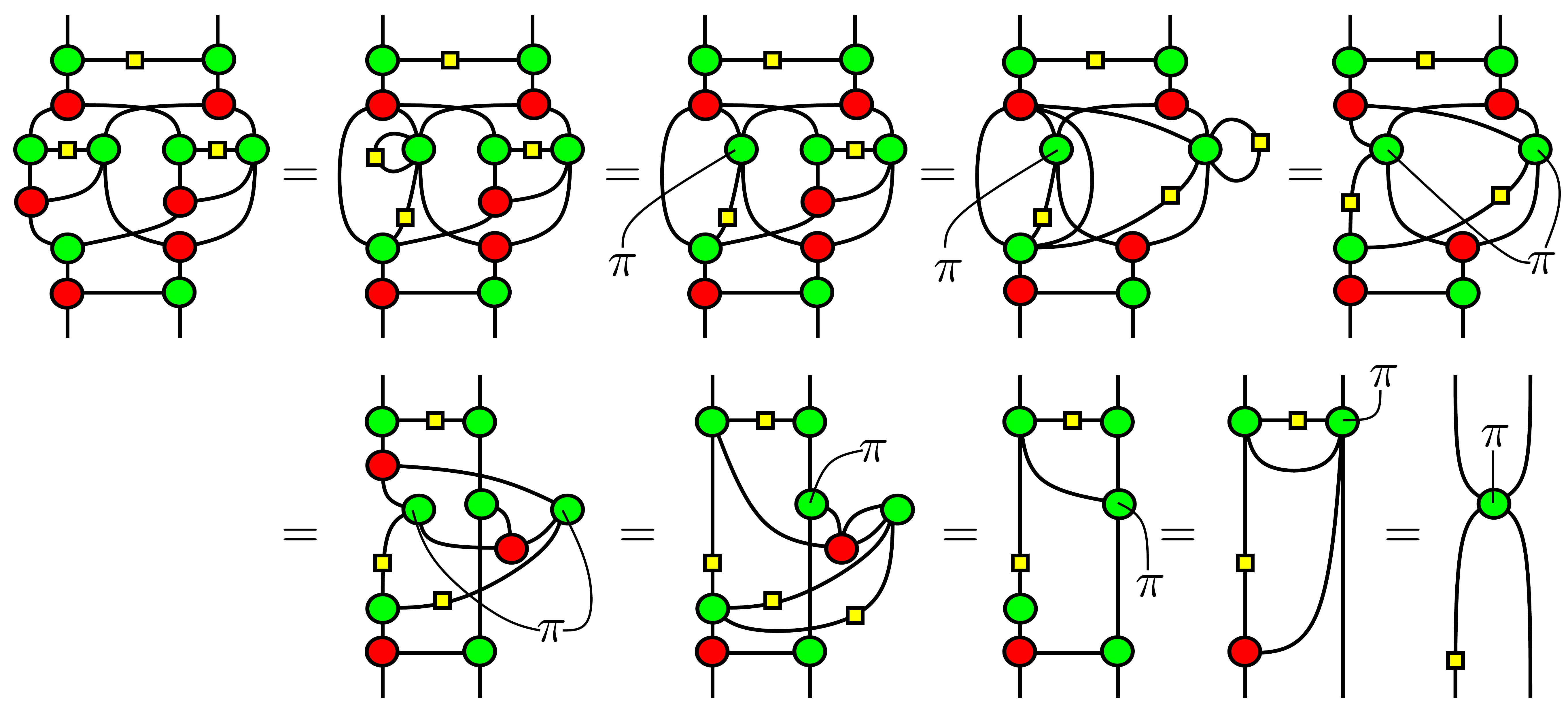}
\end{equation*}
The underlying basis of $\mathcal{X}\diamond\mathcal{X}$ is
$\egonebaseone$,
and the underlying basis of $\mathcal{Z}\diamond\mathcal{X}$ is
$\egonebasetwo$.

An example of a pair of classical structures on three qubits which does not satisfy Eq. \ref{eq:conjecture-equiv} is $\mathcal{X}\diamond\mathcal{X}\diamond\mathcal{X}\diamond$ and $\mathcal{Z}\diamond\mathcal{Z}\diamond\mathcal{Z}\diamond$:
\begin{equation*}
    \includegraphics[scale=0.18]{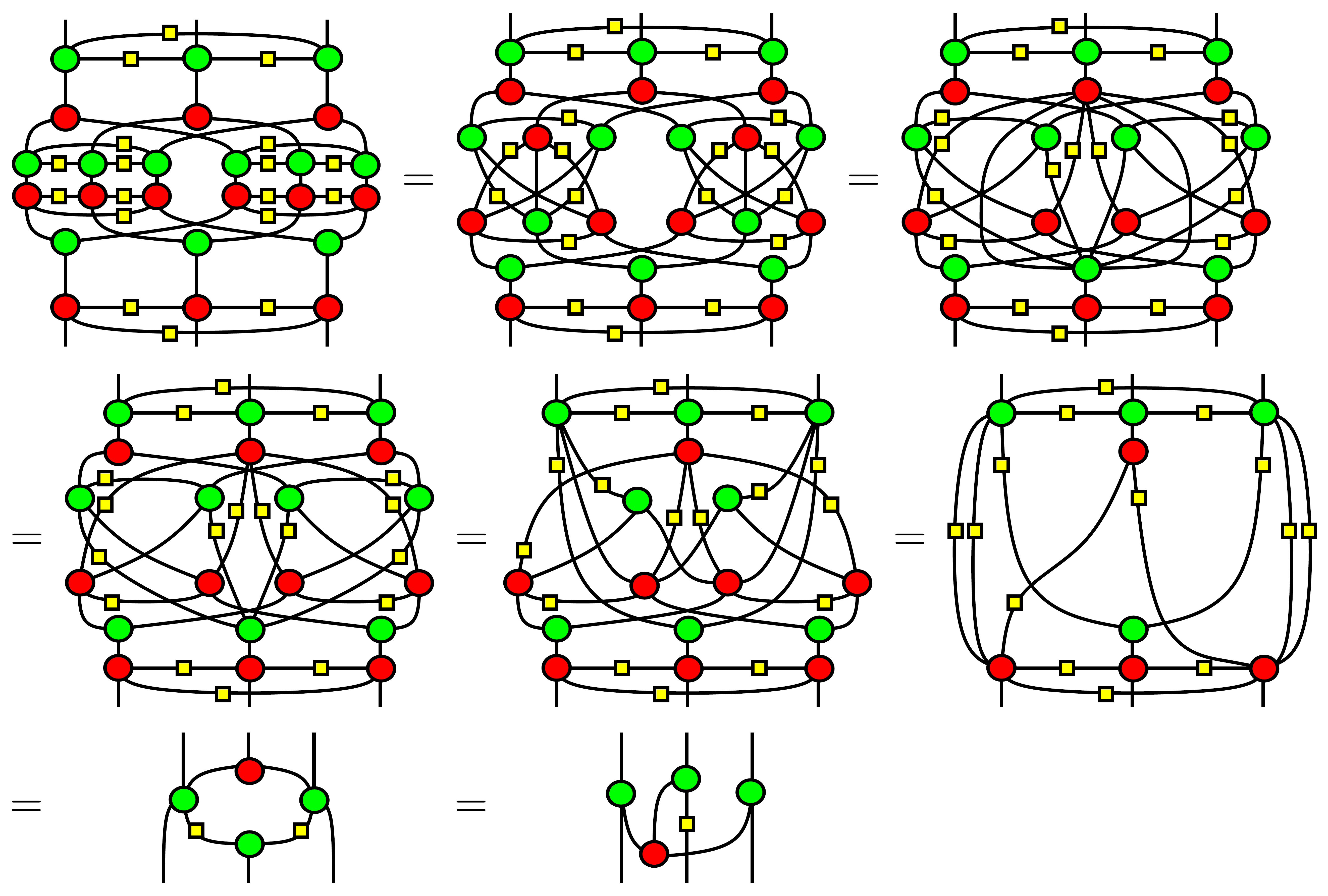}
\end{equation*}

Composing the final diagram in the equation above with its adjoint, we obtain:
\begin{equation*}
    \includegraphics[scale=0.15]{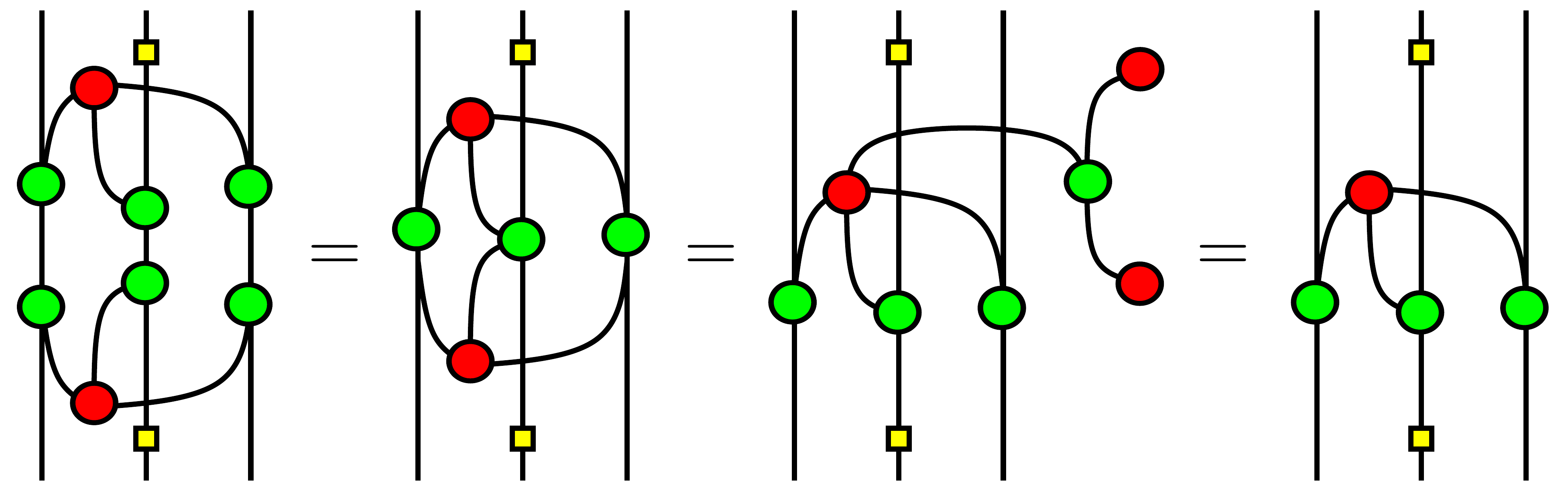}
\end{equation*}
So, the pair $\{\mathcal{X}\diamond\mathcal{X}\diamond\mathcal{X}\diamond,\mathcal{Z}\diamond\mathcal{Z}\diamond\mathcal{Z}\diamond\}$ does not satisfy Eq. \ref{eq:equiv-entangled-center}. The underlying bases of the two classical structures also do not contain equivalent states. 

Our examples show that the LHS diagram in Eq. \ref{eq:conjecture-equiv}, which we shall call the metric diagram between two classical structures, provides us with insight on the similarity and/or compatibility between the states in a pair of orthogonal bases. This warrants further study on the interaction between classical structures via their metric diagrams. We already know that the metric diagram between a classical structure and itself is the identity while the metric diagram between complementary classical structures is the RHS diagram of the complementarity equation (i.e. Eq. \ref{eq:complementarity}) where the nodes might have phases. In addition to confirming Conjecture \ref{conj:equivalentCS}, we also would like to study other types of metric diagrams, and perhaps, find a numerical metric that could represent these diagrams.

\begin{appendices}
\chapter{}
\section{Names of Generators of Entangled CD on Three Qubits}\label{sec:entCD3Q}

\input{list/CDent3Q}
\end{appendices}

\addcontentsline{toc}{chapter}{Bibliography}
\bibliographystyle{plainnat}
\bibliography{library}

\begin{thebibliography}{68}
\providecommand{\natexlab}[1]{#1}
\providecommand{\url}[1]{\texttt{#1}}
\expandafter\ifx\csname urlstyle\endcsname\relax
  \providecommand{\doi}[1]{doi: #1}\else
  \providecommand{\doi}{doi: \begingroup \urlstyle{rm}\Url}\fi

\bibitem[Abramsky and Coecke(2004)]{Abramsky2004}
S.~Abramsky and B.~Coecke.
\newblock {A categorical semantics of quantum protocols}.
\newblock In \emph{Proceedings of the 19th Annual IEEE Symposium on Logic in
  Computer Science, 2004.}, pages 415--425. IEEE, 2004.
\newblock \doi{10.1109/LICS.2004.1319636}.
\newblock URL \url{http://ieeexplore.ieee.org/document/1319636/}.

\bibitem[Abramsky and Coecke(2009)]{AbramskyCQM2009}
Samson Abramsky and Bob Coecke.
\newblock {Categorical quantum mechanics}.
\newblock In Daniel~Lehmann {Kurt Engesser, Dov M. Gabbay}, editor,
  \emph{Handbook of quantum logic and quantum structures: quantum logic}, pages
  261--324. Elsevier Amsterdam, 2009.
\newblock URL \url{http://arxiv.org/abs/0808.1023}.

\bibitem[Atiyah(1988)]{Atiyah1988}
Michael Atiyah.
\newblock {Topological Quantum Field Theory}.
\newblock \emph{Publications math{\'{e}}matiques de l'I.H.{\'{E}}.S.},
  68:\penalty0 175--186, 1988.

\bibitem[Awodey(2010)]{Awodey2010}
Steve Awodey.
\newblock \emph{{Category Theory}}.
\newblock Oxford University Press, Oxford, second edi edition, 2010.
\newblock ISBN 978-0-19-958736-0.

\bibitem[Backens(2014)]{Backens2013}
Miriam Backens.
\newblock {The ZX-calculus is complete for stabilizer quantum mechanics}.
\newblock \emph{New Journal of Physics}, 16\penalty0 (9):\penalty0 93021, sep
  2014.
\newblock ISSN 1367-2630.
\newblock \doi{10.1088/1367-2630/16/9/093021}.
\newblock URL
  \url{https://iopscience.iop.org/article/10.1088/1367-2630/16/9/093021}.

\bibitem[Backens(2016)]{Backens2016}
Miriam Backens.
\newblock \emph{{Completeness and the ZX-calculus}}.
\newblock Doctoral dissertation, University of Oxford, 2016.
\newblock URL \url{https://arxiv.org/abs/1602.08954}.

\bibitem[Baez and Stay(2010)]{Baez2010PhysicsStone}
J.~Baez and M.~Stay.
\newblock {Physics, topology, logic and computation: A Rosetta Stone}.
\newblock \emph{New Structures for Physics: Lecture Notes in Physics},
  813:\penalty0 95--172, 2010.

\bibitem[Baez and Biamonte(2018)]{Baez2012}
John~C. Baez and Jacob~D. Biamonte.
\newblock \emph{{Quantum Techniques in Stochastic Mechanics}}.
\newblock WORLD SCIENTIFIC, apr 2018.
\newblock ISBN 978-981-322-693-7.
\newblock \doi{10.1142/10623}.
\newblock URL
  \url{https://www.worldscientific.com/worldscibooks/10.1142/10623}.

\bibitem[Baez and Erbele(2015)]{Baez2015CategoriesControl}
John~C. Baez and Jason Erbele.
\newblock {Categories in Control}.
\newblock \emph{Theory and Applications of Categories}, 30\penalty0
  (24):\penalty0 836--881, may 2015.
\newblock URL \url{https://arxiv.org/abs/1405.6881v3}.

\bibitem[Baez and Lauda(2011)]{Baez2009}
John~C. Baez and Aaron~D. Lauda.
\newblock {A prehistory of n-categorical physics}.
\newblock In \emph{Deep Beauty: Understanding the Quantum World Through
  Mathematical Innovation}, pages 13--128. Cambridge University Press, aug
  2011.
\newblock ISBN 9780511976971.
\newblock \doi{10.1017/CBO9780511976971.003}.
\newblock URL \url{http://arxiv.org/abs/0908.2469}.

\bibitem[Barnum et~al.(2013)Barnum, Duncan, and Wilce]{Barnum2013}
Howard Barnum, Ross Duncan, and Alexander Wilce.
\newblock {Symmetry, Compact Closure and Dagger Compactness for Categories of
  Convex Operational Models}.
\newblock \emph{Journal of Philosophical Logic}, 42\penalty0 (3):\penalty0
  501--523, 2013.
\newblock ISSN 00223611.
\newblock \doi{10.1007/s10992-013-9280-8}.

\bibitem[Bonchi et~al.(2017)Bonchi, Soboci{\'{n}}ski, and
  Zanasi]{Bonchi2017InteractingAlgebras}
Filippo Bonchi, Pawe{\l} Soboci{\'{n}}ski, and Fabio Zanasi.
\newblock {Interacting Hopf algebras}.
\newblock \emph{Journal of Pure and Applied Algebra}, 221\penalty0
  (1):\penalty0 144--184, jan 2017.
\newblock \doi{10.1016/j.jpaa.2016.06.002}.
\newblock URL
  \url{https://linkinghub.elsevier.com/retrieve/pii/S0022404916300731}.

\bibitem[Chiribella(2014)]{Chiribella2014DistinguishabilityTheories}
Giulio Chiribella.
\newblock {Distinguishability and Copiability of Programs in General Process
  Theories}.
\newblock \emph{International Journal of Software and Informatics}, 8\penalty0
  (3-4):\penalty0 209--223, 2014.
\newblock URL \url{http://arxiv.org/abs/1411.3035}.

\bibitem[Chiribella et~al.(2010)Chiribella, D'Ariano, and
  Perinotti]{Chiribella2010}
Giulio Chiribella, Giacomo~Mauro D'Ariano, and Paolo Perinotti.
\newblock {Probabilistic theories with purification}.
\newblock \emph{Physical Review A}, 81:\penalty0 062348, jun 2010.
\newblock \doi{10.1103/PhysRevA.81.062348}.
\newblock URL \url{https://link.aps.org/doi/10.1103/PhysRevA.81.062348}.

\bibitem[Chiribella et~al.(2011)Chiribella, D'Ariano, and
  Perinotti]{Chiribella2011InformationalTheory}
Giulio Chiribella, Giacomo~Mauro D'Ariano, and Paolo Perinotti.
\newblock {Informational derivation of quantum theory}.
\newblock \emph{Physical Review A}, 84\penalty0 (1):\penalty0 012311, jul 2011.
\newblock \doi{10.1103/PhysRevA.84.012311}.
\newblock URL \url{https://link.aps.org/doi/10.1103/PhysRevA.84.012311}.

\bibitem[Coecke(2006)]{Coecke2006}
Bob Coecke.
\newblock {Kindergarten Quantum Mechanics: Lecture Notes}.
\newblock \emph{AIP Conference Proceedings}, 810\penalty0 (1):\penalty0 81--98,
  2006.
\newblock \doi{10.1063/1.2158713}.
\newblock URL \url{http://aip.scitation.org/doi/abs/10.1063/1.2158713}.

\bibitem[Coecke(2007)]{Coecke2007}
Bob Coecke.
\newblock {De-linearizing Linearity: Projective Quantum Axiomatics From Strong
  Compact Closure}.
\newblock \emph{Electronic Notes in Theoretical Computer Science},
  170:\penalty0 49--72, mar 2007.
\newblock ISSN 15710661.
\newblock \doi{10.1016/j.entcs.2006.12.011}.
\newblock URL
  \url{https://linkinghub.elsevier.com/retrieve/pii/S1571066107000552}.

\bibitem[Coecke(2008)]{Coecke2008a}
Bob Coecke.
\newblock {Axiomatic Description of Mixed States From Selinger's
  CPM-construction}.
\newblock \emph{Electronic Notes in Theoretical Computer Science}, 210\penalty0
  (C):\penalty0 3--13, jul 2008.
\newblock ISSN 15710661.
\newblock \doi{10.1016/j.entcs.2008.04.014}.
\newblock URL
  \url{https://linkinghub.elsevier.com/retrieve/pii/S1571066108002296}.

\bibitem[Coecke(2011)]{Coecke2011e}
Bob Coecke.
\newblock {A universe of processes and some of its guises}.
\newblock \emph{Deep Beauty: Understanding the Quantum World through
  Mathematical Innovation}, pages 128----186, 2011.
\newblock \doi{10.1017/CBO9780511976971.004}.
\newblock URL \url{http://arxiv.org/abs/1009.3786}.

\bibitem[Coecke and Duncan(2008)]{Coecke2008InteractingObservables}
Bob Coecke and Ross Duncan.
\newblock {Interacting Quantum Observables}.
\newblock In \emph{Lecture Notes in Computer Science}, volume 5126, pages
  298--310. Springer Berlin Heidelberg, Berlin, Heidelberg, 2008.
\newblock ISBN 3540705821.
\newblock \doi{10.1007/978-3-540-70583-3_25}.
\newblock URL
  \url{https://link.springer.com/chapter/10.1007/978-3-540-70583-3_25}.

\bibitem[Coecke and Duncan(2011)]{Coecke2011b}
Bob Coecke and Ross Duncan.
\newblock {Interacting quantum observables: Categorical algebra and
  diagrammatics}.
\newblock \emph{New Journal of Physics}, 13\penalty0 (85pp):\penalty0 43016,
  apr 2011.
\newblock ISSN 13672630.
\newblock \doi{10.1088/1367-2630/13/4/043016}.
\newblock URL \url{http://www.njp.org/}.

\bibitem[Coecke and Kissinger(2010)]{Coecke2010}
Bob Coecke and Aleks Kissinger.
\newblock {The Compositional Structure of Multipartite Quantum Entanglement}.
\newblock In Samson Abramsky, Cyril Gavoille, Claude Kirchner, Friedhelm {Meyer
  auf der Heide}, and Paul~G. Spirakis, editors, \emph{Automata, Languages and
  Programming}, pages 297--308. Springer, Berlin, Heidelberg, Berlin,
  Heidelberg, 2010.
\newblock ISBN 978-3-642-14162-1.
\newblock \doi{10.1007/978-3-642-14162-1_25}.
\newblock URL \url{http://link.springer.com/10.1007/978-3-642-14162-1_25}.

\bibitem[Coecke and Kissinger(2015)]{Coecke2015CQM}
Bob Coecke and Aleks Kissinger.
\newblock {Categorical Quantum Mechanics I: Causal Quantum Processes}.
\newblock \emph{Categories for the Working Philosopher}, pages 286--328, oct
  2015.
\newblock URL \url{http://arxiv.org/abs/1510.05468}.

\bibitem[Coecke and Kissinger(2016)]{Coecke2016a}
Bob Coecke and Aleks Kissinger.
\newblock {Categorical Quantum Mechanics II: Classical-Quantum Interaction}.
\newblock \emph{International Journal of Quantum Information}, 14\penalty0 (4),
  jun 2016.
\newblock ISSN 02197499.
\newblock \doi{10.1142/S0219749916400207}.

\bibitem[Coecke and Kissinger(2017)]{Coecke2017PicturingProcesses}
Bob Coecke and Aleks Kissinger.
\newblock \emph{{Picturing Quantum Processes}}.
\newblock Cambridge University Press, Cambridge, 2017.
\newblock ISBN 9781316219317.
\newblock \doi{10.1017/9781316219317}.
\newblock URL \url{http://ebooks.cambridge.org/ref/id/CBO9781316219317}.

\bibitem[Coecke and Lal(2013)]{coecke2013causalProcesses}
Bob Coecke and Raymond Lal.
\newblock {Causal Categories: Relativistically Interacting Processes}.
\newblock \emph{Foundations of Physics}, 43\penalty0 (4):\penalty0 458--501,
  2013.
\newblock \doi{10.1007/s10701-012-9646-8}.
\newblock URL
  \url{https://link.springer.com/article/10.1007/s10701-012-9646-8}.

\bibitem[Coecke and Paquette(2008)]{Coecke2008POVMsSums}
Bob Coecke and Eric~Oliver Paquette.
\newblock {POVMs and Naimark's theorem without sums}.
\newblock \emph{Electronic Notes in Theoretical Computer Science},
  210:\penalty0 15--31, aug 2008.
\newblock URL \url{http://arxiv.org/abs/quant-ph/0608072}.

\bibitem[Coecke and Pavlovic(2008)]{Coecke2008c}
Bob Coecke and Dusko Pavlovic.
\newblock {Quantum measurements without sums}.
\newblock In Goong Chen, Louis Kauffman, and Samuel~J Lomonaco, editors,
  \emph{Mathematics of Quantum Computation and Quantum Technology}, chapter~16,
  pages 559--596. Chapman and Hall/CRC, sep 2008.
\newblock ISBN 978-1-58488-899-4.
\newblock \doi{10.1201/9781584889007.ch16}.
\newblock URL \url{http://arxiv.org/abs/quant-ph/0608035}.

\bibitem[Coecke and Perdrix(2012)]{Coecke2012EnvironmentMechanics}
Bob Coecke and Simon Perdrix.
\newblock {Environment and classical channels in categorical quantum
  mechanics}.
\newblock \emph{Logical Methods in Computer Science}, 8\penalty0 (4), nov 2012.
\newblock ISSN 18605974.
\newblock \doi{10.2168/LMCS-8(4:14)2012}.
\newblock URL \url{http://dx.doi.org/10.2168/LMCS-8(4:14)2012}.

\bibitem[Coecke et~al.(2009)Coecke, Paquette, and Pavlovic]{Coecke2010a}
Bob Coecke, {\'{E}}ric~Oliver Paquette, and Dusko Pavlovic.
\newblock {Classical and Quantum Structuralism}.
\newblock In Simon Gay and Ian Mackie, editors, \emph{Semantic Techniques in
  Quantum Computation}, pages 29--69. Cambridge University Press, Cambridge,
  2009.
\newblock ISBN 9781139193313.
\newblock \doi{10.1017/CBO9781139193313.003}.
\newblock URL
  \url{https://www.cambridge.org/core/product/identifier/CBO9781139193313A009/type/book_part}.

\bibitem[Coecke et~al.(2011)Coecke, Edwards, and Spekkens]{Coecke2011c}
Bob Coecke, Bill Edwards, and Robert~W. Spekkens.
\newblock {Phase Groups and the Origin of Non-locality for Qubits}.
\newblock \emph{Electronic Notes in Theoretical Computer Science}, 270\penalty0
  (2):\penalty0 15--36, feb 2011.
\newblock \doi{10.1016/j.entcs.2011.01.021}.
\newblock URL
  \url{https://linkinghub.elsevier.com/retrieve/pii/S1571066111000223}.

\bibitem[Coecke et~al.(2012)Coecke, Duncan, Kissinger, and Wang]{Coecke2012}
Bob Coecke, Ross Duncan, Aleks Kissinger, and Quanlong Wang.
\newblock {Strong Complementarity and Non-locality in Categorical Quantum
  Mechanics}.
\newblock In \emph{2012 27th Annual IEEE Symposium on Logic in Computer
  Science}, pages 245--254. IEEE, jun 2012.
\newblock \doi{10.1109/LICS.2012.35}.
\newblock URL \url{http://ieeexplore.ieee.org/document/6280443/}.

\bibitem[Coecke et~al.(2013{\natexlab{a}})Coecke, Heunen, and
  Kissinger]{Coecke2013}
Bob Coecke, Chris Heunen, and Aleks Kissinger.
\newblock {Compositional Quantum Logic}.
\newblock In Bob Coecke, Luke Ong, and Prakash Panangaden, editors,
  \emph{Computation, Logic, Games, and Quantum Foundations. The Many Facets of
  Samson Abramsky. Lecture Notes in Computer Science}, volume 7860, pages
  21--36. Springer, Berlin, Heidelberg, 2013{\natexlab{a}}.
\newblock ISBN 978-3-642-38163-8.
\newblock \doi{10.1007/978-3-642-38164-5_3}.
\newblock URL \url{http://link.springer.com/10.1007/978-3-642-38164-5_3}.

\bibitem[Coecke et~al.(2013{\natexlab{b}})Coecke, Pavlovic, and
  Vicary]{Coecke2013New}
Bob Coecke, Dusko Pavlovic, and Jamie Vicary.
\newblock {A new description of orthogonal bases}.
\newblock \emph{Mathematical Structures in Computer Science}, 23\penalty0
  (3):\penalty0 555--567, 2013{\natexlab{b}}.
\newblock \doi{10.1017/S0960129512000047}.
\newblock URL \url{http://arxiv.org/abs/0810.0812}.

\bibitem[Coecke et~al.(2014)Coecke, Heunen, and
  Kissinger]{Coecke2014CategoriesChannels}
Bob Coecke, Chris Heunen, and Aleks Kissinger.
\newblock {Categories of quantum and classical channels}.
\newblock \emph{Quantum Information Processing}, 15\penalty0 (12):\penalty0
  5179--5209, 2014.
\newblock \doi{10.1007/s11128-014-0837-4}.
\newblock URL \url{http://doi.org/10.1007/s11128-014-0837-4}.

\bibitem[Coecke et~al.(2016{\natexlab{a}})Coecke, Duncan, Kissinger, and
  Wang]{Coecke2015}
Bob Coecke, Ross Duncan, Aleks Kissinger, and Quanlong Wang.
\newblock {Generalised Compositional Theories and Diagrammatic Reasoning}.
\newblock In Giulio Chiribella and Robert~W. Spekkens, editors,
  \emph{Fundamental Theories of Physics}, volume 181, pages 309--366. Springer,
  Dordrecht, 2016{\natexlab{a}}.
\newblock ISBN 9789401773034.
\newblock \doi{10.1007/978-94-017-7303-4_10}.
\newblock URL
  \url{https://link.springer.com/chapter/10.1007/978-94-017-7303-4_10}.

\bibitem[Coecke et~al.(2016{\natexlab{b}})Coecke, Fritz, and
  Spekkens]{Coecke2016}
Bob Coecke, Tobias Fritz, and Robert~W. Spekkens.
\newblock {A mathematical theory of resources}.
\newblock \emph{Information and Computation}, 250:\penalty0 59--86, oct
  2016{\natexlab{b}}.
\newblock \doi{10.1016/j.ic.2016.02.008}.
\newblock URL
  \url{https://linkinghub.elsevier.com/retrieve/pii/S0890540116000353}.

\bibitem[Dixon and Duncan(2009)]{Dixon2009}
Lucas Dixon and Ross Duncan.
\newblock {Graphical reasoning in compact closed categories for quantum
  computation}.
\newblock \emph{Annals of Mathematics and Artificial Intelligence},
  56:\penalty0 23, may 2009.
\newblock \doi{10.1007/s10472-009-9141-x}.
\newblock URL \url{http://link.springer.com/10.1007/s10472-009-9141-x}.

\bibitem[Duncan(2012)]{Duncan2012}
Ross Duncan.
\newblock {A graphical approach to measurement-based quantum computing}.
\newblock \emph{arXiv preprint arXiv:1203.6242}, 2012.
\newblock URL \url{http://arxiv.org/abs/1203.6242}.

\bibitem[Durt et~al.(2010)Durt, Englert, Bengtsson, and
  {\.{Z}}yczkowski]{DurtMUB2010}
Thomas Durt, Berthold-Georg Englert, Ingemar Bengtsson, and Karol
  {\.{Z}}yczkowski.
\newblock {On Mutually Unbiased Bases}.
\newblock \emph{International Journal of Quantum Information}, 08\penalty0
  (04):\penalty0 535--640, jun 2010.
\newblock ISSN 0219-7499.
\newblock \doi{10.1142/S0219749910006502}.
\newblock URL
  \url{https://www.worldscientific.com/doi/abs/10.1142/S0219749910006502}.

\bibitem[Eilenberg and MacLane(1945)]{Eilenberg1945}
Samuel Eilenberg and Saunders MacLane.
\newblock {General Theory of Natural Equivalences}.
\newblock \emph{Transactions of the American Mathematical Society}, 58\penalty0
  (2):\penalty0 231, 1945.
\newblock \doi{10.2307/1990284}.

\bibitem[Evans et~al.(2009)Evans, Duncan, Lang, and Panangaden]{Evans2009MUB}
Julia Evans, Ross Duncan, Alex Lang, and Prakash Panangaden.
\newblock {Classifying all mutually unbiased bases in Rel}.
\newblock \emph{arxiv preprint}, sep 2009.
\newblock URL \url{http://arxiv.org/abs/0909.4453}.

\bibitem[Fong(2016)]{Fong2016}
Brendan Fong.
\newblock \emph{{The Algebra of Open and Interconnected Systems}}.
\newblock PhD thesis, University of Oxford, 2016.
\newblock URL \url{http://arxiv.org/abs/1609.05382}.

\bibitem[Harding(2009)]{Harding2009}
John Harding.
\newblock {A Link between Quantum Logic and Categorical Quantum Mechanics}.
\newblock \emph{International Journal of Theoretical Physics}, 48:\penalty0
  769--802, mar 2009.
\newblock \doi{10.1007/s10773-008-9853-4}.
\newblock URL \url{http://link.springer.com/10.1007/s10773-008-9853-4}.

\bibitem[Heunen(2012)]{Heunen2012a}
Chris Heunen.
\newblock {Complementarity in Categorical Quantum Mechanics}.
\newblock \emph{Foundations of Physics}, 42:\penalty0 856--873, jul 2012.
\newblock \doi{10.1007/s10701-011-9585-9}.
\newblock URL \url{http://link.springer.com/10.1007/s10701-011-9585-9}.

\bibitem[Heunen et~al.(2014)Heunen, Kissinger, and Selinger]{Heunen2014}
Chris Heunen, Aleks Kissinger, and Peter Selinger.
\newblock {Completely positive projections and biproducts}.
\newblock In \emph{Electronic Proceedings in Theoretical Computer Science},
  volume 171, pages 71--83, dec 2014.
\newblock \doi{10.4204/EPTCS.171.7}.
\newblock URL \url{http://arxiv.org/abs/1308.4557v2}.

\bibitem[Inc.()]{WolframComputing}
Wolfram~Research{,} Inc.
\newblock {Mathematica, {V}ersion 12.3}.
\newblock URL \url{https://www.wolfram.com/mathematica/}.
\newblock Champaign, IL, 2021.

\bibitem[Jeandel et~al.(2019)Jeandel, Perdrix, and
  Vilmart]{Jeandel2019CompletenessZX-Calculus}
Emmanuel Jeandel, Simon Perdrix, and Renaud Vilmart.
\newblock {Completeness of the ZX-Calculus}.
\newblock \emph{Logical Methods in Computer Science}, 16\penalty0 (2), mar
  2019.
\newblock \doi{10.23638/LMCS-16(2:11)2020}.
\newblock URL \url{http://dx.doi.org/10.23638/LMCS-16(2:11)2020}.

\bibitem[Joyal and Street(1991)]{Joyal1991}
Andr{\'{e}} Joyal and Ross Street.
\newblock {The geometry of tensor calculus, I}.
\newblock \emph{Advances in Mathematics}, 88\penalty0 (1):\penalty0 55--112,
  apr 1991.
\newblock \doi{10.1016/0001-8708(91)90003-P}.
\newblock URL
  \url{http://linkinghub.elsevier.com/retrieve/pii/000187089190003P}.

\bibitem[Joyal and Street(1993)]{Joyal1993BraidedCategories}
Andr{\'{e}} Joyal and Ross Street.
\newblock {Braided Tensor Categories}, 1993.
\newblock ISSN 00018708.

\bibitem[Kissinger and Zamdzhiev(2015)]{Kissinger2015}
Aleks Kissinger and Vladimir Zamdzhiev.
\newblock {Quantomatic: A Proof Assistant for Diagrammatic Reasoning}.
\newblock In Amy~P. Felty and Aart Middeldorp, editors, \emph{Lecture Notes in
  Computer Science}, volume 9195, pages 326--336. Springer, Cham, 2015.
\newblock ISBN 978-3-319-21400-9.
\newblock \doi{10.1007/978-3-319-21401-6_22}.
\newblock URL \url{http://link.springer.com/10.1007/978-3-319-21401-6_22}.

\bibitem[Kissinger et~al.(2014)Kissinger, Merry, and Soloviev]{Kissinger2014}
Aleks Kissinger, Alex Merry, and Matvey Soloviev.
\newblock {Pattern graph rewrite systems}.
\newblock \emph{Electronic Proceedings in Theoretical Computer Science},
  143:\penalty0 54--66, mar 2014.
\newblock \doi{10.4204/EPTCS.143.5}.
\newblock URL \url{http://arxiv.org/abs/1204.6695v3}.

\bibitem[Kissinger et~al.(2018)Kissinger, Merry, Frot, Coecke, Quick,
  Miller-Bakewell, Dixon, Matyev, Duncan, and Zamdzhiev]{Quantomatic}
Aleks Kissinger, Alex Merry, Ben Frot, Bob Coecke, David Quick, Hector
  Miller-Bakewell, Lucas Dixon, Soloviev Matyev, Ross Duncan, and Vladimir
  Zamdzhiev.
\newblock {Quantomatic}.
\newblock \emph{https://quantomatic.github.io/}, 2018.
\newblock URL \url{https://quantomatic.github.io/about.html}.

\bibitem[Kock(2004)]{Kock2004FrobeniusTheories}
Joachim Kock.
\newblock \emph{{Frobenius Algebras and 2D Topological Quantum Field
  Theories}}.
\newblock Cambridge University Press, 2004.
\newblock ISBN 9780898713800.

\bibitem[Lack(2004)]{Lack2004}
Stephen Lack.
\newblock {Composing Props}.
\newblock \emph{Theory and Applications of Categories}, 13\penalty0
  (9):\penalty0 147--163, 2004.
\newblock URL \url{http://www.tac.mta.ca/tac/volumes/13/9/13-09abs.html}.

\bibitem[Lambek and {J. Scott}(1988)]{Lambek1988IntroductionLogic}
Joachim Lambek and Philip {J. Scott}.
\newblock \emph{{Introduction to higher-order categorical logic}}.
\newblock Cambridge University Press, 1988.
\newblock ISBN 97805213565634.

\bibitem[{Mac Lane}(1988)]{MacLane1988ConceptsPerspective}
Saunders {Mac Lane}.
\newblock {Concepts and Categories in Perspective}.
\newblock In Peter Duren, Richard~A. Askey, and Uta~C. Merzbach, editors,
  \emph{A Century of Mathematics in America, Part I}, pages 323--365. American
  Mathematical Society, 1988.
\newblock ISBN 0-8218-0124-4.
\newblock URL \url{www.ams.org/publicoutreach/math-history/hmath1-index}.

\bibitem[Merry(2014)]{Merry2014}
Alexander Merry.
\newblock \emph{{Reasoning with !-Graphs}}.
\newblock Doctoral dissertation, University of Oxford, 2014.
\newblock URL \url{http://arxiv.org/abs/1403.7828}.

\bibitem[Musto(2017)]{Musto2016}
Benjamin Musto.
\newblock {Constructing Mutually Unbiased Bases from Quantum Latin Squares}.
\newblock \emph{Electronic Proceedings in Theoretical Computer Science},
  236:\penalty0 108--126, jan 2017.
\newblock ISSN 2075-2180.
\newblock \doi{10.4204/EPTCS.236.8}.
\newblock URL \url{http://arxiv.org/abs/1605.08919v2}.

\bibitem[Piedeleu et~al.(2015)Piedeleu, Kartsaklis, Coecke, and
  Sadrzadeh]{Piedeleu2015OpenProcessing}
Robin Piedeleu, Dimitri Kartsaklis, Bob Coecke, and Mehrnoosh Sadrzadeh.
\newblock {Open System Categorical Quantum Semantics in Natural Language
  Processing}.
\newblock \emph{Leibniz International Proceedings in Informatics, LIPIcs},
  35:\penalty0 270--289, feb 2015.
\newblock URL \url{http://arxiv.org/abs/1502.00831}.

\bibitem[Romero et~al.(2005)Romero, Bj{\"{o}}rk, Klimov, and
  S{\'{a}}nchez-Soto]{Romero2005}
J.~L. Romero, G.~Bj{\"{o}}rk, A.~B. Klimov, and L.~L. S{\'{a}}nchez-Soto.
\newblock {Structure of the sets of mutually unbiased bases for N qubits}.
\newblock \emph{Physical Review A}, 72\penalty0 (6):\penalty0 062310, 2005.
\newblock \doi{10.1103/PhysRevA.72.062310}.

\bibitem[Selby and Coecke(2017)]{Selby2017Adjoint}
John~H. Selby and Bob Coecke.
\newblock {A Diagrammatic Derivation of the Hermitian Adjoint}.
\newblock \emph{Foundations of Physics}, 47:\penalty0 1191--1207, sep 2017.
\newblock \doi{10.1007/s10701-017-0102-7}.
\newblock URL
  \url{https://link.springer.com/article/10.1007/s10701-017-0102-7}.

\bibitem[Selinger(2011)]{Selinger2011}
P.~Selinger.
\newblock {A survey of graphical languages for monoidal categories}.
\newblock \emph{Lecture Notes in Physics}, 813:\penalty0 289--355, 2011.
\newblock ISSN 00758450.
\newblock \doi{10.1007/978-3-642-12821-9_4}.

\bibitem[Selinger(2007)]{Selinger2007a}
Peter Selinger.
\newblock {Dagger Compact Closed Categories and Completely Positive Maps}.
\newblock \emph{Electronic Notes in Theoretical Computer Science},
  170:\penalty0 139--163, mar 2007.
\newblock \doi{10.1016/j.entcs.2006.12.018}.
\newblock URL
  \url{https://linkinghub.elsevier.com/retrieve/pii/S1571066107000606}.

\bibitem[Soboci{\'{n}}ski(2015)]{Sobocinski201510.Algebra}
Pawe{\l} Soboci{\'{n}}ski.
\newblock {10. Paths and Matrices | Graphical Linear Algebra}, 2015.
\newblock URL
  \url{https://graphicallinearalgebra.net/2015/05/19/paths-and-matrices-part-1/}.

\bibitem[Vidal and Dawson(2004)]{Vidal2004UniversalGates}
G.~Vidal and C.~M. Dawson.
\newblock {Universal quantum circuit for two-qubit transformations with three
  controlled-NOT gates}.
\newblock \emph{Physical Review A}, 69\penalty0 (1):\penalty0 010301, jan 2004.
\newblock ISSN 1050-2947.
\newblock \doi{10.1103/PhysRevA.69.010301}.
\newblock URL \url{https://link.aps.org/doi/10.1103/PhysRevA.69.010301}.

\bibitem[Wang(2018)]{Wang2018qutritZX}
Quanlong Wang.
\newblock {Qutrit ZX-calculus is Complete for Stabilizer Quantum Mechanics}.
\newblock \emph{Electronic Proceedings in Theoretical Computer Science},
  266\penalty0 (10):\penalty0 58--70, feb 2018.
\newblock ISSN 2075-2180.
\newblock \doi{10.4204/EPTCS.266.3}.
\newblock URL \url{http://arxiv.org/abs/1803.00696v1}.

\bibitem[Wootters and Fields(1989)]{Wootters1989}
William~K Wootters and Brian~D Fields.
\newblock {Optimal state-determination by mutually unbiased measurements}.
\newblock \emph{Annals of Physics}, 191\penalty0 (2):\penalty0 363--381, may
  1989.
\newblock \doi{10.1016/0003-4916(89)90322-9}.

\end{thebibliography}



\end{document}